\newcommand{\changefont}{%
    \fontsize{11}{11}
    }
\def\cleardoublepage{\clearpage\if@twoside \ifodd\c@page\else%
    \hbox{}%
    \thispagestyle{empty}
    \newpage%
    \if@twocolumn\hbox{}\newpage\fi\fi\fi}
\let\ps@plain=\ps@empty
\newcommand{\calA}{{\mathcal A}}
\newcommand{\calH}{{\mathcal H}}
\newcommand{\calS}{{\mathcal S}}
\newcommand{\euA}{{\mathscr A}}
\newcommand{\euC}{{\mathscr C}}
\newcommand{\euS}{{\mathscr{S}}}
\newcommand{\Obs}{\textbf{\textsf{Obs}}}
\newcommand{\Pos}{\textbf{\textsf{P1\hspace{0.2cm}}}}
\newcommand{\Poss}{\textbf{\textsf{P2\hspace{0.2cm}}}}
\newcommand{\Posss}{\textbf{\textsf{P3\hspace{0.2cm}}}}
\newcommand{\Loc}{\textbf{\textsf{Loc}}}
\newcommand{\Calg}{\textbf{\textsf{C*}}^{\textbf{\textsf{op}}}}
\newcommand{\Top}{\textbf{\textsf{Top}}}
\newcommand{\vN}{\textbf{\textsf{vN}}^{\textbf{\textsf{op}}}}
\newcommand{\Hys}{\textbf{\textsf{Hys}}}
\newcommand{\Lms}{\textbf{\textsf{Lms}}}
\newcommand{\calL}{{\mathcal L}}
\newcommand{\calM}{{\mathcal M}}
\newcommand{\calO}{{\mathcal O}}
\newcommand{\obj}{\mbox{\rm obj}}
\newcommand{\Hom}{\mbox{\rm hom}}
\newcommand{\trace}{\mbox{tr}}
\newcommand{\id}{\mathds{1}}
\newcommand{\lang}{\langle}
\newcommand{\rang}{\rangle}
\renewcommand{\vec}[1]{\mathbf{#1}}
\newcommand{\me}{\mathrm{e}}
\theoremstyle{plain}
\newtheorem{Thm}{Theorem}[section]
\newtheorem{Pro}[Thm]{Proposition}
\newtheorem{Lem}[Thm]{Lemma}
\newtheorem{Cor}[Thm]{Corollary}
\newtheorem{Def}[Thm]{Definition}
\theoremstyle{remark}
\begin{document}

\thispagestyle{empty}

\vspace*{-4.2cm}
\begin{figure}[htb!]
        \hspace{-0.4cm}\begin{minipage}[t]{5cm}
        \begin{flushleft}
					\includegraphics[height=3cm]{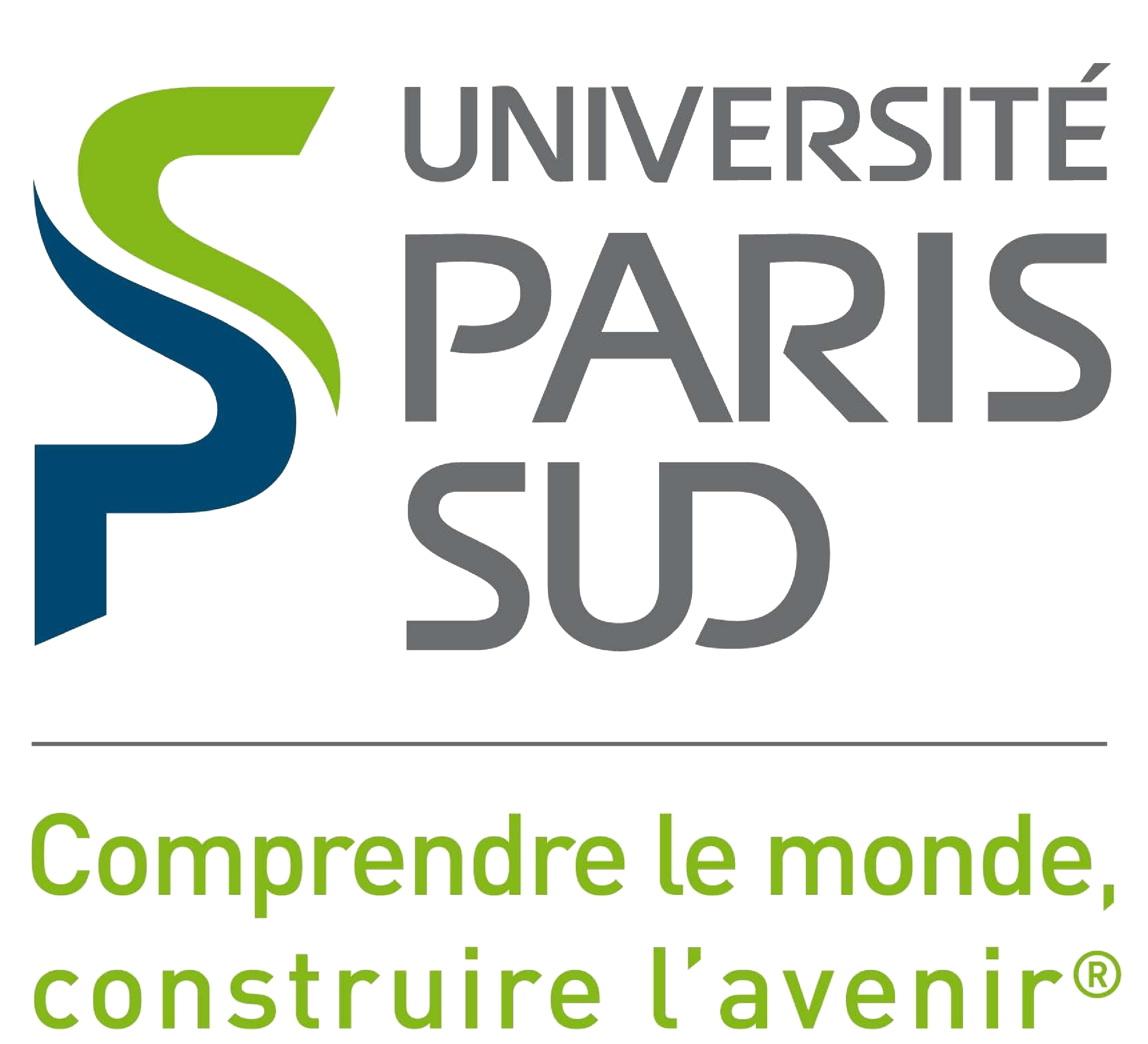}
				\end{flushleft}
				\end{minipage}
\hspace{6.6cm}
				\begin{minipage}[t]{5cm}
				\begin{flushright}
					\includegraphics[height=2cm]{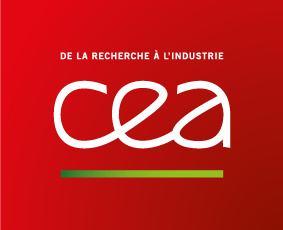}
					\includegraphics[height=2cm]{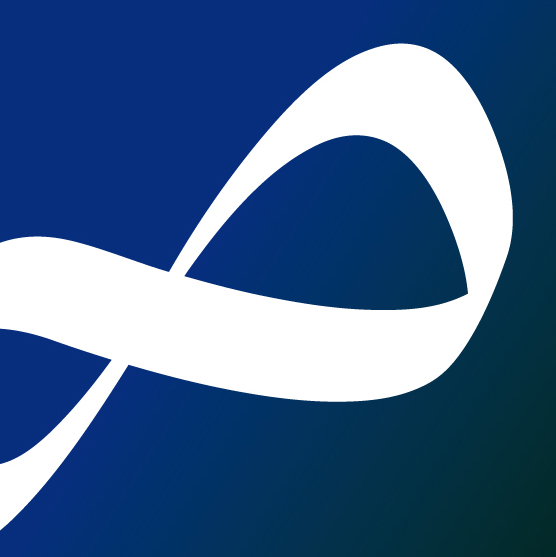}
				\end{flushright}
				\end{minipage}
\end{figure}

\begin{center}
\begin{tabular}{c}
		\\
    {\huge \textsc{Universit\'e Paris-Sud}} \\
		\\
		\\
    {\Large \textsc{Ecole Doctorale 564 :}} \\
    {\Large \textsc{Physique en Île-de-France}} \\
    \\
		{\Large \textsc{Laboratoire : CEA Saclay/IRFU/LARSIM}} \\
		\\
		{\Large \textsc{Discipline : Physique Quantique}} \\
    \\
		\\
    \huge \textsc{Th\`ese de doctorat}\\
    \\
    \large{Soutenue le 11 décembre 2014 par} \\
    \\
    \\
    \large{\textbf{{Mohamed Issam Ibnouhsein}}}\\
    \\
		\\
		\\
    \huge\bf{Corrélations quantiques}\\
		\huge\bf{et structures causales}\\
		\\
		\\
\end{tabular}

\begin{tabular}{p{0cm} p{3.5cm} p{5.4cm} l }
	& \footnotesize\bf{Directeur de th\`ese} : & Daniel Estève & \footnotesize{Directeur de recherche au CEA Saclay} \\
	& \footnotesize\bf{Encadrant} : & Alexei Grinbaum & \footnotesize{Chercheur au CEA Saclay} \\
	& & &\\
	& \footnotesize\bf\underline{Composition du jury :}& &\\
	\\
	& \footnotesize{Pr\'esident du jury} : & Nicolas Gisin & \footnotesize{Professeur à l'Université de Genève} \\
	& \footnotesize{Rapporteurs} : & Giacomo Mauro D'Ariano & \footnotesize{Professeur à l'Université de Pavie} \\
  &	& Jonathan Oppenheim& \footnotesize{Professeur à University College London} \\
  & \footnotesize{Examinateurs} : & Alexia Auffèves & \footnotesize{Chercheuse à l'Institut Néel - CNRS} \\
  &	& Frédéric Grosshans & \footnotesize{Chercheur au CNRS} \\

\end{tabular}

\end{center}

\newpage
\null
\thispagestyle{empty}
\newpage

\chapter*{Résumé}

Les travaux récents en fondements de la théorie quantique (des champs) et en information quantique relativiste tentent de mieux comprendre les effets des contraintes de causalité imposées aux opérations physiques sur la structure des corrélations quantiques.

Le premier chapitre de cette thèse est consacré à l'étude des implications conceptuelles de la non-localité quantique, notion qui englobe celle d'intrication dans un sens précis. Nous détaillons comment les récentes approches informationnelles tentent de saisir la structure des corrélations non-locales, ainsi que les questions que ces dernières soulèvent concernant la capacité d'un observateur localisé à isoler un système de son environnement.

Le second chapitre détaille les effets de l'invariance de Poincaré sur la détection et la quantification de l'intrication. Cette invariance impose que tous les systèmes soient modélisés en dernière instance dans le cadre de la théorie des champs, ce qui implique qu'aucun système à énergie finie ne puisse être localisé, ainsi que la divergence de toute mesure d'intrication pour des observateurs localisés. Nous fournissons une solution à ces deux problèmes en démontrant l'équivalence générique qui existe entre une résolution spatiale finie des appareils de mesure et l'exclusion des degrés de liberté de haute énergie de la définition du système observé. Cette équivalence permet une interprétation épistémique du formalisme quantique standard décrivant les systèmes localisés non-relativistes et leurs corrélations, clarifiant ainsi l'origine des mesures finies d'intrication pour de tels systèmes.

Le dernier chapitre explore un cadre théorique récemment introduit qui prédit l'existence de corrélations quantiques sans ordre causal défini. Procédant par analogie avec le cas des corrélations non-locales, nous présentons quelques principes informationnels contraignant la structure de ces corrélations dans le but de mieux en comprendre l'origine physique.

\bigskip

\textbf{Mots-clés :} théorie quantique, théorie quantique relativiste des champs, informatique quantique, entropie d'intrication, invariance de Poincaré, schéma de localisation, courbes de type temps fermées, ordre causal, information quantique.

\newpage

\chapter*{Abstract}

Recent works in foundations of quantum (field) theory and relativistic quantum information try to better grasp the interplay between the structure of quantum correlations and the constraints imposed by causality on physical operations.

Chapter \ref{first-chapter} is dedicated to the study of the conceptual implications of quantum nonlocality, a concept that subsumes that of entanglement in a certain way. We detail the recent information-theoretic approaches to understanding the structure of nonlocal correlations, and the issues the latter raise concerning the ability of local observers to isolate a system from its environment.

Chapter \ref{second-chapter} reviews in what sense imposing Poincaré invariance affects entanglement detection and quantification procedures. This invariance ultimately forces a description of all quantum systems within the framework of quantum field theory, which leads to the impossibility of localized finite-energy states and to the divergence of all entanglement measures for local observers. We provide a solution to these two problems by showing that there exists a generic equivalence between a finite spatial resolution of the measurement apparatus and the exclusion of high-energy degrees of freedom from the definition of the observed system. This equivalence allows for an epistemic interpretation of the standard quantum formalism describing nonrelativistic localized systems and their correlations, hence a clarification of the origin of the finite measures of entanglement between such systems.

Chapter \ref{third-chapter} presents a recent theoretical framework that predicts the existence of correlations with indefinite causal order. In analogy to the information-theoretic approaches to nonlocal correlations, we introduce some principles that constrain the structure of such correlations, which is a first step toward a clear understanding of their physical origin.

\bigskip

\textbf{Keywords:} quantum theory, relativistic quantum field theory, quantum computing, entropy of entanglement, Poincaré invariance, localization scheme, closed timelike curves, causal order, quantum information.

\chapter*{Acknowledgements}

I address my warmest thanks to my dissertation advisors Daniel Estève and Alexei Grinbaum. Their trust and constant support allowed me to enjoy a rare freedom in the choice of topics to study. I am also indebted to \v{C}aslav Brukner, Professor at the University of Vienna and Director of the Institute for Quantum Optics and Quantum Information (IQOQI) of Vienna. The invitation to visit his group allowed me to better grasp where difficult questions lie in the topics of quantum foundations and quantum information.

I have learned a lot from the valuable discussions with Etienne Klein, Vincent Bontems, Ognyan Oreshkov, Magdalena Zych, Igor Pikovski, Mateus Ara{\'u}jo, Maël Pégny, Ämin Baumeler, Christina Giarmatzi and most importantly Fabio Costa, whose patience and will to answer my (numerous) questions were a valuable help throughout these years.

This dissertation is dedicated to my parents for their constant support.

Financial support and travel funds over the last five years were provided by the CEA Saclay, the European Commission Q-Essence Project, and the Ministry for Education, Research and Technology of France.

\chapter*{Note to the reader}

This is an updated version of the original Ph.D. dissertation, dated October 13\textsuperscript{th}, 2015. The original version can still be found on the \href{https://tel.archives-ouvertes.fr/tel-01146097}{national French dissertations archive} or on the \href{http://arxiv.org/abs/1510.01309v1}{arXiv}. Updates mainly consist of either the use of quotation marks or the rephrasing of technical results when attribution was not clear in a small number of sections concerning literature review. The complete list of modifications is included below. The original results of Chapter 2 were published in the \href{http://journals.aps.org/prd/abstract/10.1103/PhysRevD.90.065032}{Physical Review D}. The original results of Chapter 3 are faithful to the state of progress of my work around December 2014. An updated version of these results will soon be published in the Physical Review A.

\section*{List of modifications}

p.6: rephrasing of the Schrödinger cat example (source: \href{https://en.wikipedia.org/wiki/Measurement_problem}{Wikipedia}).
 
p.8: use of quotation marks for a sentence by Horodecki \emph{et al.}

p.9: reformulation of the comment on the definition of separable states with addition of an internal reference. 

p.10: reformulation of the comments on the definition of an entanglement measure, reminder of the source.

p.15: reformulation of part of the comments on the loopholes in experiments that attempt to violate the Bell inequality (source: \href{https://en.wikipedia.org/wiki/Bell_test_experiments}{Wikipedia}).

pp.17-18: reformulation of the definition of the quantum game and of some technical results, reminder of the source.

p.20: minor rephrasing and inclusion of a footnote.

p.22: rephrasing of the sentence concerning the consequences of the simplex structure, rephrasing of the footnote, reminder of the source.

p.23: clarification on the verbatim reproduction of a technical proof, reminder of the source.

p.28: clarification on the verbatim reproduction of a technical proof, reminder of the source. 

p.29: explicit quote using quotation marks for the comment on the proof.

p.30: reformulation of the introduction to Section \ref{section-121}, addition of a missing reference.

pp.31-33: clarification of the source of the mathematical definitions and examples.

pp.36-40: clarification of the sources of some mathematical definitions, use of quotation marks when required, addition of a missing reference.

pp.53-54: clarification of the sources, use of quotation marks for a technical conclusion on spin entropy.

p.56: use of quotation marks for a technical conclusion, reminder of the source.

p.58: rephrasing of the summary of the works of Saldanha and Vedral, reminder of the sources.

p.59: clarification concerning the verbatim reproduction of a technical introduction to the Unruh effect, reminder of the source. 

p.62: improved the readability of internal references.

p.64: added an internal reference to a previously cited source.

pp.76-77: rephrasing and use of quotation marks when required for the review of Jacobson's work.

p.81: use of quotation marks concerning the physical interpretation of two closed timelike curves models and addition of a missing reference.

p.102: rephrasing of the simplified version of the game with a reminder of the source.

\tableofcontents

\chapter*{Note synthétique}
\addcontentsline{toc}{chapter}{Note synthétique}
\markboth{Note synthétique}{}

\pagenumbering{roman} \setcounter{page}{1}

Cette thèse se situe à l’intersection des études sur les fondements de la théorie quantique (des champs) et de l’information quantique relativiste. Y sont discutées plusieurs problématiques autour de l’imbrication entre les notions de corrélations quantiques et de contrainte causale. Le caractère mathématique des travaux ici présentés est constamment motivé par les questions philosophiques qu’impose la théorie quantique, et il est essentiel d’avoir ces deux types d’investigations à l’esprit si l’on veut acquérir une compréhension adéquate des fondements de la théorie quantique.

\section*{Introduction}
\addcontentsline{toc}{section}{Introduction}
\markboth{Note synthétique}{}

Les notions de corrélation et de causalité sont depuis longtemps au centre de nombreux débats philosophiques et scientifiques, d'où le parti pris de cette thèse de se restreindre aux définitions opérationnelles de ces notions. Ainsi, une corrélation se définit en termes de probabilités jointes pour les variables observées, tandis que la dépendance causale nécessite une spécification des probabilités conditionnelles pour relier un changement de la variable conditionnante à un changement observable de la distribution de probabilité pour la variable conditionnée. Pourquoi donc relier ces deux notions dans le titre << corrélations quantiques et structures causales >> ? Les termes << quantiques >> et << structures >> sont naturellement cruciaux pour comprendre le sens de cette association.

La relativité restreinte représente le cadre formel de notre compréhension moderne de la notion de structure causale : elle définit à partir d'arguments opérationnels une géométrie des événements où les seules dépendances causales possibles sont les ordres << avant >> et << après >> pour une séparation de type temps entre les événements, ainsi que l'absence d'influence causale ou non-signalisation dans le cas d'une séparation de type espace. Ces types d'influence causale sont encodés dans une structure géométrique appelée espace de Minkowski. Un des premiers débats mettant en cause la compatibilité de la relativité et de la théorie quantique date du célèbre papier d’Einstein, Podolsky et Rosen de 1935 et qui discute le paradoxe portant leurs noms. Ce dernier peut être résumé de manière moderne comme suit : une paire de photons est produite de telle façon que le spin total vaut 0, i.e. les photons sont intriqués. Supposons maintenant que les deux photons soient arrivés à des laboratoires avec une séparation de type espace. Alors, si une mesure du spin selon un axe $x$ pour le premier donne +1, on peut prédire avec certitude qu’une mesure du spin pour l’autre photon selon le même axe donnera -1. EPR dénomment une propriété prédictible avec certitude << élément de réalité >> et lui accordent une existence indépendente de l'acte de la mesure. \'{E}tant donné que les photons doivent vérifier la condition de non-signalisation, la relativité affirme que la mesure effectuée sur le premier photon ne peut affecter l’état du second photon. On entend ici par état d’un système toute représentation des probabilités d’obtenir tel ou tel résultat après mesure de telle ou telle observable sur ledit système. Or la formulation d’avant 1935 du principe d’incertitude de Heisenberg affirmait qu’il est impossible de mesurer deux variables conjuguées telles que le spin selon un axe $x$ et le spin selon un axe orthogonal $z$ à cause de la perturbation irréductible de la propriété mesurable par observation du spin selon $x$ par toute tentative de mesurer le spin selon $z$. L’existence d’une propriété bien définie de spin selon $x$ avant la mesure n’est donc pas explicitement remise en cause, il s’agit simplement d’une << variable cachée >>. Revenons à notre expérience. Puisque l'état du deuxième photon ne peut être perturbé par la mesure effectuée sur le premier, on peut donc simultanément mesurer le spin selon $z$ pour ce deuxième photon, et si l’on obtient +1, alors on est certain qu’une mesure du spin selon $z$ pour le premier photon donnerait -1, i.e. le spin selon $z$ du premier photon est un élément de réalité. Mais alors, le premier photon possède simultanément des valeurs bien définies +1 et -1 pour des variables conjuguées telles que le spin selon $x$ et $z$, en contradiction avec le principe d’incertitude ! Notons qu'EPR n'accordent le statut de variables cachées qu'aux éléments de réalité, une hypothèse encore plus faible que celle sous-jacente à l'explication << par perturbation >> du principe d'incertitude.

Cette tension entre la condition de non-signalisation entre les photons et la structure des corrélations quantiques les unissant est à la base de la compréhension moderne de la théorie quantique. En effet, ce n’est qu’après 1935 et la formulation du fameux paradoxe EPR que la portée conceptuelle du principe d’incertitude (qui portait aussi le nom de principe d'indétermination) devint plus claire, poussant Bohr à hisser son principe de complémentarité du statut de simple impossibilité physique de mesurer simultanément des variables conjuguées à une position épistémologique où la définition d’une propriété est confondue avec la production d’un résultat par un appareil de mesure. Durant de nombreuses années, le choix entre l’incomplétude de la théorie quantique, au sens qu’une théorie plus fondamentale à variables locales cachées existe, et l’abandon de toute existence objective des propriétés non mesurées était d’ordre philosophique. Ce n’est qu’en 1967 que les bases d’une vérification scientifique de la validité de l’une ou l’autre option ont été jetées par Bell via la fameuse inégalité portant son nom. L’analyse de cette inégalité et de ses implications conceptuelles constitue le point de départ de cette thèse.

\section*{Chapitre 1}
\addcontentsline{toc}{section}{Chapitre 1}
\markboth{Note synthétique}{}

Le scenario imaginé par Bell, dont nous analysons une version moderne dans ce premier chapitre, est le même que celui du paradoxe EPR. En formalisant la notion de variable cachée locale, il montre que si le monde est décrit par de telles variables, la moyenne des résultats d'une certaine combinaison d'observables vérifie toujours une inégalité simple qu'il a introduite. Or la théorie quantique prévoit une violation de cette inégalité, phénomène confirmé par les expériences d'Aspect et d'autres durant les années 1980. Il n'existe donc pas de théorie à variables cachées locales qui compléterait la théorie quantique : l'indétermination des propriétés non mesurées est un fait scientifique dont il faudra dorénavant nous accomoder.

Durant cette même période, les premiers protocoles d'information quantique allaient voir le jour, à commencer par le protocole de cryptographie quantique introduit par Bennett et Brassard en 1984. Ce dernier prévoit la possibilité théorique d'une distribution parfaitement secrète d'une clé cryptographique en se basant précisément sur les corrélations quantiques de type EPR pour détecter toute tentative d'espionnage. Suivront d'autres utilisations astucieuses de ce type de corrélations telles que la téléportation des états quantiques, le codage super-dense, etc., ce qui imposa les corrélations de type EPR comme une ressource informationnelle non-locale et non comme une difficulté conceptuelle de la théorie quantique. Restait à identifier précisément à quel type de ressource on avait affaire. Les états quantiques utilisés par EPR et Bell étaient intriqués, mais des états intriqués ne violant pas l'inégalité de Bell et donc simulables par des variables cachées locales ont été rapidement identifiés.

Le domaine de l'information quantique a récemment produit le cadre conceptuel adéquat pour mieux cerner et analyser le ou les types de ressources mises en jeu, à savoir imposer des contraintes opérationnelles sur les partenaires d'un << jeu >> dont l'objectif est d'atteindre avec la meilleure probabilité possible une relation donnée entre les résultats de mesures conditionnés par des choix d'appareils de mesures. L'intrication n'est alors plus définie comme la ressource que possède tout état pur non-séparable (dans la représentation par ket) mais plutôt comme la ressource que des partenaires avec une séparation de type espace ne peuvent produire à partir d'états séparables grâce à des opérations locales et la communication de bits classiques. La non-localité est alors définie comme la ressource que ces mêmes partenaires ne peuvent produire à partir d'état séparables en ne partageant qu'une source de bits aléatoires classiques, toute communication leur étant interdite une fois le jeu commencé. L'utilisation de ces ressources quantiques se traduit toujours par une probabilité de gain du jeu plus élevée, et plusieurs résultats ont été obtenus sur la structure de l'intrication et de la	non-localité en utilisant ces méthodes, l'un des plus remarquables étant que tout état intriqué est non-local au sens opérationnel ci-dessus. La maturité des télécommunications classiques a naturellement placé le premier paradigme au centre du débat, et la compréhension complète de la structure des états intriqués est l'un des problèmes ouverts les plus importants de la théorie quantique à l'heure actuelle.

À nouveau, l'étude des liens qu'entretiennent les notions de corrélation quantique et de contrainte causale ont à nouveau permis une avancée conceptuelle importante. En effet, s'il est avéré que les corrélations quantiques respectent la condition de non-signalisation dans le scénario EPR, il est naturel de se demander s'il s'agit-il des corrélations les plus fortes à le faire. Autrement dit, peut-on déduire la borne quantique dans l'inégalité de Bell, dite borne de Tsirelson, à partir du respect de la condition de non-signalisation ? Cette question, posée par Popescu et Rohrlich, a reçu une réponse négative. L'introduction des boîtes Popescu-Rohrolich (PR) a fourni un exemple concret de distributions de probabilités (et donc d'états au sens opérationnel évoqué précédemment) généralisant le scénario EPR et respectant la contrainte de non-signalisation, mais qui violent de manière maximale l'inégalité de Bell. L'absence de lien causal entre les particules du scénario EPR, encodée grâce à la notion géométrique de séparation de type espace, est formulée pour les deux parties causalement indépendentes de la boîte PR de manière algébrique. Les boîtes PR constituent une des premières tentatives de reconstructions, même partielles, de la théorie quantique à partir de principes informationnels clairs. Si le principe de non-signalisation n'a pas permis pas de déduire la borne de Tsirelson, et encore moins la structure des corrélations quantiques, d'autres principes ont été formulés qui y parviennent partiellement, le plus abouti à cette date étant le principe de causalité informationnelle que nous détaillons dans ce premier chapitre.

La discussion qui précède a jonglé entre les définitions géométrique et algébrique d'absence d'influence causale entre deux variables et a considéré la structure des corrélations quantiques obéissant à une telle contrainte. La géométrie de l'espace de Minkowski étant bien plus riche, cette analyse n'est probablement pas suffisante. Le formalisme de la théorie algébrique des champs a été construit avec l'idée d'incorporer dans la structure algébrique même des observables quantiques la notion relativiste de localité. Nous détaillons ce formalisme dans le premier chapitre, et notamment comment le théorème de Reeh-Schlieder, au travers des contraintes dites de micro-causalité, impose que tous les états à énergie finie atteignent la violation maximale autorisée par la théorie quantique d'un avatar des inégalités de Bell, les systèmes considérés étant alors des régions d'espace à un instant fixe.

\section*{Chapitre 2}
\addcontentsline{toc}{section}{Chapitre 2}
\markboth{Note synthétique}{}

L'incorporation directe du principe de localité dans la théorie quantique mène inévitablement à des états décrivant les régions d'espace fortement corrélées. Le second chapitre s'ouvre sur la discussion d'une approche alternative qui consiste en une inclusion progressive des contraintes d'invariance de Poincaré au scénario EPR.

L’expérience de Bell considère le référentiel des observateurs, dénommés de manière standard Alice et Bob, comme fixe (le mouvement des particules arrivant dans les laboratoires séparés de type espace n'est alors pas important), tandis que le scénario relativiste considère que le référentiel de ces observateurs est lui aussi en mouvement. Par exemple, on peut considérer la situation où du point de vue d’un troisième observateur noté $O$ supposé fixe, Alice et Bob se déplacent selon un axe $z$ tandis que les particules se déplacent en sens opposé selon un axe $x$ avec spin total 0.

Des mesures de spin par Alice et Bob selon l'axe $y$ seront parfaitement anticorrélées indépendamment de leur vitesse selon $z$ et de celle des particules selon $x$ car il n’y a aucun mouvement relativiste selon la direction $y$. Par contre, si Alice et Bob décident d'effectuer des mesures selon l’axe $z$, le déplacement selon à la fois $x$ et $z$ de la particule dans leur référentiel commun implique que les résultats de mesure seront affectés par cette vitesse (qui correspond à l’opposé de la vitesse de déplacement commune d’Alice et Bob selon $z$ pour $O$). En effet, en régime relativiste, la partie << spin >> de l’état des particules subit sous l'effet des rotations dites de Wigner une transformation qui dépend de la partie << moment >>. Ainsi, le déplacement selon l'axe $z$ de la particule dans le référentiel commun d’Alice et Bob affecte les résultats des mesures de spin selon ce même axe $z$. Dans le régime non-relativiste, les transformées de Galilée ne couplent pas le spin et le moment, et donc de telles considérations n’entrent pas en jeu.

Que signifie concrètement le terme << affecter >> ? Il suffit de décrire ce que << voient >> Alice et Bob, i.e. d'expliciter l’état des particules dans leur référentiel. Il faut donc appliquer à l'état des particules dans le référentiel de $O$ la transformation de Lorentz permettant de passer du référentiel commun à Alice et Bob à celui de $O$. Les calculs montrent que l'état ainsi obtenu produit des résultats de mesures de spin selon $z$ qui ne seront pas parfaitement anticorrélés, mais qu’une rotation spécifique selon $y$ de l’appareil de mesure donnera des résultats parfaitement anticorrélés. Reste à définir un objet mathématique modélisant de manière covariante les résultats de ces mesures de spin. Le chapitre 2 discute quelques tentatives dans la littérature de définir une observable spin covariante ainsi que l'impossibilité d'une formulation covariante de la notion d’entropie d'intrication. Il s'avère donc que l'on ne peut donner un sens covariant à la notion de corrélation quantique.

Dans le cas d'observateurs non inertiels, des difficultés apparaissent à cause de la dépendence du nombre de particules en l'accélération des observateurs : c'est l'effet Unruh. Par ailleurs, la manipulation de l'intrication par des observateurs locaux est problématique étant donné le caractère global des modes de tout champ quantique. Ces difficultés peuvent être partiellement dépassées grâce à l'utilisation des détecteurs de type Unruh-DeWitt, mais les fortes corrélations évoquées précedemment entre degrés de liberté de type espace pour tout état d'énergie finie impliquent une divergence de toutes les mesures d'intrication, et notamment celle de l'entropie d'intrication. Comme tous les systèmes, y compris ceux de basse énergie, sont en dernière instance décrits par des champs quantiques, cette divergence entre en contradiction avec les résultats finis obtenus pour toute mesure d'intrication en théorie quantique des systèmes finis.

Une première contribution de cette thèse est de fournir une méthode de régularisation de cette divergence à basse énergie qui, contrairement à certains travaux récents que nous évoquerons, est indépendente du modèle de détecteur couplé au champ et valable pour tous les états à énergie finie. L'idée générale de cette méthode est de montrer la convergence à basse énergie entre le schéma standard de localisation, i.e. la méthode standard d'association des degrés de libertés du champ à des régions avec séparation de type espace, et le schéma de localisation dit de Newton-Wigner. Le vide étant séparable pour ce dernier à toutes les échelles d'énergie, nous obtenons des résultats finis pour l'entropie d'intrication en régime de basse énergie pour tous les états à énergie finie. Le formalisme ainsi déduit est naturellement équivalent à celui de la théorie quantique des systèmes finis, d'où une transition contrôlée entre la description de l'intrication en théorie quantique des champs et celle de la théorie quantique des systèmes finis. Notons que là encore, les liens forts qui existent entre la notion de corrélation quantique et de structure causale se manifestent. En effet, la séparabilité du vide pour le choix du schéma de localisation de Newton-Wigner est obtenue au prix d'une dynamique non-locale, et donc d'une violation des contraintes de causalité encodées dans la géométrie de l'espace de Minkowski. Tenter de préserver la causalité revient à choisir le schéma de localisation standard, et donc à admettre la divergence des mesures de corrélations. La régularisation de l'entropie d'intrication grâce à la granularité des appareils des mesures fournit un autre éclairage sur ces liens : si l'on se situe à une échelle d'énergie où les éventuelles violations de la structure relativiste de la causalité sont indétectables, alors les mesures d'intrication fournissent des résultats finis.

L'étape suivante naturelle est de considérer ce qui se passe lorsque l'on tente de régulariser la divergence de l'entropie d'intrication en régime de haute énergie. La dernière section de ce chapitre rappelle partiellement les travaux de Jacobson à propos de la relation << thermodynamique >> entre une régularisation à haute énergie de l'entropie d'intrication (avec la condition que la loi de proportionalité de l'entropie et de l'aire ainsi que la relation de Clausius tiennent) et la dynamique de l'espace-temps telle que décrite par l'équation d'Einstein de la relativité générale, autre indice des liens forts qui existent entre la structure des corrélations quantiques et les modèles relativistes de structures causales.

\section*{Chapitre 3}
\addcontentsline{toc}{section}{Chapitre 3}
\markboth{Note synthétique}{}

Nous avons jusqu'à maintenant passé en revue les difficultés qui peuvent se poser pour la détection et la mesure de l'intrication lorsque l'on incorpore de façon directe ou plus progressive des contraintes causales sous forme de contrainte géométrique, i.e. d'invariance relativiste. Le dernier chapitre de cette thèse s'intéresse à la possibilité de structurer dans un cadre algébrique cohérent les relations causales autres que celle de non-signalisation, approche récemment introduite par Oreshkov \& \emph{al.} et qui tente de contourner les difficultés que pose l'ajout de contraintes géométriques à la théorie quantique. La première tentative d'incorporer de manière algébrique des liens causaux complexes au sein de la théorie quantique date des travaux de Deutsch sur les courbes de type temps fermées. Cependant, la dynamique non-unitaire de ces modèles pose problème car elle ouvre la voie à des communications à vitesse supra-lumineuse. Le cadre formel introduit par Oreshkov \& \emph{al.} est particulièrement intéressant à cet égard car il préserve la structure linéaire de la théorie quantique. Relaxant la condition d'existence d'une structure causale globale, ce formalisme reproduit toutes les corrélations multipartites que peuvent posséder des partenaires dont les opérations locales sont régies par la théorie quantique. De manière analogue au paradoxe EPR, il est possible de définir un << jeu causal >> entre deux partenaires Alice et Bob pour lequel la probabilité de succès possède une borne spécifique si tous les événements locaux tels que le choix d'un appareil de mesure par Alice ou le résultat d'une mesure chez Bob sont ordonnés selon des ordres causaux bien définis ou une mixture de tels ordres causaux. Ce cadre formel prédit l'existence de corrélations qui violent une << inégalité causale >>, ce qui implique l'impossibilité d'identifier les événements locaux à une série d'événements causalement ordonnés. Assumer la validité de la théorie quantique à un niveau local n'implique donc pas l'existence d'une structure causale globale de type relativiste, i.e. un événement n'est pas forcément << avant >>, << après >> ou sans lien causal avec un autre, mais peut être dans une << superposition >> d'ordres causaux. Au contraire, si l'on assume que les opérations locales sont classiques, alors il est possible d'organiser les événements locaux bipartites au sein d'une structure causale globale.

Procédant par analogie avec les approches informationnelles qui tentent de caractériser la structure des corrélations quantiques vérifiant la condition de non-signalisation, la seconde contribution de cette thèse est de placer les corrélations sans ordre causal défini au sein d'un cadre probabiliste plus général afin de mieux comprendre les liens entre certaines contraintes sur l'ordre des opérations locales et la possibilité d'émergence de telles corrélations, ainsi que tenter de reconstruire à partir de principes clairs les bornes classiques et quantiques associées au jeu causal. Les spécificités du jeu causal font apparaître l'extension du principe de causalité informationnelle comme un candidat naturel à une telle reconstruction. La performance des partenaires dans le jeu telle que mesurée par l'information mutuelle servira d'outil de base à cette approche. Ainsi, une classe de jeux pour lesquels les corrélations causalement ordonnées performent de manière bornée est introduite, et la borne quantique associée au jeu causal initial est déduite d'une telle condition de performance bornée, démontrant ainsi qu'une formulation entropique des contraintes de signalisation imposées par les structures causales ordonnées impose la même limite que la théorie quantique à la probabilité de succès dans le jeu causal. Nous discutons enfin la possibilité de définir d'autres mesures de performance. L'introduction d'une telle mesure alternative, basée sur la `corrélation maximale de Hirschfeld-Gebelein-Rényi', s'avère en effet plus adaptée aux jeux causaux lorsque l'on tente de discriminer les corrélations classiques et quantiques des corrélations supra-quantiques.

\section*{Perspectives}
\addcontentsline{toc}{section}{Perspectives}
\markboth{Note synthétique}{}

L'exploration dans cette thèse des liens forts qui existent entre les notions de corrélation quantique et de structure causale nous amène à poser des questions plus générales. La granularité des appareils de mesure exploitée lors de la régularisation de l'entropie d'intrication correspondait à une opération d'ordre géométrique car elle se base sur une équivalence à basse énergie entre différentes façons d'affecter les degrés de libertés du champ à des régions d'espace à un instant fixe. Plus généralement, la théorie algébrique des champs est un encodage dans l'algèbre des observables quantiques de la structure géométrique de l'espace-temps. Cette approche est à la base de plusieurs tentatives d'unification de la théorie quantique et de la relativité générale : l'unification est cherchée au travers d'une << géométrisation >> plus importante de la théorie quantique. Nous avons, au fur et à mesure de cette thèse, défendu un point de vue complémentaire : il faut tenter de rendre plus algébrique notre compréhension des structures causales, en commençant par les relations simples de type << avant >>, << après >> et la condition de non-signalisation. Le cadre formel discuté dans le dernier chapitre, lui-même basé sur des suggestions récentes de Hardy, est une première étape.

Cependant, beaucoup de questions demeurent : à quoi correspondent les corrélations quantiques sans ordre causal défini ? Quelles voies explorer pour espérer une implémentation expérimentale de la violation de l'inégalité causale ? Existe-t-il une analogie plus ou moins formelle entre cette ressource et l'intrication ? Des réponses à ces questions permettraient de mieux comprendre la nature des corrélations en jeu, et ainsi d'approfondir encore davantage notre compréhension des liens entretenus par les notions de corrélation quantique et de structure causale.

\chapter*{Introduction}
\addcontentsline{toc}{chapter}{Introduction}
\markboth{Introduction}{}

\pagenumbering{arabic} \setcounter{page}{1}

This dissertation is at the intersection of foundations of quantum (field) theory and the theory of relativistic quantum information. It considers some issues that arise when trying to better grasp the interplay between the structure of quantum correlations and the constraints imposed by causality on physical operations. It goes without saying that the more mathematical sort of inquiry presented here is constantly motivated by the philosophical questions quantum theory imposes, and it is essential for gaining an adequate grasp of the foundations of physics to have both types of investigations in mind.

Chapter \ref{first-chapter} considers some foundational issues about quantum correlations in both the nonrelativistic and the relativistic settings. Very early in the development of quantum theory, worries arose about a potential conflict between the correlations predicted by quantum theory and our intuitions about the physical world rooted in classical physics---especially relativistic field theories---in which the state of a local system is independent of the state of distant systems. In 1964, Bell elucidated the peculiar nonlocality of quantum theory by showing rigorously that some correlations predicted by quantum theory cannot be reproduced by any local hidden variable model. Bell's approach clarified the sense in which quantum correlations respect causality, and later Popescu and Rohrlich showed that there exist theories that are more nonlocal than quantum theory but still respect the causality constraints on marginal probabilities describing measurement results of local observers. More recently, quantum correlations were thoroughly studied in the context of general non-signalling correlations. A partial derivation of bipartite quantum correlations was found based on a physically clear principle called `information causality' that extends the no-signalling condition. These results belong to the field of (partial or complete) reconstructions of quantum theory, which aims at a better understanding of the structure of quantum correlations and hopefully of quantum theory itself.

After the seminal works of Summers, Redhead, Clifton, Halvorson and others, it appeared that entanglement is much more deeply entrenched in any relativistic quantum theory. Indeed, results in algebraic quantum field theory such as the Reeh-Schlieder theorem or the ``natural" type III$_1$ constraint on local algebras of observables entail a generic entanglement in the state space of a field system. This deeply affects the ability of local observers to isolate a system from its environment.

Chapter \ref{second-chapter} focuses on technical aspects of entanglement detection and quantification in the relativistic setting. Observer-dependent entanglement arises when Poincaré invariance is imposed on quantum theory. Therefore, understanding the implications of this dependence is crucial for modern quantum information protocols. We review how Poincaré invariance couples the spin and momentum degrees of freedom for inertial observers, which in turn implies a transfer of entanglement between them and imposes a fine-tuning of local operations if observers are to detect the nonlocal character of correlations. For non-inertial observers, interpretational difficulties arise due to the observer-dependence of the number of particles, a phenomenon called Unruh effect. Moreover, entanglement manipulation by local observers is problematic because of the global character of field modes. These difficulties can be partially overcome by using the so-called Unruh-DeWitt detectors, but many problems arise when local observers try to quantify entanglement for any infinite-mode system. We detail these issues by presenting results on the area law for entanglement entropy and its divergence in the continuum limit for many field models, thus providing an alternative point of view on the ``invasive" character of entanglement in quantum field theory. Following the intuition gained in Chapter \ref{first-chapter} about the deep relationships between entanglement and the properties of local algebras, we present a novel regularization technique at low energy of entanglement entropy of infinite-mode systems. This is the first contribution of this dissertation. The idea consists in proving the convergence at low energy between the standard localization scheme, i.e. the standard way of assigning the field degrees of freedom to spacelike separated regions of spacetime, and the so-called Newton-Wigner localization scheme. The vacuum state is unentangled for the latter at all energy scales, which results in a finite entropy of entanglement at low energy for all finite-energy states. The derived low-energy formalism is as expected equivalent to standard nonrelativistic quantum theory, hence a controlled transition from the quantum field theory description of entanglement to the one by nonrelativistic quantum theory. A natural step is then to consider what happens if one tries to regularize entanglement entropy at high energy. The last section of this chapter partially reviews the work by Ted Jacobson on the ``thermodynamical" relationship between a high-energy regularization of entanglement entropy---under the conditions that the area law and the Clausius relation hold---and spacetime dynamics as described by Einstein's general relativity equation.

In Chapter \ref{third-chapter}, we expand our analysis of the interplay between quantum correlations and the causal structures ordering measurement events by examining basic concepts such as ``localization" and ``causal structure" from an operational point of view. Early approaches have tried to model exotic causal structures such as closed timelike curves (CTC) using the quantum formalism. For instance, Deutsch introduced a CTC model that extends the quantum formalism with non-unitary operations and avoids logical paradoxes. More recent approaches try to preserve the linear and unitary structure of quantum operations. Some of these models exhibit correlations beyond causally ordered ones. We focus on a recently introduced formalism where causal relations are defined in terms of the possibility of signalling. According to this definition, it is possible to find an operational task---a `causal game'---whose probability of success is bounded for operations performed in a definite or mixture of causal orders. All the possible multipartite correlations that can be produced by different agents whose operations are locally described by quantum mechanics are reproduced without making any prior assumption on a causal structure in which the operations are embedded. An example of such correlations was found that allows winning the causal game with a probability of success larger than the causally ordered bound. This shows that assuming the local validity of quantum mechanics does not imply the existence of a global causal structure. In contrast, if classical mechanics is assumed to hold locally, bipartite correlations can always be embedded in a global causal structure, while it was proven that this is not the case for three parties. Following the standard approach to entanglement characterization, the second contribution of this dissertation consists in placing such correlations in the context of a generalized probabilistic framework in order to better understand the connections between the local ordering of events and the emergence of an indefinite global order, and examining the relevance of various informational principles for a characterization of the classical and quantum bounds on correlations with indefinite causal order. The general aim is to provide a reconstruction of bipartite quantum correlations with indefinite causal order from clearly motivated physical principles. A possible extension of the information causality principle thus appears as a natural candidate. We reformulate the causal game as a random access code and introduce a class of causal games for which causally ordered correlations perform with a bounded efficiency as measured by mutual information. We then show that the quantum bound can be derived by taking bounded efficiency as an assumption, i.e. the entropic characterization of fixed causal structures imposes the same limit as quantum theory to the success probability in a family of causal games. This principle is very similar to the intuition provided by information causality, which states that the efficiency of a protocol using non-signalling correlations and one-way signalling cannot exceed the total amount of signalling if three natural conditions on mutual information hold. We also show that shifting the focus from mutual information to an alternative measure of dependence called `Hirschfeld-Gebelein-Rényi maximal correlation' is probably better suited in the context of causal games for discriminating classical and quantum correlations from supra-quantum ones. We end our discussion by reviewing the quantum switch framework, an instance of a higher-order computation using quantum supermaps which provides an alternative approach to correlations with indefinite causal order.

\chapter{Conceptual implications of entanglement}
\label{first-chapter}

The goal of this chapter is to analyze in which way quantum correlations have deep conceptual implications on our understanding of ``reality". We review in the first section the formalism of nonrelativistic quantum theory, and provide an operational definition of an intrinsic property of multipartite systems called entanglement. The Bell theorem is presented and its implications analyzed. The discussion then continues with the recent attempts to ground the structure of entanglement on clear physical principles. We will mainly focus on the partial reconstruction of bipartite non-signalling quantum correlations from the information causality principle. In the second section, we shift the focus to relativistic quantum fields. We begin by a review of the framework of algebraic quantum field theory, and following Halvorson's seminal work \cite{halvorson_locality_2001}, we analyze the constraints causality imposes on the properties of local algebras of observables, and the implications of such properties on the ability of local observers to isolate their part of the field system from its environment.

\section{Structure of quantum correlations}

The first quantum models were elaborated more than a century ago. The unification by Dirac and von Neumann of Heisenberg's matrix mechanics and Schrödinger's wave mechanics into a unique quantum theory gave a strong mathematical basis to all these models and ones still to be elaborated. Unlike classical physics (including thermodynamics and relativity), the postulates of quantum theory rest on no clear physical principles. During the last decades, many efforts have been made to find a satisfactory interpretation to these postulates, but no consensus was reached. Central to these difficulties is the \emph{measurement problem}, as illustrated by Schrödinger's cat paradox. The triggering of a mechanism that causes the death of a cat is conditioned on a quantum event, such as the decay of a radioactive atom. According to the Schrödinger equation, the cat evolves into a linear combination of ``alive cat" and ``dead cat" states, each of these states being associated with a nonzero probability amplitude. However, after the measurement the cat is either alive or dead. The question is: how can we characterize the transition from a probability distribution of possible outcomes into a well-defined outcome? More generally: how can one establish a correspondence between quantum and classical reality? Other ``paradoxes" such as the notorious Einstein-Podolski-Rosen (EPR) paradox further highlight the departure quantum theory imposes on our ``realist" view of the world based on concepts of classical physics. In this section we review the basic postulates of quantum theory, some of the conceptual novelties they convey and the recent attempts to (partially) ground these postulates on physically motivated principles.

\subsection{General postulates of quantum theory}

The standard postulates of quantum theory are \cite{nielsen_chuang,paris_modern_2012}:
\begin{description}
\item[$\Pos$] The space of states of an isolated quantum system corresponds to the set of positive trace class operators with trace 1 (called density matrices) on a Hilbert space  $\mathcal{H}$. The Hilbert space $\calH_{AB}$ associated to a composed system $AB$ is the tensor product $\calH_A\otimes\calH_B$ of the Hilbert spaces associated to the subsystems $A$ and $B$. If $\rho_{AB}$ is the state of the composed system, the partial traces
\begin{equation}
\rho_A=\trace_B(\rho_{AB}),\hspace{0.5cm}\rho_B=\trace_A(\rho_{AB})
\end{equation}are also density matrices describing subsystems $A$ and $B$ respectively.
\item[$\Poss$] Transformations of an isolated system that are associated with the action of a connected Lie group correspond---via a strongly continuous unitary representation in $\mathcal{H}$ of its universal cover---to a one-parameter group of unitary operators acting on $\calH$. In particular, if $\rho(0)$ is the initial state of the system, the transformation corresponding to time translation is described by a one-parameter group of unitary operators $\{U(t)\}_{t \in \mathbb{R}}$, and the state at time $t$ is given by:
\begin{equation}
\rho(t)=U(t)^{-1}\rho(0)\, U(t).
\end{equation}One can also describe this evolution by a differential equation called the Liouville-von Neumann equation:\begin{equation}
i\hbar\frac{\partial \rho}{\partial t}=[H(t),\rho],
\end{equation}where $\hbar$ is the reduced Planck constant and $H(t)$ is a one-parameter group of self-adjoint operators with dense and invariant domain in $\calH$ called the Hamiltonian of the system.
\item[$\Posss$] A measurement apparatus is described by a collection $\{M_m\}$ of bounded operators acting on $\calH$ verifying the following completeness relation:
\begin{equation}
\label{completeness}
\sum_m M_m^{\dagger}M_m= \id,
\end{equation} where the index $m$ refers to the measurement outcomes that may occur in the experiment. If $\rho$ is the state of the system immediately prior to the measurement, then the probability for outcome $m$ after the measurement is given by Born's rule:
\begin{equation}
\label{BornRule}
p(m)=\operatorname{tr}( M_m^{\dagger}\rho M_m),
\end{equation}and the state immediately after the measurement is either
\begin{equation}
\label{MeasPos}
\rho_m=\frac{M_m^{\dagger}\rho M_m}{\trace(M_m^{\dagger}\rho M_m)}
\end{equation}if the measurement outcome is $m$, or
\begin{equation}
\tilde{\rho}=\sum_m p(m)\rho_m=\sum_m M_m^\dag\rho M_m
\end{equation}if the measurement results are not recorded.
\end{description}

Note that:
\begin{itemize}
\item[(i)] The formalism presented above accounts for both pure and mixed states. One can retrieve the usual ``bra-ket" formalism whose primary elements are pure states by noting that the latter correspond to extremal projectors.
\item[(ii)] Probabilities are defined through observables $F_m=M_m^{\dagger}M_m$. These are positive self-adjoint operators called Positive-Operator Valued Measures (POVM). If we further impose that the $\{F_m\}$'s are orthogonal then they are called Projection-Valued Measures\footnote{See Appendix \ref{algebra-theory}.} (PVM).
\item[(iii)] Postulate 3 describes a measurement apparatus through operators $\{M_m\}$: there exists an infinite number of possible decompositions of the $F_m$'s into $M_m^{\dagger}M_m$'s, in correspondence with the infinite number of possible physical apparatus measuring the observables  $\{F_m\}$. Note that one cannot deduce posutlate 3 from postulate 2.
\item[(iv)] Postulates 2 and 3 can be grouped in the framework of quantum operations, which can describe any combination of unitary operations, interactions with an ancillary quantum system or with the environment, quantum measurement, classical communication and postselection. A quantum operation is a trace non-increasing linear map $\Phi:\mathfrak{B}(\calH)\rightarrow \mathfrak{B}(\calH)$ that is also completely positive, i.e. $\Phi\otimes \id_n$ is positive for all $n\in\mathbb{N}$. Stinespring's dilation theorem then provides the so-called Kraus representation of the quantum operation $\Phi$:
\begin{equation}
\Phi(\rho)=\sum_m M^\dag_m \rho M_m
\end{equation}
for any density matrix $\rho$, where $M^\dag_m,M_m \in\mathfrak{B}(\calH)$, called Kraus operators, verify the relation:
\begin{equation}
\label{completeness}
\sum_m M_m^{\dagger}M_m\leq \id.
\end{equation}
\end{itemize}

\subsection{Entanglement as a resource}

The effect of the replacement of the classical concept of phase space by the abstract Hilbert space makes a gap in the description of composite systems. As recognized by Einstein, Podolsky, Rosen, and Schrödinger, entanglement, which is a specific type of nonclassical\footnote{We detail the meaning of the word ``nonclassical" in Section \ref{nonloc-context}.} correlations between the subsystems of a composed system, is the essence of the quantum formalism. However, as commented by Horodecki \emph{et al.} in \cite{horodecki_quantum_2009}, ``[it] waited over 70 years to enter laboratories as a new resource as real as energy". This ``holistic" property of compound quantum systems has potential for many quantum processes such as quantum cryptography, quantum teleportation or dense coding, and many efforts were put in the study of entanglement characterization, detection, distillation, and quantification \cite{horodecki_quantum_2009}.

\subsubsection{LOCC paradigm}

A state $\rho_{ABC...}$ shared by parties $A,B,C,...$ is said to be
{\it separable} if and only if (iff) it can be written in the form
\begin{equation}
    \label{separable}
    \rho_{ABC...} = \sum_i p_i ~ \rho^i_{A} \otimes \rho^i_{B} \otimes
    \rho^i_{C} \otimes ...
\end{equation}
where $\{p_i\}$ is a probability distribution. These states can
trivially be created by local quantum operations (LO) and classical communication (CC) between parties: as argued in \cite{plenio_introduction_2005}, Alice can simply sample from the probability distribution
$p_i$ and then share the outcome $i$ with other parties. Subsequently, each party $X$
can locally create state $\rho^i_X$ and then discard the information
about outcome $i$. The crucial point is that the converse is also true : a quantum state $\rho$ may be generated perfectly using LOCC iff it is separable. This can be traced back to the fact that separable states are the ``endpoints" of LOCC transformations between quantum states \cite{nielsen_conditions_1999}. If the measurements results on a quantum system cannot be simulated using a separable state and LOCC, then its state will be considered as entangled. Thus, adopting a highly operational point of view, entanglement can be seen as a \emph{resource} that allows parties to overcome the LOCC constraint in solving certain multipartite tasks, generally referred to as `games', in the sense that parties win the game if their measurement outcomes (outputs) for the given choice of measurement settings (inputs) are correlated in a certain way\footnote{This rather involved definition of entanglement is justified because alternative paradigms define resources that are distinct from entanglement and absent from separable states. Therefore, defining entangled states as nonseparable ones can generate a confusion on what resource we are referring to if different paradigms are available.}.

\subsubsection{Entanglement measures}

In  \cite{plenio_introduction_2005} Plenio and Virmani characterize a good entanglement measure by the fact that it should capture ``the essential features that we associate with entanglement", i.e. it should be 0 for separable states and should not increase under LOCC. Ideally, it also should be related to some operational procedure. Many well known entanglement measures are based on the von Neumann entropy of quantum states, which extends the Shannon entropy of classical states. It is defined as follows:
\begin{equation}
S(\rho)=-\trace(\rho\log(\rho)),
\end{equation}for any density operator $\rho$.
The following is a list of possible postulates for an entanglement measure \cite{plenio_introduction_2005}:
\begin{enumerate}
    \item A {\it bipartite} entanglement measure $E(\rho)$ is a mapping from density
    matrices into positive real numbers:
    \begin{equation}
    \rho \mapsto E(\rho) \in \mathbb{R}^+
    \end{equation}
    defined for states of arbitrary bipartite systems. A
    normalization factor is also usually included
    such that the maximally entangled state $|\psi_d^+\rangle \lang \psi_d^+|$ where
\begin{equation}
    |\psi_d^+\rangle= \frac{|0,0\rangle + |1,1\rangle + .. +|d-1,d-1\rangle}{\sqrt{d}},
    \end{equation}
    where $d$ is the dimension of the Hilbert space, has $E(|\psi_d^+\rangle \lang \psi_d^+|)=\log d$.
    \item $E(\rho)=0$ iff the state $\rho$ is separable.
    \item $E$ does not increase on average
    under LOCC, i.e.
    \begin{equation}
        E(\rho) \ge \sum_{i} p_i E\left(\frac{A_i^{\dagger}\rho A_i}{\trace(A_i^{\dagger}\rho A_i)}\right),
    \end{equation}
    where the $\{A_i\}$ are the Kraus operators describing some LOCC protocol
    and the probability of obtaining outcome $i$ is given by $p_i=\trace(A_i^{\dagger}\rho A_i)$.
    \item  For a bipartite pure state $|\psi\rangle\langle\psi|_{AB}$ the measure
    reduces to the entropy of entanglement\footnote{The entropy of entanglement for a bipartite system $AB$ is defined by $S_A=S\circ \trace_B = S\circ \trace_A=S_B$.}:
    \begin{equation}
        E(|\psi\rangle\langle\psi|_{AB}) = (S\circ \trace_B)
    (|\psi\rangle\langle\psi|_{AB}).
    \end{equation}
\end{enumerate}
One can find in the literature authors that impose additional requirements such as convexity, additivity or continuity, depending on their needs. Other entanglement measures exist such as the entanglement of distillation, the entanglement cost, the relative entropy of entanglement and the squashed entanglement. Their definitions are generally based on operational considerations about entanglement quantification in quantum information protocols. Nonetheless, these measures not only have a practical purpose but also developed into powerful mathematical tools that contributed to the formalization of important open questions such as the additivity of quantum channel capacities or the bounding of quantum computing fault tolerance thresholds (See references in \cite{plenio_introduction_2005} for more details).

\subsubsection{Alternative paradigms}

The notion of entangement as a resource is of course implicitly related to our restriction of quantum operations to LOCC operations. Switching to a different set of restrictions on possible operations, one can define and study new resources. For instance, nonlocality is defined as the resource that cannot be simulated in the local operations and shared randomness (LOSR) paradigm, i.e. when parties are forbidden all sorts of communication, being allowed though to synchronize their local operations with respect to a common classical random variable shared in advance. Again, only separable states can be created from scratch using LOSR, and if we cannot simulate the measurements results on a quantum system by measurements on a separable state, then its state is called nonlocal. Nonlocality and entanglement are distinct resource \cite{gisin_hidden_1996}. For instance, one can show that maximally nonloca states are not maximally entangled \cite{methot_anomaly_2006}. One of the main difficulties in these paradigms is to define sufficiently `subtle' games to allow a detection of the resource under consideration. We will analyze this issue in more details in the next section.

\subsection{Bell's theorem}
\label{nonloc-context}

Besides its importance for quantum information processing, entanglement is at the root of many departures from important concepts of classical physics. Indeed, the combination of three of the most natural assumptions in classical physics---free will, realism and locality---is questioned by entanglement effects. Free will, encoded as the fact that the setting of the measurement apparatus can be chosen independently of the parameters that determine its future outcomes, is the most fundamental one. Zeilinger commented \cite{zeilinger_dance}:
\begin{quote}
``[W]e always implicitly assume the freedom of the experimentalist [...] This fundamental assumption is essential to doing science. If this were not true, then, I suggest, it would make no sense at all to ask nature questions in an experiment, since then nature could determine what our questions are, and that could guide our questions such that we arrive at a false picture of nature."
\end{quote}
Realism states that alongside the outcomes of actually performed measurements, the outcomes of potentially performed measurements also exist at the same time. For instance, position and velocity of a car exist, independently of measuring them or not. Locality states that the outcomes of a measurement on one part of a system cannot depend on the choice of the measurement to be performed on the other part of the system when the two parts are spacelike separated. This is a somewhat more elaborate hypothesis since it is based on the fact that there exists a maximum propagation velocity in nature, which is a well verified experimental fact at the basis of relativity theory. Whether one or more of these assumptions should be dropped in a world described by quantum theory was subject to heated philosophical debates, all brought to the realm of science by Bell's celebrated 1964 theorem \cite{Bell1964a}. We provide here the modern form of Bell's argument introduced in 1969 by Clauser, Shimony, Horne and Holt (CHSH) \cite{clauser_proposed_1969}.

Consider two parties Alice and Bob, and the simple case of measurements of two binary-valued observables, $x \in \{0, 1\}$ with outcomes $a \in \{0, 1\}$, performed by Alice in a region $A$, and $y \in \{0, 1\}$ with
outcomes $b \in \{0, 1\}$, performed by Bob in a spacelike separated region $B$. They are allowed to confer on a joint strategy before the game starts, but once the game starts they are separated and not allowed further communication.
In this setup, the notions of realism and locality are combined into the so-called \emph{local realism} hypothesis, which is formalized as follows:
\begin{itemize}
\item[(i)] There exists a probability space $\Lambda$ such that the observed outcomes by both Alice and Bob result by random sampling of the (unknown, ``hidden") parameter $\lambda \in \Lambda$. $\Lambda$ is assumed to be endowed with a probability measure of density $\mu$ such that the expectation of a random variable $X$ on $\Lambda$ with respect to $\mu$ is written:
\begin{equation}
E(X) = \int_\Lambda X(\lambda) \mu(\lambda) d \lambda.
\end{equation}
\item[(ii)] The values observed by Alice or Bob are functions of the local detector settings and the hidden parameter only. Thus:
\begin{itemize}
\item[(a)]Value observed by Alice with detector setting $x$ is $a(x,\lambda)$, abbreviated as $a_x$.
\item[(b)]Value observed by Bob with detector setting $y$ is $ b(y,\lambda)$, abbreviated as $b_y$.
\end{itemize}
\end{itemize}
Since each of the four quantities $a_0, a_1, b_0$ and $b_1$ is $\pm 1$, either $b_0+b_1$ or $b_0-b_1$ is 0, and the other $\pm 2$. From this it follows that:
\begin{equation}
a_0 b_0+ a_0 b_1 +a_1 b_0 - a_1 b_1 = a_0 (b_0+b_1) + a_1 (b_0 - b_1) \leq 2,
\end{equation}hence the so-called Bell or CHSH inequality\footnote{Other Bell inequalities are obtained by relabeling inputs and outputs, the latter being conditionned on the corresponding input.}:
\begin{equation}
\label{Bell-ineq}
E(a_0 b_0) + E(a_0 b_1) +E(a_1 b_0) - E(a_1 b_1) \leq 2.
\end{equation}
Suppose now that quantum theory is valid at Alice and Bob's laboratories, and that they perform spin measurements on electrons. Alice can choose between two measurement settings denoted by $A_0$  and $A_1$, and similarly, Bob chooses between two measurement settings denoted by $B_0$ and $B_1$. Take:
\begin{align}
\begin{aligned}
 A_0 &=\sigma_z \otimes \id,\\
 A_1 &= \sigma_x \otimes \id,\\
B_0 & = -\frac{1}{\sqrt{2}} \left[\id \otimes (\sigma_z + \sigma_x)\right],\\
B_1 & = \frac{1}{\sqrt{2}}\left[\id \otimes (\sigma_z - \sigma_x)\right],\\
\end{aligned}
\end{align}
where $\sigma_x = \begin{pmatrix}
0 & 1 \\
1 & 0
\end{pmatrix}$ and $\sigma_z = \begin{pmatrix}
1 & 0 \\
0 & -1
\end{pmatrix}$ are the usual Pauli matrices, and assume Alice and Bob share a singlet state in the eigenbasis of $\sigma_x$:
\begin{equation}
|\psi\rang =\frac{1}{\sqrt{2}}\left(|01\rang - |10\rang\right).
\end{equation}
One can easily check that:
\begin{align}
\begin{aligned}
\lang A_0 A_1\rang &= \lang A_1 B_0\rang = \lang A_1 B_1\rang = \frac{1}{\sqrt{2}},\\
\lang A_0 B_1 \rang &= -\frac{1}{\sqrt{2}},
\end{aligned}
\end{align}
so that we reach the so-called Tsirelson bound on correlations \cite{cirelson_quantum_1980}:
\begin{equation}
\lang A_0 A_1\rang + \lang A_0 B_1 \rang+ \lang A_1 B_0\rang + \lang A_1 B_1\rang =2 \sqrt{2}> 2.
\end{equation}
Therefore, Bell inequality \eqref{Bell-ineq} is violated for a specific choice of state and local measurements allowed by quantum theory, a result known under the name of EPR paradox \cite{einstein_can_1935}. Consequently, one needs to drop either free will, a choice called super-determinism, or the local realism hypothesis. The latter choice implies either a nonlocal hidden variables model such as Bohmian mechanics\footnote{Nonlocal hidden variable models are not compatible with relativity. For instance, Bohmian mechanics introduces a privileged foliation of spacetime, in clear violation of Lorentz invariance. Furthermore, a large class of nonlocal hidden variable models fulfill the Leggett inequality \cite{leggett_nonlocal_2003}, and a recent experiment has shown that this inequality is violated in accordance with the predictions of quantum theory \cite{groblacher_experimental_2007}.} or accepting the idea that a measurement does not reveal pre-existing values of the measured property, a choice that is at the basis of the standard or Copenhagen interpretation of quantum theory. This intepretation imposes a huge departure from our classical acception of the relation between systems and their properties. In the words of Mermin \cite{mermin_hidden_1993}:
\begin{quote}
`` [...] The outcome of a measurement is brought into being by the act of measurement itself, a joint manifestation of the state of the probed system and the probing apparatus. Precisely how the particular result of an individual measurement is brought into being---Heisenberg's `transition from the possible to the actual'---is inherently unknowable. Only the statistical distribution of many such encounters is a proper matter for scientific inquiry."
\end{quote}

Clearly, all quantum states respect the relativistic principle of locality in the Copenhagen interpretation, a concept that should not be confused with that of nonlocality of states in the LOSR context. To analyze this point more deeply, note that the hidden variable space corresponds to a source of shared randomness, and since parties cannot communicate, the above CHSH game can be understood using the LOSR paradigm. By definition, correlations that violate the Bell inequality are nonlocal in the LOSR sense, i.e. their statistics in the CHSH game cannot be reproduced using separable states. On the other hand, there are entangled states that do not violate the Bell inequality, and thus admit a local hidden variable model, i.e. their statistics can be reproduced using separable states  in the CHSH game. However, such states cannot be \emph{fully} generated from separable states using LOSR since even LOCC is not sufficient. Furthermore, Buscemi and later Rosset recently showed that \emph{all} entangled states are nonlocal in the LOSR sense by playing a \emph{distinct} game where the measurement settings of parties correspond to quantum systems to be measured jointly with their shared state\footnote{In the usual CHSH game, measurement settings are given by the values of classical bits.} \cite{buscemi_all_2012,rosset_correlations_2013}. Somehow, the CHSH game is not ``subtle" enough to detect the nonlocal character of all entangled states. The existence of nonlocality is already a suprising and useful feature of quantum theory, for example in quantum cryptography \cite{brunner_bell_2014}. But the truly remarkable fact---at least for the philosophy of science---is the existence of states that are \emph{sufficiently} nonlocal to violate the Bell inequality, and therefore do not conform to any local \emph{realistic} interpretation through a local hidden variable model\footnote{One can distill nonlocality under certain conditions \cite{forster_distilling_2009}, but there is strong evidence that one cannot distill a quantum system that admits a local hidden variable model into one that violates the Bell inequality \cite{dukaric_limit_2008}.}. Other forms of ``nonclassical" correlations exist in quantum theory, for instance those measured by quantum discord \cite{henderson_classical_2001,ollivier_quantum_2001}. Quantum discord has recently been interpreted as the difference in the performance of the quantum state merging protocol between a state and its locally decohered equivalent \cite{madhok_interpreting_2011}, or as quantifying the amount of entanglement consumption in the quantum state merging protocol \cite{cavalcanti_operational_2011}, and finally as the difference in superdense coding capacities between a quantum state and the best classical state when both are produced at a source that makes a classical error during transmission \cite{meznaric_quantifying_2012}. The role of quantum discord in a more general family of protocols has also been studied \cite{madhok_quantum_2013}.
For the purposes of our dissertation, we focus on states that violate the Bell inequality, generally refered to as nonlocal states in the literature. This term is rather unfortunate and dates back to the period where correlations that violate the Bell inequality were the only known example of nonlocal correlations\footnote{One may argue that this term is also confusing because all nonlocal correlations respect the relativistic locality principle.} in the modern LOSR sense, therefore the reader should keep these distinctions in mind.

In real world experiments, the ideal experimental protocol of particles leaving a source at definite times, and being measured at distant locations according to locally randomly chosen settings cannot be implemented. Therefore, experimental tests of the Bell inequality are subject to loopholes, i.e. problems of experimental design or setup that affect the validity of the experimental findings. The locality (or communication) loophole means that since in practical situations the two detections are separated by a timelike interval, the first detection may influence the second by some kind of signal. Avoiding this loophole imposes that the experimenter ensured a sufficient spatial separation of the particles, and that the measurement process is quick. The detection (or unfair sampling) loophole, which is due to the fact that particles are not always detected in both wings of the experiment, is a more serious challenge \cite{garg_detector_1987}: one can imagine that particles behave randomly, and by letting detection be dependent on a combination of local hidden variables and detector settings, the instruments can detect a subsample showing quantum correlations. An experiment showed that the Bell inequality can be violated while eliminating the detection loophole \cite{rowe_experimental_2001}, and loophole-free tests can be expected in the near future.

Our main goal in this section was to show how quantum theory respects the locality principle---or equivalently causality---imposed by the theory of relativity, and how the latter implies dropping the ``realism" part in the ``local realism" hypothesis. Therefore, we will not review in this dissertation related approaches in the topic of quantum foundations that also assume the realism hypothesis by using hidden variable models and that investigate various ways in which they contradict the predictions of quantum theory\footnote{Of course, different approaches do not necessarily rule out the same class of hidden variable models, but we here make the (reasonable) assumption that one can always build a set of observables and states that imply a contradiction between any class of hidden variable models and the predictions of quantum theory \cite{stairs_local_2007}.}\cite{spekkens_defense_2004,hardy_quantum_2004,pusey_reality_2012}.

\subsection{Reconstructions of quantum theory}
\label{reconstruct}

Bell's theorem helped clarify the sense in which nonlocal quantum correlations respect causality: to keep locality, one has to drop realism. Popescu and Rohrlich reversed the question and asked whether it is possible to derive the quantum bound on nonlocality from causality, a result that would provide a physically clear explanation of why quantum correlations exist \cite{popescu_quantum_1994,popescu_causality_1997}. They introduced an example of non-signalling models generalizing bipartite nonlocal quantum correlations, the so-called Popescu-Rohrlich (PR) boxes, and showed that they are maximally nonlocal, in the sense thay they maximally violate the Bell inequality\footnote{One can also show that PR-boxes correspond to an asymptotic unit of bipartite nonlocality \cite{forster_bipartite_2011}.}. Thus the answer to their question is no. However, these models share many of the nonclassical features of quantum theory, and thus shed new light on which properties are uniquely quantum. More recently, Paw\l owski \emph{et al.} introduced a principle called information causality which partially discriminates bipartite non-signalling quantum correlations from supra-quantum ones \cite{pawlowski_information_2009,allcock_recovering_2009}. This approach, triggered by developments in quantum information theory, belongs to a more general ``school" where one tries to derive (a subset of) the rules of quantum theory from clear informational principles. \emph{Reconstructing} quantum theory then means that one should look for clearly motivated constraints on the correlations between experimental records, such that they (partially) reproduce the predictions of the quantum formalism.

This approach is to be contrasted with the one that consists in \emph{modifying} some of the rules of quantum theory and comparing the predictions of the modified theory with those of the original. One then expects to gain a deeper understanding of standard quantum theory by isolating a subset of more fundamental features. Previous attempts include quaternionic models \cite{adler} or a model with non-linear terms in the Schrödinger equation \cite{weinberg2}. While the latter are notoriously problematic because of arbitrarily fast communication effects \cite{gisin_weinbergs_1990}, leading to the so-called ``preparation problem" \cite{cavalcanti_preparation_2012}, quaternionic quantum mechanics are formally well motivated: the propositional calculus---as formalized in the quantum logic framework---implies that it is possible to represent the pure states of a quantum system by rays on a Hilbert defined on any associative division algebra \cite{pitowski,pitowsky_quantum_2005}. This includes the quaternion algebra as the most general case \cite{soler_characterization_1995}. However, one of the main problems in the theory of quaternion quantum mechanics has been the construction of a tensor product of quaternion Hilbert modules, an essential requirement in order to formalize the notion of entanglement.

\subsubsection{PR-boxes}

The scenario again consists of two parties Alice and Bob that choose between two possible inputs $x,y$ with binary outputs $a$ and $b$ respectively\footnote{We below summarize the presentation of the scenario given in \cite{bub_quantum_2010}.}. Parties are allowed to confer on a joint strategy before they are separated. Once the game starts, they are not allowed further communication. The winning correlations are as follows:
\begin{itemize}
\item[(i)] $a\cdot b=0$ if $x,y$ are 00, 01, or 10,
\item[(ii)] $a\oplus b=0$ if $x,y$ are 11,
\end{itemize}where $\cdot$ denotes the Boolean product and $\oplus$ denotes the Boolean sum. The no-signalling condition can be defined using the marginal probabilities as follows:
\begin{align}
\begin{aligned}
\sum_{b\in\{0;1\}} p(a,b|x,y)&=p(a|x), \quad a,x,y\in \{0;1\},\\
\sum_{a\in\{0;1\}} p(a,b|x,y) &= p(b|y), \quad b,x,y\in \{0;1\},
\end{aligned}
\end{align}i.e., measurement results at Alice's laboratory are independent of Bob's choice of input, and vice versa. Alice and Bob are supposed to be symmetrical players, and we additionally require that the marginal probability of a particular output for a player should be independent of the the same player's input:
\begin{align}
\label{no-signalling}
\begin{aligned}
p(a=0|x=0)&=p(a=0|x=1)=p(b=0|y=0)=p(b=0|y=1),\\
p(a=1|x=0) &= p(a=1|x=1) = p(b=1|y=0)=p(b=1|y=1).
\end{aligned}
\end{align}Denote the marginal probability of output 1 by $p$ (See Table  \ref{table-chsh} for a summary of the winning correlations).
\begin{table}[h!]
\begin{center}
\begin{tabular}{|ll||ll|ll|} \hline
   &$x$&$0$ & &$1$&\\
   $y$&&&&&\\\hline\hline
  $0$ &&$p(00|00) = 1-2p$&$ p(10|00) = p$  & $p(00|10) = 1-2p$&$ p(10|10) = p$     \\
   &&$p(01|00) = p$&$p(11|00) = 0$  & $p(01|10)=p$&$ p(11|10) = 0$  \\\hline
   $1$&&$p(00|01)=1-2p$&$ p(10|01)=p$  & $p(00|11)=1-p$&$ p(10|11)=0$   \\
  &&$p(01|01)=p$&$ p(11|01)=0$  & $p(01|11)=0$&$ p(11|11)=p$   \\\hline
\end{tabular}
\end{center}
 \caption{Winning correlations for the game with marginal probability p for the outcome 1.}
 \label{table-chsh}
\end{table}The probability $p(00|00)$ is to be read as $p(a = 0,b = 0|x = 0,y = 0)$, etc.
Under the condition of random inputs and for a fixed strategy $S$, the probability of winning the game with marginal $p$ is:
\begin{equation}
p_{S}\mbox{(win)} = \frac{1}{4}[p_{S}(a\cdot b = 0|00) + p_{S}(a\cdot b = 0|01) + p_{S}(a\cdot b = 0|10) + p_{S}(a\oplus b = 0|11)].
\end{equation}
This probability can be related to the quantities that appear in the Bell inequality \eqref{Bell-ineq} by defining $\lang xy\rang_S$ as the expectation value, for the strategy $S$, of the product of the outputs for the input pair $x,y$, where now possible outputs take values $\pm 1$ instead of 0 or 1. We have:
\begin{align}
\begin{aligned}
\lang xy\rang_S &=p_S(1,1|xy)-p_S(1,-1|xy)-p_S(-1,1|xy)+p_S(-1,-1|xy)\\
& = p_S(\mbox{same outputs}|xy)-p_S(\mbox{different outputs}|xy).
\end{aligned}
\end{align}One can then show that:
\begin{equation}
p_S(\mbox{win})=\frac{1}{2}-\frac{K_S}{8}+\frac{3(1-2p)}{4},
\end{equation}where
\begin{equation}
K_S=\lang 00\rang_S +\lang 01\rang_S +\lang 10\rang_S -\lang 11\rang_S.
\end{equation}Bell's argument implies that if Alice and Bob share classical resources, i.e. correlations based on either shared randomness or common causes established before the game starts, then $|K_C|\leq 2$, where the low index $C$ stands for `classical'. A winning strategy $p_C(\mbox{win})=1$ is therefore impossible if $p>\frac{1}{3}$, and one can show that a wining classical strategy exists if $p\leq \frac{1}{3}$. If Alice and Bob share quantum resources, i.e. if they are allowed to perform measurements on shared entangled states prepared before the game starts, then the Tsirelson bound $|K_Q|\leq 2\sqrt{2}$ applies, and a winning quantum strategy $p_Q(\mbox{win})=1$ is impossible if $p>\frac{1+\sqrt{2}}{6}$.

Consider now the game for $p=\frac{1}{2}$, so that it is impossible to have a classical nor a quantum winning strategy. The winning correlations are as in Table \ref{1/2-game}. Under the condition of random inputs, the probability of winning the $p=\frac{1}{2}$ game is:
\begin{equation}
p_S(\mbox{win})=\frac{1}{2}-\frac{K_S}{8}.
\end{equation}Thus, the optimal classical and quantum success probabilities are:
\begin{equation}
p_{\mbox{optimal},C}(\mbox{win})=\frac{3}{4},\quad p_{\mbox{optimal},Q}(\mbox{win})=\frac{2+\sqrt{2}}{4}.
\end{equation} The winning correlations for the $p=\frac{1}{2}$ game are exactly those of a PR-box. They are usually represented as in Table \ref{PR-box} \footnote{Correlations of Table \eqref{1/2-game} are obtained by relabeling the $x$ input, the $a$ output conditionally on the $x$ input, the $y$ input, and the $b$ output conditionally on the $y$ input.}. Popescu and Rohrlich introduced the PR-box as a hypothetical device representing a nonlocal information channel that is more nonlocal than quantum mechanics. In fact, it maximally violates the Bell inequality:
\begin{equation}
K_{PR}=|\lang 00\rang_{PR} +\lang 01\rang_{PR} +\lang 10\rang_{PR} -\lang 11 \rang_{PR}| =4.
\end{equation}Correlations in Table \ref{PR-box} can be defined by the relation:
\begin{equation}
a\oplus b=x \cdot y,
\end{equation}with marginal probabilities equal to $\frac{1}{2}$ for all inputs and all outputs, i.e. PR-boxes are the optimal nonlocal devices respecting the no-signalling constraint.

\begin{table}[h!]
\begin{center}
\begin{tabular}{|ll||ll|ll|} \hline
   &$x$&$0$ & &$1$&\\
   $y$&&&&&\\\hline\hline
  $0$ &&$0$&$1/2$  & $0$&$1/2$     \\
   &&$1/2$&$0$  & $1/2$&$0$  \\\hline
   $1$&&$0$&$1/2$  & $1/2$&$0$   \\
  &&$1/2$&$0$&$0$&$1/2$   \\\hline
\end{tabular}
\end{center}
 \caption{Correlations for the $p=1/2$ game.}
 \label{1/2-game}
\end{table}

\begin{table}[h!]
\begin{center}
\begin{tabular}{|ll||ll|ll|} \hline
   &$x$&$0$ & &$1$&\\
   $y$&&&&&\\\hline\hline
  $0$ &&$1/2$&$0$  & $1/2$&$0$     \\
   &&$0$&$1/2$  & $0$&$1/2$  \\\hline
   $1$&&$1/2$&$0$  & $0$&$1/2$   \\
  &&$0$&$1/2$&$1/2$&$0$   \\\hline
\end{tabular}
\end{center}
 \caption{Correlations for the PR-box.}
 \label{PR-box}
\end{table}

\subsubsection{Polytope of non-signalling correlations}

The analysis of the previous section showed that bipartite classical and quantum correlations appear as only a subset of possible bipartite non-signalling correlations. The convex set of bipartite classical probability distributions forms the so-called `local polytope'  \cite{pitowski,barrett_nonlocal_2005}. It has 16 vertices defined as the non-signalling deterministic states and facets corresponding to the Bell inequalities. The fundamental fact is that the local polytope has the structure of a simplex, an $n$-simplex being defined as a convex polytope generated by $n+1$ vertices that are not confined to any $(n-1)$-dimensional subspace. For example, a triangle corresponds to a simplex while a rectangle does not. Therefore, the 16-vertex simplex represents the correlations polytope of probabilistic states of a bipartite \emph{classical} system with two binary-valued observables for each subsystem\footnote{One can associate a Boolean algebra to these classical probabilistic states, thus providing an alternative representation of the classical events structure.}.  Probability distributions over these extremal states define mixed states and are represented by points in the interior of the simplex. These points (representing an experiment statistics) are such that there exists a strategy with shared randomness (a local variable model) that produces the same probability distribution. If on the contrary a point lies outside the local polytope, then the experiment cannot be reproduced with shared randomness only. As stated previously, we call `nonlocal region' the region which lies outside the local polytope. The `quantum region', defined by the requirement that the probability distributions must be obtained from measurements on \emph{quantum} bipartite systems, contains the local polytope but is larger than it: measurements on quantum states can give rise to nonlocal correlations (the Bell inequalities are violated). The no-signalling polytope\footnote{This polytope can also be defined for no-signalling bipartite probability distributions with arbitrary inputs $x,y\in\{1,\cdots n\}$ and binary outputs 0 or 1.}, which contains the quantum region, is the set of bipartite probability distributions respecting the no-signalling constraint. Its vertices correspond to the 16 vertices of the classical simplex together with the 8 nonlocal vertices corresponding to the PR-box and the 7 other probability distributions obtained from the PR-box by relabeling the $x$ inputs, the $a$ outputs conditionally on the $x$ inputs, the $y$ inputs, and the $b$ outputs conditionally on the $y$ inputs (See Fig. \ref{polytope} and \cite{bub_why_2012} for more details).

\begin{figure}[!ht]
\label{fig-rindler}
\begin{center}
\includegraphics[scale=0.4]{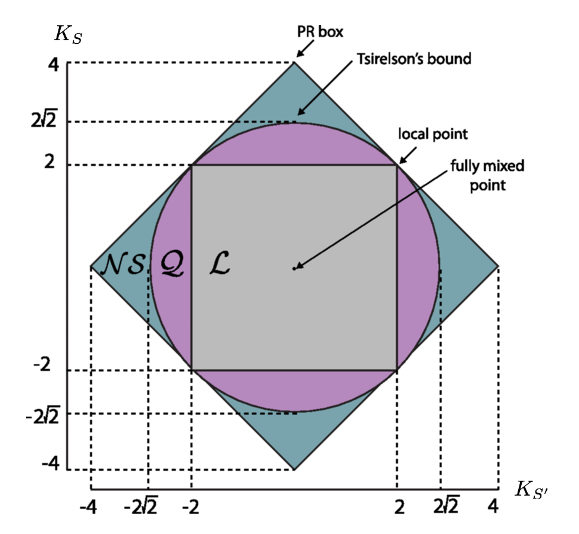}
\end{center}
\caption{A two-dimensional section of the no-signalling polytope in the Bell scenario with two inputs and two outputs. The vertical axis represents the degree of violation of the Bell inequality, while the horizontal axis represents the degree of violation the Bell inequality after relabeling of inputs. Local correlations satisfy $|K_S|\leq 2$ and $|K_{S'}|\leq 2$. The PR-box is the maximally nonlocal model, achieving the maximum violation $K_{PR}=4$. The Tsirelson bound corresponds to the point where $K_Q=2\sqrt{2}$, i.e. the maximum violation one can achieve using quantum systems and operations (From \cite{brunner_bell_2014}).}
\label{polytope}
\end{figure}

The polytope structure of correlations allows us to better understand the transition from classical to quantum theory. In the words of Bub \cite{bub_quantum_2010}:
\begin{quote}
``What is fundamental in the transition from classical to quantum physics is the recognition that information in the physical sense has new structural features, just as the transition from classical to relativistic physics rests on the recognition that spacetime is structurally different than we thought. Hilbert space, the event space of quantum systems, is interpreted as a kinematic (i.e., pre-dynamic) framework for an indeterministic physics, in the sense that the geometric structure of Hilbert space imposes objective probabilistic or information-theoretic constraints on correlations between events, just as the geometric structure of Minkowski spacetime in special relativity imposes spatio-temporal kinematic constraints on events."
\end{quote}Thus, if we were to highlight one ``new structural feature" of quantum correlations it would be the fact that they lie within a polytope that is not a simplex. The 1-simplex consists of two pure or extremal deterministic states
\begin{equation}
\vec{0} = \left( \begin{array}{c}1 \\ 0 \end{array} \right )\quad \mbox{and} \quad \vec{1} = \left ( \begin{array}{c}0 \\ 1 \end{array} \right ),
\end{equation}
 represented by the vertices of the simplex, with mixed states represented by the line segment between the two vertices:
\begin{equation}
\mathbf{p} = p\,\mathbf{0} + (1-p)\,\mathbf{1}
\end{equation}for $0 \leq p \leq 1$. This is simply the classical 1-bit space\footnote{One can think of the state space of classical physics as an infinite-dimensional simplex where extremal states correspond to deterministic states. The dynamics of a classical system correspond to the canonical transformations generated by Hamiltonians that act on its set of states, therefore the simplex associated to the space of states of classical physics must support a representation of these dynamics via transformations acting on its vertices \cite{bub_quantum_2010}.}. More generally, mixed states of a simplex have a unique decomposition over the vertices, i.e. the extremal states, and ``no other state space has this feature" \cite{bub_quantum_2010}. This property is at the basis of various other features that were first considered as specifically quantum but which turned out to be generic features of nonclassical (i.e. non-simplex theories). Examples of such nonclassical features include the impossibility of a universal cloning machine that can copy the extremal states of any probability distribution \cite{masanes_general_2006,barrett_information_2007}, the no-broadcasting theorem \cite{barnum_generalized_2007}, the monogamy of nonclassical correlations \cite{masanes_general_2006}, and information-disturbance trade-offs \cite{barrett_no_2005,scarani_secrecy_2006}.

\subsubsection{Characterizing the quantum region}

PR-boxes and other supra-quantum models share some nonclassical features with quantum theory, but they also have powerful communication and nonlocal computation properties unobserved in nature \cite{van_dam_nonlocality_2000,buhrman_nonlocality_2010,linden_quantum_2007}. One of the great unsolved problems in quantum information theory is to determine the boundary of quantum correlations. A possible approach is semi-definite programming which aims at deciding (preferably after a finite number of steps) whether a probability distribution is quantum or not \cite{wehner_tsirelson_2006,navascues_bounding_2007,navascues_convergent_2008}. Alternative approaches look for physical principles that would constrain general non-signalling bipartite correlations to the quantum region. Various principles were introduced among which relaxed uncertainty relations \cite{steeg_relaxed_2008,oppenheim_uncertainty_2010}, nonlocality swapping \cite{skrzypczyk_couplers_2009,skrzypczyk_emergence_2009}, macroscopic locality \cite{navascues_glance_2010} or information causality. The latter is able to partially characterize all bipartite (non-signalling) quantum correlations, thus providing a partial reconstruction of quantum theory.

To see how to arrive at this result, consider the following game: Alice and Bob are again two spacelike separated parties. At each round of the game, Alice receives $N=2^n$ random and independent bits $\vec{a}_N=(a_0,a_1,...,a_{N-1})$, and Bob receives a value of a random uniformly distributed variable $b\in\{0,...,N-1\}$. With the help of one classical bit communicated by Alice to Bob, Bob is required to guess the value of the $b$-th bit in Alice's list $a_b$ for some value $b\in\{0,1,...,N-1\}$. We denote Bob's guess by $g$. It is again assumed that Alice and Bob can communicate and plan a strategy before the game starts, but only one bit can be communicated once the game starts. They win the game if Bob always guesses correctly over any succession of rounds. If we are to quantify the strength of non-signalling correlations between Alice and Bob, Alice must decide which bit she sends to Bob independently of $b$ at each round. The efficiency of Alice's and Bob's strategy can be quantified by
\begin{equation}
I(N)=\sum_{k=0}^{N-1} I(a_k:g|b=k),
\end{equation}where $I(a_k:g|b=k)$ is the Shannon mutual information between $a_k$ and $g$ under the condition that Bob has received $b=k$.

Suppose now Alice and Bob share a supply of PR-boxes at each round. One can show that for any round and for any $b\in\{0,...,N-1\}$, Bob will be able to correctly guess the value of any designated bit in Alice's list. We below reproduce close to verbatim the adaptation of the original proof in \cite{pawlowski_information_2009} by Bub \cite{bub_why_2012}. Consider the simplest case $N=2$ where Alice receives two bits $a_0,a_1$. The strategy in this case involves a single shared PR-box. Alice inputs $a_0\oplus a_1$ into her part of the box and obtains an output $A$. She send the bit $x=a_0\oplus A$ to Bob, who inputs the value of $b$ into his part of the box and obtains an output $B$. Bob's final guess is $g=x\oplus B=a_0\oplus A\oplus B$. Now, correlations between inputs and outputs of a PR-box are such that $A\oplus B= a\cdot b=(a_0\oplus a_1)\cdot b$. So Bob's guess is $a_0\oplus((a_0\oplus a_1)\cdot b)$, therefore it follows that if $b=0$ Bob correctly guesses $a_0$, and if $b=1$, Bob correctly guesses $a_1$. Suppose now Alice's set is composed of four bits $a_0,a_1,a_2,a_3$ ($N=4$)(See Fig. \ref{information-caus}) \footnote{The winning strategy for $N=4$ also applies for $N=3$, therefore it is justified to restrict our attention to $N=2^n$.}. Bob's random variable can be specified by two bits $b_0,b_1$:
\begin{equation}
b=b_02^0+b_1 2^1.
\end{equation}The strategy in this case involves an inverted pyramid of PR-boxes: two shared PR-boxes $L$ and $R$ at the first stage, and one shared PR-box at the final second stage. Alice inputs $a_{0}\oplus a_{1}$ into the $L$ box and gets an output $A_L$, and inputs $a_{2}\oplus a_{3}$ into the $R$ box and gets an output $A_R$.  Bob inputs $b_{0}$ into both the $L$ and $R$ boxes and we consider the output $B_{0}$ of one of the boxes depending on what bit Bob is required to guess. At the second stage, Alice inputs $(a_{0}\oplus A_{L})\oplus(a_{2}\oplus A_{R})$ and obtains the output $A$. Bob inputs $b_{1}$ into this box and obtains the output $B_{1}$.  Alice then sends to Bob the bit $x = a_{0}\oplus A_{L}\oplus A$. Now, Bob could correctly guess either $a_{0}\oplus A_{L}$ or $a_{2}\oplus A_{R}$, using the elementary $N = 1$ strategy, as $x\oplus B_{1} = a_{0}\oplus A_{L}\oplus A\oplus B_{1}$. Here $A\oplus B_{1} = (a_{0}\oplus A_{L}\oplus a_{2}\oplus A_{R})\cdot b_{1}$. If $b_{1} = 0$, Bob would guess $a_{0}\oplus A_{L}$. If $b_{1} = 1$, Bob would guess $a_{2}\oplus A_{R}$. So if Bob is required to guess the value of $a_{0}$ (i.e., $b_{0} = 0, b_{1} = 0$) or $a_{1}$ (i.e., $b_{0} = 1, b_{1} = 0$)---the input to the PR-box $L$---he guesses $a_{0} \oplus A_{L} \oplus A \oplus B_{1} \oplus B_{0}$, where $B_{0}$ is the Bob-output of the $L$ box. Then:
\begin{eqnarray}
a_{0} \oplus A_{L} \oplus A \oplus B_{1} \oplus B_{0} & = & a_{0} \oplus A_{L} \oplus B_{0} \nonumber \\
& = & a_{0} \oplus (a_{0} \oplus a_{1})\cdot b_{0}.
\end{eqnarray}
If $b_{0} =0$, Bob correctly guesses $a_{0}$, and if $b_{0} =1$, Bob correctly guesses $a_{1}$. If Bob is required to guess the value of $a_{2}$ (i.e., $b_{0} = 0, b_{1} = 1$) or $a_{3}$ (i.e., $b_{0} = 1, b_{1} = 1$)---the input to the PR-box $R$---he guesses $a_{0} \oplus A_{L} \oplus A \oplus B_{1} \oplus B_{0}$, where $B_{0}$ is the Bob-output of the $R$ box. Then:
\begin{eqnarray}
a_{0} \oplus A_{L} \oplus A \oplus B_{1} \oplus B_{0} & = & a_{2} \oplus A_{R} \oplus B_{0} \nonumber \\
& = & a_{2} \oplus (a_{2} \oplus a_{3})\cdot b_{0}.
\end{eqnarray}
If $b_{0} =0$, Bob correctly  guesses $a_{2}$ and if $b_{0} =1$, Bob correctly guesses $a_{3}$. This procedure can be generalized to any value of $N$ by using an inverted pyramid of $N(N+1)/2$ PR-boxes\footnote{Note that the game can be modified to allow Alice to send $m$ classical bits of information to Bob, in which case Bob is required to guess $m$ bits from Alice's list. In this case, the procedure we detailed applies by using $m$ inverted pyramids of PR-boxes at each round.}. 

\begin{figure}[!htbp]
\label{fig-rindler}
\begin{center}
\includegraphics[scale=0.25]{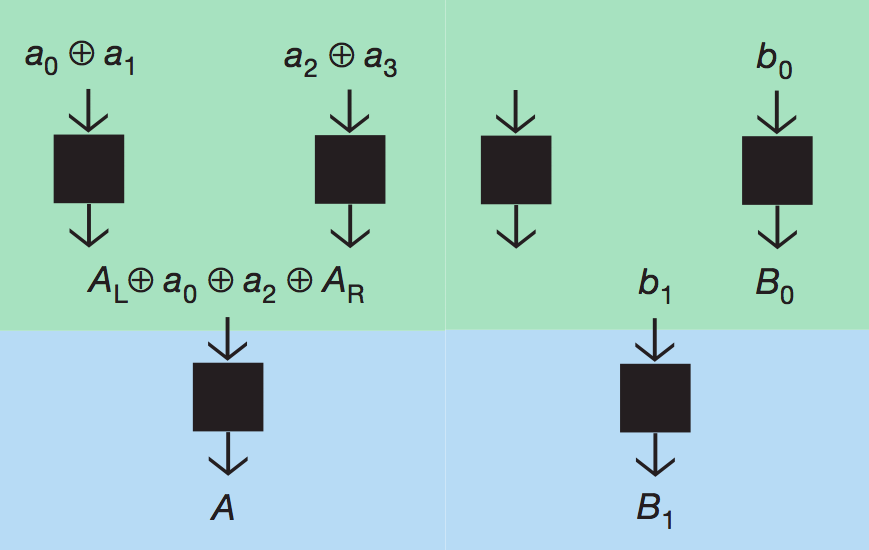}
\end{center}
\caption{Information causality identifies the strongest quantum correlations: the protocol when Alice's set has size $N=4$ (From \cite{pawlowski_information_2009}).}
\label{information-caus}
\end{figure}

One can now use this result to retrieve the Tsirelson bound under the condition that mutual information verifies three ``natural" properties \cite{pawlowski_information_2009}. Denote by $p=\frac{1+E}{2}$ the probability of simulating a PR-box, where $E$ depends on the nature of the non-signalling (NS) box. Consider the $N=4$ game where Alice and Bob share NS-boxes, and Alice is allowed to communicate one bit to Bob. Bob's guess $x \oplus B_{1} \oplus B_{0}$ will be correct if $B_{1}$ and $B_{0}$ are both correct or both incorrect. The probability of being correct at both stages is:
\begin{equation}
\frac{1}{2}(1+E)\cdot \frac{1}{2}(1+E) = \frac{1}{4}(1+E)^{2}.
\end{equation}
The  probability of being incorrect at both stages is:
\begin{equation}
(1-\frac{1}{2}(1+E))\cdot(1-\frac{1}{2}(1+E)) = \frac{1}{2}(1-E)\cdot \frac{1}{2}(1-E) = \frac{1}{4}(1-E)^{2}.
\end{equation}
So the probability $P_{success}$ that Bob guesses correctly, i.e., the probability that $g = a_{k}$ when $b=k$, is:
\begin{equation}
P_{k} = \frac{1}{4}(1+E)^{2}+\frac{1}{4}(1-E)^{2} = \frac{1}{2}(1+E^{2}).
\end{equation}
In the general case $N=2^{n}$, Bob guesses correctly if he makes an even number of errors over the $n$ stages ($B_{0}, B_{1}, B_{2}, \ldots$). The probability that $g = a_{k}$ when $b=k$ is then:
\begin{equation}
P_{k} = \frac{1}{2^{n}}(1+E)^{n}    +  \frac{1}{2^{n}} \sum_{j=1}^{\lfloor\frac{n}{2}\rfloor} {n \choose 2 j} (1-E)^{2j}  (1+E)^{n-2j}         = \frac{1}{2}(1+E^{n}),
\end{equation}
where $\lfloor \frac{n}{2} \rfloor$ denotes the integer value of $\frac{n}{2}$.

Assume now that mutual information obeys the following properties in the toy universe of NS boxes:
\begin{itemize}
\item[(i)] \emph{Consistency:} if two systems $A$ and $B$ are both classical, then $I(A : B)$ should coincide with Shannon's mutual information.
\item[(ii)] \emph{Data processing inequality:} if $B\rightarrow B'$ is a permissible map between systems, then $I(A : B) \geq I(A : B')$. This amounts to saying that any local manipulation of data can only decay information.
\item[(iii)] \emph{Chain rule:} there exists a conditional mutual information $I(A : B|C)$ such that the following identity is satisfied for all states and triples of parts: $I(A : B, C) = I(A : C) + I(A : B|C)$.
\end{itemize}
Under a local randomization assumption, one can then show that\footnote{The local randomization assumption is meant to simplify computations. One can drop this assumption at the end of calculations (See \cite{pawlowski_information_2009}).}:
\begin{equation}
\label{ineq-pro}
\frac{(2E^2)^n}{2\ln(2)} \leq I(N).
\end{equation}Furthermore, in the case of independent Alice's input bits, using the chain rule we have
\begin{equation}
I(a_0,...,a_{N-1}:x,B)=I(a_0:x,B)+I(a_1,...,a_{N-1}:x,B|a_0),
\end{equation} where $B$ denotes Bob's part of previously shared correlations\footnote{In case $m$ bits of communication are allowed, the bit $x$ should be replaced by a vector of $m$ bits $\vec{x}_m$.}. Using again the chain rule for the second term on the right-hand side, we have:
\begin{equation}
I(a_1,...,a_{N-1}:x,B|a_0)=I(a_1,...,a_{N-1}:x,B,a_0)-I(a_1,...,a_{N-1}:a_0),
\end{equation}
 and since Alice's inputs are independent, $I(a_1,...,a_{N_1}:a_0)=0$. Applying the data processing inequality to the first term then implies:
\begin{equation}
I(\vec{a}_N:x,B)\geq I(a_0:x,B)+I(a_1,...,a_{N-1}:x,B).
\end{equation}Iterating this procedure, we have:
\begin{equation}
I(\vec{a}_N:x,B)\geq \sum_{k=0}^{N-1} I(a_k:x,B).
\end{equation}Finally, we observe that Bob's guess $g$ is obtained from $b,x$ and $B$. Hence, the data processing inequality implies that $I(a_k:g|b=k)\leq I(a_k:x,B)$. Therefore, we have:
\begin{equation}
I(N) \leq I(\vec{a}_N:x,B).
\end{equation}Now:
\begin{eqnarray}
\label{lim-mes}
I(\vec{a}_N:x,B) &=&I(\vec{a}_N:B)+I(\vec{a}:x|B)\nonumber\\
&=&I(\vec{a}_N:x|B)\nonumber\\
&=&I(x:\vec{a}_N,B)-I(x:B)\nonumber\\
& \leq & I(x:\vec{a}_N,B)\nonumber\\
&\leq & I(x:x)=1,
\end{eqnarray}a property named ``information causality" by Pawlowski \emph{et al.} \cite{pawlowski_information_2009}. Using \eqref{ineq-pro} and \eqref{lim-mes}, one can show that $E\leq \frac{1}{\sqrt{2}}$, i.e. the Tsirelson bound on non-signalling correlations is derived from three natural assumptions on the behavior of mutual information.

A related recent result shows that two concepts of entropy can be defined in the context of general probabilistic theories: measurement entropy and mixing entropy, which coincide with the Shannon entropy in the context of classical theory and with the von Neumann entropy in quantum theory respectively. It appears that for any non-signalling theory where these two entropies coincide, measurement entropy is strongly subadditive and a version of the Holevo bound is satisfied (or equivalently the data processing inequality is satisfied) and the theory is informationally causal. It follows that monoentropic theories, like quantum theory, obey the Tsirelson bound \cite{barnum_entropy_2010}. It is still an open question whether information causality can completely retrieve the quantum region. Moreover, it was shown that all bipartite principles fail to account for correlations with more than two parties \cite{gallego_quantum_2011}.  Thus, the information causality principle has fundamental limitations and intrinsically multipartite information concepts are needed for the full understanding of quantum correlations.

We complete this section by recalling that various other attempts to reconstruct the quantum formalism exist. For instance, some partial reconstructions focus on explaining the set of contextual quantum correlations instead of nonlocal ones \cite{cabello_graph-theoretic_2014,amaral_exclusivity_2014}. More ambitious attempts exist that aim at retrieving \emph{all} the postulates of quantum theory from physically clear principles \cite{rovelli_relational_1996,zeilinger_foundational_1999,brukner_malus_1999,hardy_quantum_2001,dakic_quantum_2009,chiribella_informational_2011,chiribella_quantum_2012}.
In these reconstructions, it appears that the transition from classical to quantum probabilities is deeply connected to the existence of \emph{continuous} reversible transformations between the states of the system.

\section{Entanglement and open systems}
\label{entanglement-AQFT}

The distinction between separability and locality, or equivalently between their opposites\footnote{The opposite of ``separable" would be ``nonseparable", but separable states can be built using various constraints, therefore a nonseparable state does not specify what kind of resource we are considering. For instance, nonseparable states under LOCC are entangled while nonseparable states under LOSR are nonlocal. As stated previously, we are focusing in this dissertation on states that violate the Bell inequality.} nonlocal ``outcome-outcome" correlation and ``measurement-outcome" correlation respectively, is crucial to unraveling the conceptual implications of Bell's theorem. In nonrelativistic quantum theory and quantum field theory (QFT), ``measurement-outcome" correlations are excluded because of the commutation relations between the observables associated to distinct and spacelike separated systems respectively, while entanglement can give rise to ``outcome-outcome" correlations. Such correlations puzzled Einstein, as it appears in the following famous passage \cite{einstein_quanten-mechanik_1948}:

\begin{quote}
``If one asks what is characteristic of the realm of physical ideas independently of the quantum theory, then above all the following attracts our attention: the concepts of physics refer to a real external world, i.e. ideas are posited of things that claim a `real existence' independent of the perceiving subject (bodies, fields, etc.) [...] It appears to be essential for this arrangement of the things in physics that, at a specific time, these things claim an existence independent of one another, insofar as these things ``lie in different parts of space". Without such an assumption of the mutually independent existence (the ``being-thus") of spatially distant things, an assumption which originates in everyday thought, physical thought in the sense familiar to us would not be possible. Nor does one see how physical laws could be formulated and tested without such clean separation [...] For the relative independence of spatially distant things ($A$ and $B$), this idea is characteristic: an external influence on $A$ has no immediate effect on $B$; this is known as the ``principle of local action", which is applied consistently in field theory. The complete suspension of this basic principle would make impossible the idea of the existence of (quasi-)closed systems and, thereby, the establishment of empirically testable laws in the sense familiar to us."
\end{quote}

Following Halvorson \cite{halvorson_locality_2001}, we assume that Einstein's ``principle of local action" corresponds to locality while the ``assumption of the mutually independent existence (the `being-thus') of spatially distant things" corresponds to separability. We also assume that Einstein relates the thermodynamical concept of open systems to the possibility for quantum systems to sustain nonlocal ``outcome-outcome" correlations with other quantum systems when they are in an entangled state rather than ``measurement-outcome" correlations. Therefore, Einstein seems to point out that separability of systems is threatened by entanglement.

Clifton and Halvorson argued in \cite{clifton_entanglement_2001} that such a worry is unjustified from the point of view of nonrelativistic quantum theory. Their argument goes as follows\footnote{This is a close to verbatim summary of various definitions and proofs in \cite{clifton_entanglement_2001}.}: consider the simplest toy universe consisting of two nonrelativistic quantum systems, represented by a tensor product of two-dimensional Hilbert spaces $\mathbb{C}^2(A)\otimes \mathbb{C}^2(B)$, where system $A$ is the `object' system, and $B$ its `environment'. Let $x$ be any state vector for the composite system $AB$, and $\rho_A(x)$ be the reduced density operator $x$ determines for system $A$. The von Neumann entropy of $A$, $S_A(x) = -\trace(\rho_A(x)\ln \rho_A(x))$, is a measure of the degree to which $A$ and $B$ are entangled: if $x$ is a product vector with no entanglement, $S_A(x) = 0$, whereas $S_A(x) = \ln2$ when $x$ is, say, a singlet state. Only the presence of a nontrivial interaction Hamiltonian between members of the pair can change their degree of entanglement, and hence the entropy of either system $A$ or $B$. Alice can destroy entanglement of system $A$ with $B$ by performing a standard von Neumann measurement on $A$. If $P_\pm$ are the eigenprojections of the observable Alice measures, and the initial density operator of $AB$ is $\rho = P_x$ (where $P_x$ is the projection onto the ray $x$ generates), then the post-measurement joint state of $AB$ will be given by the new density operator:
\begin{equation}
\rho \rightarrow \rho' = (P_+\otimes \id)P_x(P_+\otimes \id)+(P_-\otimes \id)P_x(P_-\otimes \id).
\end{equation}
Since the projections $P_\pm$ are one-dimensional, and $x$ is entangled, there are nonzero vectors $a_x^\pm\in\mathbb{C}^2_A$ and $b_x^\pm\in\mathbb{C}^2_B$ such that $(P_\pm\otimes \id)x=a_x^\pm\otimes b_x^\pm$, and one can straightforwardly show that:
\begin{equation}
\rho'=\trace\left[ (P_+\otimes \id)P_x\right] P_+\otimes P_{b_x^+}+\trace\left[ (P_-\otimes \id)P_x\right] P_-\otimes P_{b_x^-}.
\end{equation}
Thus, $\rho'$ will always be a convex combination of product states independently of the degree of entanglement of the initial state $x$, and there will no longer be any entanglement between $A$ and $B$. This result can be generalized as follows. Given any finite- or infinite-dimensional Hilbert spaces $\mathcal{H}(A)$ and $\mathcal{H}(B)$, it suffices for Alice to measure any non-degenerate observable of $A$ with a discrete spectrum to destroy the entanglement with $B$ of the initial (pure or mixed) state $\rho$. The final state $\rho'$ will then be a convex combination of product states, each of which is a product density operator obtained by ``collapsing" $\rho$ using some particular eigenprojection of the measured observable. They conclude that ``Alice's operation on $A$ has the effect of isolating $A$ from any further EPR influences from $B$". Thus, it seems that the methodological issue Einstein is trying to rise does not concern (non-interacting) nonrelativistic systems. 

What about relativistic systems? Various results show that quantum field systems are unavoidably and intrinsically open to entanglement, even in the free case, in the sense that one cannot ``efficiently" disentangle subsystems associated with spacelike separated regions of spacetime through local operations. As we shall see in this section, this behavior is deeply rooted in the algebraic structure of QFT. We start by providing a quick review of the algebraic quantum field theory (AQFT) formalism, a mathematically precise description of the structure of quantum field theories incorporating the idea of domain localization, and analyze how algebras of infinite type and the Reeh-Schlieder theorem entail severe practical obstacles to isolating spacelike separated regions from entanglement with other field systems. We also analyze how the type III property of the local algebras associated with localized subsystems imposes a fundamental limitation on isolating field systems from entanglement in relativistic QFT, and the importance of the split property for neutralizing Einstein's methodological worry.

\subsection{Algebraic quantum field theory}
\label{section-121}

AQFT is a general mathematical framework for QFTs incorporating the idea of domain localization \cite{haag_algebraic_1964,haag}. In this approach, subsystems of a given system are associated with subalgebras of a given algebra. States are secondary objects that arise via Hilbert space representations, or alternatively as linear functionals on the algebra of observables defining the system. Since linear functionals can be interpreted as expectation values, states are assumed to be positive and normalized\footnote{See Appendix \ref{algebra-theory} for details on the theory of operators.}. AQFT thus shifts the focus from the level of states (at which the nonlocal features such as entanglement appear) to the level of observables by basing the theory on the algebra of observables, not on the Hilbert space of states, an approach that clarifies in what sense QFTs do not violate the relativistic principle of locality \cite{brunetti_algebraic_2006}. 

\subsubsection{Abstract representation of a system}

A nontrivial conceptual step consists in identifying the algebraic structure of possible operations on a physical system $S$ with a C$^*$-algebra $\mathfrak{A}$, and its state with a state on $\mathfrak{A}$. This assumption can be heuristically related to the fact that one can perform operations on this system, including measurements whose results can be added, mutliplied, substracted etc. and encode information concerning the physical state of $S$. The first assumption can be weakened to Jordan-Banach or to Segal algebras, which then leads to loosing much of the deductive power of the theory. The second input is more peculiar and often overlooked: linearity and positivity can be related to the algebraic structure of the observer's statistics, not to an intrinsic property of $S$. Therefore, such an algebraic approach hides in two seemingly innocent axioms much of the conceptual difficulties of quantum theory \cite{grinbaum_significance_2004}.

In most situations one associates a specific kind of C$^*$-algebra to the system called a von Neumann algebra. Formally, the von Neumann algebra $\mathfrak{M}$ associated to the system can be taken as the universal enveloping von Neumann algebra of $\mathfrak{A}$, a further nontrivial step regarding the information-theoretic interpretation of the algebraic framework\footnote{A heuristic justification can be given in view of the spectral theorem: it is the theory of von Neumann algebras, not C$^*$-algebras, that is the natural extension of classical measure/probability theory (See Appendix \ref{algebra-theory}).}.

The specific properties of individual models of any relativistic QFT (particle structure, scattering cross sections, bound states, superselection charges...) are then encoded in the net structure
\begin{equation}
\mathcal{O}\mapsto \mathfrak{A}(\mathcal{O}),
\end{equation}
i.e. in the way C$^*$-algebras representing local systems are embedded into each other for different subsets $\mathcal{O}$ ($\calO$ for ``Open" or ``Observer") of Minkowski spacetime $M$ \cite{brunetti_algebraic_2006}.
One naturally assumes
\begin{itemize}
\item[]\emph{Isotony:} If $\calO_1 \subseteq \calO_2$, then $\mathfrak{A}(\calO_1)\subseteq \mathfrak{A}(\calO_2).$
\end{itemize}As a consequence, the collection of all local algebras $\mathfrak{A}(\calO)$ defines a net whose limit points can be used to define algebras associated with unbounded regions, and in particular $\mathfrak{A}(M)$, which is identified with $\mathfrak{A}$ itself. Other conditions that any physically reasonable QFT should satisfy include:
\begin{itemize}
\item[]\emph{Microcausality:} $\mathfrak{A}(\calO')\subseteq \mathfrak{A}(\calO)'$.
\item[]\emph{Translational covariance:} One assumes that there is a faithful representation $\vec{x} \mapsto \alpha_\vec{x}$ of the spacetime translation group of $M$ in the group of automorphisms of $\mathfrak{A}$ such that
\begin{equation}
\alpha_\vec{x}(\mathfrak{A}(\calO))=\mathfrak{A}(\calO+\vec{x}).
\end{equation}
\item[]\emph{Weak additivity:} For any $\calO\subseteq M$, $\mathfrak{A}$ is the smallest von Neumann algebra containing $\bigcup_{\vec{x}\in M} \mathfrak{A}(\calO+\vec{x})$.
\item[]\emph{Spectrum condition:} $\mathfrak{A}$ is represented as $\mathfrak{B}(\calH)$ for some Hilbert space $\calH$, and the generator of spacetime translations, the energy-momentum of the field, has a spectrum confined to the forward light-cone.
\end{itemize}
Brunetti and Fredenhagen argued in \cite{brunetti_algebraic_2006} that this mapping between regions of spacetime and algebras naturally asks for a covariant formalism ``in the spirit of general relativity". Following Einstein's `point coincidence' argument against the `hole problem', systems corresponding to isometric regions must be isomorphic, and ``since isometric regions may be embedded into different spacetimes", their conclusion is that category theory is the natural formalism to tackle this problem. Objects of the category then correspond to the systems and the morphisms to the embeddings of a system as a subsystem of other systems. The framework of AQFT may then be described as a covariant functor $\euA$ between two categories \cite{brunetti_generally_2001,fredenhagen_locally_2004, brunetti_algebraic_2006}. Brunetti and Fredenhagen define a first category, denoted $\Loc$ for `locality category', that ``contains the information on local relations [between systems] and is crucial for the interpretation" \cite{brunetti_algebraic_2006}. Objects of this category are topological spaces with additional structures and its morphisms structure preserving embeddings. They provide examples of such categories connected to well-known physical examples (globally hyperbolic Lorentzian spaces under the condition that the embeddings are isometric and preserve the causal structure, spin bundles with connections, etc.). The formal definition of $\Loc$ is the following:

\begin{description}
\item[$\Loc$] The class of objects
$\obj(\Loc)$ is formed by all (smooth) $d$-dimensional
($d\ge 2$ is held fixed), globally
hyperbolic Lorentzian spacetimes $M$ which are oriented and time-oriented.
Given any two such objects $M_1$ and $M_2$, the morphisms
$\psi\in\Hom_{\Loc}(M_1,M_2)$
are taken to be the isometric embeddings
$\psi: M_1 \to M_2$ of $M_1$ into
$M_2$ but with the following constraints:
\begin{itemize}
\item[$(i)$] If $\gamma : [a,b]\to M_2$ is any
causal curve and
$\gamma(a),\gamma(b)\in\psi(M_1)$ then the whole curve must be
in the image $\psi(M_1)$, i.e. $\gamma(t)\in\psi(M_1)$ for all $t\in ]a,b[$.
\item[$(ii)$] Any morphism preserves orientation and time-orientation of the embedded spacetime.
\end{itemize}
Composition is composition of maps, the unit element in $\Hom_{\Loc}(M,M)$ is given by
the identical embedding ${\rm id}_M : M\mapsto M$ for
any $M\in\obj(\Loc)$.
\end{description}
The second category, denoted $\Obs$ for `observables category', describes the algebraic structure of  observables, common to all systems.

\begin{description}
\item[$\Obs$] The class of objects
  $\obj(\Obs)$ is formed by all C$^*$-algebras possessing unit
  elements, and the morphisms are faithful (injective) unit-preserving
  $*$-homomorphisms.
  The composition
  is again defined as the composition of maps,
  the unit element in
  $\Hom_{\Obs}(\calA,\calA)$ is for any $\calA \in \obj(\Obs)$ given
  by the identical map ${\rm id}_{\calA}: A \mapsto A$, $A\in\calA$.
\end{description}
One can use alternative categories according to particular needs. For instance, von Neumann algebras are better suited when selection of a state is possible. More examples are analyzed in \cite{brunetti_algebraic_2006} :
\begin{quote}
 ``In classical physics one looks instead at Poisson algebras, and in perturbative quantum field theory one admits algebras which possess nontrivial representations as formal power series of Hilbert space operators, or more general topological $*$-algebras [...] On the other side, one might consider for $\Loc$ bundles over spacetimes, or one might (in conformally invariant theories) admit conformal embeddings as morphisms [...] In case one is interested in spacetimes which are not globally hyperbolic one could look at the globally hyperbolic subregions (where care has to be payed to the causal convexity condition (i) above)."
 \end{quote}
The leading principle in AQFT is that the functor $\euA$---on which further ``natural" constraints must be imposed such as isotony or microcausality---contains all physical information. In particular, one defines two theories as being equivalent iff the corresponding functors are naturally equivalent. A familiar object such as a field $F$ arises as a natural transformation from the category of test function spaces to the category of observable algebras via their functors related to the locality category. More generally, natural transformations from functors on the locality category are central to the understanding of functor $\euA$. Since we are not primarily concerned with a covariant formulation of AQFT, we refer the reader to \cite{brunetti_generally_2001,fredenhagen_locally_2004, brunetti_algebraic_2006} and the references therein for more details.

\subsubsection{Concrete representation of a system}

The usual Hilbert space formalism is derived from the GNS representation theorem, and for a given system inequivalent representations can exist. We recall the definition of unitary equivalence of representations of a C$^*$-algebra $\mathfrak{A}$: two representations $\pi$ and $\phi$ are said to be unitarily equivalent iff there exists an isometry $U:\mathcal{H}_\pi \longrightarrow \mathcal{H}_\phi$ such that:
\begin{equation}
U\pi(A) U^{-1}=\phi(A),\quad \forall A \in \mathfrak{A}.
\end{equation}
If representations $(\pi_i,\calH_i)$ are unitarily equivalent to some fixed representation $(\pi,\calH)$, we say that $\phi=\bigoplus_i \pi_i$ is a multiple of the representation $\pi$. Two representations $\pi$ and $\phi$, where $\pi$ is irreducible and $\phi$ factorial\footnote{A representation $\phi$ is factorial iff the von Neumann algebra $\phi(\mathfrak{A})''$ is a factor.}, are said to be quasi-equivalent  iff $\phi$ is a multiple of $\pi$. They are said to be disjoint iff they are not quasi-equivalent\footnote{The general definition is more cumbersome, see \cite{halvorson_locality_2001}.}.

\begin{Def}
Let $\omega$ be a state on a C$^*$-algebra $\mathfrak{A}$ inducing a GNS representation $\pi_\omega:\mathfrak{A} \rightarrow \mathfrak{B}(\calH_\omega)$ of $\mathfrak{A}$ on a Hilbert space $\mathcal{H}_\omega$. A state $\mu$ on $\mathfrak{A}$ is said $\omega$-normal iff there exists a density matrix $\rho\in\mathfrak{B}(\mathcal{H}_\omega)$ such that:
\begin{equation}
\mu(A)=\trace \left(\rho\pi_\omega(A)\right),\quad \forall A\in\mathfrak{A}.
\end{equation}
The folium of the representation $\pi_\omega$, denoted by $\mathfrak{F}(\pi_\omega)$, corresponds to the set of $\omega$-normal states.
\end{Def}
 Intuitively, the folium of a state representation $\pi_\omega$ corresponds to the set of (pure or mixed) states that ``live" in the same Hilbert space as the state $\omega$ that generated the representation. We have the following equivalences:
\begin{align*}
\pi \,\, \mbox{and}\,\, \phi \,\, \mbox{are quasi-equivalent}\quad   &\Longleftrightarrow \quad \mathfrak{F}(\pi)=\mathfrak{F}(\phi),\\
\pi\,\, \mbox{and} \,\,\phi\,\, \mbox{are disjoint} \quad &\Longleftrightarrow\quad \mathfrak{F}(\pi)\cap \mathfrak{F}(\phi)=0.
\end{align*}
Let $\mathfrak{A}$ be a C$^*$-algebra, $\pi$ a representation of $\mathfrak{A}$ and $\mathfrak{F}(\pi)$ its folium.
We say that a state $\omega$ on $\mathfrak{A}$ is weak*-approximated\footnote{One can define the $\sigma(\mathfrak{A}^*,\mathfrak{A})$-topology on the space of states, also called ultraweak or weak*-topology (See Appendix \ref{algebra-theory}).} by a state in $\mathfrak{F}(\pi)$ iff for each $\epsilon>0$ and for each finite collection $\{A_i:i=1,...,n\}$ of operators in $\mathfrak{A}$, there is a state $\omega'\in\mathfrak{F}(\pi)$ such that
\begin{equation}
|\omega(A_i)-\omega'(A_i)|<\epsilon,\quad i=1,...,n.
\end{equation}
Two representations $\pi,\phi$ are then said to be weakly equivalent iff all states in $\mathfrak{F}(\pi)$ may be weak*-approximated by states in $\mathfrak{F}(\phi)$ and vice versa.

In summary, we have the following implications for any two representations $\pi,\phi$:
\begin{center}
Unitarily equivalent $\Longrightarrow$ Quasi-equivalent $\Longrightarrow$ Weakly equivalent.
\end{center}If $\pi$ and $\phi$ are both irreducible, then the first arrow is reversible.

The following theorem is crucial to understanding the practical implications of the existence of inequivalent representations:
\begin{Thm}[Fell's theorem]
Let $\mathfrak{A}$ be a C$^*$-algebra and $\pi_1, \pi_2$ two $*$-representations on a same Hilbert space $\mathcal{H}$. Any state of $\pi_1$ is a limit for the ultraweak topology of a state in $\pi_2$ iff $\ker(\pi_2)$ is an ideal of $\ker(\pi_1)$.
\end{Thm}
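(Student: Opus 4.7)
The plan is to prove the equivalence in two directions, first the easier necessity and then the more substantial sufficiency. Throughout I read the condition ``$\ker(\pi_2)$ is an ideal of $\ker(\pi_1)$'' as meaning $\ker(\pi_2) \subseteq \ker(\pi_1)$, which is the standard formulation of Fell's theorem (both kernels being already closed two-sided ideals of $\mathfrak{A}$).

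For the easy direction ($\Rightarrow$) I would argue by contraposition. Suppose $A \in \ker(\pi_2) \setminus \ker(\pi_1)$, so that $\pi_1(A^*A) \neq 0$. Since $\pi_1(A^*A)$ is a nonzero positive operator on $\calH$, there is a unit vector $\xi \in \calH$ with $\langle \xi, \pi_1(A^*A)\xi\rangle > 0$, and the associated vector state $\omega \in \mathfrak{F}(\pi_1)$ satisfies $\omega(A^*A) > 0$. On the other hand, every $\mu \in \mathfrak{F}(\pi_2)$ has the form $\mu(B) = \trace(\rho\,\pi_2(B))$, hence $\mu(A^*A) = 0$. No net in $\mathfrak{F}(\pi_2)$ can therefore weak$^*$-approximate $\omega$ at the element $A^*A$, contradicting the hypothesis.

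For the sufficient direction ($\Leftarrow$), the inclusion $\ker(\pi_2) \subseteq \ker(\pi_1)$ allows us to factor $\pi_1$ through $\pi_2$: there is a well-defined surjective $*$-homomorphism $\rho : \pi_2(\mathfrak{A}) \to \pi_1(\mathfrak{A})$ with $\rho \circ \pi_2 = \pi_1$. Given $\omega \in \mathfrak{F}(\pi_1)$, write $\omega(\cdot) = \trace(\sigma\,\pi_1(\cdot))$ for a density matrix $\sigma$; then $\tilde{\omega}(T) = \trace(\sigma\,\rho(T))$ defines a state on the concrete C$^*$-algebra $\pi_2(\mathfrak{A}) \subseteq \mathfrak{B}(\calH)$. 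If I can exhibit a net of normal states $\nu_\alpha$ of $\mathfrak{B}(\calH)$ whose restrictions to $\pi_2(\mathfrak{A})$ converge weak$^*$ to $\tilde{\omega}$, then $\omega_\alpha := \nu_\alpha \circ \pi_2$ lies in $\mathfrak{F}(\pi_2)$ and satisfies $\omega_\alpha(A) \to \tilde{\omega}(\pi_2(A)) = \omega(A)$ for every $A \in \mathfrak{A}$, as required.

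The core step, and the principal obstacle, is therefore the classical density result: for any C$^*$-subalgebra $\mathfrak{B} \subseteq \mathfrak{B}(\calH)$, the restrictions of vector states of $\mathfrak{B}(\calH)$ are weak$^*$-dense in the state space of $\mathfrak{B}$. I would establish this by Hahn-Banach separation. If some state $\psi$ of $\mathfrak{B}$ lay outside the weak$^*$-closed convex hull $C$ of these restrictions, separation would yield a self-adjoint $T \in \mathfrak{B}$ and $c \in \mathbb{R}$ with $\langle \xi, T\xi\rangle \leq c < \psi(T)$ for every unit $\xi \in \calH$. But $\sup_{\|\xi\|=1} \langle \xi, T\xi\rangle$ equals the upper edge of the spectrum of $T$, which dominates $\psi(T)$ since $\psi(T)$ lies in the convex hull of the spectrum; this gives the required contradiction. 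The remaining work is purely bookkeeping between $\mathfrak{A}$, its image $\pi_2(\mathfrak{A})$, and the corresponding dual state spaces, using that pullback by the continuous map $\pi_2$ is weak$^*$-continuous.
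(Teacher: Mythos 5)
The paper does not actually prove this theorem---it states Fell's result and immediately moves on to its consequences (weak equivalence of all representations when one is faithful)---so there is no in-text proof to compare yours against; I can only assess your argument on its own terms. It is sound, and it is the standard route: the necessity direction by testing against $A^*A$ for $A\in\ker(\pi_2)\setminus\ker(\pi_1)$ (correct: every $\pi_2$-normal state kills $A^*A$, this is a weak$^*$-closed condition, and some vector state in $\mathfrak{F}(\pi_1)$ does not), and the sufficiency direction by factoring $\pi_1$ through $\pi_2(\mathfrak{A})\cong\mathfrak{A}/\ker(\pi_2)$ and invoking the density of normal-state restrictions in the full state space of a C$^*$-subalgebra of $\mathfrak{B}(\calH)$, proved by Hahn--Banach separation plus the identity $\sup_{\|\xi\|=1}\langle\xi,T\xi\rangle=\max\sigma(T)$ for self-adjoint $T$ and spectral permanence. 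Your reading of the slightly garbled hypothesis as $\ker(\pi_2)\subseteq\ker(\pi_1)$ is the right one.

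One wording slip worth fixing: your density lemma as stated---``the restrictions of vector states of $\mathfrak{B}(\calH)$ are weak$^*$-dense in the state space of $\mathfrak{B}$''---is false in general (take $\mathfrak{B}=\mathfrak{B}(\mathbb{C}^2)$: vector states restrict to the pure states only, which are not dense in the Bloch ball). What your separation argument actually proves, and all you need, is that the weak$^*$-\emph{closed convex hull} of these restrictions is the whole state space; since finite convex combinations of vector states are normal states of $\mathfrak{B}(\calH)$, the approximating net you pull back along $\pi_2$ still lands in $\mathfrak{F}(\pi_2)$ and the proof goes through unchanged. Also make explicit, if you write this up, that $\pi_2(\mathfrak{A})$ is norm-closed (images of C$^*$-algebras under $*$-homomorphisms are C$^*$-subalgebras) and that the induced map $\rho$ is unital, so that $\tilde\omega$ is genuinely a state; both are immediate but are the hypotheses your separation step silently uses.
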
If a representation $\pi$ of a C$^*$-algebra is faithful, this theorem implies that any state on $\mathfrak{A}$ is weak*-approximated by states in $\mathfrak{F}(\pi)$. In particular, it follows that \emph{all} representations of $\mathfrak{A}$ are weakly equivalent. Adopting a practical point of view\footnote{See \cite{halvorson_locality_2001} for a detailed analysis of physical equivalence of representations.}, we follow Haag and Kastler \cite{haag_algebraic_1964} who have argued that the weak equivalence of all representations of the C$^*$-algebra entails their physical equivalence. Their argument, which entails an operational definition of physical equivalence, is based on the fact that measuring the expectations of a finite number of observables $\{A_i\}$ in the C$^*$-algebra, each to a finite degree of accuracy $\epsilon$, we can determine only a weak*-neighborhood of the algebraic state. But by Fell's theorem, states from the folium of any representation lie in this neighborhood. So we can never determine \emph{in practice} which representation is the physically ``correct" one and they all, from a practical point of view, carry the same physical content. As a corollary, choosing a representation is therefore simply a matter of convention.

Note that the assumption of finite degree of accuracy in discriminating states amounts to saying that one can only extract a finite amount of information from the physical system through a finite sequence of measurements. This in turn means that if we associate a von Neumann algebra $\mathfrak{M}$ to the system, $\mathfrak{M}$ should be the ultraweak closure of finite-dimensional algebras, i.e. a hyperfinite algebra. One can show that the general postulates of quantum mechanics and relativity, together with assumptions about the existence of a scaling limit and bounds on the local density of states, imply a unique structure of the local algebras: they are isomorphic to the unique hyperfinite type III$_1$ factor \cite{haag_algebraic_1964,haag,yngvason_role_2005}.

\subsection{Infinite algebras and the Reeh-Schlieder theorem}

We start by defining the notion of Bell correlation in AQFT. Let $\calH$ be a Hilbert space, let $\calS$ denote the set of unit vectors on $\calH$ and let $\mathfrak{B}(\calH)$ denote the set of bounded linear operators on $\calH$. If $x\in\calS$, we let $\omega_x$ denote the state of $\mathfrak{B}(\calH)$ induced by $x$. Let $\mathfrak{M}_1,\mathfrak{M}_2$ be two von Neumann algebras acting on $\calH$ such that $\mathfrak{M}_1 \subseteq \mathfrak{M}_2'$, and let $\mathfrak{M}_{12}=(\mathfrak{M}_1\cup\mathfrak{M}_2)''$ be the algebra generated by $\mathfrak{M}_1$ and $\mathfrak{M}_2$. A state $\omega$ on $\mathfrak{M}_{12}$ is called a normal product state iff $\omega$ is a normal state and there are states $\omega_1$ on $\mathfrak{M}_1$ and $\omega_2$ on $\mathfrak{M}_2$ such that:
\begin{equation}
\omega(AB)=\omega_1(A)\omega_2(B),\quad \forall A\in\mathfrak{M}_1,\forall B\in\mathfrak{M}_2.
\end{equation}A state belonging to the weak* closed convex hull of normal product states will be called a separable state across $\mathfrak{M}_{12}$. An entangled state across $\mathfrak{M}_{12}$ is defined as a nonseparable state. Following Summers \cite{summers_vacuum_1985,summers_bells_1996}, we define: 
\begin{align}
\begin{aligned}
\mathcal{T}_{12}=\bigg\lbrace & \frac{1}{2}[A_1(B_1+B_2)+A_2(B_1-B_2)]:\\ & A_i=A_i^*\in\mathfrak{M}_1, B_i=B_i^*\in\mathfrak{M}_2, -\id \leq A_i,B_i\leq \id\bigg\rbrace.
\end{aligned}
\end{align}Elements of $\mathcal{T}_{12}$ are called Bell operators for $\mathfrak{M}_{12}$. For a given state $\omega$ of $\mathfrak{M}_{12}$, define:
\begin{equation}
\label{def-beta}
\beta(\omega)=\sup \{ |\omega(A)|:A\in\mathcal{T}_{12}\}.
\end{equation}If $\beta(\omega)>1$, we say that $\omega$ violates a Bell inequality, or is Bell correlated.  In the AQFT context, Bell's theorem states that the correlations $\omega$ induces between $\mathfrak{M}_1$ and $\mathfrak{M}_2$ admit a local hidden variable model iff $\beta(\omega)=1$. This result has important conceptual implications for the interpretation of states in QFT. For instance, Summers showed that every pair of commuting non-abelian von Neumann algebras possesses \emph{some} normal state with maximal Bell correlation \cite{summers_independence_1990}. He also showed that in most standard QFT models, \emph{all} normal states are maximally Bell correlated across spacelike separated tangent wedges or double cones \cite{summers_independence_1990}. Such Bell correlations decrease exponentially with spacelike separation. Finally, Halvorson showed that for \emph{any} pair of mutually commuting von Neumann algebras, if both algebras are of infinite type then there is an open dense subset of vectors of $\calS$ which induce Bell correlated states for $\mathfrak{M}_{12}$ \cite{halvorson_locality_2001}. Since algebras of local observables in quantum field theory are always of infinite type, this result shows that for \emph{any} pair of spacelike separated systems, a dense set of field states violate Bell's inequalities relative to measurements that can be performed on the respective subsystems. Clifton and Halvorson contrast this result with the behaviour of finite-dimensional (elementary) quantum mechanics ``where decoherence will most often drive a pair of systems into a classically correlated state" \cite{clifton_entanglement_2001}.

Focusing on the character of correlations of particular states, Halvorson showed that if a vector state $x\in\calH$ is cyclic for $\mathfrak{M}_1$ acting on $\calH$, i.e. if by applying elements of $\mathfrak{M}_1$ to $x$ one can generate the entire state space, then the induced state $\omega_x$ is entangled across $\mathfrak{M}_{12}$ \cite{halvorson_locality_2001}. He also showed that such entangled states are locally highly mixed, in the sense that they have a norm dense set of components\footnote{A state $\omega$ on a von Neumann algebra $\mathfrak{M}$ is said to be a component of another state $\rho$ iff there is a third state $\tau$ such that $\rho=\lambda\omega+(1-\lambda)\tau$ with $\lambda\in [0;1]$.} in the state space of $\mathfrak{M}(\calO)$, paralleling our intuition about systems with a finite number of degrees of freedom. Consequently, the Reeh-Schlieder theorem---which entails that finite-energy states are induced by vectors which are cyclic for local algebras \cite{reeh_bemerkungen_1961}---implies that ``although there is an upper bound on the Bell correlation of the Minkowski vacuum (in models with a mass gap) that decreases exponentially with spacelike separation, the vacuum state remains nonseparable at all distances" \cite{halvorson_locality_2001}. Halvorson also showed that if $\mathfrak{M}_1$ and $\mathfrak{M}_2$ are commuting non-abelian von Neumann algebras each possessing a cyclic vector, and if $\mathfrak{M}_{12}$ possesses a separating vector\footnote{A vector $x\in\calH$ is called separating for a von Neumann algebra $\mathfrak{M}$ acting on $\calH$ iff $Ax=0$ implies $A=0$ for all $A\in\mathfrak{M}$.}, then the previous property is generic in the sense the the set of states on $\mathfrak{M}_{12}$ which are entangled across $\mathfrak{M}_{12}$ is norm dense in the state space of $\mathfrak{M}_{12}$. The Reeh-Schlieder theorem guarantees that the conditions of this result are satisfied whenever we consider spacelike separated regions satisfying $(\calO_1\cup\calO_2)'\neq\emptyset$, for instance when both regions are bounded in spacetime. Consequently, if a local observer Alice wants to be certain that by performing a local operation in $\calO_1$ she produces a disentangled state, she ``would need extraordinary ability to distinguish the state of [$\mathfrak{M}_{12}$] which results from her operation from the generic set of states of [$\mathfrak{M}_{12}$] that are entangled!"

\subsection{Type III algebras and split property}

We recall that a factor is type III iff its projections are infinite and equivalent. Such factors contain no abelian projections\footnote{We recall that a nonzero projection $P\in\mathfrak{M}\subseteq\mathfrak{B}(\calH)$ is called abelian iff the von Neumann algebra $P\mathfrak{M}P$ acting on the subspace $P\calH$ (with identity $P$) is abelian.} (which are finite), therefore the associated projection lattice has no atoms. Consequently, type III algebras possess no pure states, despite the fact that they always possess a dense set of vectors states, that are incidentally both cyclic and separating.

An important result shows that there are no product states across any two commuting type III algebras\footnote{The debate about ``improper" mixtures that arise by restricting an entangled state to a subsystem, and ``proper" mixtures that do not is therefore irrelevant in this context.}. Moreover, as stated in Section \ref{section-121}, in most known AQFT models the local algebras are type III$_1$. A characterization of type III$_1$ factors was established by Connes and St\o rmer \cite{connes_homogeneity_1978}: a factor $\mathfrak{M}$ acting on a (separable) Hilbert space is type III$_1$ iff for any two states $\rho,\omega$ of $\mathfrak{B}(\calH)$, and any $\epsilon>0$, there are unitary operators $U\in\mathfrak{M}$ and $U'\in\mathfrak{M}'$ such that $||\rho-\omega^{UU'}||<\epsilon$. This result immediately implies that there are no separable states across $(\mathfrak{M},\mathfrak{M}')$. Indeed, as argued by Clifton and Halvorson in \cite{clifton_entanglement_2001}, ``if some $\omega$ were not entangled, it would be impossible to act on this state with local unitary operations in [$\mathfrak{M}$] and [$\mathfrak{M}'$] and get arbitrarily close to the states that are entangled across [$(\mathfrak{M},\mathfrak{M}')$]". More interestingly, the Connes-St\o mer characterization implies that invariance under unitary operations on separate entangled systems and norm continuity force triviality of any measure of entanglement across $(\mathfrak{M},\mathfrak{M}')$. Note that this result does not contradict with the fact that entanglement entropy is norm continuous and invariant under unitary operations because there are no states represented by density operators for type III factors, therefore entropy is not available \cite{clifton_entanglement_2001}. Consequently, local observers cannot distinguish \emph{even in principle} between the different degrees of entanglement that states might have across $(\mathfrak{M},\mathfrak{M}')$, unlike the results of the previous section where this was only a matter of ``extraordinary ability".

Clifton and Halvorson claim in \cite{clifton_entanglement_2001} that the consequences of the above considerations are particularly strong when considering local algebras associated with diamond regions in $M$. Diamond regions are defined as the intersection of the timelike future of a given spacetime point $p$ with the timelike past of another point in $p$'s future. Indeed, consider a diamond region $\Diamond \in M$. Most AQFT models verify a property called Haag's duality:
\begin{equation}
\mathfrak{M}(\Diamond ')=\mathfrak{M}(\Diamond )'.
\end{equation}Every global state of the field is \emph{intrinsically} entangled across $(\mathfrak{M}(\Diamond '),\mathfrak{M}(\Diamond )')$, therefore ``it is \emph{never possible} to think of a field system in a diamond region $\Diamond$ as disentangled from its spacelike complement" \cite{clifton_entanglement_2001}. Thus, Einstein's methodological worry seems justified for field systems.

However, from a \emph{practical} point of view, the region in which a local observer is manipulating a field system is defined only approximately. Therefore, it is natural to consider the possibility of disentangling a state of the field across some pair of \emph{strictly} spacelike separated regions $(\mathfrak{M}(\calO_1),\mathfrak{M}(\calO_2))$, i.e. regions which remain spacelike separated when either is displaced by an arbitrarily small distance. They showed that because of the lack of abelian projections, there is a norm dense set of entangled states across $(\mathfrak{M}(\calO_1),\mathfrak{M}(\calO_2))$ that cannot be disentangled by any pure local operation performed in $\mathfrak{M}(\calO_1)$ \cite{clifton_entanglement_2001}. In summary, using Halvorson's words \cite{halvorson_locality_2001}:
\begin{quote}
``There are many regions of spacetime within which no local operations can be performed that will disentangle that region's state from that of its spacelike complement, and within which no pure or projective operation on any one of a norm dense set of states can yield disentanglement from the state of any other strictly spacelike separated region."
\end{quote}
Is any attempt at a \emph{practical} disentanglement of field systems doomed? The answer is no! Indeed, in most AQFT models the local algebras verify the so-called \emph{split property}: for any bounded open $\calO$ in Minkowski spacetime, and any larger region $\tilde\calO$ whose interior contains the closure of $\calO$, there is a type I factor $\mathfrak{I}$ such that:
\begin{equation}
\mathfrak{M}(\calO)\subset \mathfrak{I}\subset \mathfrak{M}(\tilde\calO).
\end{equation}
Clifton and Halvorson use this property to demonstrate the operational possibility of disentangling field systems. We below reproduce close to verbatim their proof in \cite{clifton_entanglement_2001}. Suppose that Alice wants to prepare some state $\rho$ on $\mathfrak{M}(\calO_1)$. The split property implies that there is a type I factor $\mathfrak{I}$ satisfying
\begin{equation}
\mathfrak{M}(\calO_1)\subset \mathfrak{I}\subset \mathfrak{M}(\tilde\calO_1)
\end{equation}for any region $\tilde\calO_1$ that contains the closure of $\calO_1$. The density operator associated with the vector representation on a Hilbert space $\calH$ of $\rho$ can be extended to a density operator $D_\rho$ in the type I algebra $\mathfrak{I}$, which can be written as a convex combination $\sum_i \lambda_iP_i$ of mutually orthogonal atomic projections in $\mathfrak{I}$ satisfying $\sum_i P_i=\id$ with $\sum_i\lambda_i=1$. Since each projection $P_i$ is equivalent, in the type III algebra $\mathfrak{M}(\tilde\calO_1)$, to the identity operator, there is a partial isometry $W_i\in\mathfrak{M}(\tilde\calO_1)$ satisfying
\begin{equation}
W_iW_i^*=P_i,\quad W_i^* W_i=\id.
\end{equation}Now consider the operator $T$ on $\mathfrak{M}(\tilde\calO_1)$ defined by
\begin{equation}
T(X)=\sum_i \lambda_i W_i^* X W_i,\quad \forall X\in \mathfrak{M}(\calO_1).
\end{equation}We claim that $T(X)=\rho(X)\id$ for all $X\in\mathfrak{M}(\calO_1)$. Indeed, because each $P_i$ is abelian in $\mathfrak{I}\supseteq \mathfrak{M}(\calO_1)$, the operator $P_iXP_i$ acting on $P_i\calH$ is a multiple of the identity operator $P_i$ on $P_i\calH$:
\begin{equation}
\label{local-prep}
P_iXP_i=c_i P_i.
\end{equation}Taking the trace on both sides of this equation, we get $c_i=\trace(P_i X)$. Moreoever, acting on the left of equation \eqref{local-prep} with $W^*_i$ and on the right with $W_i$, we obtain $W_i^*XW_i=\trace(P_iX)\id$, which yields the desired conclusion when multiplied by $\lambda_i$ and summer over $i$. Finally, for any initial state $\omega$ of $\mathfrak{A}(\calO_1)$, we have:
\begin{equation}
\omega(T(X))=\rho(X),\quad \forall X\in\mathfrak{A}(\calO_1),
\end{equation}which means that by performing an approximately local operation $T$ to $\mathfrak{M}(\calO_1)$ (choosing $\tilde\calO_1$ to approximate $\calO_1$ as close as we like), Alice can prepare any state on $\mathfrak{M}(\calO_1)$. Furthermore, if we define $T$ to be local to $\mathfrak{M}(\tilde\calO_1)$ as the requirement that $T$ leave the expectations of observables outside $\mathfrak{M}(\tilde\calO_1)$ and those in its center $\mathfrak{M}(\tilde\calO_1)\cap \mathfrak{M}(\tilde\calO_1)'$ unchanged \cite{halvorson_locality_2001}, then the result of Alice preparing $T$ will always produce a product state across $(\mathfrak{M}(\calO_1),\mathfrak{M}(\calO_2))$ when $\calO_2\subseteq (\tilde\calO_1)'$, i.e. for any initial state $\omega$ across $(\mathfrak{M}(\calO_1),\mathfrak{M}(\calO_2))$:
\begin{equation}
\omega(T(XY))=\omega(T(X)Y)=\rho(X)\omega(Y),\forall X\in\mathfrak{M}(\calO_1),\forall Y\in\mathfrak{M}(\calO_2).
\end{equation}
In Clifton and Halvorson's words \cite{clifton_entanglement_2001}: 
\begin{quote}
``[Whenever] we allow Alice to perform approximately local operations on her field system, she \emph{can} isolate it from entanglement with other strictly spacelike separated field systems, while simultaneously preparing the local state she wants. God is subtle, but not malicious."
\end{quote}Therefore, for all practical purposes, Einstein's methodological worry is neutralized by the split property.

\bigskip

In summary, we clarified the sense in which nonlocal correlations do not contradict the relativistic principle of locality. We reviewed the recent information-theoretic (partial) reconstructions of non-signalling quantum correlations, an approach that shed new light on what features are specifically quantum. We also showed that entanglement pervades QFT \emph{precisely} because of the relativistic constraints encoded in field models, and that local observers can isolate field systems despite the correlations the latter share with their environment. After this conceptual discussion, we now focus on understanding entanglement in the relativistic setting, most notably issues concerning entanglement detection and quantification.

\chapter{Entanglement and relativity}
\label{second-chapter}

In this chapter, we analyze the effects of relativistic constraints on entanglement detection and quantification. We start by adding a minimal relativistic requirement to nonrelativistic quantum theory, viz. Poincaré invariance. We review how this requirement couples the spin and momentum degrees of freedom for inertial observers through the so-called Wigner rotations. This in turn implies a transfer of entanglement between them for bipartite systems and imposes a fine-tuning of local operations if observers are to detect the nonlocal character of correlations. For non-inertial observers, the full machinery of QFT is necessary. In addition to the interpretational difficulties that arise because of the Unruh effect, local observers cannot efficiently manipulate entanglement because of the global character of the field modes. These difficulties are partially overcome by using the so-called Unruh-DeWitt detectors. Regarding entanglement quantification in QFT, many results in the literature show that entanglement entropy follows a divergent area law in the continuum limit for most field models. Since all systems are ultimately described by QFT, these results are in sharp contrast with the finite results of entanglement measures in nonrelativistic quantum theory. Following the intuition we gained from Chapter \ref{first-chapter} on the deep relationships between entanglement and the properties of local algebras, we present a novel regularization technique at low energy for the entanglement entropy of infinite-mode systems. The derived low-energy formalism is as expected equivalent to standard nonrelativistic quantum theory. This provides a controlled transition from the QFT picture of entanglement to the nonrelativistic quantum theory one. We end this chapter with a discussion of the thermodynamical relationship between the possibility of a high-energy regularization of entanglement entropy and spacetime dynamics.

\section{Poincaré invariance}

We define Minkowski spacetime, denoted by $\mathcal{M}$, as a four-dimensional real vector space equipped with a non-degenerate, symmetric bilinear form with signature $(+,-,-,-)$. The basic elements of this space are the 4-vectors $x$. These vectors have contravariant (denoted by upper indices) components $x^\mu$ and covariant components (lower indices) $x_\mu$. The relation between them is given by the metric tensor $g$ in the following way:
\begin{equation}
x_\mu=g_{\mu\nu}x^\nu=g^{\mu\nu}x_\nu,\hspace{0.3cm} \mu,\nu\in\{0,1,2,3\}.
\end{equation}The geometry of Minkowski spacetime is fully determined by specifying the metric tensor. It has the form:
\begin{equation}
g^{\mu\nu}=g_{\mu\nu}=\begin{bmatrix}
1 &0 &0 & 0\\
0 & -1 & 0 & 0\\
0 & 0 & -1 & 0\\
0 & 0 & 0 & -1
\end{bmatrix}.
\end{equation}The line element $ds$ is given by
\begin{equation}
ds^2=g_{\mu\nu} dx^\mu dx^\nu=dx^\mu dx_\mu=dx_\nu dx^\nu.
\end{equation}

\subsection{Lorentz group}

We consider the group $\mathcal{L}$ of transformations $\Lambda$, called the Lorentz group,  that leave the origin and the elements of the metric tensor invariant. It is a six-dimensional non-compact non-abelian real Lie group. It contains as a subgroup the group of orientation-preserving spatial rotations, which is isomorphic to the three dimensional Lie group SO(3) consisting of all $3 \times 3$ orthogonal real matrices with determinant one. The remaining three dimensions of the Lorentz group come from the boosts representing transformations from one inertial system to another that is moving in parallel to the first one with constant velocity.  In the  $4\times 4$ matrix representation of the Lorentz group $O(1,3)$, elements are denoted by $L^\mu_\nu$ and we write:
\begin{equation}
\label{matrix-rep}
x'^{\mu}=L^\mu_\nu x^\nu,
\end{equation}and require the identity:
\begin{equation}
g_{\mu\nu}=L^\alpha_\mu L^\beta_\nu g_{\alpha \beta}.
\end{equation}\\
One can show that in the $4 \times 4$ representation:
\begin{equation}
(L_0^0)^2=1+\sum_i(L^i_0)^2 \geq 1,
\end{equation}and check that if two Lorentz transformation $\Lambda_1$ and $\Lambda_2$ have the property $L^0_0(\Lambda_1)\geq 1$ and $L^0_0(\Lambda_2)\geq 1$ respectively, then their product has the same property. Therefore the Lorentz group have four components that are not simply connected:
\begin{itemize}
\item[(i)] $\mathcal{L}^\uparrow_+$: the set of transformations that preserve orientation ($\det(L)=1$ in the $4\times 4$ representation) and the direction of time ($L^0_0\geq 1$), called proper orthochronous.
\item[(ii)]$\mathcal{L}^\uparrow_-$: the set of transformations that reverse orientation ($\det(L)=-1$) and preserve the direction of time, called improper orthochronous.
\item[(iii)]$\mathcal{L}^\downarrow_+$: the set of transformations that preserve orientation and reverse the direction of time ($L^0_0\leq -1$), called proper non-orthochronous.
\item[(iv)]$\mathcal{L}^\downarrow_-$: the set of transformations that reverse orientation and the direction of time, called improper non-orthochronous.
\end{itemize}The quotient $\mathcal{L}/\mathcal{L^\uparrow_+}\simeq O(1,3)/SO^+(1,3)$, where $SO^+(1,3)$ denotes the identity component in the $4 \times 4$ representation, is isomorphic to the Klein group $\mathbb{Z}_2\times \mathbb{Z}_2$, whose element are denoted by $\id, P, T$ and $PT$. $P$ and $T$ correspond to parity inversion and time reversal respecively, and can be written in the $4\times 4$ representation as:
\begin{equation}
P^\mu_\nu=\begin{bmatrix}
1 &0 & 0 & 0\\
0 & -1 & 0 & 0\\
0 & 0 & -1 & 0\\
0 & 0 & 0 & -1
\end{bmatrix},\quad T^\mu_\nu=\begin{bmatrix}
-1 & 0 & 0 & 0\\
0 & 1 & 0 & 0\\
0 & 0 & 1 & 0\\
0 & 0 & 0 & 1
\end{bmatrix}.
\end{equation}

Any relativistic theory should be invariant under Lorentz transformations, therefore we should look for the various possible representations of the Lorentz group, besides the usual $4\times 4$ matrix representation. Thus we need to study the Lie algebra associated with the Lorentz group. The generators of the Lorentz group are denoted by $M^{\mu\nu}$ where $\mu,\nu\in \{0,1,2,3\}$ and satisfy the following commutation relations:
\begin{equation}
[M^{\mu\nu},M^{\rho\sigma}]=-i(M^{\mu\rho}g^{\nu\sigma}+M^{\nu\sigma}g^{\mu\rho}-M^{\nu\rho}g^{\mu\sigma}-M^{\mu\sigma}g^{\nu\rho}),
\end{equation}where $\mu,\nu,\rho,\sigma \in \{0,1,2,3\}$. These relations can be written in terms of the more familiar operators $\vec{J}=(J^1,J^2,J^3)$ and $\vec{K}=(K^1,K^2,K^3)$ (generators of rotations and boosts respectively) as follows:
\begin{equation}
\label{com-lorentz}
[J^j,J^k]=i\epsilon_{jkl} J^l,\hspace{0.2cm} [J^j,K^k]=i\epsilon_{jkl}K^l, \hspace{0.2cm}[K^j,K^k]=-i\epsilon_{jkl}J^l,
\end{equation}where $j,k,l \in\{1,2,3\}$ and $\epsilon_{jkl}$ is the fully antisymmetric tensor in three dimensions normalized according to $\epsilon^{123}=\epsilon_{123}=1$.

For finite or compact groups, every representation is equivalent to a unitary one. As the Lorentz group is not compact (the hyperbolical angles modeled by $\overrightarrow{\chi}$ range over the whole real axis), we can expect some representations to be non-unitary. Commutation relations \eqref{com-lorentz} can be decoupled by taking the following complex linear combinations:
\begin{equation}
\vec{C}=\frac{1}{2}(\vec{J}+i\vec{K}),\quad \vec{D}=\frac{1}{2}(\vec{J}-i\vec{K}).
\end{equation}These operators verify the following commutation relations:
\begin{align}
\begin{aligned}
{[}C^j,C^k]&=i\epsilon_{jkl}C^l,\\
[D^j,D^k]&=i\epsilon_{jkl}D^l,\\
[C^j,D^k]&=0.
\end{aligned}
\end{align}We recognize two independent sets of generators of $SU(2)$. We know the possible irreducible finite-dimensional representations\footnote{Unitary representations of the Lorentz group are not useful for our purposes.} of $\mathfrak{su}(2)$ are of the form:
\begin{equation}
D^{[j_1,j_2]}=D^{[j_1]}\otimes D^{[j_2]},\quad 2j_1,2j_2\in\mathbb{N},
\end{equation}where representation $D^{[j]}$ is $(2j+1)$-dimensional. Both $\vec{C}$ and $\vec{D}$ may be extended to the whole space by writing for a specific representation $[j]$ of $\mathfrak{su}(2)$:
\begin{equation}
\vec{C}=\vec{S}^{[j_1]}\otimes \id^{[j_2]},\quad \vec{D}=\id^{[j_1]}\otimes \vec{S}^{[j_2]},
\end{equation}where operators $\vec{S}^{[j]}$ are $(2j+1)\times (2j+1)$ matrices representing the generators of $SU(2)$ and $\id^{[j]}$ is the identity operator in the representation space $D^{[j]}$. As $\vec{J}=\vec{C}+\vec{D}$ and $\vec{K}=-i(\vec{C}
-\vec{D})$, we find for any element $A\in SL(2,\mathbb{C})$ characterized by $\overrightarrow{\theta}$ and $\overrightarrow{\chi}$ the following $SU(2)\otimes SU(2)$ representation:
\begin{align}
\begin{aligned}
D^{[j_1,j_2]}(A)  = & \exp(-i\overrightarrow{\chi}\cdot \vec{K})\exp(-i \overrightarrow{\theta}\cdot \vec{J})\\
= & \exp(-i\overrightarrow{\chi}\cdot[\vec{S}^{[j_1]}\otimes \id^{[j_2]}-\id^{[j_1]}\otimes \vec{S}^{[j_2]}])\\
& \cdot \exp(-i \overrightarrow{\theta}\cdot [\vec{S}^{[j_1]}\otimes \id^{[j_2]}+\id^{[j_1]}\otimes \vec{S}^{[j_2]}]).
\end{aligned}
\end{align}These matrices transform states with label $(j_1,j_2)$ into states with $(j_1,j_2)$. Boosts transform states of different $j$ into each other, therefore $j$ does not label irreducible subspaces, whereas $j_1,j_2$ do. Since $SL(2,\mathbb{C})$ is the universal covering of $SO^+(1,3)$, we provided projective representations of $\mathcal{L}^\uparrow_+$, which can be representations only if $j_1+j_2\in\mathbb{N}$ because
\begin{equation}
D^{[j_1,j_2]}(-e)=(-1)^{2(j_1+j_2)}\id^{[j_1]}\otimes \id^{[j_2]},
\end{equation}where $e$ is the neutral element of $\mathcal{L}$. An element $L \in \mathcal{L}^\uparrow_+$ can be uniquely written as a rotation followed by a boost:
\begin{equation}
\label{formula-lorentz}
D^{[j_1,j_2]}(L) = \exp(-i\overrightarrow{\chi}\cdot \vec{K})\exp(-i\overrightarrow{\theta}\cdot \vec{J}),
\end{equation}where $\overrightarrow{\chi},\overrightarrow{\theta}\in \mathbb{R}^3$. Elements in $\mathcal{L}^\uparrow_-,\mathcal{L}^\downarrow_+,\mathcal{L}^\downarrow_-$ can be obtained from those in $\mathcal{L}^\uparrow_+$ by multiplication by $P,T$ and $PT$ respectively.

As a side note, operators $\vec{C}^2$ and $\vec{D}^2$ commute with any of the generators $C^i$ and $D^j$, i.e. they are Casimir operators, and their eigenvalues $j_1(j_1+1)$ and $j_2(j_2+1)$ label the representations of the algebra. Alternative Casimir operators in terms of generators $M^{\mu\nu}$ are defined as:
\begin{align}
\begin{aligned}
C_1&=\frac{1}{2}M^{\mu\nu}M_{\mu\nu}=\vec{J}^2-\vec{K}^2=\frac{1}{2}(\vec{C}^2+\vec{D}^2),\\
C_2&=\frac{1}{4}\epsilon_{\mu\nu\alpha\beta}M^{\mu\nu}M^{\alpha\beta}=2\vec{J}\cdot\vec{K}=
\frac{1}{2i}(\vec{C}^2-\vec{D}^2).
\end{aligned}
\end{align}The fact that there are two quadratic Casimir operators can be explained as follows: the complexification $\mathfrak{sl}(2,\mathbb{C})\otimes \mathbb{C}$ of the Lie algebra $\mathfrak{sl}(2,\mathbb{C})$ (viewed as a real Lie algebra) associated to $SL(2,\mathbb{C})$ is isomorphic to the direct sum $\mathfrak{su}(2)\oplus \mathfrak{su}(2)$ of two rank-1 algebras.

\subsection{Poincaré group}

The Lorentz group is a subgroup of the Poincaré group $\mathcal{P}$  of isometries of Minkowski spacetime, whose $4\times 4$ representation is
\begin{equation}
\mathbb{R}(1,3) \rtimes O(1,3).
\end{equation}
It is a non-abelian Lie group with 10 generators: the generators of translations denoted by $P^\mu=(P^0, \vec{P})$ and the generators of $M^{\mu\nu}$ (or $\vec{J}$ and $\vec{K}$) of the Lorentz group.
We denote a transformation belonging to the Poincaré group as $(a,\Lambda)$, where $a$ is a four-vector specifying a translation and $\Lambda$ is a Lorentz transformation. Since we now have defined the generators of the Poincaré group, any Lorentz transformation $\Lambda$ can be written as:
\begin{equation}
(0,\Lambda)=\exp\left(-i\overrightarrow{\chi}\cdot \vec{K}\right) \exp\left(-i\overrightarrow{\theta}\cdot\vec{J}\right).
\end{equation}
  A translation along four-vector $a$ is given by
\begin{equation}
(a,e)=\exp(-i a_\mu P^\mu).
\end{equation}Hence, a general Poincaré transformation can be written as:
\begin{equation}
(a,\Lambda)=\exp\left(-ia_\mu P^\mu\right) \exp\left(-i\overrightarrow{\chi}\cdot \vec{K}\right) \exp\left(-i\overrightarrow{\theta}\cdot\vec{J}\right).
\end{equation}
The generators of the Poincaré group verify:
\begin{equation}
\label{poincare-symmetry}
[P^\mu,P^\nu] =0, \hspace{0.5cm}
[M^{\mu\nu},P^\sigma]=i(P^\mu g^{\nu\sigma}-P^\nu g^{\mu\sigma}).
\end{equation}In terms of the boosts and angular momentum operators, the second relation reads:
\begin{align}
\begin{aligned}
{[}K^j,P^0] &=-iP^j,\\
[K^j,P^k]&=iP^0g^{jk}=-iP^0\delta_{jk},\\
[J^l,P^0]&=0,\\
[J^l,P^k]&=i\epsilon^{lkm}P_m.
\end{aligned}
\end{align}
$\mathcal{P}$ splits into four non-simply connected components $\mathcal{P}^\uparrow_+,\mathcal{P}^\uparrow_-,\mathcal{P}^\downarrow_+,$ and $\mathcal{P}^\downarrow_-$.

Define the Pauli-Lubanski (PL) operator $W^\mu$ as:
\begin{equation}
\label{def-PL}
W^\mu=-\frac{1}{2}\epsilon^{\mu\nu\rho\sigma} P_\nu M_{\rho\sigma},
\end{equation}where $\epsilon^{\mu\nu\rho\sigma}$ denotes the fully antisymmetric tensor in four dimensions normalized by $\epsilon^{0123}=\epsilon_{0123}=1$. It can alternatively be written as:
\begin{align}
\begin{aligned}
W^0&=\vec{P}\cdot \vec{J},\\
\vec{W}&=P^0\vec{J}-\vec{P}\times\vec{K}.
\end{aligned}
\end{align}The PL operator verifies:
\begin{equation}
\label{pauli-lub-com}
 [W^\mu,P^\nu]=0,\quad [M^{\mu\nu},W^\sigma]=i(W^\mu g^{\nu\sigma}-W^\nu g^{\mu\sigma}).
 \end{equation}The second equality means that $W^\mu$ transforms under the Lorentz group like $P^\mu$, i.e. it is a four-vector, written as $(W^0,\vec{W})$. Because the tensor $\epsilon$ is fully antisymmetric, the vector $W^\mu$ is orthogonal to momentum: $W_\mu P^\mu=0$.

The first Casimir operator of the Poincaré group corresponds to the mass operator $M$ defined by:
\begin{equation}
M^2=P_\mu P^\mu=E^2-\vec{P}^2.
\end{equation}
 From relations \eqref{pauli-lub-com}, one can easily derive:
 \begin{equation}
 [P^\mu, W_\nu W^\nu]= [M^{\mu\nu},W_\sigma W^\sigma]=0.
 \end{equation}Therefore $W^2=W_\mu W^\mu$ is the second Casimir operator, and one can show that there are only two of them.

Poincaré invariance---as abstactly encoded in commutation relations \eqref{poincare-symmetry} between elements of the Lie algebra associated with the Poincaré group $\mathcal{P}$---is a fundamental requirement in building any relativistic (quantum) theory. In particular, if $U_g:\mathcal{H}\rightarrow\mathcal{H}$ represents the (non-necessarily linear) action of a Poincaré transformation $g$ on a Hilbert space $\mathcal{H}$, transition probabilities have to be preserved
\begin{equation}
\label{unitary}
|\langle U_g(\psi) | U_g(\phi) \rangle|^2 = |\langle \psi | \phi \rangle|^2, \quad \forall \psi, \phi \in {\cal H}\:, ||\psi||=||\phi||=1.
\end{equation}
A celebrated theorem due to Wigner establishes that a (bijective) map $U:\mathcal{H}\rightarrow\mathcal{H}$ verifying \eqref{unitary} must necessarily be either linear and unitary or antilinear and anti-unitary. Representations of the Poincaré group should further satisfy
\begin{equation}
U_{(a,\Lambda)\cdot (a',\Lambda')}=e^{i\theta\left((a,\Lambda),(a',\Lambda')\right)}U_{(a,\Lambda)}U_{(a',\Lambda')}
\end{equation} and $U_{(0,e)}=\id$ where $\cdot$ is the group product in $\mathcal{P}$ (taken as an abstract group) and $(0,e)$ the neutral element with respect to this product. If $(a,\Lambda)\in\mathcal{P}^\uparrow_+$, it can always be decomposed as $(a,\Lambda)=(a',\Lambda')\cdot (a',\Lambda')$ where $(a',\Lambda')\in\mathcal{P}^\uparrow_+$. Since $U_{(a,\Lambda)}=U_{(a',\Lambda')} U_{(a',\Lambda')}$ and the product of two unitary or antiunitary operators is unitary, $U_{(a,\Lambda)}$ must be unitary. Therefore we should look for projective unitary representations of $\mathcal{P}^\uparrow_+$ when considering its action on states, or equivalently for unitary representations of its universal covering whose $4\times 4$ representation is
\begin{equation}
\mathbb{R}(1,3)\rtimes SL(2,\mathbb{C}).
\end{equation}
Since the group of translation $T_4$ is abelian, all its irreducible representations are one-dimensional. Thus we can consider the set $|p,s\rang$ of vectors, each generating such a representation. These vectors correspond to eigenstates of the four-momentum operator:
\begin{equation}
P^\mu|p,s\rang = p^\mu |p,s\rang,
\end{equation}where the $p^\mu$'s are the eigenvalues. The index $s$ is an unspecified degeneracy label. There are three types of representations corresponding to $p^2=p_\mu p^\mu >0, p^2 =0$ and $p^2 <0$. These classes can be further divided as follows:
\begin{itemize}
\item[(i)] $p^2>0, p^0>0$, whose characteristic representative is $p=(m,\vec 0)$.
\item[(ii)] $p^2>0, p^0<0$,  whose characteristic representative is $p=(-m,\vec 0)$.
\item[(iii)] $p^2=0, p^0>0$,  whose characteristic representative is $p=(1,0,0,1)$.
\item[(iv)] $p^2=0, p^0=0$,  whose characteristic representative is $p=(0,0,0,0)$.
\item[(v)] $p^2=0, p^0<0$,  whose characteristic representative is $p=(-1,0,0,1)$.
\item[(vi)] $p^2<0$,  whose characteristic representative is $p=(0,0,0,1)$.
\end{itemize}We do not consider the possibility of tachyons, therefore we impose that $p^2\geq 0$ and $p^0\geq 0$. Thus we only need to consider three representations:

\

\textbf{Vacuum:} Representation (iv) is a single state. It has the property that all generators of the Poincaré group are represented by one-dimensional null matrices. Consequently, all group elements are represented by identity matrices. So, this state has momentum and angular momentum equal to zero and is invariant under any Poincaré transformation. We interpret it as the vacuum.

\

\textbf{Massive case:} Representation (i) fixes a mass $m$. However, there is an infinite number of eigenvalues $p^\mu$ of $P^\mu$ that satisfy $p^2=m^2$, i.e. the mass $m$ representation is infinite-dimensional. Looking at the characteristic representative $\mathring{p}=(m,\vec{0})$, equation \eqref{def-PL} implies
\begin{equation}
W^0=0,\quad W^i=-mJ^i,
\end{equation}therefore the $W^\mu$ generate $SU(2)$---which is then called the little group of massive particles---and the Casimir operator $W_\mu W^\mu$ is the Casimir operator of $SU(2)$. The irreducible representations at fixed $m$ are then given by the well-known $(2j+1)\times (2j+1)$ dimensional matrices $D^{[j]}$, with $j=0,\frac{1}{2},1,\frac{3}{2},\cdots$, corresponding to the spinor representation of $SU(2)$. Label $s$ can thus be decomposed into two labels $j,\lambda$ corresponding to the common eigenvectors of the Casimir operator $\vec{J}^2=-\frac{1}{m^2}W_\mu W^\mu$ with eigenvalue $j(j+1)$, and $J^3$ with eigenvalue $\lambda$. Since there is no orbital motion in the rest frame of the particle, $\vec J^2$ and $J^3$ correspond to the total intrinsic spin and spin along $\hat z$ axis of the particle respectively. We have the following relations:
\begin{align}
\begin{aligned}
P^\mu | \mathring{p},j,\lambda\rang &= \mathring{p}^\mu | \mathring{p},j,\lambda\rang,\\
P^2 | \mathring{p},j,\lambda\rang &= m^2 | \mathring{p},j,\lambda\rangle,\\
\vec{J}^2 | \mathring{p},j,\lambda\rang &= j(j+1) | \mathring{p},j,\lambda\rangle,\\
J^3 | \mathring{p},j,\lambda\rang &= \lambda |\mathring{p},j,\lambda\rang.
\end{aligned}
\end{align}
We now want to further decompose label $s$ in states $|p,s\rang$ which do not correspond to the rest frame state of the particle. Introduce three four-vectors $n[i], i=1,2,3$ that have the following properties:
\begin{itemize}
\item[(i)] \emph{Transversality:} $n[i]\cdot p =n[i]^\mu p_\mu=0$.
\item[(ii)] \emph{Orthogonality:} $n[i]\cdot n[j]=n[i]^\mu n[j]_\mu = g_{ij}=-\delta_{ij}$.
\item[(iii)] \emph{Closure:} $\sum_{i=1}^3 n[i]^\mu n[i]^\nu= -\left( g^{\mu\nu} - \frac{p^\mu p^\nu}{p^2}\right)$.
\item[(iv)] \emph{Handedness:} $\epsilon_{\mu\nu\alpha\beta} n[i]^\mu n[j]^\nu n[k]^\alpha p^\beta=m\epsilon_{ijk}$.
\end{itemize}Writing conditions (iv) in the rest frame $\mathring{p}=(m,\vec{0})$ we have
\begin{equation}
n[i]^\mu=\mathring{n[i]^\mu}=\delta_{\mu i},
\end{equation}which means that in the rest frame the $n$'s are just the unit vectors along the (right-handed) space coordinate axes. The vectors $n[1],n[2],n[3]$ and $p$ form a set of mutually orthogonal vectors called a tetrad or vierbein. The next step is to project $W$ onto these vectors:
\begin{equation}
W[i]=W\cdot n[i] = W_\mu n[i]^\mu.
\end{equation}One can check that:
\begin{equation}
[W[j],W[k]] = im\epsilon_{jkl} W[l],
\end{equation}and $W[1],W[2]$ and $W[3]$ satisfy the canonical commutation relations of a rotation algebra. One can show that
\begin{equation}
W^2=-(W[1]^2+W[2]^2+W[3]^2),
\end{equation} therefore the eigenvalues of $W^2$ are again the $-m^2 j(j+1)$ with $j=0,\frac{1}{2},1,\frac{3}{2},\cdots$, and label $s$ can be decomposed into labels $j,\lambda$ corresponding to the common eigenvectors of the Casimir operator
\begin{equation}
 \vec{S}^2=-\frac{1}{m^2} W_\mu W^\mu=-\frac{1}{m^2}(W[1]^2+W[2]^2+W[3]^2)
 \end{equation} with eigenvalue $j(j+1)$) and
\begin{equation}
 S^3=\frac{1}{m}W[3]
 \end{equation} with eigenvalue $\lambda$. For a particular choice of the tetrad we have $S^3=\vec{J}\cdot \frac{\vec{P}}{||\vec{P}||}$, and $\lambda$ can be interpreted as the helicity of the particle. We have the following relations:
\begin{align}
\begin{aligned}
P^\mu | p,j,\lambda\rang &= p^\mu |p,j,\lambda\rang,\\
P^2 | p,j,\lambda\rang &= m^2 | p,j,\lambda\rangle,\\
\vec{S}^2 | p,j,\lambda\rang &= j(j+1) | p,j,\lambda\rangle,\\
S^3 | p,j,\lambda\rang &= \lambda |p,j,\lambda\rang.
\end{aligned}
\end{align}Operators
\begin{equation}
\label{spin-def}
S^i=\frac{1}{m}W[i],\quad i=1,2,3
\end{equation} are called the intrinsic spin (along three distinct space directions) of the particle, and may now differ from the $J^i$'s because of the possible orbital motion of the particle. Thus, projecting the PL four-vector on suitable vectors one can define a notion of intrinsic spin in all reference frames.

If we fix a representation $[j]$, we can denote a state $|p,j,\lambda\rang$ simply by $|p,\lambda\rang$. Define $L(p)$ as the Lorentz transform verifying
\begin{equation}
U^{[j]}(L(p))|\mathring{p},\lambda\rang = |p,\lambda\rang,
\end{equation}and, for simplicity, denote $L(\Lambda)^\mu_\nu p^\nu$ by $\Lambda p$.
One can then show that the $[j]$ representation of a Poincaré transform is
\begin{equation}
\label{wigner-little-massive}
U^{[j]}(a,\Lambda) |p,\lambda\rang = \exp\left(-i a \cdot \Lambda p\right) \sum_{\lambda'=-j}^j D^{[j]}_{\lambda\lambda'}(W(\Lambda,p)) |\Lambda p,\lambda'\rang,
\end{equation}where $D^{[j]}_{\lambda\lambda'}(W(\Lambda,p))$ is the $(2j+1)\times(2j+1)$ matrix representation of $W(\Lambda ,p)=L^{-1}(\Lambda p)\Lambda L(p)$, the so-called Wigner's little rotation that leaves the rest momentum invariant and rotates the helicity. The rotation of helicity comes from the fact that $\Lambda L(p)$ and $L(\Lambda p)$ are not equal even though both of them bring the momentum $\mathring{p}$ to $\Lambda p$.

\

\textbf{Massless case:} For representation (iii) there is again an infinite number of eigenvalues $p^\mu$ of $P^\mu$ satisfying $p_\mu p^\mu=0$, therefore it is infinite-dimensional. Similarly to the massive case, consider the characteristic representative $\mathring{p}=(1,0,0,1)$. The 	four-vector becomes:
\begin{align}
\begin{aligned}
W^0&= W^3=-m\vec{J}\cdot \frac{\vec{P}}{||\vec{P}||},\\
W^1 &=-m(J^1+K^2),\\
W^2 &=-m(J^2-K^1).
\end{aligned}
\end{align}These have algebra
\begin{align}
\begin{aligned}
{[}W^1,W^2]=0,\quad [W^2,W^3]=-imW^1,\quad [W^3,W^1]=-imW^2,
\end{aligned}
\end{align}which is precisely the algebra of the euclidian group $E(2)$. Again, this was expected since the $W^\mu$ also form the generators of the little group in the massless case. The covering group of $E(2)$ is the semi-direct product of the group $T_2$ of translations in $\mathbb{R}^2$ and rotations $\exp(i\phi/2), 0\leq \phi <4\pi$. Since $T_2$ is abelian, its irreducible representations are one-dimensional and can be written as
\begin{equation}
\chi^{[k_1,k_2]}=\exp(i(k_1x_1+K_2x_2)),\quad k_1,k_2\in\mathbb{R}.
\end{equation}Two cases must be distinguished:
\begin{itemize}
\item[(i)] $k_1^2+k_2^2>0$: there are infinitely many possible values for $(k_1,k_2)$, therefore representations are infinite-dimensional representations and require a continuous spin quantum number. Such representations are not physically interesting.
\item[(ii)] $k_1=k_2=0$: this is the identity representation of $T_2$, and the little group consists of two-dimensional rotations only, whose unitary irreducible representations are one-dimensional:
\begin{equation}
D^{[j]}(\phi)=\exp(ij\phi/2),\quad j=0,\pm\frac{1}{2},\pm 1,\cdots.
\end{equation}
\end{itemize}
We restrict our attention to the physically important case (ii). To label states, we again define a vierbein\footnote{Note that we here impose only linear independence as a constraint on the vierbein. The four constraints of the massive case are here irrelevant because there is no rest frame vierbein with which the boosted vierbein should coincide.} for the frame\footnote{In a general Lorentz frame $p=(p,0,0,p)$, the corresponding four-vectors of the vierbein are found by applying a suitable Lorentz transformation.} $\mathring{p}=(1,0,0,1)$:
\begin{align}
\begin{aligned}
\mathring{p} &=(1,0,0,1),\\
n[1]&=(0,1,0,0),\\
n[2]&=(0,0,1,0),\\
s&=(s^0,0,0,s^3), \quad s^0>|s^3|.
\end{aligned}
\end{align}One can easily check that $W^\mu$ can be expanded on this vierbein as follows:
\begin{equation}
\label{tetra-decomp}
W^\mu=W^1 n[1]^\mu + W^2n[2]^\mu+W^0\mathring{p}^\mu + W' s^\mu.
\end{equation}The transversality condition $W_\mu \mathring{p}^\mu=0$ (obtained by applying $W_\mu P^\mu$ to $|\mathring{p},s\rang$) implies that $W'=0$. Therefore $W_\mu W^\mu = (W^1)^2+(W^2)^2$. Wigner argues in \cite{wigner_unitary_1939} that in order to have finite-dimensional representations, one should further impose $W_\mu W^\mu=0$. Therefore, we obtain $W^\mu=W^0 \mathring{p}^\mu$, and we can equate $j$ and $\lambda$. In the case where space inversions are included in the symmetry group, one must group $D^{[j]}$ and $D^{[-j]}$ together to form an irreducible representation. The states $|p,j\rang$ and $|p,-j\rang$ where  $j>0$ are then interpreted as states of a massless particle with spin $j$ and helicity $j$ and $-j$ respectively. Basis states, denoted by $|p,\lambda\rang$, satisfy:
\begin{align}
\begin{aligned}
P^\mu|p,\lambda\rang &= p^\mu |p,\lambda\rang,\\
W^0|p,\lambda\rang &= \lambda m|p,\lambda\rang,\\
W^i|p,\lambda\rang &= 0,\quad i=1,2.
\end{aligned}
\end{align}The generator of rotations is $W^0$, therefore the helicity of massless particles is invariant under rotations. Even though $[K^j,W^0]=-iW^j$, one can show that helicity is also invariant under boosts. If $p$ and $p'$ are two light-like four-momenta differing by a rotation of angle $\phi$ about the direction of $\vec{p}$ only, they verify:
\begin{equation}
|p',\lambda\rang = e^{i\lambda\phi}|p,\lambda\rang = D_{\lambda\lambda}^{[|\lambda|]} (L(p')^{-1}\Lambda L(p))|p,\lambda\rang.
\end{equation}Here there is no summation over the helicities because of the invariance of the helicity under boosts and rotations for massless particles.

\section{Observer-dependent entanglement}

Understanding entanglement in relativistic settings has been a key question in relativistic quantum information. Early results show that entanglement is observer-dependent for inertial observers \cite{czachor_einstein-podolsky-rosen-bohm_1997,alsing_lorentz_2002,terashima_einstein-podolsky-rosen_2002,rembielinski_einstein-podolsky-rosen_2002,peres_quantum_2002,gingrich_quantum_2002,ahn_relativistic_2003,li_relativistic_2003,lee_quantum_2004} (See \cite{peres_quantum_2004,czachor_two-spinors_2010,alsing_observer-dependent_2012} for reviews). For non-inertial observers, the Unruh effect is responsible for an observer-dependent particle number \cite{unruh_notes_1976}, and the second quantization framework becomes necessary. It was shown that entanglement between two field modes is degraded by the Unruh effect when observers are in uniform acceleration \cite{terashima_einstein-podolsky-rosen_2004,fuentes-schuller_alice_2005}. We address these questions in this section.

\subsection{Inertial observers}

The field of relativistic quantum information is focused on describing how relativistic particles behave in a regime where the nature and the number of particles do not change during an experiment, such that not all the machinery of quantum field theory is necessary. This simplified view of the problems may have applications in the near future if we use the spin of relativistic particles to encode quantum information. In this section, we consider a version of the EPR problem in which measurements are performed by moving inertial observers. The aim of such an approach is to explore effects of the relative motion between the sender and receiver in quantum information protocols.

In order to build one-particle states, we must first fix a value for mass and a representation $[j]$. States can then be labeled by $|\vec{p},\lambda\rang$ instead of $|m,\vec{p},j,\lambda\rang$. These are the so-called helicity states, and Wigner's little rotation now reads:
\begin{equation}
\label{LT-one}
U(\Lambda)|\vec{p},\lambda\rang =\sum_{\lambda'=-j}^j D^{(j)}_{\lambda \lambda'} (W(\Lambda,p)) |\vec{p}_\Lambda,\lambda'\rang,
\end{equation}where $\vec{p}_\Lambda$ are the spatial components of $\Lambda p$.
The familiar spin states can be obtained by suitable rotation
\begin{equation}
|\vec{p},\sigma\rang = \sum_\lambda D^{[j]}_{\lambda\sigma}(R^{-1}(\vec{p}/||\vec{p}||)|\vec{p},\lambda\rang,
\end{equation}where $R^{-1}(\vec{p}/||\vec{p}||)$ rotates the 3-momentum direction $\vec{p}/||\vec{p}||$ to the $\hat{z}$-axis.
Spin states $|\vec{p},\sigma\rang$ are normalized as
\begin{equation}
\lang \vec{p'},\sigma' |\vec{p},\sigma \rang = (2\pi)^3(2p^0)\delta_{\sigma' \sigma} \delta^3(\vec{p'}-\vec{p}),
\end{equation}
 and one-particle states are given by
\begin{equation}
\label{one-particle-WP}
|\Psi\rang = \sum_\sigma \int_{-\infty}^{\infty} d\mu(p)\psi_\sigma(p)|\vec{p},\sigma\rang,
\end{equation}where $\psi_\sigma(p)=\lang\vec{p},\sigma|\Psi\rang$ verifies $\sum_\sigma \int_{-\infty}^{\infty} d\mu(p)|\psi_\sigma(p)|^2=1$, and $d\mu(p)=d^3\vec{p}/(2\pi)^3(2p^0)$ is the usual Lorentz invariant measure.

Applying the Lorentz transform $\Lambda$ to state state $|\Psi\rang$ yields the following description of this state in the boosted frame: 
\begin{equation}
|\Psi'\rang = \sum_\sigma \int_{-\infty}^{\infty} d\mu(p)\psi'_\sigma(p) |\vec{p},\sigma\rang,
\end{equation}
 where
\begin{equation}
 \psi'_\sigma(p)=\sum_\eta D^{(1/2)}_{\sigma\eta}(W(\Lambda,\Lambda^{-1}p))\psi_\sigma(\Lambda^{-1}p) = \lang \vec{p},\sigma |U(\Lambda)| \Psi\rang.
\end{equation}
 This result can be derived from a straightforward calculation using \eqref{one-particle-WP} and \eqref{LT-one}, the normalization property of states, a change of variables $p\rightarrow \Lambda^{-1} p$ and the fact that $d\mu(\Lambda^{-1}p)=d\mu(p)$, as well as relabeling of indices.

The work by Peres, Terno and Rohrlich \cite{peres_quantum_2002} considered wave packets as in \eqref{one-particle-WP} with the density matrix $\rho = |\Psi\rang \lang \Psi|$. The components of this density matrix are $\rho_{\sigma \sigma'}(\vec{p},\vec{p'})= \psi^*_\sigma(p)\psi_{\sigma'}(p')$, and one can obtain the reduced spin density matrix by after tracing out over the momentum
\begin{equation}
\rho^{red}_{\sigma \sigma'} = \int_{-\infty}^{\infty} d\mu(p)\psi^*_\sigma (p) \psi_{\sigma'}(p').
\end{equation}
They showed that unlike the complete density operator $\rho$, this reduced quantity has no covariant transformation law because of the momentum-dependent Wigner rotations that apply when switching between descriptions associated with different reference frames. Consequently, spin entropy is not a relativistic scalar and therefore has no invariant meaning, except in the limiting case of sharp momentum \cite{alsing_observer-dependent_2012}. Alsin and Fuentes commented on this result as follows \cite{alsing_observer-dependent_2012}:
\begin{quote}
``The conclusion is that even though it may be possible to formally define a spin state in any Lorentz frame by tracing out momentum variables, there will be no relationship between the observable expectation values in different Lorentz frames."
\end{quote}
Following this idea that Lorentz transformations entangle the spin and momentum degrees of freedom within a \emph{single} particle, Gingrich and Adami \cite{gingrich_quantum_2002}, Alsing and Milburn \cite{alsing_lorentz_2002} and Terashima and Ueda \cite{terashima_einstein-podolsky-rosen_2002} showed that Lorentz transformations also affect entanglement between the spins of \emph{different} particles. We summarize below the Terashima and Ueda demonstration of this result. Consider a pair of spin-$\frac{1}{2}$ particles with total spin zero moving away from each other in the $\hat x$ direction, each with velocity $v=\tanh(\xi)$. This situation is described by the state
\begin{equation}
|\Psi\rang = \frac{1}{\sqrt{2}}\left[\vec p_+,\uparrow\rang |\vec p_-,\downarrow\rang - |\vec p_+,\downarrow\rang |\vec p_-,\uparrow\rang \right],
\end{equation}where $p_\pm^\mu=(m\cosh(\xi),\pm m \sinh(\xi),0,0)$. In this case, the Lorentz transformation associated with each particle's rest frame is
\begin{equation}
L(p)=
\begin{bmatrix}
\cosh(\xi) & \pm\sinh(\xi) & 0 & 0\\
\pm\sinh(\xi) & \cosh(\xi) & 0 & 0\\
0 & 0 & 1 & 0\\
0 & 0 & 0 & 1
\end{bmatrix}.
\end{equation}
Assume that the two observers who perform measurements on the particles are moving in the $\hat z$ direction at the same velocity $v'=\tan(\chi)$. The corresponding Lorentz transformation reads
\begin{equation}
U(\Lambda)=
\begin{bmatrix}
\cosh(\chi) & 0 & 0 & -\sinh(\chi) \\
0 & 1 & 0 & 0\\
0 & 0 & 1 & 0\\
-\sinh(\chi) & 0 & 0 & \cosh(\chi) \\
\end{bmatrix}.
\end{equation}
The Wigner rotation \eqref{LT-one} is then reduced to a rotation about the $\hat y$-axis of angle $\pm\delta$ defined by
\begin{equation}
\label{angle}
\tan(\delta) = \frac{\sinh(\xi)\sinh(\chi)}{\cosh(\xi)+\cosh(\chi)}.
\end{equation}For spin-$\frac{1}{2}$ particles, rotations are represented by the Pauli matrices. Using the Pauli matrix $\sigma_y$, the transformation law \eqref{LT-one} thus becomes

\begin{align}
\begin{aligned}
U(\Lambda)|\vec{p_+},\uparrow\rang &= \cos(\frac{\delta}{2}) |\vec{p_+}_\Lambda ,\uparrow\rang + \sin(\frac{\delta}{2}) |\vec{p_+}_\Lambda,\downarrow\rang,\\
U(\Lambda)|\vec{p_-},\downarrow\rang &= -\sin(\frac{\delta}{2}) |\vec{p_-}_\Lambda,\uparrow\rang + \cos(\frac{\delta}{2}) |\vec{p_-}_\Lambda,\downarrow\rang,
\end{aligned}
\end{align}hence:
\begin{align}
\label{relativistic-epr}
\begin{aligned}
U(\Lambda)|\Psi\rang =\frac{1}{\sqrt{2}}&\left[\cos(\delta)\left(| \vec{p}_{+\Lambda},\uparrow\rang |\vec{p}_{-\Lambda},\downarrow\rang - |\vec{p}_{+\Lambda},\downarrow\rang |\vec{p}_{-\Lambda},\uparrow\rang\right) \right.\\
&\left.+\sin(\delta)\left(|\vec{p}_{+\Lambda},\uparrow\rang |\vec{p}_{-\Lambda},\uparrow\rang + |\vec{p}_{+\Lambda},\downarrow\rang |\vec{p}_{-\Lambda},\downarrow\rang\right)\right],
\end{aligned}
\end{align}where $\delta$ is given by equation \eqref{angle}. The spins of the two particles are rotated about the $\hat y$-axis through angles $\delta$ and $-\delta$ respectively because they are moving oppositely. From \eqref{relativistic-epr} one finds that measurements of the spin $\hat z$-component do not show a perfect anti-correlation in the relativistic setting. That is, the perfect anti-correlation in the same direction is not Lorentz invariant. Let $Q$ and $R$ be operators on the first particle corresponding to the spin $\hat z$- and $\hat y$-components, respectively. Similarly, let $S$ and $T$ be operators on the second particle corresponding to the spin component in the directions $(0,-\frac{1}{\sqrt{2}},-\frac{1}{\sqrt{2}})$ and $(0, -\frac{1}{\sqrt{2}}, \frac{1}{\sqrt{2}})$, respectively. Then, one obtains:
\begin{equation}
\lang QS \rang + \lang RS \rang + \lang RT\rang -\lang QT\rang = 2\sqrt{2}\cos^2(\delta).
\end{equation}The violation of Bell's inequality decreases with increasing velocity of the observers and that of the particles.

However, since the Lorentz transformation is a local unitary operation, the perfect entanglement should be preserved. This is indeed the case: if the directions of measurements are rotated about the $\hat y$-axis through $\delta$ for the first particle and through $-\delta$ for the second in accordance with their spin rotations, the measurements of spin are perfectly anti-correlated and Bell’s inequality remains maximally violated.

These works generated an intense study of relativistic EPR correlations for both spin-$\frac{1}{2}$ particles and photons. The primary focus of works on spin-$\frac{1}{2}$ particles was on the relevant covariant observable(s) for spin such that the expectation values obtained from measurements are the same in all inertial frames, which includes as a sub-problem the meaning and validity of the reduced density matrix, especially in the case where one traces out the momentum from the complete quantum state. Czachor and Wilczewski \cite{czachor_relativistic_2003} for massless particles \cite{czachor_relativistic_2003} and later Caban and Rembieli{\'n}ski for massive particles \cite{caban_lorentz-covariant_2005} showed that it is indeed possible to define a Lorentz-covariant reduced spin density matrix. How can one interpret this reduced spin operator? Alsin and Fuentes summarize the findings concerning this issue as follows \cite{alsing_observer-dependent_2012} :
\begin{quote}
``Such an object contains information about the average polarization of the particle, as well as information about its average kinematical state. For sharp momentum, the reduced density matrix does not change under Lorentz transforms. However, in the case of an arbitrary momentum distribution the entropy of the reduced density matrix is in general not a Lorentz invariant. Therefore, while one can define a Lorentz-covariant finite-dimensional matrix describing the polarization of a massive particle, one cannot completely separate out kinematical degrees of freedom when considering entanglement in the relativistic setting." 
\end{quote}

It should be clear by now that most complications in defining an appropriate reduced spin density matrix arise from how one should define a spin operator for massive particles, a long appreciated problem in relativistic quantum theory \cite{fleming_covariant_1965}. Indeed, spin is only unambiguously defined in the rest frame of the particle, where it coincides with the total angular momentum. For observers in an arbitrary inertial frame, we defined in the last section an intrinsic spin operator by choosing a vierbein. However, this definition is not unique. To understand why, we consider a more general definition based our understanding of spin in the nonrelativistic case, namely that spin is defined as the difference between the total angular momentum $\vec{J}$ and the orbital angular momentum $\vec{L}$: $\vec{S}=\vec{J}-\vec{L}$. In the relativistic case, $\vec{J}$ is well defined as a generator of the Poincaré group, while $\vec{L}=\vec{R}\times \vec{P}$ is not. Indeed, momentum $\vec{P}$ is a well-defined generator of the Poincaré group but there is no well-defined notion of a position operator (hence no concept of localization) in relativistic quantum mechanics. Consider for instance a single nonrelativistic particle described by a state $|\psi\rangle$. Its localization is determined by the domain of the position representation (or wave function) $\psi(x)$ of the state $|\psi\rangle$: one says that the particle is localized in a domain $\Delta\subset \mathbb R^3$ iff the support of $\psi(x)$ lies in $\Delta$, which means that $E_\Delta\psi(x)=\psi(x)$ with $E_\Delta$  the multiplication operator by the characteristic function of $\Delta$. However, time evolution generated by the nonrelativistic Hamiltonian
\begin{equation}
H=\frac 1{2m}\mathbf P^2=-\frac1{2m}\nabla^2
\end{equation}spreads out the localization in the following sense:
\begin{equation}
\exp(itH)E_\Delta\exp(-itH)E_{\Delta'}\neq 0, \forall t\neq 0,
\end{equation}where $\Delta, \Delta'$ are disjoint domains verifying $E_\Delta E_{\Delta'}=0$.
The spread of the wave-packet also holds in relativistic theory where time evolution is generated by the Hamiltonian
\begin{equation}
H=(c^2 \mathbf  P^2+m^2c^4)^{1/2}.
\end{equation}A general theorem shows that due to the analyticity implied by the relativistic spectrum condition, localization as encoded in terms of position operators is incompatible with causality:

\begin{Lem}
\label{no-loc-thm}
Suppose there is a mapping $\Delta\mapsto E_\Delta$ from subsets of spacelike hyperplanes in Minkowski spacetime into projectors on $\mathcal H$ such that
\begin{itemize}
\item[(i)] $U(a)E_\Delta U(a)^{-1}=E_{\Delta+a}$.
\item[(ii)] $E_\Delta E_{\Delta'}=0$ if $\Delta, \Delta'$ spacelike separated.
\end{itemize}
Then $E_\Delta=0$ for all $\Delta$.
\end{Lem}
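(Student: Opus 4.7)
The plan is to derive a contradiction with the spectrum condition unless $E_\Delta$ vanishes, by producing a matrix element that is the boundary value of a holomorphic function on a tube domain and that vanishes on a nonempty open set of spacelike translations. I assume $\Delta$ is bounded (otherwise hypothesis (ii) is vacuous, since a full spacelike hyperplane has no spacelike-disjoint translate).

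First, fix $\psi \in \mathcal{H}$ and set $\phi = E_\Delta \psi$; I aim to prove $\phi = 0$. By covariance (i), $E_{\Delta+a} = U(a) E_\Delta U(a)^{-1}$, and by spacelike orthogonality (ii), $E_\Delta E_{\Delta+a} = 0$ whenever $\Delta + a$ is spacelike separated from $\Delta$. The set $S$ of such translations $a$ is a nonempty open subset of $\mathbb{R}^4$ because $\Delta$ is bounded. Multiplying the relation $E_\Delta E_{\Delta+a}=0$ on the right by $U(a)$ yields $E_\Delta U(a) E_\Delta = 0$ for every $a \in S$, so
\begin{equation}
f(a) := \langle \phi, U(a) \phi\rangle = \langle \psi, E_\Delta U(a) E_\Delta \psi\rangle = 0, \quad a \in S.
\end{equation}

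Next I invoke the spectrum condition: $U(a) = \exp(-i a_\mu P^\mu)$ with the joint spectrum of $P^\mu$ contained in the closed forward light cone $\overline{V}_+$. By the spectral theorem applied to $\phi$,
\begin{equation}
f(a) = \int_{\overline{V}_+} e^{-i a \cdot p}\, d\mu_\phi(p),
\end{equation}
where $\mu_\phi$ is the positive finite spectral measure associated with $\phi$. Because $y \cdot p \geq 0$ for all $p \in \overline{V}_+$ and $y$ in the closed forward cone, the factor $e^{-y \cdot p}$ is bounded on the support of $\mu_\phi$ whenever $y$ lies in the open forward cone $V_+$, so $f$ extends to a function $F(a - iy)$ holomorphic on the tube $\mathbb{R}^4 - iV_+$ and continuous up to the real boundary.

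The crux is then to show that $F$, and hence $f$, vanishes identically from the fact that it vanishes on the nonempty real open set $S$. Concretely, for $a_0 \in S$ and any timelike $b \in V_+$, the one-variable function $z = t - i\tau \mapsto F((a_0 + tb) - i\tau b)$ is holomorphic in the lower half-plane $\tau > 0$ and continuous up to the real axis, with real boundary value $t \mapsto f(a_0 + tb)$ that vanishes on an open interval around $t = 0$ by openness of $S$. Schwarz reflection across this interval then forces $f(a_0 + tb) = 0$ for all $t \in \mathbb{R}$. Iterating along finitely many timelike directions connects any point of $\mathbb{R}^4$ to $S$, so $f \equiv 0$ on $\mathbb{R}^4$. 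In particular $\|\phi\|^2 = f(0) = 0$, so $E_\Delta \psi = 0$ for every $\psi$, whence $E_\Delta = 0$.

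The main obstacle is the analytic-continuation step: it encodes the quasi-analyticity of functions whose Fourier support lies in a properly convex cone, and it is precisely this rigidity inherited from the spectrum condition that is incompatible with the combination of covariance and spacelike orthogonality of the putative localization projectors. All remaining steps are routine applications of the spectral theorem and of the covariance and orthogonality hypotheses.
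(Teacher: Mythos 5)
Your proof is correct and follows the same overall strategy as the paper's: use covariance and spacelike orthogonality to show that $\langle E_\Delta\psi,\,U(a)E_\Delta\psi\rangle$ vanishes on a nonempty open set of translations, use the spectrum condition to extend this function holomorphically into the tube over the forward cone, conclude that it vanishes identically, and evaluate at $a=0$. The one place you genuinely diverge is in how the rigidity step is discharged: the paper simply cites the ``edge of the wedge'' theorem as a black box for ``holomorphic in a tube, continuous up to the real boundary, vanishing on an open subset of the boundary $\Rightarrow$ vanishing identically,'' whereas you prove exactly that statement by hand, restricting to complex lines in timelike directions and applying one-variable Schwarz reflection plus the identity theorem. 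Your version is more self-contained and makes visible where positivity of the spectral measure and convexity of the cone enter; the price is the iteration step, which you should tighten slightly: to restart the reflection along a new timelike direction you need the zero set obtained after each stage to contain an \emph{open} set (so that the new line meets it in an interval, not just a point). This does hold --- the union of full timelike lines through the open set $S$ is $S\cup(S+V_+)\cup(S+V_-)$, which is open, and one more stage already exhausts $\mathbb{R}^4$ since $V_++V_-=\mathbb{R}^4$ --- but as written ``iterating along finitely many timelike directions connects any point to $S$'' elides this. Your opening remark that $\Delta$ should be taken bounded (or at least to admit a spacelike-separated translate) is a reasonable reading of the hypotheses that the paper leaves implicit.
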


\begin{proof} The spectrum condition implies that  for every $\Psi\in\mathcal H$, the function  $a\mapsto U(a)\Psi$ has an analytic continuation into $\mathbb R^4+\mathrm i {\rm V}^+\subset \mathbb C^4$. Condition $(ii)$ means that $\langle E_\Delta \Psi, U(a)E_\Delta \Psi\rangle=\langle \Psi, E_\Delta E_{\Delta+a}U(a)\Psi\rangle=0$ on an open set in Minkowski spacetime. But, according to the ``edge of the wedge" theorem, an analytic function that is continuous on the real boundary of its analyticity domain  and vanishing on an open subset of this boundary vanishes identically.\end{proof}
Consequently, if we are to define operators $\vec{L}$ and $\vec{S}$ one has to drop one of the two conditions of Lemma \ref{no-loc-thm}. Different choices were favored by different researchers. Popular choices include the center-of-mass operator $\vec{R}_{cm}$ and the Newton-Wigner position operator $\vec{R}_{NW}$ \cite{newton_localized_1949} defined by
\begin{equation}
\vec{R}_{cm}=-\frac{1}{2}\left[ \frac{1}{P^0}\vec{K}+\vec{K}\frac{1}{P^0}\right],\hspace{0.3cm} \vec{R}_{NW} =\vec{R}_{cm} - \frac{\vec{P}\times \vec{K}}{mP^0(m+P^0)}.
\end{equation}This leads to the center-of-mass and Newton-Wigner spin operators\footnote{Note that a choice of vierbein is necessary in order to define the spin directions.}
\begin{equation}
\vec{S}_{cm}=\frac{\vec{W}}{P^0},\hspace{0.5cm} \vec{S}_{NW}=\frac{1}{m}\left(\vec{W}-\frac{W^0\vec{P}}{P^0+m}\right).
\end{equation}
The general conclusion from studies using these operators as a covariant representation of spin measurement is that maximum EPR correlations can be recovered in any inertial frame if both the state and the spin observable are Lorentz transformed \cite{alsing_observer-dependent_2012}.

The above considerations on spin observable measurements are highly theoretical, which led Saldanha and Vedral to recently question their physical realizability \cite{saldanha_physical_2012,saldanha_spin_2012}. Their argument can be summarized as follows \cite{alsing_observer-dependent_2012}: consider a covariant description of the interaction $H_{PL}$ of a measurement apparatus with spin observable $\vec{S}$ constructed from the PL four-vector $W^\mu=(W^0,\vec{W})$. Depending on the chosen definition of spin observable, one has either
\begin{equation}
W^\mu=(\vec{S}_{cm}\cdot \vec{P},P^0\vec{S}_{cm})
\end{equation}
or
\begin{equation}
W^\mu=\left(\vec{S}_{NW}\cdot\vec{P},m\vec{S}_{NW}+(P^0-m)(\vec{S}_{NW}\cdot \vec{P})\frac{\vec{P}}{||\vec{P}||^2}\right).
\end{equation}Expectation values of spin measurements are covariant only if the interaction is a Lorentz scalar, i.e. if the Hamiltonian can be written in the form $H_{PL}=W^\mu G_\mu=W^0G^0-\vec{W}\cdot \vec{G}$ for some four-vector $G^\mu=(G^0,\vec{G})$. The crucial point is that no such coupling is known to exist in nature. In particular, the standard measurements where spin couples to the electromagnetic field cannot be written in this form. Focusing on how physical spin is known to couple to a measurement appartus, they show that violation of the Bell inequality in this setting is also dependent on momentum and make different predictions in relation to the previous works on the subject \cite{alsing_observer-dependent_2012}.

We conclude this review by recalling that several results were obtained in the case of massless particles \cite{alsing_lorentz_2002,gingrich_entangled_2003,caban_helicity_2008}. Since massless particles have a different little group, the observables covariance issue was analyzed for the relevant objects, e.g. linear polarizations, and works on the effects of relative motion on entanglement yielded globally similar results \cite{alsing_observer-dependent_2012}.

\subsection{Non-inertial observers}

In flat spacetime, all inertial observers agree on the number of particles and the observer-dependent nature of entanglement can be simply traced back to a transfer of entanglement between the particles degrees of freedom. However, non-inertial observers do not agree on the number of particles, which is known as the Unruh effect \cite{unruh_notes_1976}. In this section, we analyze how the Unruh-DeWitt detectors allow for an unambiguous understanding of the implications of the Unruh effect on entanglement between modes.

\subsubsection{Unruh effect}

For the convenience of the reader unfamiliar with the Unruh effect, we below reproduce close to verbatim an introductory discussion of the subject that can be found in \cite{thiffeault_what_1993}. Consider as an example a two-dimensional Minkowski spacetime, and define the coordinates $\bar u$ and $\bar v$ by
\begin{align}
\begin{aligned}
\bar u &= t-x,\\
\bar v &= t+x.
\end{aligned}
\end{align}These correspond to the null rays going through the origin. The line element is written
\begin{equation}
\label{line-element}
ds^2=dt^2-dx^2=d\bar u d\bar v.
\end{equation}
We define the following coordinate transformation:
\begin{align}
\label{coord-transf-1}
\begin{aligned}
t &= a^{-1}\me^{a\xi}\sinh(a\eta),\\
x &= a^{-1}\me^{a\xi}\cosh(a\eta),
\end{aligned}
\end{align}where $a>0$ is a constant and $-\infty < \eta,\xi < \infty$. Inverting the transformation
\begin{align}
\begin{aligned}
\bar u&=-a^{-1}\me^{-au},\\
\bar v&= a^{-1}\me^{av},
\end{aligned}
\end{align}where $u=\eta - \xi, v=\eta+\xi$. The line element \eqref{line-element} then becomes
\begin{equation}
\label{coord-transf-2}
ds^2=\me^{2a\xi}du dv = \me^{2a\xi}(d\eta^2-d\xi^2).
\end{equation}
The coordinates $(\eta,\xi)$ cover only a quadrant of Minkowski spacetime. Lines of constant $\eta$ are straight while lines of constant $\xi$ are hyperbolae corresponding to the world lines of uniformly accelerated observers with acceleration $\alpha$ given by
\begin{equation}
\alpha=a\me^{-a\xi}.
\end{equation}
The system $(\eta,\xi)$ is known as the Rindler coordinate system, and the portion $x>|t|$ of Minkowski spacetime, labeled by $R$, is called the Rindler wedge. A second Rindler wedge $x<-|t|$ labeled $L$ may be obtained by changing the signs of the right-hand sides of the transformation equations \eqref{coord-transf-1} and \eqref{coord-transf-2}. The null rays act as \emph{event horizons} for Rindler observers: an observer in $R$ cannot see events in $L$ and vice versa. $L$ and $R$ thus represent two causally disjoint universes. We mark also the remaining future $(F)$ and past $(P)$ regions on Fig.\ref{fig-rindler}. Any event in $P$ or $F$ can be connected by null rays to both $L$ and $R$.

\vspace{0.5cm}

\begin{figure}[!htbp]
\begin{center}
\includegraphics[scale=0.27]{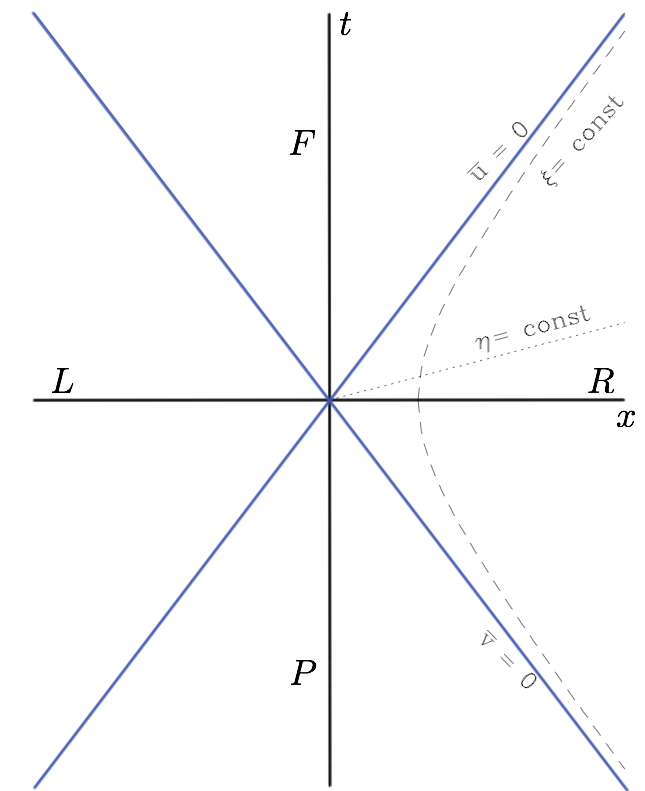}
\end{center}
\caption{Rindler coordinatization of Minkowski spacetime. In $R$ and $L$, time coordinates $\eta=\mbox{constant}$ are straight lines through the origin, space coordinates $\xi=\mbox{constant}$ are hyperbolae (corresponding to the world lines of uniformly accelerated observers) with null asymptotes $\bar u=0, \bar v=0$ acting as event horizons. The four regions $R, L, F$ and $P$ must be covered by separate coordinates patches. Rindler coordinates are non-analytic across $\bar u=0$ and $\bar v=0$ (From \cite{thiffeault_what_1993}).}
\label{fig-rindler}
\end{figure}

Now consider the quantization of a massless scalar field $\hat \Phi$ in two-dimensional Minkowski spacetime. The wave equation
\begin{equation}
\label{field-eq}
\Box\hat \Phi\equiv \left(\frac{\partial^2}{\partial t^2}-\frac{\partial^2}{\partial x^2}\right)\hat \Phi=\frac{\partial^2\hat \Phi}{\partial \bar u\partial \bar v}=0
\end{equation}
has the standard orthonormal mode solutions
\begin{equation}
\label{modes}
\bar{u}_k=(4\pi\omega)^{-1/2}\me^{i(kx-\omega t)},
\end{equation}where $\omega =|k|>0$ and $-\infty <k<\infty$. The modes with $k>0$ consist of right-moving waves
\begin{equation}
\label{pos-modes}
(4\pi\omega)^{-1/2}\me^{-i\omega\bar u}
\end{equation}along the rays $\bar u=\mbox{constant}$, while for $k<0$ one has left-moving waves along $\bar{v}=\mbox{constant}$
\begin{equation}
\label{neg-modes}
(4\pi\omega)^{-1/2}\me^{-i\omega\bar v}.
\end{equation}
Since the modes \eqref{modes} form a complete set, we can expand the field $\hat \Phi$ as
\begin{equation}
\hat \Phi=\sum_{k=-\infty}^\infty \left(\hat a_k\bar{u}_k+\hat a_k^\dagger \bar{u}_k^*\right).
\end{equation}
The operator $\hat a_k$ is the annihilation operator of mode $k$, while $\hat a_k^\dag$ is the corresponding creation operator. The Minkowski vacuum state $|\Omega_M\rang$ is then defined by
\begin{equation}
\label{M-vacuum}
\hat a_k|\Omega_M\rang =0.
\end{equation}
Now we wish to solve wave equation \eqref{field-eq} in the Rindler coordinates $(\eta,\xi)$:
\begin{equation}
\Box\hat\Phi=\me^{-2a\xi}\left(\frac{\partial^2}{\partial\eta^2}-\frac{\partial^2}{\partial\xi^2}\right)\hat\Phi = \me^{-2a\xi}\frac{\partial^2\hat\Phi}{\partial u\partial v}=0.
\end{equation}This has the same form as \eqref{field-eq}, so the mode solutions are
\begin{equation}
\label{rindler-modes}
u_k=(4\pi\omega)^{-1/2}\me^{i(k\xi\pm\omega\eta)},
\end{equation}with $\omega$ defined as in \eqref{modes}. The upper sign in \eqref{rindler-modes} applies in region $L$, the lower in region $R$.

Define
\begin{equation}
\label{R-modes}
u_{k,R}=
\left\lbrace
\begin{array}{ccc}
(4\pi\omega)^{-1/2}\me^{i(k\xi-\omega\eta)}  & \mbox{in} & R\\
0 & \mbox{in} & L\\
\end{array}\right.
\end{equation}
\begin{equation}
\label{L-modes}
u_{k,L}=
\left\lbrace
\begin{array}{ccc}
(4\pi\omega)^{-1/2}\me^{i(k\xi+\omega\eta)}  & \mbox{in} & L\\
0 & \mbox{in} & R\\
\end{array}\right.
\end{equation}
The set \eqref{R-modes} is complete in region $R$, while \eqref{L-modes} is complete in $L$, but neither set is separately complete on all of Minkowski spacetime. However, the modes \eqref{R-modes} and \eqref{L-modes} can be analytically continued into regions $F$ and $P$ ($a$ becomes imaginary in \eqref{coord-transf-1} and \eqref{coord-transf-2}). Thus these Rindler modes are as valid as the Minkowski spacetime basis \eqref{modes}.
We can thus expand the field as
\begin{equation}
\hat\Phi = \sum_{k=-\infty}^{\infty}\left(\hat b^{(1)}_k {u_{k,L}} + \hat b^{(1)\dag}_k {u_{k,L} ^*}
+\hat b^{(2)}_k { u_{k,R}} +\hat  b^{(2)\dag}_k {u_{k,R} ^*}\right),
\end{equation}yielding two alternative vacuum states, the Minkowski vacuum \eqref{M-vacuum} and the Rindler vacuum $|\Omega_R\rang$ defined by
\begin{equation}
\label{R-vacuum}
\hat b^{(1)}_k |\Omega_R\rang=\hat b_k^{(2)}|\Omega_R\rang =0.
\end{equation}These vacuum states are not equivalent as the Rindler modes are not analytic at the origin: because of the sign change in the exponent in \eqref{rindler-modes} at $\bar{u}=\bar{v}=0$, the functions $u_{k,R}$ do not go over smoothly to $ u_{k,L}$ as one passes from $R$ to $L$. In contrast, the Minkowski modes \eqref{pos-modes} and \eqref{neg-modes} are analytic and bounded in the entire lower half of the complex $\bar{u}$ (or $\bar{v}$) planes. This analyticity property remains true of any pure positive frequency function, i.e. any linear superposition of these positive frequency Minkowski modes. Hence the Rindler modes cannot be a linear superposition of pure positive frequency Minkowski modes, but must also contain negative frequencies. In other words, the $\hat b_k^{(1,2)}$ are a linear combination of both $\hat a_k$'s and $\hat a_k^\dag$'s, which means that to the accelerated observer $\hat b_k^{(1,2)}|\Omega_M\rang \neq 0$. The discussion of the actual relation between the $\hat b_k$'s and the $\hat a_k$'s is rather involved, so we shall only state the final result for the expectation value of the number opertor for the Rindler observer:
\begin{equation}
\lang \Omega_M|\hat b_k^{(1,2)\dag}\hat b_k^{(1,2)}|\Omega_M\rang =\left(\me^{2\pi \omega c\alpha}-1\right)^{-1}.
\end{equation}This is the Planck spectrum for radiation at temperature $k_B T=\frac{\hbar \alpha}{2\pi c}$.

\subsubsection{Unruh-DeWitt detectors}

In order to connect the Unruh effect to experimentally measurables facts, the Unruh-DeWitt detectors were introduced (See \cite{alsing_observer-dependent_2012} and references therein for more details). The idea is to calculate the response---or equivalently transitions between states---of this model detector along some trajectory and define accordingly the particle content of the observed state. Formally, a model detector is a quantum system whose states live in a product Hilbert space $\calH_D\otimes \calH_\Phi$ and which is provided with a Hamiltonian operator
\begin{equation}
H_m=H^D_m+H_m^\Phi+H^I_m,
\end{equation}where indices $D,\Phi, m$ and $I$ refer to ``detector", ``field", ``model" and ``interaction" respectively. Consider the simplest scenario of a free scalar field (we take $\hbar=c=1$):
\begin{equation}
\hat \Phi(x,t)=\frac{1}{\sqrt{2\pi}}\int \frac{dk}{\sqrt{2\omega_k}}\left(\hat a_k e^{i(kx-wt)}+\hat a^\dag_k e^{-i(kx-\omega t)}\right),
\end{equation}
with Hamiltonian
\begin{equation}
H^\Phi_m=\int dk \omega_k\hat a^\dag_k\hat  a_k,
\end{equation}where $ \omega_k=\sqrt{k^2+m^2}$ and $\hat a^\dag_k$ and $\hat a_k$ are the associated creation and annihilation operators. The field vacuum state will be denoted by $|\Omega\rang$. The detector part $H^D_m$ in the total Hamiltonian must account for a harmonic oscillator, or in the simplest case for a two-level system: unexcited $|0\rang_D$ (with $H^D_m|0\rang_D=0$) and excited $|E\rang_D$ (with $H^D_m|E\rang_D=E|E\rang_D$). Regardless of the interaction part, the model state $|0\rang_D\otimes |\Omega\rang$ is interpreted, by construction, as ``the detector is in its ground state and the field system is in its vacuum state" \cite{costa_modeling_2009}. The usual Hamiltonian used for Unruh-DeWitt detectors features an interaction part defined by:
\begin{equation}
H^I_m=\sigma \hat \Phi(y(t),t),
\end{equation}where $\sigma$ is a self-adjoint operator acting on $\calH_D$ and containing off-diagonal eflements and $y(t)$ is the detectors trajectory. This Hamiltonian verifies:
\begin{itemize}
\item[(i)] \emph{Particle interpretation:} The detector is a quantum system with discrete energy levels.
\item[(ii)] \emph{Particle absorption and emission:} Transitions between different levels must be possible.
\item[(iii)] \emph{Locality:} The detector interacts locally with the field.
\item[(iv)] \emph{Asymptotic vacuum:} No transition occurs for long enough periods of time when the detector is at rest.
\end{itemize}
One then switches on the interaction between the field and the detector for a period of time $\tau$ and, using Fermi's golden rule, one computes to first order the transition rate per unit of proper time of the detector to a state with one excitation:
\begin{equation}
T=\sum_{n,\psi} \frac{1}{2\tau}\int_{-\infty}^{\infty} dt |A_n|^2,\quad A_n=\lang E|_f\lang\psi|H_m^I|0\rang_D|\Omega\rang,
\end{equation}where $|\psi\rang_f$ is the final state of the field. Since the latter is not observed, one has to average over all possible outcomes to compute the detector response function $R(\omega)$. One finds that the structure of the detector is factored out and that the response function essentially depends on the Wightman function $\lang \Omega|\hat\Phi(x)\hat\Phi(x')|\Omega\rang$, i.e. the detector's response is independent of the particular model \cite{alsing_observer-dependent_2012}. Computations for an inertial detector result in $R(\omega)=0$, and for a uniformly accelerated detector  with acceleration $\alpha$ one finds that $R(\omega)$ is proportional to $\omega (e^{2\pi\omega\alpha}-1)$. This result can be interpreted as follows \cite{alsing_observer-dependent_2012}: one cannot distinguish the equilibrium reached between the accelerated detector and the field $\hat\Phi$ in the vacuum state $|\Omega\rang$ from the case when the detector is at rest and immersed in a bath at (Unurh) temperature $T=\frac{\alpha}{2\pi}$.

Let us come back to the topic of entanglement. We argued that because of the Unruh effect, the vacuum state is a thermal state for uniformly accelerated observers. Since Rindler regions $ L$ and $R$ are causally disconnected, an observer that is uniformly accelerated in one Rindler region has no access to information from the other Rindler region. Therefore, the state of such an observer is mixed because he must trace out over the unaccessible wedge. The degree of mixture depends on the observer's acceleration, namely the degradation is higher for observers with larger proper accelerations/temperature. This suggests the possibility of a thermodynamical relationship between entanglement and the motion of uniformly accelerated observers, a topic that we evoke in the Section \ref{einstein-eq}.

 For the sake of completeness, let us recall that other types of correlations relevant to quantum information theory have also been studies in non-inertial frames\footnote{For curved spacetimes, there are no Killing timelike vector fields, and therefore one cannot define a notion of subsystem. We recall that Killing fields correspond to generators of isometries on a Riemannian manifold. For instance, Minkowski spacetime admits ten Killing fields, in correspondance with the ten generators of the Poincaré group. Nonetheless, several results were obtained for special cases, for instance for black holes where spacetime is approximately flat at the horizon (See \cite{alsing_observer-dependent_2012} and references therein).}. For instance, it was shown that correlations are conserved in case one observer is non-inertial \cite{fuentes-schuller_alice_2005,alsing_entanglement_2006}, and degraded when both observers accelerate \cite{adesso_continuous-variable_2007}. Entanglement in non-inertial frames was also considered for other fields. For instance,  an ambiguity arises for fermionic fields in defining entanglement measures due to the anticommutation properties of field operators \cite{montero_fermionic_2011,bradler_two_2011,montero_comment_2011}. Recently, a density operator formalism on the fermionic Fock space was introduced which can be naturally and unambiguously equipped with a notion of subsystems in the absence of a global tensor product structure \cite{friis_fermionic-mode_2013}. It was shown that entanglement is degraded for such systems and that Bell's inequalities are not violated \cite{friis_residual_2011}, however entanglement remains finite in the infinite acceleration limit \cite{alsing_entanglement_2006,pan_degradation_2008}.

\subsection{Local detection of entanglement}

In the previous discussions we focused on how relative motion affects entanglement. What about entanglement detection and manipulation by local observers at rest in the same reference frame? This scenario is perfectly described by standard nonrelativistic quantum theory, but if one requires a more fundamental description of systems using QFT, problems arise because of the global character of the field modes since their manipulation by local observers would require infinite energy and time.

However, entanglement between modes is not the only kind of entanglement carried by the states of the field. Indeed, we argued in Chapter \ref{first-chapter} that the spatial degrees of freedom of a field system are also entangled, even for the vacuum state. Previous works suggest that this entanglement can be extracted and swapped by using two Unruh-DeWitt detectors at rest which become entangled by interacting with the Minkowski vacuum even if they are spacelike separated \cite{reznik_distillation_2000,reznik_entanglement_2002}. However, these suggestions rely on the ``vacuum dark counts" of Unruh-DeWitt detectors. Indeed, it is known that Unruh-DeWitt detectors undergo temporary transitions to excited states even for short times and the detectors being at rest. The possibility that a detector may ``click" at finite (short) times in the vacuum, but then ``erase" the record later to fulfill condition (iv) of the previous discussion asks for clarification. The natural approach is to extend condition (iv) to infinite times, a condition we call (iv) bis. However, this is problematic because it would contradict condition (iii). Indeed, the Reeh-Schlieder theorem implies that if a detector performs a von Neumann-type measurement corresponding to a projector $\Pi$, then either $\lang \Omega|\Pi|\Omega\rang\neq 0$ or $\Pi$ is nonlocal. A recent result by Costa \emph{et al.} clarified this issue \cite{costa_modeling_2009}. There they exhibit a toy theory with both a \emph{local} Hamiltonian and an \emph{effective} nonlocal two-level detector model that faithfully reproduces the detection rates of the fundamental theory \emph{and} satisfies condition (iv) bis. The configuration ``detector in its ground state + vacuum of the field" can then be considered as a stable bound state of the \emph{underlying} field theory for detectors at rest. Therefore the aforementioned attempts to detect entanglement between the spatial degrees of freedom of the field states are questionned.

 \section{Infinite-mode systems}

We argued that in studying entanglement, the relevant systems for local observers are localized systems, i.e. systems associated with spacelike separated regions of spacetime. This is further justified by the fact that all finite-energy measurements on infinite-mode systems are necessarily performed within some finite region of space and during a finite period of time.  However, the entropy of entanglement between such systems is ill-defined for all finite-energy states in relativistic field theories. In this section, we first review some results on the divergence of entanglement entropy. We then introduce a novel regularization technique at low energy, thus providing an effective low-energy description of entanglement for infinite-mode systems\footnote{This work has been published in \cite{ibnouhsein_renormalized_2014}.}. The derived low-energy formalism is as expected equivalent to standard nonrelativistic quantum theory, hence a controlled transition from the QFT picture of entanglement to the one of nonrelativistic quantum theory. We conclude with a review of the ``thermodynamical" relationship between a high-energy regularization of entanglement entropy and spacetime dynamics.

\subsection{Area law for entanglement entropy}

Consider a finite region of space $A$ at fixed time and its causal complement $\bar A$. Region $A$ has two complementary descriptions: general relativity identifies it with a submanifold of Minkowski spacetime, but as a quantum subsystem, $A$ is described by a Hilbert space $\mathcal{H}(A)$, which is a factor in the tensor product decomposition
\begin{equation}
\mathcal{H}=\mathcal{H}(A)\otimes \mathcal{H}(\bar{A})
\end{equation} of the total Hilbert space of the (exactly continuous) field theory under investigation. This tensor product structure (TPS) is defined by the \emph{choice} of a \emph{localization scheme}, i.e. a specific mapping associating $A$ and $\bar A$ with commuting subalgebras of the algebra of observables of the field system\footnote{Consider a quantum system divided into two parts $P$ (Part) and $R$ (Rest) $\calH=\calH_P\otimes\calH_R$. If we have two sets of observables $\mathfrak{A}_P$ and $\mathfrak{A}_R$, separately defined on subsystems $P$ and $R$ respectively, then we can trivially extend such observables to the entire system as follows: $\mathfrak{A}_P\rightarrow\mathfrak{A}_P\otimes \id_R$ and $\mathfrak{A}_R=\id_P\otimes\mathfrak{A}_R$. The basic idea in \cite{zanardi_virtual_2001,zanardi_quantum_2004} is that the converse is also true: if we isolate two commuting subalgebras $\mathfrak{A}_P$ and $\mathfrak{A}_R$ of $\mathfrak{B}(\calH)$ that generate the entire algebra $\mathfrak{B}(\calH)$, then they induce a unique tensor product decomposition $\mathfrak{B}(\calH)=\mathfrak{A}_P \otimes \mathfrak{A}_R$.}\cite{fleming_reeh-schlieder_1998,halvorson_reeh-schlieder_2001,piazza_quantum_2006,piazza_volumes_2007,piazza_glimmers_2010}. The standard localization scheme associates to space regions the field operators and their conjugates therein defined\footnote{In QFT, the usual local field observables commute at spacelike separated events, therefore local fields define a TPS.}. Suppose that the field is in a state $\rho$. The results of measurements to be performed by a local observer in region $A$ are described by the reduced density matrix obtained by tracing out the degrees of freedom outside $A$
\begin{equation}
 \rho_A=\trace_{\bar{A}} (\rho).
\end{equation}
 The von Neumann entropy associated with region $A$ is then defined as
\begin{equation}
S_A =-\trace (\rho_A \log \rho_A).
\end{equation}
For the vacuum state, this quantity typically scales as the area of the boundary between $A$ and $\bar A$:
\begin{equation}
\label{old-results}
\frac{S_A}{\mathcal{A}}=C(\frac{\lambda}{\mu},m\mu)\mu^{-2},
\end{equation}
where $\mathcal{A}$ is the area of the boundary between $A$ and $\bar{A}$, $m$ the mass of the field, $\lambda$ an infra-red cutoff and $C(x,y)$ some slowly varying function \cite{bombelli_quantum_1986, srednicki_entropy_1993, callan_geometric_1994, calabrese_entanglement_2004,eisert_colloquium:_2010,das_how_2006,das_power-law_2008}. This result can be derived by introducing an ultraviolet (UV) cutoff in the continuous model and by extending the simplified model we detail below to $N$ oscillators.  Most importantly, the area law for entropy extends to all finite-energy states provided power-law correction terms are added \cite{das_how_2006,das_power-law_2008}. Consequently, the entropy of entanglement of finite-energy states diverges in the continuum limit for $m>0$, a result that is deeply connected to the algebraic structure of QFT \cite{hegerfeldt_remark_1974,summers_vacuum_1985,summers_maximal_1987,redhead_more_1995,
summers_bells_1996,clifton_entanglement_2001, summers_yet_2008}. Indeed, we argued in Chapter \ref{first-chapter} that states on type III local algebras of observables are intrinsically entangled, and recent results by Keyl \emph{et al.} relate this type III entanglement to infinite one-copy entanglement \cite{keyl_infinitely_2002, keyl_entanglement_2006}. Consequently, all measures of entanglement are meaningless in most AQFT models. Furthermore, for field systems with type I local algebras of observables, states with infinite entropy of entanglement are trace-norm dense in state space \cite{eisert_quantification_2002}.

The simplified model we consider is that of a free massless scalar bosonic field in one space dimension consisting of only two coupled harmonic oscillators (we take $\hbar = c =1$) \cite{srednicki_entropy_1993}. The space of states of such a system is
\begin{equation}
\calH = \mathcal{H}_1 \otimes \calH_2,
\end{equation}where $\mathcal{H}_1, \calH_2$ are the single-oscillator Fock spaces. The creation and annihilation operators $\{\hat a_j^\dagger,\hat{a}_j\}_{j=1,2}$ acting on the Fock spaces can then be related to the usual position and momentum operators through:
\begin{equation}
\hat x_j=\sqrt{\frac{1}{2m\omega_j}}(\hat a_j^\dag +\hat{a}_j), \hspace{0.5cm} \hat p_j= i\sqrt{\frac{m\omega_j}{2}}(\hat a^\dag_j -\hat a_j), \hspace{0.5cm} j=1,2.
\end{equation}The Hamiltonian is defined as follows:
\begin{equation}
H=\frac{1}{2}\left[\hat p_1^2+\hat p_2^2 + k_0(\hat x_1^2+\hat x_2^2)+ k_1(\hat x_1 - \hat x_2)^2\right].
\end{equation}
The normalized wave function corresponding to the ground state of this system is:
\begin{equation}
\psi_0(x_1,x_2)=\pi^{-\frac{1}{2}}(\omega_+\omega_-)^\frac{1}{4}\exp\left[-(\omega_+ x_+^2+\omega_- x_-^2)/2\right],
\end{equation}where $x_\pm =(x_1 \pm x_2)/\sqrt{2}, \omega_+=\sqrt{k_0}$ and $\omega_-=\sqrt{k_0+2k_1}$. The crucial operation encoding the restriction of the field state to the state observed by a local observer is to form the ground state density matrix, and trace over the first oscillator, resulting in a density matrix for the second oscillator alone
\begin{align}
\begin{aligned}
\rho_2(x_2,x'_2)&=\int_{-\infty}^{+\infty} dx_1 \psi_0(x_1,x_2)\psi_0^\star (x_1,x'_2)\\
&=\pi^\frac{1}{2} (\gamma-\beta)^\frac{1}{2}\exp\left[-\gamma(x_2^2+{x'}_2^{2})/2+\beta x_1 x_2\right],
\end{aligned}
\end{align}where $\beta=\frac{1}{4}(\omega_+  -\omega_-)^2/(\omega_+ + \omega_-)$ and $\gamma-\beta = 2\omega_+\omega_-/(\omega_+ + \omega_-)$. The von Neumann entropy of the second oscillator can be expressed as $S=-\sum p_n\log(p_n)$ where the $p_n$'s are the eigenvalues of $\rho_2(x,x')$:
\begin{equation}
\label{eigenvalue-2-osc}
\int_{-\infty}^{+\infty} dx'\rho_2(x,x')f_n(x') = p_n f_n(x).
\end{equation}
The solution to problem \eqref{eigenvalue-2-osc} is:
\begin{align}
\label{solution-eigen}
\begin{aligned}
p_n&=(1-\xi)\xi^n\\
f_n(x) &= H_n(\nu^\frac{1}{2}x)\exp(-\nu x^2/2),
\end{aligned}
\end{align}where $H_n$ is a Hermite polynomial, $\nu =(\gamma^2-\beta^2)^\frac{1}{2}=(\omega_+\omega_-)^\frac{1}{2}, \xi=\beta/(\gamma+\nu)$ and $n$ runs from zero to infinity. Therefore, the entropy is:
\begin{equation}
S(\xi)=-\log(1-\xi)-\frac{\xi}{1-\xi}\log(\xi),
\end{equation}where $\xi$ is ultimately a function of the ratio $k_1/k_0$ that vanishes when $k_1=0$, highlighting the fact that vacuum entanglement arises due to the field dynamics.

\subsection{Renormalization at low energy}

Results above are in sharp contrast with the finite results one obtains for entanglement measures in low-energy experiments, e.g. quantum information protocols using qubits or trapped ions. Hence a conceptual tension between the QFT description of entanglement for low-energy experiments---which should ultimately be described in terms of entanglement between modes---and a description using nonrelativistic quantum theory. By introducing collective field operators, Zych \emph{et al.} effectively described the Klein-Gordon field system (which has an infinite number of modes) by a two-mode system \cite{zych_entanglement_2010}. The smearing operation preserves the Gaussian character of states, therefore the covariant matrix formalism can be used for an effective low-energy description of entanglement of Gaussian states such as the vacuum state. The results, which are strongly dependent on the interaction Hamiltonian of the detector coupled to the field system and on the choice of a detection profile, show that vacuum entanglement vanishes if the distance between modes is much larger than a few Compton wavelengths of the particle in the theory. This was expected since vacuum entanglement has no detectable effects at low-energy.

We now introduce a novel approach that provides an effective low-energy description of state entanglement for noncritical bosonic infinite-mode systems, with no assumption on the Gaussian character of the field state and independent of any detector model. The idea stems from the observation that entanglement is defined between subsystems of the field system, therefore an alternative tensor decomposition of the total Hilbert space based on a different \emph{choice} of localization scheme can yield finite results for entanglement measures. An obvious requirement is that the new choice of localization scheme be operationally well grounded at low energy. Our results show that this is indeed the case, and the derived effective description shows that all entanglement measures---including entanglement entropy---are finite. Furthermore, the effective description of field states is formally equivalent to the one using nonrelativistic quantum theory. This provides a controlled transition from the QFT picture of entanglement between localized systems, i.e. between spacelike separated regions of spacetime and the nonrelativistic quantum theory one between localized subsystems in the standard sense, e.g. qubits or trapped ions.

\subsubsection{Coarse-graining procedure} For simplicity, we consider a neutral Klein-Gordon field of mass $m$ in one space dimension at fixed time (we put $\hbar = c = 1$). The algebra of local observables for the Klein-Gordon field is generated by the canonical field operators:
\begin{equation}
\label{fields}
\begin{aligned}
\hat \Phi(x)&=\int\frac{dk}{\sqrt{2\pi}} \frac{1}{\sqrt{2\omega_k}}\left(e^{i kx} \hat{a}_k+e^{-i kx}\hat{a}^{\dagger}_k\right),\\
\hat\Pi(x)&=-i\int \frac{dk}{\sqrt{2\pi}}\sqrt{\frac{\omega_k}{2}}\left(e^{i kx}\hat{a}_k-e^{-i kx}\hat{a}^{\dagger}_k\right),
\end{aligned}
\end{equation}
where $\omega_k=\sqrt{k^2+m^2}$ and $\hat{a}^\dagger_k$ creates a field excitation of momentum $k$.
The vacuum is defined by:
\begin{equation}
\label{vacuum}
\hat{a}_k|\Omega\rangle = 0, \quad \forall k.
\end{equation}

Assume that the resolution for distinguishing different points in space is bounded by some minimal length $\epsilon$. The algebra of observables that are accessible under such conditions is generated by the coarse-grained field operators (See Fig.\,\ref{CG}):
\begin{equation}
\label{smeared}
\begin{aligned}
\hat\Phi_{\epsilon}(x) &=  \int dy G_{\epsilon}(x-y) \hat\Phi(y), \\
\hat\Pi_{\epsilon}(x)  &= \int dy G_{\epsilon}(x-y) \hat\Pi(y).
\end{aligned}
\end{equation}
Function $G_{\epsilon}(x)$ describes the detection profile:
\begin{align}
		\label{eq:gauss}
		G_{\epsilon}(x) = \frac{1}{(2\pi \epsilon^2)^{1/4}} e^{- \frac{x^2}{4\epsilon^2}}.
\end{align}This choice of profile is natural if we interpret coarse graining as arising from a random error in the identification of a point in space. More generally, for any profile with a typical length $\epsilon'$, consider intervals of length $\epsilon$ on which the profile is approximately constant. One can then convolute such a profile with a Gaussian of variance $\epsilon$ and consider the limit $\epsilon m\rightarrow\infty$ instead of $\epsilon' m\rightarrow\infty$.

Define the operators:
\begin{equation}
\label{effective}
\hat q_{j,\epsilon} =\hat\Phi_{\epsilon}(j d), \quad\hat p_{j,\epsilon} = \hat\Pi_{\epsilon}(j d),
\end{equation}where $d$ is the distance between neighbouring profiles. If $\epsilon\ll d$, they verify canonical commutation relations:
\begin{equation}
\label{CCR}
 \left[\hat{q}_{j,\epsilon},\,\hat{p}_{k,\epsilon}\right]\sim i\delta_{j\,k}.
 \end{equation}Imposing \eqref{CCR} is equivalent to saying that the operators $\{\hat{q}_{j,\epsilon},\hat{p}_{j,\epsilon}\}_j$ generate commuting subalgebras. As previously mentioned, two commuting subalgebras of observables $\mathfrak{A}$ and $\mathfrak{B}$ that generate the whole algebra of observables induce a TPS on the Hilbert space of states $\mathcal{H}=\mathcal{H}(A)\otimes \mathcal{H}(B)$ such that:
\begin{equation}
 \mathfrak{A} \rightarrow\mathfrak{A} \otimes \id_B,\quad \mathfrak{B} \rightarrow \id_A \otimes \mathfrak{B}.
\end{equation}
  The operators \eqref{effective} generate only a strict subalgebra of the entire algebra of field observables, because under coarse graining some possible observables are inaccessible. The whole algebra can be recovered by completing the set of functions $\{G_\epsilon(jd-y)\}_j$ up to an orthonormal basis in $\mathcal{L}^2(\mathbb{R})$ which, convoluted with the field operators \eqref{smeared}, defines a linear canonical transformation of modes. Thus, the algebra generated by the coarse-grained observables defines a decomposition of the total Hilbert space:
\begin{equation}
 \mathcal{H}={\cal H}_{cg}\otimes {\cal H}_{f},
 \end{equation}where ${\cal H}_{cg}$ are the coarse-grained, hence accessible, and ${\cal H}_{f}$ the fine-grained inaccessible degrees of freedom. The restriction to coarse-grained observables is therefore equivalent to tracing out subsystem ${\cal H}_{f}$, and operators $\{\hat{q}_{j, \epsilon},\hat{p}_{j, \epsilon}\}_j$ define distinct subsystems on $\mathcal{H}_{cg}$, each of which is isomorphic to a one-dimensional harmonic oscillator.

 \begin{figure}[!htbp]
\begin{center}
\includegraphics[scale=0.3]{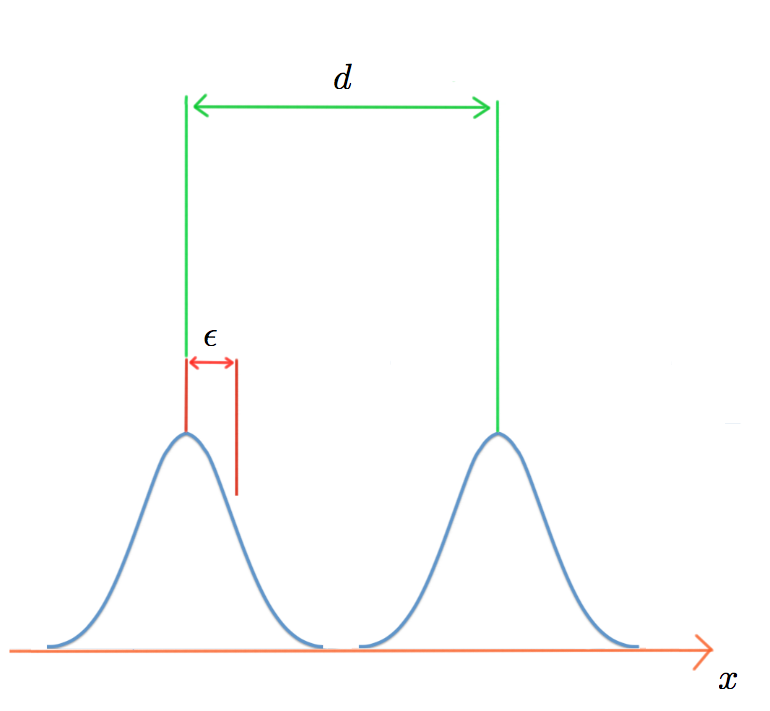}
\end{center}
\caption{The position in space at which a measurement is made can be determined only with limited accuracy, parametrized by $\epsilon$. This source of error is implemented by restricting the observable degrees of freedom to those accessible via measurement of coarse-grained operators. Neighboring profiles define different subsystems only if their separation $d$ verifies $d\gg\epsilon$. Under this condition, entanglement between neigboring profiles is a well-defined notion. We show that for finite-energy states, this entanglement reduces to the one calculated in nonrelativistic quantum theory.}
\label{CG}
\end{figure}

 Thus, we can define on $\mathcal{H}_{cg}$ the coarse-grained ladder operators:
\begin{equation}
\label{collectivecreation}
\begin{aligned}
\hat a_{j,\epsilon}= & \frac{1}{\sqrt{2}}\left(\sqrt{m'} \hat{q}_{j,\epsilon} + \frac{i}{\sqrt{m'}}\hat{p}_{j,\epsilon}\right),\\
 \hat a_{j,\epsilon}^\dag= & \frac{1}{\sqrt{2}}\left(\sqrt{m'} \hat{q}_{j,\epsilon} - \frac{i}{\sqrt{m'}}\hat{p}_{j,\epsilon}\right),
\end{aligned}
\end{equation}which verify $[\hat{a}_{j,\epsilon},\hat{a}_{k,\epsilon}^\dagger] \sim \delta_{jk}$. Parameter $m'$ has the dimension of mass. For a massive Klein-Gordon field, it is natural to take $m'=m$. Indeed, one can alternatively generate the local observables algebra with the ladder operators:
\begin{equation}
\begin{aligned}
\label{localcreation}
\hat{a}(x)&=\frac{1}{\sqrt{2}}\left(\sqrt{m}\hat\Phi(x)+\frac{i}{\sqrt{m}}\hat\Pi(x) \right),\\
 \hat{a}^\dagger(x)&=\frac{1}{\sqrt{2}}\left(\sqrt{m}\hat\Phi(x)-\frac{i}{\sqrt{m}}\hat\Pi(x)\right).
\end{aligned}
\end{equation}Their coarse-grained versions correspond to operators in \eqref{collectivecreation} with $m'=m$.

\subsubsection{The Newton-Wigner localization scheme} We recall that the Newton-Wigner (NW) annihilation and creation operators are respectively defined as the Fourier transforms of the momentum annihilation and creation operators \cite{newton_localized_1949}:
\begin{equation}
\label{NWcreation}
\begin{aligned}
\hat a_{NW}(x) &=  \int \frac{dk}{\sqrt{2 \pi}} e^{i kx}\hat a_k ,\\
 \hat a_{NW}^{\dag}(x) &=  \int \frac{dk}{\sqrt{2 \pi}} e^{-i kx}\hat a^{\dag}_k.
\end{aligned}
\end{equation}The NW operators define a localization scheme and are expressed in terms of the local fields \eqref{fields} as follows:
\begin{equation}
\label{NWfromfield}
\begin{aligned}
\hat a_{NW}(x) =&  \frac{1}{\sqrt{2}}\int dy\left[R(x - y)\hat \Phi(y) +i R^{-1}(x - y)\hat \Pi(y)\right] ,\\
\hat a_{NW}^{\dag}(x) =&  \frac{1}{\sqrt{2}}\int dy\left[R(x - y)\hat \Phi(y) -i R^{-1}(x - y)\hat \Pi(y)\right],
\end{aligned}
\end{equation}
where we have introduced the functions:
\begin{equation}
\label{Rfunction}
R(x) = \int \frac{dk}{2\pi}\sqrt{\omega_k}e^{i kx} , \quad R^{-1}(x)= \int \frac{dk}{2\pi}\frac{1}{\sqrt{\omega_k}}e^{i kx}.
\end{equation}Operators $\{\hat a _{NW} (x)\}$ annihilate the global vacuum, therefore the global vacuum is a product state of local vacua. One can then build a Fock space of the field theory for which states become localized and the entropy of entanglement of finite-energy states, such as thermal states, becomes finite \cite{cacciatori_renormalized_2009}. However, identifying local degrees of freedom with NW operators at a fundamental level is problematic: the Hamiltonian of the field, expressed in terms of the NW operators, is nonlocal\footnote{Note that we are here referring to locality in the relativistic sense.}. We do not address here the question of which localization scheme should be chosen at a fundamental level \cite{fleming_reeh-schlieder_1998,halvorson_reeh-schlieder_2001}. Instead we show that, under coarse graining, the entanglement properties of the NW fields for finite-energy states effectively hold, irrespective of the fundamental choice of local observables.

\subsubsection{Convergence between localization schemes} We now compare the algebra of coarse-grained observables generated by \eqref{collectivecreation} with the algebra generated by the following coarse-grained NW operators:
\begin{equation}
\label{smearedNW}
\begin{aligned}
\hat a_{NW,\epsilon}(x)  =& \int dy G_{\epsilon}(x-y)\hat a_{NW}(y),\\
\hat a_{NW,\epsilon}^\dag(x)  =& \int dy G_{\epsilon}(x-y) \hat a_{NW}^{\dag}(y) .
\end{aligned}
\end{equation}
Computations show that:
\begin{equation}
\label{smearedNWfromlocal}
\begin{split}
 \hat a_{NW,\epsilon}(x)  = \int dy \left[f^+_{\epsilon}(x-y)\hat a(y) + f^-_{\epsilon}(x-y) \hat a^{\dag}(y) \right],
\end{split}
\end{equation}
where:
\begin{equation}
\label{smearedR}
\begin{split}
	 f^{\pm}_{\epsilon}(x)=& \frac{1}{2} \left[\frac{R_{\epsilon}(x)}{\sqrt{m}} \pm \sqrt{m}R^{-1}_{\epsilon}(x) \right], \\
	 		   R_{\epsilon}(x)=& \int dy G_{\epsilon}(x-y)R(y) ,\\
    R^{-1}_{\epsilon}(x)=& \int dy G_{\epsilon}(x-y)R^{-1}(y).
\end{split}
\end{equation}In the limit of poor space resolution, the coarse-grained NW operators become indistinguishable from the coarse-grained local ladder operators since:
\begin{equation}
\label{fintegral}
\begin{aligned}
f^{\pm}_{\epsilon}(x) = & \frac{\sqrt{m}}{2} \int  \frac{dk}{\sqrt{2\pi}} e^{i mkx} G_{\frac{1}{2 m \epsilon}}(k)\\
&\cdot \left[	\left(1+k^2\right)^{1/4}\pm  	\left(1+k^2\right)^{-1/4} \right],
\end{aligned}
\end{equation}and in the limit where the minimal resolvable distances are much larger than the Compton wavelength, $\epsilon m  \gg 1$, the Gaussian $G_{\frac{1}{2 m \epsilon}}(k)$ verifies $G_{\frac{1}{2 m \epsilon}}(k)>0$ for $\left|k\right|\ll 1$ and $G_{\frac{2}{m \epsilon}}(k)\sim 0$ otherwise. Thus, in \eqref{fintegral} we have to integrate only over small values of $k$. We find:
\begin{equation}
\begin{aligned}
f^-_{\epsilon}(x) \sim & \, 0, \\
f^+_{\epsilon}(x) \sim & \, \sqrt{m} \int \frac{dk}{\sqrt{2\pi}} e^{i mkx} G_{\frac{1}{2 m \epsilon}}(k)=G_{\epsilon}(x).
\end{aligned}
\end{equation}
This result, plugged back into \eqref{smearedNWfromlocal}, gives:
\begin{equation}
\label{NWlocalconverge}
\hat a_{NW,\epsilon}(jd) \sim  \int dy G_{\epsilon}(jd-y) \hat a(y) = \hat a_{j,\epsilon}\quad \rm{for}\,\,\epsilon m  \gg 1 .
\end{equation}In the limit $\epsilon m \gg 1$, the coarse-grained NW operators still annihilate the global vacuum, hence the latter is a product state of effective local vacua. Equation \eqref{NWlocalconverge} then shows that the global vacuum is also a product state for the coarse-grained field operators. This implies that, in the limit of poor spatial resolution of detectors, an excitation localized ``around point $j$" is effectively described by applying the creation operator $\hat{a}^\dagger_{j,\epsilon}$ to the global vacuum $|\Omega\rangle$. Therefore, any one-particle state $|\psi\rangle = \int dk f(k) \hat{a}^{\dag}_k |\Omega\rangle$ can be effectively described as a sum $\sum_j \tilde{f}(jd) \hat a^{\dag}_{j,\epsilon} |\Omega\rangle$, where $f$ is a function verifying $\int dk |f(k)|^2 =1$ and $\tilde{f}$ its Fourier transform. As a consequence, such a state, which cannot be interpreted as localized in QFT unless it has infinite energy, can now be properly interpreted as localized, allowing a mapping between the description of a region of space in QFT and an effective description that only includes the nonrelativistic degrees of freedom therein contained.

Thus, the structure of entanglement of any state with a finite number of excitations reduces to the entanglement between localized particles, i.e. to the standard, nonrelativistic, picture of entanglement. In particular, the entropy of entanglement of such states is upper bounded by the number of excitations times a factor describing how many states are available to each excitation. The last point can be shown by considering a finite region of space $A$ and its complement $\bar{A}$ at fixed time. Suppose that the field is in a state with $N$ excitations. $A$ is decomposed into $M$ distinct regions $A_1,...,A_M$, whose points are assumed to be nonresolvable because of the limited spatial resolution of the detectors. An upper bound on the entropy of entanglement between subsystems $A$ and $\bar{A}$ is given by the dimension of the subspace of an $M$-mode system containing any number of particles between 0 and $N$:
\begin{equation}
D^M_N=\sum_{n=0}^N C^M_n = \frac{(M+N)!}{M!N!},
\end{equation}
where:
\begin{equation}
 C^M_n = \binom{M+n-1}{n} = \frac{(M+n-1)!}{n!(M-1)!}
\end{equation}is the dimension of the subspace with exactly $n$ particles. This provides an upper bound on the entropy of entanglement between $A$ and $\bar{A}$ for the $N$-particle state:
\begin{equation}
\label{upper-bound-entropy-two-M-mode}
S_A \leq \log D^M_N.
\end{equation}
If $M\gg N \geq 0$,  $\log D^M_N  \sim N\log M$, expressing the fact that the entropy of entanglement of such states is upper bounded by the number of excitations times a factor describing how many states are available to each excitation. One can encode degrees of freedom other than position by changing the value of $M$. For example, if two polarization states are available to each excitation, one must double the value of $M$.

As an example, consider two mesons or two atoms with integer spin in a singlet state localized ``around points $i$ and $j$". In the QFT picture, entanglement between the region ``around point $i$" containing one meson with the rest of the system is infinite. Under the constraint of a bounded spatial resolution of detectors, the effective description of such a system in QFT is:
\begin{equation}
\begin{aligned}
&\frac{1}{\sqrt{2}}[\hat{a}^\dagger_{i, \epsilon,\uparrow}\hat{a}^\dagger_{j, \epsilon,\downarrow}-\hat{a}^\dagger_{i, \epsilon,\downarrow}\hat{a}^\dagger_{j, \epsilon,\uparrow}]|\Omega\rangle \\
=&\frac{1}{\sqrt{2}}\left(|0\rangle_1\cdots|0\rangle_{i-1}|\uparrow \rangle_i|0\rangle_{i+1}\cdots|0\rangle_{j-1} |\downarrow\rangle_j |0\rangle_{j+1}\cdots\right.\\
& \left.+\right|0\rangle_1\cdots|0\rangle_{i-1}|\downarrow \rangle_i|0\rangle_{i+1}\cdots|0\rangle_{j-1} |\uparrow\rangle_j |0\rangle_{j+1}\cdots),
\end{aligned}
\end{equation}which is formally equivalent to the state:
\begin{equation}
\frac{1}{\sqrt{2}}\left[|\uparrow\rangle_i |\downarrow\rangle_j - |\downarrow\rangle_i |\uparrow\rangle_j\right].
\end{equation}The entropy of entanglement between the region ``around $i$" and the rest of the system is then $S_i=\log(2)$, which is the expected value when modeling this system in nonrelativistic quantum theory.

Since finite-energy states correspond to states with a finite number of excitations, this result provides a controlled transition from the QFT picture of entanglement of finite-energy states to the nonrelativistic quantum theory one. Similar conclusions can be drawn for all noncritical bosonic systems, i.e. systems endowed with a finite length scale such as lattice models or models with local interactions and an energy gap (a natural length scale is then provided by the lattice spacing and the correlation length respectively).

\subsection{Area law at high energy and spacetime dynamics}
\label{einstein-eq}

In the previous section, we showed how coarse-graining the spatial resolution of detectors allows for a finite entanglement entropy between two complementary space regions at fixed time. As a by-product, we showed that finite-energy states become localized, allowing an identification between a region of space and the degrees of freedom therein contained. This provided an effective low-energy regularization of the entropy of entanglement in agreement with the finite results one obtains for entanglement measures in nonrelativistic quantum theory. A natural question one may now ask is: what are the fundamental laws at high energy that can render this entropy finite? A naive cutoff at some short distance would select a preferred reference frame, in violation of Lorentz symmetry. Since Lorentz boost symmetry lends the vacuum its thermal character, this seems a rather unlikely means to regulate the entropy. Remarkably, Jacobson and others showed that the assumption of a finite horizon entanglement entropy implies that the spacetime causal structure is dynamical, and that the metric satisfies Einstein’s equation as a thermodynamic equation of state (See \cite{jacobson_thermodynamics_1995,jacobson_gravitation_2012} and references therein).  We are not considering here what UV physics renders the entropy finite, but rather the connection between a high-energy regularization of the area law for entanglement entropy---by any theory still to be elaborated---and Einstein's equation.

Jacobson uses the thermodynamics of a simple homogeneous system as a starting point for his discussion  \cite{jacobson_thermodynamics_1995}. Considering that entropy $S(E,V)$ is a function of energy and volume, one can use the Clausius relation
\begin{equation}
\label{clausius}
\delta Q=T dS.
\end{equation} to deduce an equation of state. Differentiating the first law of thermodynamics $\delta Q=dE+pdV$ yields the identity $dS=(\partial S/\partial E)\, dE + (\partial S/\partial V)\, dV$. One then obtains a relation for temperature and an equation of state :
\begin{equation}
T^{-1}=\partial S/\partial E,\hspace{0.7cm}
p=T(\partial S/\partial V).
\end{equation} If the entropy function $S$ is known, one can derive the global dynamics of the system from the equation of state. Consider for example weakly interacting molecules at low density. One can easily show that 
\begin{align*}
\begin{aligned}
S &=\ln(\# {\rm ~accessible~states}) \\
& \propto \ln V + f(E)
\end{aligned}
\end{align*}for some function $f(E)$. Thus $\partial S/\partial V\propto V^{-1}$,
therefore $pV\propto T$, which is the equation of state of an ideal gas.

\begin{figure}[!htbp]
\begin{center}
\includegraphics[scale=0.45]{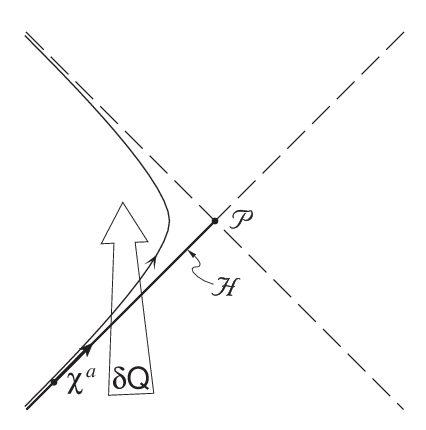}
\end{center}
\caption{If one demands that the Clausius relation $\delta Q=TdS$ hold for all the local Rindler causal horizons through each spacetime point, with $\delta Q$ and $T$ interpreted as the energy flux and the Unruh temperature seen by an accelerated observer just inside the horizon, then the Einstein equation can be viewed as an equation of state (From \cite{jacobson_thermodynamics_1995}).}
\label{local-rindler}
\end{figure}

More generally, thermodynamics defines heat as the energy that flows between degrees of freedom that are not macroscopically \emph{observable}. If we are to analyze spacetime dynamics, Jacobson suggests to define heat as energy that flows across a causal horizon. This suggestion is based on the observation that the overwhelming majority of the information that is hidden resides in correlations between inaccessible UV vacuum fluctuations just inside and outside of the horizon. One then retrieves a notion of hidden information similar to what happens at black hole event horizons when considering the boundary of the past of any spacetime set ${\cal O}$  (${\cal O}$ for ``Observer"). This notion of observer-dependent horizon is a null hypersurface (not necessarily smooth) whose generators are null geodesic segments with origin in the set ${\cal O}$ and emanating backwards in time.

Jacobson formalizes the definition of an observer-dependent horizon by noting that in a small neighborhood of any spacelike 2-surface element ${\cal P}$, one has an approximately flat region of spacetime with the usual Poincaré symmetries. In particular, there exists an approximate Killing field $\chi^a$ generating boosts orthogonal to ${\cal P}$ and vanishing at ${\cal P}$. One can define the ``local Rindler horizon" of any point $p$ in this neighborhood as the past horizon of $p$ with respect to the Killing field $\chi^a$, and ``think of it as defining a system---the part of spacetime beyond the Rindler horizon---that is instantaneously stationary (in `local equilibrium') at $p$" \cite{jacobson_thermodynamics_1995}. Therefore, there are local Rindler horizons in all null directions through any spacetime point. This ``local equilibrium" condition is essential if we are to apply equation \eqref{clausius} (See Fig. \ref{local-rindler}). The heat flow is defined as the energy flowing orthogonally to $\cal P$, and since vacuum fluctuations have a thermal character when seen from the perspective of a uniformly accelerated observer, we can identify the temperature of the system into which the heat is flowing with ``the Unruh temperature associated with an observer hovering just inside the horizon" \cite{jacobson_thermodynamics_1995}. Jacobson arrives at the conclusion that ``we can consider a kind of local gravitational thermodynamics associated with such causal horizons, where the `system' is the degrees of freedom beyond the horizon" \cite{jacobson_thermodynamics_1995}. Thus, the spacetime geometry cannot be inert because ``the light rays generating the horizon must focus so that the area responds to the flux of energy in just the way implied by the Einstein equation (at least at long distances)" \cite{jacobson_thermodynamics_1995}. The constant of proportionality $\eta$ between the entropy and the area determines Newton's constant as $G=c^3/4\hbar\eta,$ which identifies the length $\eta^{-1/2}$ as twice the Planck length $L_p=\sqrt{\hbar G/c^3}.$ The cosmological constant $\Lambda$ remains undetermined.

\bigskip

In summary, we analyzed how relative motion of inertial observers affects entanglement, and showed that the violation of the Bell inequality still holds if local observers fine tune their measurement apparatus. For non-inertial observers, the Unruh effect requires a QFT description of systems, and the thermal character of the vacuum state for uniformly accelerated observers implies a degradation of entanglement, suggesting the possibility of a relationship between entanglement thermodynamics at event horizons and accelerated motion/curvature. We also considered the possibility of local manipulation of entanglement for field systems, and introduced a novel approach to regularize the divergence of entanglement entropy for localized systems. The idea is based on the fact that all finite-energy states produce finite results for entanglement measures for the NW localization scheme. We showed that the NW localization scheme coincides with the standard one at low energy, and therefore entails a an operationally well-defined TPS. Finally, we quickly reviewed the relationship between the intuited thermodynamics of entanglement at event horizons and Einstein's general relativity equation.

\chapter{Beyond entanglement and definite causal structures}
\label{third-chapter}

In Chapter \ref{first-chapter}, the analysis of general non-signalling correlations clearly showed in what sense quantum correlations respect causality. On the contrary, it appeared that entanglement is much more deeply entrenched in relativistic quantum theory because of the constraints causality imposes on the algebraic structure of observables. In Chapter \ref{second-chapter}, we focused on more technical issues concerning entanglement detection and quantification in the relativistic setting. In this chapter, we expand our analysis of the interplay between quantum correlations and the causal structures ordering measurements events. We begin by a quick review of how causal relations can be formulated operationally. The historical example of the relationship between specific exotic causal structures allowed by general relativity, the so-called closed-timelike curves (CTC), and some non-unitary quantum computation models will serve as a guideline. We then present a recent quantum operational framework that provides such an operational formulation of causal correlations while going beyond causally ordered quantum correlations. We place such correlations in a more general probabilistic framework in order to analyze the connections between the local ordering of events and the emergence of an indefinite global order. We also analyze how various informational principles can partially account for the structure of quantum correlations with indefinite causal order. We end with a discussion of an alternative attempt to go beyond fixed causal structures.

\section{Operational approaches to causal relations}

As shown in the first chapter, significant progress has been made in understanding quantum theory and the structure of quantum correlations in an operational context where primitive laboratory procedures, like preparations transformations and measurements (PTM) are basic ingredients. Thus far, the historical challenge quantum nonlocality posed to causality led to approaches primarily focused on spacelike separated experiments, the main goal being the identification of a complete set of physical principles which select the non-signalling quantum correlations out of the strictly larger class of non-signalling ones. In this line of research, spacetime is typically regarded as a given, predefined ``stage" in which the causal relations between events are defined.

Consider for instance the seminal reconstruction proposed by Hardy in \cite{hardy_quantum_2001}. There physical systems are defined by two numbers: the number of degrees of freedom $K$, representing the minimum number of measurements to determine the state of the system, and the dimension $N$, corresponding to the maximum number of states perfectly distinguishable in one measurement of the system. The assumption of a global causal structure is encoded in how systems compose. Indeed, consider a composite system with subsystems $A$ and $B$. Hardy's fourth axiom expresses the operationally defined parameters $K_{AB}$ and $N_{AB}$ of the composite system in terms of the parameters of subsystems $A$ and $B$:
\begin{equation}
N_{AB}=N_A N_B, \hspace{0.3cm} K_{AB}=K_A K_B.
\end{equation}
This definition implies that only a super-observer can calculate $K_{AB}$ and $N_{AB}$, for it requires PTM on each subsystem by the same observer, even if $A$ and $B$ are not localized in the same laboratory. This in turn implies the existence of a global structure ordering PTM events that occur in the frame of the super-observer. To take another example, Rovelli argued informally that quantumness follows from a limit on the amount of ``relevant'' information that can be extracted from a system \cite{rovelli_relational_1996}. If the notion of relevance is to be connected to lattice orthomodularity in the quantum logical framework \cite{grinbijqi}, the ensuing reconstruction of quantum theory will fundamentally depend on the order of binary questions asked to the system. For many systems, it requires the existence of a global causal structure ordering all incoming information.

One of the first attempts to go beyond such frameworks is due to Deutsch's CTC model \cite{deutsch_quantum_1991}. CTCs appeared in van Stockum's 1937 paper which provided a solution to the Einstein field equation that corresponds to an infinitely long cylinder made of rigidly and rapidly rotating dust \cite{vanstockum}. Thorne commented on this model as follows \cite{thorne_closed_1993}:
\begin{quote}
``The dust particles are held out against their own gravity by centrifugal forces, and inertial frames are so strongly dragged by their rotation that the light cones tilt over in the circumferential direction [as shown in Figure \ref{vanstockum}], causing the smallest circle to be a CTC. CTCs pass through every event in the spacetime, even an event on the rotation axis where the light cone is not tilted at all: one can begin there, travel out to the vicinity of the [smallest] circle (necessarily moving forward in $t$ as one travels), then go around the cylinder a number of times traveling backward in $t$ as one goes, and then return to the rotation axis, arriving at the same moment one departed."
\end{quote} 
Such a model was generally dismissed as unphysical because its source is infinitely long \cite{thorne_closed_1993}. A second famous example of a CTC corresponds to the solution to the Einstein equation provided by Gödel in 1949 \cite{godel_example_1949}. Thorne commented that ``physicists have generally dismissed Gödel's solution as unphysical because it requires a nonzero cosmological constant and/or it does not resemble our own universe (whose rotation is small or zero)" \cite{thorne_closed_1993}. 

Many other studies of chronology-violating spacetimes followed, but they all generally treat CTCs as properties of spacetime geometry and use the methods of differential geometry and general relativity. One of the drawbacks of such approaches is their inability to distinguish between merely counterintuitive effects and unphysical ones, especially when technical and conceptual issues about wormholes and sigularities enter the discussion \cite{deutsch_quantum_1991}. For instance, the ``grandfather" or ``knowledge creation" paradoxes receive no clear answers in such frameworks. According to Deutsch, these approaches suffer from a second drawback, much more serious and deep \cite{deutsch_quantum_1991}:
\begin{quote}
 ``[...] Classical spacetime models do not take account of quantum mechanics which, even aside from any effects of quantum gravity, actually dominates both microscopic and macroscopic physics on and near all [CTCs]." 
 \end{quote}
\begin{figure}
\centering
\includegraphics[scale=0.4]{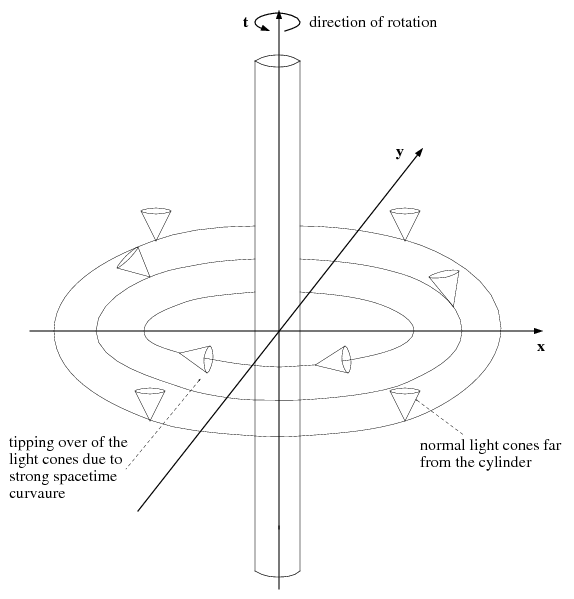}
\caption{Van Stockum's spacetime (From \cite{lobo_closed_2010}).}
\label{vanstockum}
\end{figure}

Adopting the operational point of view of quantum computation, Deustch analyzed how quantum physical systems would behave near CTCs. His model avoids paradoxes such as the ``grandfather" paradox by demanding self-consistent solutions for the time-travelling systems: a time-traveling qubit state $\rho_{CTC}$ must obey the dynamical equation
\begin{equation}
\rho_{CTC}=\trace_{TR}\left[U^\dag(\rho_{CTC}\otimes\rho_{in})U\right],
\end{equation}where the partial trace is over the time-respecting qubit (See Fig. \ref{dctc}). This self-consistency requirement defines multiple solutions for $\rho_{CTC}$, each of which corresponds to a (generally non-linear) map on $\rho_{in}$, which can be worked out from the solution $\rho_{CTC}$. The price to pay is that one needs to extend quantum theory in order to account for non-unitary dyncamics. An alternative CTC model based on the Horowitz-Maldacena ``final state condition" for black hole evaporation \cite{horowitz_black_2004} and on the suggestion of Svetlichny that teleportation and postselection could be used to describe time travel \cite{svetlichny_effective_2011} was shown to be inequivalent to the Deutsch model \cite{lloyd_closed_2011,bub_quantum_2014}. To see the underlying intuition, consider a simple teleportation experiment. Alice and Bob share a Bell state
\begin{equation}
|\phi_+\rang =\frac{1}{\sqrt{2}}\left(|0\rang_A|0\rang_B+|1\rang_A|1\rang_B\right),
\end{equation}while Alice has an additional qubit in state $|\psi\rang_C$. Alice measures qubits $A$ and $C$ in the Bell basis and communicates the result to Bob, who applies an appropriate unitary depending on which of the four outcomes occurred. If the outcome of the Bell measurement corresponds to the original Bell state $|\phi_+\rang$ then Bob does not need to do anything. In the words of Lloyd \emph{et al.} \cite{lloyd_closed_2011}:
\begin{quote}
``In this case, Bob possesses the unknown state even before Alice implements the teleportation. Causality is not violated because Bob cannot foresee Alice's measurement result, which is completely random. But, if we could pick out only the proper result, the resulting `projective' teleportation would allow us to travel along spacelike intervals, to escape from black holes, or to travel in time. We call this mechanics a projective or postselected CTC, or P-CTC."
\end{quote}
This model also introduces a non-linear modification to quantum theory theory through postselection. However, it is well known that non-linear extensions of quantum theory allow for arbitrarily fast communication \cite{gisin_weinbergs_1990} (which leads to the aforementioned ``preparation problem" \cite{cavalcanti_preparation_2012}) or state cloning (which leads to perfect distinguishability of quantum states) \cite{brun_localized_2009,brun_quantum_2013,ahn_quantum_2012}, both of which are problematic. Thus, it appears that if we are to go beyond spacetime as a predefined ``stage" ordering laboratory events, preserving the linearity of the standard quantum framework is a requirement.

\begin{figure}
\centering
\includegraphics[scale=0.4]{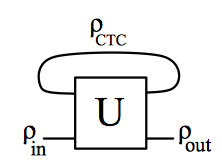}
\caption{Deustsch's CTC model.}
\label{dctc}
\end{figure}

A first step towards such a framework is to look for the simplest unified description of timelike and spacelike separated correlations, and then see what the new framework has to say about causal relations \cite{hardy_probability_2005,brukner_quantum_2014}. Indeed, descriptions of timelike and spacelike separated correlations are very different in the standard (causal) formalism of quantum theory. For example, correlations between results obtained on a pair of spacelike separated systems are described by a joint state on the tensor product of two Hilbert spaces, whereas those obtained from measuring a single system at different times are described by an initial state and a map on a single Hilbert space. Various results such as  the isomorphism between spatial and temporal quantum correlations \cite{brukner_quantum_2004,leifer_quantum_2006,marcovitch_structural_2011,fitzsimons_quantum_2013} , and between the Bell inequality and its temporal analogue the Leggett-Garg inequality\footnote{One can formalize a hypothesis called `macroscopic realism' which can be falsified by a violation of the Leggett-Garg inequality, in analogy with the falsification of local realism by a violation of the Bell inequality.}\cite{leggett_quantum_1985,fritz_quantum_2010,budroni_bounding_2013} indicate that a unified quantum description may be possible for experiments involving distinct systems at one time and those involving a single system at distinct times.

\section{The process matrix framework}

Recently, a framework that fullfils the two mains requirements we formulated was introduced by Oreshkov, Costa and Brukner \cite{oreshkov_quantum_2012}: it provides a unified description of timelike and spacelike separated correlations \emph{and} preserves the linear structure of quantum theory. There it is assumed that operations in local laboratories are described by quantum theory, i.e. by (trace non-increasing) completely positive (CP) maps, but no assumption is made on the existence or not of a global causal structure ordering events occuring at local laboratories. One can use this framework to describe even more general quantum correlations, for which the causal ordering of events and whether they take place between spacelike or timelike regions is not fixed.

\subsection{General framework}
\label{section-3-2-1}

Consider two laboratories whose ``agents" are called Alice and Bob. Assume they are equipped with random bit generators and that Alice and Bob are capable of free choice. At each run of the experiment, each laboratory receives exactly one physical system, performs transformations allowed by quantum theory and subsequently sends the system out. We assume that each laboratory is isolated from the rest of the world, except when it receives or emits the system.

Denote the input and the output Hilbert spaces of Alice by $\mathcal{H}^{A_1}$ and $\mathcal{H}^{A_2}$ and those of Bob by $\mathcal{H}^{B_1}$ and $\mathcal{H}^{B_2}$. The sets of all possible outcomes of a quantum instrument at Alice's, respectively Bob's, laboratory corresponds to the set of completely positive (CP) maps  $\{\mathcal{M}^{A_1 A_2}_i\}_{i=1}^n$, respectively $\{\mathcal{M}^{B_1 B_2}_j\}_{j=1}^n$. It is convenient to represent a CP map as a positive semi-definite matrix via the Choi-Jamio\l kowski (CJ) isomorphism \cite{jamiolkowski_linear_1972,choi_completely_1975}. The CJ matrix $M^{A_1 A_2}_i \in\calL(\mathcal{H}^{A_1}\otimes \mathcal{H}^{A_2})$ corresponding to a linear map $\mathcal{M}^{A_1 A_2}_i: \mathcal{L}(\mathcal{H}^{A_1})\longrightarrow\mathcal{L}(\mathcal{H}^{A_2})$ at Alice's laboratory is defined as
\begin{equation}
M_i^{A_1A_2}= \left[\id\otimes \calM^{A_1A_2}_i(|\phi^+\rang\lang \phi^+|)\right]^T,
\end{equation}where $\{|j\rang\}_{j=1}^{d_{A_1}}$ is an orthonormal basis of $\calH^{A_1}$, $|\phi^+\rang =\sum_{j=1}^{d_{A_1}} |jj\rang \in \calH^{A_1}\otimes \calH^{A_1}$ is a (non-normalized) maximally entangled state and $T$ denotes matrix transposition. Similarly, one can associate to a CP map $\mathcal{M}^{B_1 B_2}_j:\mathcal{L}(\mathcal{H}^{B_1})\longrightarrow\mathcal{L}(\mathcal{H}^{B_2})$
at Bob's laboratory a CJ operator $M^{B_1 B_2}_j$ acting on $\mathcal{H}^{B_1}\otimes \mathcal{H}^{B_2}$ (See Fig. \ref{causal-game}).

Using this correspondence, the non-contextual probability for two measurement outcomes can be expressed as a bilinear function of the corresponding CJ operators:
\begin{equation}
\label{extension-born-rule}
P(\mathcal{M}^{A_1 A_2}_i,\mathcal{M}^{B_1 B_2}_j)=\trace\left[W^{A_1 A_2 B_1 B_2} \left(M^{A_1 A_2}_i\otimes M^{B_1 B_2}_j\right) \right],
\end{equation}
where $W^{A_1 A_2 B_1 B_2} \in \mathcal{L}(\mathcal{H}^{A_1}\otimes\mathcal{H}^{A_2}\otimes
\mathcal{H}^{B_1}\otimes\mathcal{H}^{B_2})$ is fixed for all runs of the experiment. Requiring that such probabilities be non-negative for any choice of CP maps and equal to $1$ for any choice of CP trace-preserving (TP) maps\footnote{These correspond to operators $M^{A_1A_2}>0$ and $M^{B_1B_2}>0$ verifying $\trace_{A_2}M^{A_1A_2}=\id^{A_1}$ and $\trace_{B_2} M^{B_1B_2}=\id^{B_1}$ respectively.} yields a space of valid $W$ operators referred to as \emph{process matrices} (See Fig. \ref{allowed}). A process matrix can be understood as a generalization of a density matrix and equation \eqref{extension-born-rule} can be seen as a generalization of Born's rule, or equivalently an extension of Gleason's theorem to CP maps. In fact, when the output systems $A_2$ and $B_2$ are taken to be one-dimensional, i.e. when each party performs a measurement and then discards the system, expression \eqref{extension-born-rule} reduces to:
\begin{equation}
P(\mathcal{M}^{A_1 A_2}_i,\mathcal{M}^{B_1 B_2}_j)=\trace\left[W^{A_1 B_1 } \left(M^{A_1}_i\otimes M^{B_1 }_j\right) \right],
\end{equation}where now $M_i^{A_1},M_j^{B_1}$ are local POVMs, and the probability 1 condition for CPTP maps becomes $\trace (W^{A_1B_2})=1$, i.e. $W^{A_1B_1}$ is a quantum state.

Note that any process matrix can be interpreted as a CPTP map from the outputs $A_2,B_2$ of the parties to their inputs $A_1,B_1$. In other words, any process can be thought of as a \emph{noisy} CTC (See Fig. \ref{not-allowed}).

\begin{figure}
\begin{center}
\includegraphics[scale=0.2]{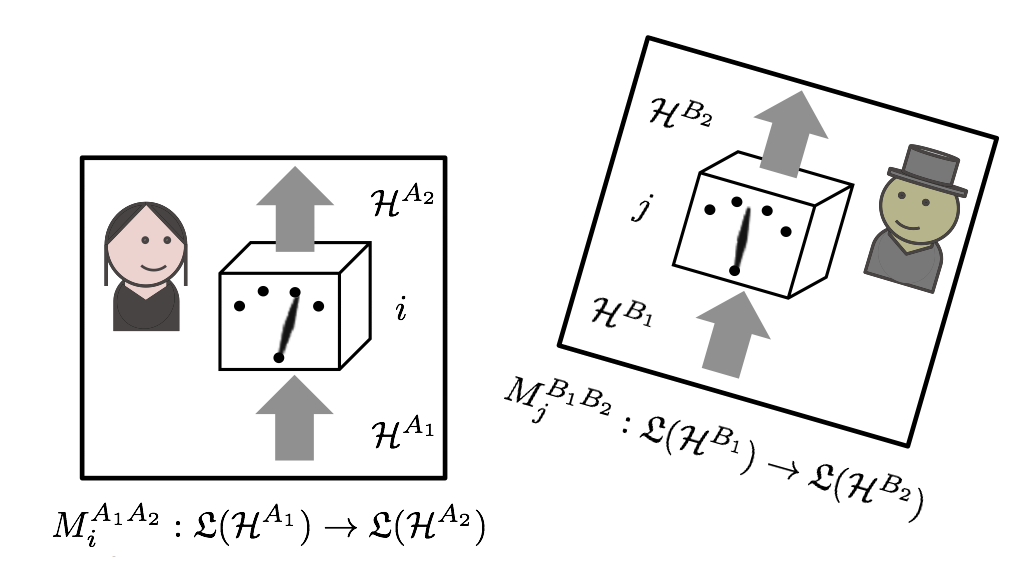}
\end{center}
\caption{Two parties Alice and Bob each receive an input system on which they perform operations allowed by quantum theory before sending it out from the laboratory. While events are ordered within each laboratory, no assumption is made on the existence of a global causal structure ordering events at Alice's and Bob's laboratories (From \cite{oreshkov_quantum_2012}).}
\label{causal-game}
\end{figure}

\begin{figure}
\begin{center}
\includegraphics[scale=0.4]{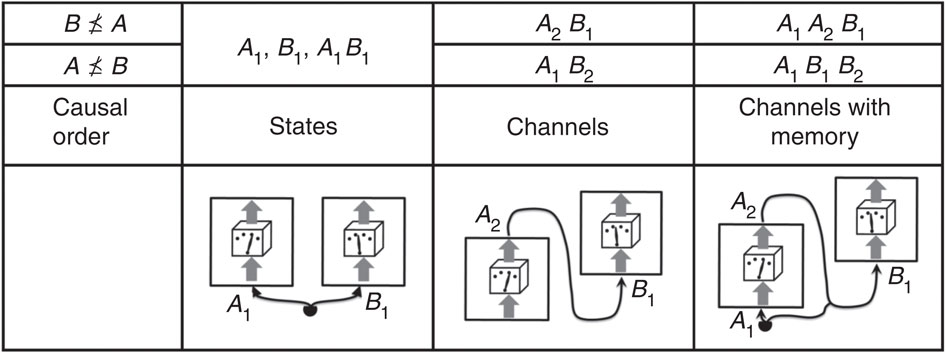}
\end{center}
\caption{A matrix satisfying the positivity condition can be expanded as $W^{A_1A_2B_1B_2}=\sum_{\mu\nu\lambda\gamma}w_{\mu\nu\lambda\gamma} \sigma_\mu^{A_1}\otimes\sigma_\nu^{A_2}\otimes\sigma_\lambda^{b_1}\otimes \sigma_\gamma^{B_2}$, where $w_{\mu\nu\lambda\gamma}\in\mathbb{R}$ and the set of matrices $\{\sigma_\mu^X\}_{\mu=0}^{d_X^2-1}$, with $\sigma_0^X=\id^X$, $\trace(\sigma_\mu^X\sigma_\nu^X)=d_X\delta_{\mu\nu}$ and $\trace(\sigma_j^X)=0$ for $j=1,...,d_X^2$ is the Hilbert-Schmidt basis of $\mathfrak{L}(\calH^X)$. We refer to terms of the form $\sigma_i^{A_1}\otimes \id^{rest}$ $(i\geq 1)$  as of the type $A_1$, terms of the form  $\sigma_i^{A_1}\otimes\sigma_j^{A_2}\otimes \id^{rest}$ ($i,j\geq 1$) as of the type $A_1A_2$ and so on. Matrices that also satisfy the condition on CPTP maps are listed in this table. Each of the terms can allow signalling in at most one direction and can be realized in a situation in which either Bob’s actions are not in the causal past of Alice's $B\npreceq A$ or vice versa $A\npreceq B$. The most general process matrix can contain terms from both rows and may not be decomposable into a mixture of quantum channels from Alice to Bob and from Bob to Alice (From \cite{oreshkov_quantum_2012}).}
\label{allowed}
\end{figure}

\begin{figure}
\begin{center}
\includegraphics[scale=0.4]{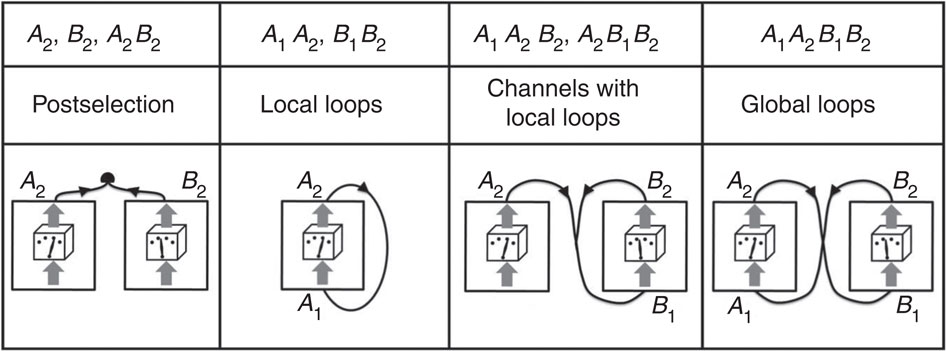}
\end{center}
\caption{These terms are not compatible with local quantum mechanics because they yield non-unit probabilities for some CPTP maps (From \cite{oreshkov_quantum_2012}).}
\label{not-allowed}
\end{figure}

\subsection{Theory-independent tests via a causal inequality}
\label{section-3-2-2}

One can study the causal properties of correlations in a theory-independent way by recording experimental data and defining a `causal inequality' that should be verified by all (mixtures of) causally ordered correlations. In analogy with the Bell inequality, the violation of such a causal inequality would imply a deep revision of our understanding of allowed causal structures.

Suppose that two parties Alice and Bob play a game (in the sense we defined in Chapter \ref{first-chapter}) where each party receives a system. Each of them tosses a coin, whose value is denoted by $a$ for Alice and $b$ for Bob. Bob further tosses a coin to produce a random task bit $b'$ with the following meaning: if $b' = 0$, Bob must communicate $b$ to Alice, and if $b' = 1$, Bob must guess the value of $a$. Both parties always produce a guess, denoted by $x$ for Alice and $y$ for Bob. It is crucial to assume that the bits $a$, $b$, and $b'$ are random.

The goal of Alice and Bob is to maximize the probability of success:
\begin{equation}
P_{success}=\frac{1}{2} \left[p(x=b|b'=0)+p(y=a|b'=1)\right],
\end{equation}
i.e. Alice should guess Bob's toss, or vice versa, depending on the value of $b'$. We will refer to this game as the Oreshkov-Costa-Brukner (OCB) game in the remainder of the thesis. If all events occur in a causal sequence, then
\begin{equation}
\label{causineq}
P_{success}\leq \frac{3}{4}.
\end{equation}
Indeed, since Alice and Bob perform their experiments inside closed laboratories, at most unidirectional signalling between the laboratories is allowed. Therefore, it is true that either Alice cannot signal to Bob or Bob cannot signal to Alice. Consider the latter case. If $b' = 1$, Alice and Bob could in principle achieve up to $P(y = a|b' = 1) = 1$. However, if $b' = 0$, Alice can only make a random guess, hence $P(x = b|b' = 0) = \frac{1}{2}$ and the probability of success in this case satisfies \eqref{causineq}. The same argument shows that the probability of success will not increase when Alice cannot signal to Bob or under any mixing strategy. Violation of inequality \eqref{causineq} would mean that it is impossible to interpret local events as occurring within a global causal structure.

\subsection{Causally non-separable processes}
\label{caus-non-sep}

Surprisingly, processes that violate causal inequality \eqref{causineq} do exist. Consider the following process matrix:
\begin{equation}
\label{W-violation}
W^{A_1 A_2 B_1 B_2}=\frac{1}{4}\left[\id^{A_1 A_2 B_1 B_2}+\frac{1}{\sqrt{2}}\left(\sigma^{A_2}_z\sigma^{B_1}_z+\sigma^{A_1}_z\sigma^{B_1}_x\sigma^{B_2}_z\right)\right],
\end{equation}
where $A_1,A_2,B_1$, and $B_2$ are two-level systems and $\sigma_x, \sigma_y$ and $\sigma_z$ are the usual Pauli matrices.
Consider the following CP maps at Alice's and Bob's laboratories respectively:
\begin{equation}
\label{local-maps}
\begin{aligned}
\xi^{A_1 A_2}(x,a,b') & =\frac{1}{2}\left[\id+(-1)^x\sigma_z\right]^{A_1}\otimes \left[\id+(-1)^a\sigma_z\right]^{A_2},\\
\eta^{B_1 B_2}(y,b,b')&=b'\cdot \eta^{B_1 B_2}_1(y,b,b')  +(b'\oplus 1)\cdot \eta^{B_1 B_2}_2(y,b,b'),
\end{aligned}
\end{equation}
where
\begin{equation}
\eta^{B_1 B_2}_1(y,b,b') =\frac{1}{2}\left[\id+(-1)^y\sigma_z\right]^{B_1}\otimes \id^{B_2}
\end{equation}
 and
\begin{equation}
 \eta^{B_1 B_2}_2(y,b,b')=\frac{1}{2}\left[\id^{B_1 B_2}+(-1)^b\sigma_x^{B_1}\sigma_z^{B_2}\right].
 \end{equation}
Inserting these expressions into \eqref{extension-born-rule} shows that the success probability associated to \eqref{W-violation} and \eqref{local-maps} violates causal inequality \eqref{causineq}:
\begin{equation}
\label{quantum-bound}
P_{success}=\frac{2+\sqrt{2}}{4}>\frac{3}{4}.
\end{equation}
Hence it is impossible to interpret local events as occurring within a global causal structure. This is an example of a causally non-separable process, viz. a process that cannot be written as (a mixture of) causal processes:
\begin{equation}
W\neq \lambda W^{A \npreceq  B}+(1-\lambda )W^{B\npreceq  A},\label{causform}
\end{equation}where $0\leq \lambda \leq 1$, $W^{A\npreceq  B}$ is a process in which Alice cannot signal to Bob and $W^{B\npreceq  A}$ a process in which Bob cannot signal to Alice. ``Cannot signal'' here means either that the channels go in the other direction or that parties share a bipartite state. If a process matrix $W$ can be written in the form (\ref{causform}), it will be called causally separable. The violation of the causal inequality in \eqref{quantum-bound}, which invalidates the assumption of a predefined causal order between events, plays a role similar to the quantum violation of the Bell inequality invalidating the existence of local hidden variables.

One can show that bipartite probability distributions generated by local classical operations are always causally separable \cite{oreshkov_quantum_2012}. However, a process for three parties has recently been found in which perfect signalling correlations among three parties are possible even if local operations are assumed to be classical \cite{baumeler_maximal_2014}. Moreover, Brukner showed that the $\frac{2+\sqrt{2}}{4}$ bound on the violation of \eqref{causineq}---which we call quantum bound---is maximal for qubits and under a restricted set of local operations involving traceless binary observables \cite{brukner_bounding_2014}. As we already argued, processes can be understood as noisy CTCs, so it seems that there is ``just the right amount of noise" that allows to preserve the linear structure of quantum theory \emph{and} still go beyond definite causal structures. Therefore we should aim at a better understanding of the origin of the quantum bound.

\section{Boxes compatible with predefined causal order}

Following the approach we reviewed in Chapter \ref{first-chapter} to the study of entanglement, we introduce a generalized probabilistic framework that accounts for classical, quantum---including standard timelike and spacelike separated quantum correlations---and supra-quantum correlations of the OCB game under specific constraints on the local operations. As we shall see, this framework is not suited for a study of the quantum bound on correlations with indefinite causal order in full generality, rather it lays the groundwork for a better understanding of the connections between the local ordering of events and the role of the control bit in the emergence of an indefinite global order\footnote{This work has been published in \cite{ibnouhsein_information_2014}.}.

\subsection{Local order and causal separability of processes}

Consider as in the OCB game two parties Alice and Bob with inputs $a,b$ and outputs $x,y$ with obvious notations. Bob also possesses a control bit $b'$. Now, suppose we are given a quantum process matrix and a strategy (with local quantum operations) by means of which we realize a specific joint probability distribution $p(x,y|a,b)$ after tracing over the control bit $b'$
\begin{equation}
p(x,y|a,b) = \sum_{\alpha} p(x,y|a,b,b'=\alpha)p(b'=\alpha),
\end{equation}thus yielding a new effective strategy. We show that if the effective strategy can be realized using \emph{fixed} local quantum instruments, i.e. independent of $a$ and $b$, then there exists an equivalent diagonal quantum process by means of which we obtain the same probabilities $p(x,y|a,b)$ for all $a,b,x,y$. Since a diagonal bipartite process is causally separable, $p(x,y|a,b)$ thus cannot violate \emph{any} causal inequality. Hence a property we call ``compatibility with predefined causal order" (CPO) shared by both causally separable and non-separable processes for a certain subset of local operations which includes, as we shall see below, the operations of Section \ref{caus-non-sep} allowing a violation of the causal inequality. The CPO property will serve as a basis for the construction of a generalized probabilistic framework, paralleling the use of the no-signalling principle in the construction of PR-boxes.

It is crucial for CPO to be true that the effective local operations can be taken diagonal in a \emph{fixed} local basis so that there exists a \emph{single} diagonal process matrix that yields the joint probabilities \emph{for all $a,b,x, y$}. Obviously, if $x$ and $y$ are produced before $a$ and $b$, then the quantum instruments whose outcomes yield $x$ and $y$ cannot depend on $a$ and $b$, and hence can be considered as fixed. Operations used in Section \ref{caus-non-sep} to violate the causal inequality verify
\begin{equation}
 b'  \preceq    y \preceq b \quad \mbox{ and } \quad x  \preceq a,
 \end{equation}
and therefore can be considered as independent of $a$ and $b$ for each fixed value of $b'$, thus yielding fixed effective strategies.

Using this mapping of the initial strategy to an effective fixed strategy, we can now prove the equivalence of the initial process to a diagonal one:
\begin{itemize}
\item[$\circ$] By assumption, for each value of $b'$, the most general strategy for Bob is to apply a fixed quantum instrument denoted by $I_1(b')$ on the input system, whose outcome yields $y$, and to subject the output system of that instrument to a subsequent CPTP map dependent on the value of $b$ denoted by $I_2(b',b)$.
\item[$\circ$] For each value of $b'$, the first quantum instrument $I_1(b')$ can be implemented by a unitary $U_1(b')$ on the input system plus an ancilla, followed by a projective measurement $P(b')$ on part of the resulting joint system \cite{oreshkov_quantum_2012}. The CPTP map $I_2(b',b)$ can be implemented by a unitary $U_2(b')$ applied on the output of $I_1(b')$, an ancilla, and a qubit prepared in the state $|b\rangle$ (we feed $b$ in the form of a quantum state $|b\rangle$, where different vectors $|b\rangle$ are orthogonal).
\item[$\circ$] The projective measurement $P(b')$ and the preparation of $|b\rangle$ fully define Bob's operation: other transformations as well as the ancillas can be seen as occuring outside Bob's laboratory by attaching them to the original process \emph{before the input}, which yields a new equivalent process with a new process matrix\footnote{Here lies the aforementioned connection between an effective \emph{fixed} strategy and the existence of a \emph{single} effective process: if the first local unitary before the projective measurement depends on $a$ or $b$, then for each particular value of $a$ or $b$ we can pull it out of the laboratory before the input system, but this does not yield one \emph{single} quantum process from which $p(x,y|a,b)$ is obtained with diagonal operations \emph{for all} $a,b,x$, and $y$.}. If the original matrix were valid, then whatever Bob may choose to do on his redefined input and output systems could have occurred anyway and would have yielded valid probabilities, hence the redefined process matrix is also valid. As a result, we obtain that the correlations for each value of $b'$ are equivalent to the correlations obtained by diagonal measurement and repreparation operations, i.e. classical local operations.
\end{itemize}
Here we focused on operations in Bob's laboratory, but similar arguments hold for operations in Alice's laboratory (which are independent from $b'$).

In conclusion, we exhibited a property shared by both causally separable and non-separable processes under specific constraints on the local operations of the OCB game, namely $x \preceq a$ and $b' \preceq y \preceq b$. These constraints are verified by the operations used in \cite{oreshkov_quantum_2012} to violate the causal inequality, but it should be emphasized that they are not necessary for such a violation to occur. Therefore, the general framework one can build using CPO is probably not suited for the study of the origin of the quantum bound on correlations with indefinite causal order. It rather aims at a better understanding of the connections between the local ordering of events at Alice's and Bob's laboratories and the role of the control bit in the emergence of an indefinite global order.

\subsection{Generalized probabilistic framework}

The previous discussion of the CPO property is based on several notions defined withing the process matrix framework, such as a diagonal process or causal separability. One can provide an alternative definition using only the input bits, the output bits and the notion of causal order as primitives.

Consider two parties Alice and Bob sharing a box with inputs $a,b$ and outputs $x,y$ with obvious notations. Bob also possesses a control bit $b'$, and we assume that $b' \preceq y \preceq b$ and $x \preceq a$. A formal definition of CPO is the following:
\begin{align}
\begin{aligned}
& p(x,y|a,b)\\
=& p(x,y,A \npreceq B|a,b) + p(x,y,B \npreceq A|a,b) - p(x,y,A \npreceq \nsucceq B|a,b)\\
= & p(A \npreceq B|a,b) p(x,y|a,b,A \npreceq B)
+p(B \npreceq A|a,b) p(x,y|a,b,B \npreceq A)\\
-&p(A \npreceq \nsucceq B|a,b) p(x,y|a,b,A \npreceq \nsucceq B)\\
\end{aligned}
\end{align}
In the OCB game, if parties are causally ordered then their order is implicitly assumed to be independent from the tossed bits, therefore we restrict our attention to boxes that verify $p(B \npreceq A|a,b) = p(B \npreceq A)$, $p(A\npreceq B|a,b) = p(A\npreceq B)$ and $p(A \npreceq \nsucceq B|a,b)=p(A \npreceq \nsucceq B)$. Furthermore, the definitions of $A\npreceq B$, $B\npreceq A$ and $A \npreceq \nsucceq B$ imply that:
\begin{align}
\begin{aligned}
p(x,y|a,b)=&p(A \npreceq B) \sum_{b'} p(b'|a,b,A \npreceq B) p(y|b,b',A\npreceq B)p(x|a,b,b',A\npreceq B)\\
+&p(B \npreceq A)  \sum_{b'} p(b'|a,b,B \npreceq A)p(x|a,b',B \npreceq A)p(y|a,b,b',B \npreceq A)\\
-&p(A \npreceq \nsucceq B)  \sum_{b'} p(b'|a,b,A \npreceq \nsucceq B)p(x|a,b',A \npreceq \nsucceq B)p(y|b,b',A \npreceq \nsucceq B),
\end{aligned}
\end{align}where we introduced the control bit $b'$ to have the full dependencies between input and output bits. The condition that $b'$ can be freely chosen implies that:
\begin{equation}
p(b'|a,b,B \npreceq A)=p(b'|a,b,A\npreceq B)=p(b'|a,b,A \npreceq \nsucceq B)=p(b'),
\end{equation}therefore the CPO condition for the OCB game can be written as:
\begin{align}
\begin{aligned}
p(x,y|a,b)=&p(A \npreceq B) \sum_{b'} p(b') p(y|b,b',A\npreceq B)p(x|a,b,b',A\npreceq B)\\
+&p(B \npreceq A)  \sum_{b'} p(b')p(x|a,b',B \npreceq A)p(y|a,b,b',B \npreceq A)\\
-&p(A \npreceq \nsucceq B)  \sum_{b'} p(b')p(x|a,b',A \npreceq \nsucceq B)p(y|b,b',A \npreceq \nsucceq B).
\end{aligned}
\end{align}
Thus, we see that the ``quantumness" of correlations with indefinite causal order in the context of the OCB game does not lie in the correlations between inputs $a,b$ and outputs $x,y$ only: the free character of the control bit $b'$, i.e. of what Bob chooses to do plays a crucial role, at least when additional constraints are placed on local operations, as done implicitly in the OCB game of Section \ref{caus-non-sep} violating the causal inequality.

\section{Entropic characterizations of causal structures}

The quantum bound on correlations with indefinite causal order is lower than what is algebraically possible. Thus, it can be seen as a figure of merit characterizing the types of causal correlations that are allowed within the process matrix framework. Since the latter was derived by relaxing the assumption of a global causal structure, one may ask whether the quantum bound can be derived by formulating the constraints on the signalling possibilities imposed by a fixed structure using entropic quantities. The intuition behind such an approach is the derivation of the Tsirelson bound using an entropic characterization of the signalling possibilities between Alice and Bob in the LOCC paradigm (See the protocol we detailed in Chapter \ref{first-chapter}).

\subsection{Constrained signalling and mutual information}

Using mutual information for an entropic characterization of causal structures is natural because it is a measure of dependence between inputs and outputs of Alice's and Bob's laboratories. Measuring dependence describes how much information two random variables share with each other, i.e. the amount of uncertainty about one random variable given knowledge of the other random variable. Rényi proposed the following conditions that any measure of dependence should satisfy \cite{renyi_new_1959,renyi_measures_1959}:
\begin{itemize}
\item[(i)] It is defined for any pair of random variables.
\item[(ii)] It is symmetric.
\item[(iii)] Its value lies between 0 and 1.
\item[(iv)] It equals 0 iff the random variables are independent.
\item[(v)] It equals 1 if there is a strict dependence between the random variables.
\item[(vi)] It is invariant under marginal one-to-one transformations of the random variables.
\item[(vii)] If the random variables $X,Y$ are Gaussian distributed, it equals the absolute value of their correlation coefficient
\begin{equation}
\rho(X,Y)=\frac{E\left[(X-\mu_x)(Y-\mu_Y)\right]}{\sigma_X \sigma_Y},
\end{equation}
where $\mu_X$ and $\mu_Y$ are the means of $X$ and $Y$ respectively, and $\sigma_X$ and $\sigma_Y$ are their standard deviations.
\end{itemize}
In a fixed causal structure, either Alice cannot signal to Bob or Bob cannot signal to Alice\footnote{We recall that our discussion is still constrained by the conditions of the causal game, most importantly Alice and Bob perform their experiments inside closed laboratories, therefore at most unidirectional signalling between the laboratories is allowed.}, a constraint that can be formulated using mutual information as follows:
\begin{equation}
\label{one-con}
I(x:b|b'=0)+I(y:a|b'=1)\leq 1.
\end{equation}However, this condition is not sufficient for limiting correlations to the ones allowed by the process matrix framework because
\begin{align}
I(x:b|b'=0)&=\frac{1+E_1}{2}\log_2(1+E_1)+\frac{1-E_1}{2}\log_2(1-E_1)\\
I(y:a|b'=1)&=\frac{1+E_2}{2}\log_2(1+E_2)+\frac{1-E_2}{2}\log_2(1-E_2),
\end{align}and one can show that there are supra-quantum correlations with $E_1^2+E_2^2>1$ that verify \eqref{one-con}. Consequently, stronger constraints are needed.

\begin{Pro}
\label{prop-box}
Consider two rounds of the OCB game $(E_1^{(1)},E_2^{(2)},x_1,y_1,a_1,b_1,b'_1)$ and $(E_1^{(2)},E_2^{(2)},x_2,y_2,a_2,b_2,b'_2)$ where:
\begin{equation}
\label{assumption}
p(b_i\oplus x_i|b'_i=0)=\frac{1+E_1^{(i)}}{2},\quad p(a_i \oplus y_i|b'_i=1)= \frac{1+E_2^{(i)}}{2},\quad i=1,2.
\end{equation}
The following two conditions are equivalent:
\begin{description}
  \item[(i)] \begin{equation}
\label{hypo1}
\begin{aligned}
I(x_1:b_1|b'_1=0) &\geq I(x_1\oplus x_2:b_1\oplus b_2|b'_1=0,b'_2=0)\\
&+I(x_1\oplus y_2:b_1\oplus a_2|b'_1=0, b'_2=1),
\end{aligned}
\end{equation}
\begin{equation}
\label{hypo2}
\begin{aligned}
I(y_1:a_1|b'_1=1) &\geq I(y_1\oplus x_2:a_1\oplus b_2|b'_1=1,b'_2=0)\\
&+I(y_1\oplus y_2:a_1\oplus a_2|b'_1=1, b'_2=1),
\end{aligned}
\end{equation}
\item[(ii)]
\begin{equation}\label{hypo3}
(E_1^{(2)})^2+(E_2^{(2)})^2 \leq 1
\end{equation}
\end{description}
\end{Pro}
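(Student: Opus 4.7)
The plan is to reduce every conditional mutual information appearing in condition (i) to a single-variable function of the corresponding bias, and then to exploit the quadratic behaviour of that function near zero to extract the bound in (ii). First, under the standing free-will assumption that $a_i, b_i, b'_i$ are uniform fair coins independent across the two rounds, a binary random variable $X$ satisfying $p(X{=}0)=(1+E)/2$ yields $I(X:Y)=1-h((1+E)/2)$ whenever $Y$ is the uniformly distributed bit with which $X$ is correlated, with $h$ the binary Shannon entropy. Using \eqref{assumption} I would rewrite the left-hand sides as
$$I(x_1:b_1|b'_1{=}0)=1-h\bigl(\tfrac{1+E_1^{(1)}}{2}\bigr),\qquad I(y_1:a_1|b'_1{=}1)=1-h\bigl(\tfrac{1+E_2^{(1)}}{2}\bigr).$$

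Second, because the two rounds of the OCB game are carried out independently, the XOR of two independent bits with biases $E^{(1)}$ and $E^{(2)}$ has bias $E^{(1)}E^{(2)}$. Applied to each of the four right-hand side terms of \eqref{hypo1}--\eqref{hypo2}, this yields, for example,
$$I(x_1\oplus x_2:b_1\oplus b_2|b'_1{=}0,b'_2{=}0)=1-h\bigl(\tfrac{1+E_1^{(1)}E_1^{(2)}}{2}\bigr),$$
together with the three analogous expressions obtained by replacing $E_1^{(2)}$ by $E_2^{(2)}$ and $E_1^{(1)}$ by $E_2^{(1)}$ as appropriate.

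Third, I would introduce the auxiliary function $f(E):=1-h((1+E)/2)$ and use the Taylor expansion $f(E)=E^2/(2\ln 2)+O(E^4)$ which is the standard small-bias approximation used in information-causality arguments. To leading order, \eqref{hypo1} then becomes
$$(E_1^{(1)})^2\geq (E_1^{(1)}E_1^{(2)})^2+(E_1^{(1)}E_2^{(2)})^2,$$
and \eqref{hypo2} becomes
$$(E_2^{(1)})^2\geq (E_2^{(1)}E_1^{(2)})^2+(E_2^{(1)}E_2^{(2)})^2.$$
Factoring out $(E_1^{(1)})^2$ from the first and $(E_2^{(1)})^2$ from the second, both inequalities collapse to $(E_1^{(2)})^2+(E_2^{(2)})^2\leq 1$, which is exactly condition (ii). The converse direction runs the same chain backwards: assuming (ii), substitute into the factored form, reintroduce the dropped $(E_1^{(1)})^2$ and $(E_2^{(1)})^2$ prefactors, and invert the expansion of $f$ to recover \eqref{hypo1} and \eqref{hypo2}.

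The main obstacle I anticipate is that the proposition is phrased as an exact equivalence, whereas the reduction above becomes tight only in the low-bias regime where $f(E)\approx E^2/(2\ln 2)$. The delicate point will be to justify that the higher-order corrections cancel on both sides of \eqref{hypo1}--\eqref{hypo2} because the same prefactor $(E_1^{(1)})^2$ (respectively $(E_2^{(1)})^2$) appears on each of the two right-hand side terms, so that the equivalence with (ii) is preserved beyond leading order; alternatively, one may invoke the strict convexity and monotonicity of $h$ on $[1/2,1]$ to upgrade the asymptotic statement to a genuine equivalence. A secondary subtlety is the independence of the two rounds: if Alice's and Bob's round-2 strategy depends on the outcomes of round 1, the XOR biases no longer factorize, and the argument requires that each round be implemented with fresh, uncorrelated resources before the leading-order analysis applies.
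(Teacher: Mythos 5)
Your reduction of every term in (i) to the function $f(E)=1-h\bigl(\tfrac{1+E}{2}\bigr)$ evaluated at the product bias is correct (it is the piling-up lemma for independent rounds), and it makes the direction (ii)\,$\Rightarrow$\,(i) provable \emph{exactly}, not merely to leading order: from the power series $f(E)=\tfrac{1}{2\ln 2}\sum_{k\geq 1}\tfrac{E^{2k}}{k(2k-1)}$ one sees that $f(E)/E^2$ is nondecreasing in $|E|$, hence $f(\alpha E)\leq \alpha^2 f(E)$ for $|\alpha|\leq 1$, and summing the two right-hand terms of \eqref{hypo1} (resp.\ \eqref{hypo2}) gives the claim. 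This inequality is precisely the strong data processing inequality $I(X:Z)\leq\rho^*(Y:Z)^2 I(X:Y)$ that the paper imports, specialized to binary symmetric channels where $\rho^*=E^{(2)}$; for this half your route is a legitimately more elementary, fully explicit version of the paper's argument.

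The gap is in the converse, and the rescue you anticipate does not work. Because $f(\alpha E)<\alpha^2 f(E)$ strictly for $0<|\alpha|<1$ and $E\neq 0$, the higher-order corrections do not cancel between the two sides: they all point the same way and create genuine slack at fixed nonzero first-round bias. Concretely, take $E_1^{(1)}=E_2^{(1)}=1$ and $E_1^{(2)}=E_2^{(2)}=3/4$. Then \eqref{hypo1} reads $1=f(1)\geq 2f(3/4)\approx 0.913$, which holds (and likewise \eqref{hypo2}), while $(E_1^{(2)})^2+(E_2^{(2)})^2=9/8>1$; so with the first round fixed, (i) does not imply (ii), and no appeal to convexity or monotonicity of $h$ can repair an implication that is simply false in that form. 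The paper's proof avoids this by not fixing the auxiliary variable: it invokes the variational characterization $\rho^*(Y:Z)^2=\sup_{X\to Y\to Z} I(X:Z)/I(X:Y)$, whose supremum is approached only as the correlation of $X$ with $Y$ (your $E^{(1)}$) tends to zero. The equivalence is therefore one over the whole family of first-round boxes --- exactly the weak-bias regime your Taylor expansion captures --- and to complete your proof you must quantify condition (i) over all first-round biases (or pass to the limit $E^{(1)}\to 0$) rather than treat $E_1^{(1)},E_2^{(1)}$ as given data of the problem.
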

\begin{proof}
Suppose that $(E_1^{(2)})^2+(E_2^{(2)})^2 \leq 1$ holds. Define the variables
\begin{align}
\begin{aligned}
 X& =b_1|[b'_1=0], \\
 Y&=x_1|[b'_1=0],\\
  Z&=x_1\oplus x_2 \oplus b_2|[b'_1=0,b'_2=0],
 \end{aligned}
 \end{align} where the entire expression on the left-hand side of the bar is conditioned by the expression between brackets on the right-hand side. Because the two rounds are assumed to be independent, one can see that
\begin{equation}
X\rightarrow Y \rightarrow Z
\end{equation}
 is a Markov chain with transition parameters $p_1=\frac{1+E_1^{(1)}}{2}$ and $p_2=\frac{1+E_1^{(2)}}{2}$, therefore a strong form of the data processing inequality applies \cite{Erkip98theefficiency}:
\begin{equation}
\label{mrs}
I(X:Z)\leq \rho^*(Y:Z)^2 I(X:Y),
\end{equation}
where $\rho^*(Y:Z)$ defines the Hirschfeld-Gebelein-Rényi maximal correlation of variables $Y$ and $Z$ \cite{hirschfeld_connection_1935,Gebelein,renyi_new_1959,renyi_measures_1959}. Since $Y,Z$ are Bernoulli variables, we have $\rho^*(Y:Z)=2p_2-1=E_1^{(2)}$, therefore
\begin{equation}
 I(x_1\oplus x_2:b_1\oplus b_2|b'_1=0,b'_2=0) \leq (E_1^{(2)})^2 I(x_1:b_1|b'_1=0).
 \end{equation}
Similarly, one can show that:
\begin{equation}
 I(x_1\oplus y_2:b_1\oplus a_2|b'_1=0, b'_2=1)\leq (E_2^{(2)})^2  I(x_1:b_1|b'_1=0).
 \end{equation}
Therefore imposing $(E_1^{(2)})^2+(E_2^{(2)})^2 \leq 1$  implies \eqref{hypo1}. One can similarly show that $(E_1^{(2)})^2+(E_2^{(2)})^2 \leq 1$ also implies \eqref{hypo2}.

To prove the converse, we recall that since are $Y,Z$ Bernoulli variables, we have \cite{anantharam_maximal_2013}:
\begin{equation}
\rho^*(Y:Z)^2=\underset{X\rightarrow Y \rightarrow Z}{\sup} \frac{I(X:Z)}{I(X:Y)}.
\end{equation}Using \eqref{mrs}, one can show that \eqref{hypo1} and \eqref{hypo2} imply \eqref{hypo3}.
\end{proof}

If and only if a causal order is fixed, equations \eqref{hypo1} and \eqref{hypo2} take the form of the usual data processing inequality. In general, however, these equations involve sums of variables from two possible causal orders for a single round, while the data processing inequality requires that information be discarded in a fixed direction. Consequently, this alternative approach leads to two original conditions but their significance is blurred by their complexity. Keeping with our initial intuition, we now reformulate the causal game as a distributed random access code, and retrieve the quantum bound on correlations with indefinite causal order from an informational principle somewhat analogous to the principle of information causality.

\subsection{Causal games as random access codes}

In this section, we reformulate the causal game as a distributed random access code (RAC). This is motivated by the RAC formulation of the game for which the information causality principle was introduced \cite{pawlowski_information_2009}. Such an approach might open the path for a formulation of an analogue of the information causality principle in the context of causal games.

\subsubsection{Reformulation of the causal game}

Consider two runs of the experiment described in the causal game in Section \ref{section-3-2-1}, with bits $\{x_1,a_1,y_1,b_1\}$ and $\{x_2,a_2,y_2,b_2\}$ respectively. The random task bit $b'$ now corresponds to a pair of bits $b'_1b'_2$ denoting the four possible combinations of tasks for two runs of the experiment: $b'=0_1 0_2$ means that in both runs Alice must guess Bob's bit, $b'=0_1 1_2$ means that Alice must guess Bob's bit in the first run and Bob must guess Alice's bit in the second run, and so forth. It is straightforward to generalize this notation for $n$ runs.

Assume that different runs of the experiment use the same box as a resource:
\begin{equation}
p(b_i\oplus x_i|b'_i=0)= p(b_j\oplus x_j|b'_j=0), \quad \forall i,j.
\end{equation}Again, we write:
\begin{equation}
\label{assumption}
p(b_i\oplus x_i|b'_i=0)=\frac{1+E_1}{2},\quad p(a_i \oplus y_i|b'_i=1)= \frac{1+E_2}{2},\quad \forall i.
\end{equation}

Now consider $n$ runs of the experiment and define:
\begin{equation}
\label{optimize-n}
\begin{aligned}
P_n&=\frac{1}{2^n}\left[p(b_1\oplus x_1 \oplus..\oplus b_n \oplus x_n=0|b'=0_10_2..0_{n})\right.\\
+&\left. p(b_1\oplus x_1 \oplus..\oplus b_{n-1} \oplus x_{n-1} \oplus a_n \oplus y_n=0|\right.\\
& \left. \hspace{5cm} b'=0_1..0_{n-1}1_{n})\right.\\
&\left. +...+p(a_1\oplus y_1 \oplus..\oplus a_n \oplus x_n=0|b'=1_11_2..1_{n})\right].\\
\end{aligned}
\end{equation}For each term inside the brackets, the condition that the sum over the guesses for $n$ runs means that either both Alice and Bob make an even number of mistakes or both make an odd number of mistakes. We now compute the expression of a term $p_{n-k,k}$ inside the brackets for which the number of 1's in $b'$ is $k$.

The probability of an even number of wrong guesses by Alice is:
\begin{equation}
Q^{(k)}_{even}(\mbox{Alice})= \sum\limits_{j=1}^{\lfloor \frac{n-k}{2}\rfloor}\binom{n-k}{2j} \left(\frac{1-E_1}{2}\right)^{2j}\left(\frac{1+E_1}{2}\right)^{n-k-2j}=\frac{1+E_1^{n-k}}{2}
\end{equation}
Similarly, the probability for an odd number of wrong guesses by Alice is:
\begin{equation}
Q^{(k)}_{odd}(\mbox{Alice})= \sum\limits_{j=1}^{\lfloor\frac{n- k-1}{2}\rfloor}\binom{n-k}{2j+1} \left(\frac{1-E_1}{2}\right)^{2j+1}\left(\frac{1+E_1}{2}\right)^{n-k-2j-1}=\frac{1-E_1^{n-k}}{2}.
\end{equation}
The probability of an even number of wrong guesses by Bob is:
\begin{equation}
Q^{(k)}_{even}(\mbox{Bob})= \sum\limits_{j=1}^{\lfloor \frac{k}{2}\rfloor}\binom{k}{2j} \left(\frac{1-E_2}{2}\right)^{2j}\left(\frac{1+E_2}{2}\right)^{k-2j}=\frac{1+E_2^k}{2}
\end{equation}Similarly, the probability for an odd number of wrong guesses by Bob is:
\begin{equation}
Q^{(k)}_{odd}(\mbox{Bob})= \sum\limits_{j=1}^{\lfloor\frac{ k-1}{2}\rfloor}\binom{k}{2j+1} \left(\frac{1-E_2}{2}\right)^{2j+1}\left(\frac{1+E_2}{2}\right)^{k-2j-1}=\frac{1-E_2^k}{2},
\end{equation}
The final expression for a term inside the brackets where the number of 1's in $b'$ is $k$ is:
\begin{align}
\begin{aligned}
p_{n-k,k}&=Q^{(n-k)}_{even}(\mbox{Alice})\cdot Q^{(k)}_{even}(\mbox{Bob})+Q^{(n-k)}_{odd}(\mbox{Alice})\cdot Q^{(k)}_{odd}(\mbox{Bob})\\
&= \frac{1}{2}\left[1+E_1^{n-k}E_2^k\right],
\end{aligned}
\end{align}and
\begin{equation}
P_n=\frac{1}{2^n}\sum_{k=0}^{2^n-1}\binom{n}{k} p_{n-k,k}.
\end{equation}

\subsubsection{Bounds on the protocol efficiency}

We now treat the two bits in $b'$ as binary notation of an integer and identify $b^\prime$ with this integer. For example, when $n=2$, $b'=01$ corresponds to 1 and $b^\prime = 10$ to 2.
For a given decimal $b^\prime=i$, we group the rounds by specifying an expression to be set to 0, which we denote by $g_i \oplus t_i = 0$, where $g_i$ is the sum of output bits (`guesses') and $t_i$ the sum of input bits (`tosses'). To continue the example $n=2$, for $b'=1$ we set $x_1\oplus b_1 \oplus y_2\oplus a_2=0$ with the bit of guesses $g_1=x_1\oplus y_2$ and the bit of tosses $t_1=b_1\oplus a_2$. For $b'=2$ the corresponding expression is $y_1\oplus a_1 \oplus x_2\oplus b_2=0$ with the bit of guesses $g_2=y_1\oplus x_2$ and the bit of tosses $t_2=a_1\oplus b_2$.

\begin{Lem}
\label{lem1}
The following inequality holds:
\begin{equation}
\sum_{i=0}^{2^n-1} h(P(g_i \oplus t_i = 0|b'=i)) \geq 2^n - I(n),\label{littlelemma}
\end{equation}where $I(n)=\sum_{i=0}^{2^n-1} I(g_i:t_i|b'=i)$ is a measure of efficiency of the $n$ runs protocol, $I(X:Y)$ denotes mutual information between random variables $X$ and $Y$, and $h$ is the binary entropy.
\end{Lem}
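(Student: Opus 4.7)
\bigskip

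\noindent\textbf{Proof proposal.} The plan is to prove the inequality term by term: I will establish, for each $i \in \{0,1,\dots,2^n-1\}$, that
\[
h\bigl(P(g_i \oplus t_i = 0 \mid b'=i)\bigr) \;\geq\; 1 - I(g_i : t_i \mid b'=i),
\]
and then sum over the $2^n$ values of $i$ to recover \eqref{littlelemma}.

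To obtain the per-term bound, I would first use the fact that the input bits $a_j,b_j$ across the $n$ rounds are independent uniformly distributed binary random variables, so that $t_i$---being an XOR of a nonempty subset of these inputs---is itself uniformly distributed conditionally on any value of $b'$. Consequently $H(t_i \mid b'=i) = 1$, and I can expand the conditional mutual information in the form
\[
I(g_i : t_i \mid b'=i) \;=\; H(t_i \mid b'=i) - H(t_i \mid g_i, b'=i) \;=\; 1 - H(t_i \mid g_i, b'=i).
\]
Next I would exploit the bijection $t_i \leftrightarrow g_i \oplus t_i$ at fixed $g_i$, which gives $H(t_i \mid g_i, b'=i) = H(g_i \oplus t_i \mid g_i, b'=i)$, and then discard the conditioning on $g_i$ via the standard inequality $H(X \mid Y,Z) \leq H(X \mid Z)$, yielding $H(g_i \oplus t_i \mid g_i, b'=i) \leq H(g_i \oplus t_i \mid b'=i)$. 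Since $g_i \oplus t_i$ is binary, its conditional entropy equals $h(P(g_i \oplus t_i = 0 \mid b'=i))$, and rearranging produces exactly the per-term inequality above.

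Summing these $2^n$ inequalities gives $\sum_{i} h(P(g_i \oplus t_i = 0 \mid b'=i)) \geq 2^n - \sum_i I(g_i : t_i \mid b'=i) = 2^n - I(n)$, which is \eqref{littlelemma}. I do not expect a serious obstacle: the derivation is essentially a chain of standard entropic identities and the data processing/conditioning inequality. The only mild subtlety is justifying that $t_i$ is uniform conditionally on $b'=i$---this requires checking that, for each $i$, the particular XOR of input bits appearing in $t_i$ involves at least one independent uniform input, which follows directly from the construction of the $g_i \oplus t_i$ expressions from the $n$-round protocol and the assumption that all tossed bits are fair and independent.
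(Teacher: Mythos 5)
Your proof is correct and follows essentially the same route as the paper's: expand $I(g_i:t_i\mid b'=i)=H(t_i\mid b'=i)-H(t_i\mid g_i,b'=i)$, rewrite $H(t_i\mid g_i,b'=i)=H(g_i\oplus t_i\mid g_i,b'=i)$, drop the conditioning on $g_i$, and sum over $i$. The only difference is that you make explicit the step $H(t_i\mid b'=i)=1$ (uniformity of the XOR of independent fair tosses), which the paper leaves implicit when passing from the per-term bound to \eqref{littlelemma}.
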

\begin{proof}
We have:
\begin{align}
\begin{aligned}
I(n)&=\sum_{i=0}^{2^n-1} I(g_i:t_i|b'=i) \\
&= \sum_{i=0}^{2^n-1} H(g_i|b'=i) + H(t_i|b'=i)- H(g_i,t_i|b'=i),
\end{aligned}
\end{align}where $H$ is the Shannon entropy. Moreover:
\begin{align}
 H(g_i|b'=i) - H(g_i,t_i|b'=i) = - H(t_i|g_i,b'=i),
 \end{align}
 and
\begin{align}
\begin{aligned}
H(t_i|g_i,b'=i) &= H(t_i\oplus g_i|g_i,b'=i) \\
& \leq H(g_i\oplus t_i|b'=i) \\
&= h(P(g_i\oplus t_i=0|b'=i)).
\end{aligned}
\end{align}
It follows that $I(g_i:t_i|b'=i)\geq H(t_i|b'=i) - h(P(g_i\oplus t_i=0|b'=i))$, hence \eqref{littlelemma}.\end{proof}

\begin{Thm}The following inequality holds:
\begin{equation}
\label{cond1}
\frac{(E_1^2+E_2^2)^{n}}{2\ln(2)} \leq I(n)  \leq (E_1^2+E_2^2)^n.
\end{equation}
\end{Thm}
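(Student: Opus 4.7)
The plan is to evaluate each summand $I(g_i:t_i|b'=i)$ in closed form and then bound the resulting binary-entropy expression by simple polynomials in $E_1$ and $E_2$. Recall from the computation of $p_{n-k,k}$ preceding Lemma~\ref{lem1} that $P(g_i\oplus t_i = 0\,|\,b'=i) = (1+E_1^{n-k(i)}E_2^{k(i)})/2$, where $k(i)$ denotes the Hamming weight of the binary expansion of $i$. Because the $n$ rounds are independent and the tosses $a_j,b_j$ are uniform i.i.d.\ bits, $t_i$ is uniformly distributed conditionally on $b'=i$, so $H(t_i|b'=i)=1$. After symmetrizing Alice's and Bob's outputs via local shared randomness---an operation that preserves the statistics \eqref{assumption} and the independence across rounds---$g_i$ can be taken uniform as well, and the error bit $\xi_i := g_i\oplus t_i$ becomes independent of $g_i$ by the independence of the rounds together with the single-round Bayesian-symmetry argument already used in the reformulation. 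This reduces the pair $(g_i,t_i)$ to the canonical BSC form with crossover probability $(1-E_1^{n-k(i)}E_2^{k(i)})/2$, yielding the closed form
\begin{equation}
I(g_i:t_i|b'=i) \;=\; 1-h\!\left(\tfrac{1+E_1^{n-k(i)}E_2^{k(i)}}{2}\right).
\end{equation}

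The next step is the two-sided bound on the binary entropy. Starting from the Taylor expansion
\begin{equation}
1-h\!\left(\tfrac{1+x}{2}\right) \;=\; \frac{1}{2\ln 2}\sum_{m=1}^{\infty}\frac{x^{2m}}{m(2m-1)},\qquad x\in[0,1],
\end{equation}
the lower bound $1-h((1+x)/2)\geq x^2/(2\ln 2)$ is immediate by retaining only the $m=1$ term. For the upper bound, the partial-fraction identity $\tfrac{1}{m(2m-1)}=\tfrac{2}{2m-1}-\tfrac{1}{m}$ telescopes to the alternating harmonic series and gives $\sum_{m\geq 1}\tfrac{1}{m(2m-1)}=2\ln 2$; combined with $x^{2m}\leq x^2$ on $[0,1]$ for $m\geq 1$, this yields $1-h((1+x)/2)\leq x^2$. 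Summing these bounds over $i\in\{0,\dots,2^n-1\}$ and regrouping by Hamming weight, the binomial theorem collapses the sum:
\begin{equation}
\sum_{i=0}^{2^n-1}\bigl(E_1^{n-k(i)}E_2^{k(i)}\bigr)^{2} \;=\; \sum_{k=0}^{n}\binom{n}{k}(E_1^2)^{n-k}(E_2^2)^{k} \;=\; (E_1^2+E_2^2)^n,
\end{equation}
and both inequalities of the theorem follow at once.

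The main obstacle will be securing the closed-form equality for $I(g_i:t_i|b'=i)$. Lemma~\ref{lem1} already provides the one-sided estimate $I(g_i:t_i|b'=i)\geq 1-h((1+E_1^{n-k(i)}E_2^{k(i)})/2)$ via the elementary inequality $H(t_i|g_i,b'=i)\leq H(g_i\oplus t_i|b'=i)$, which---combined with the lower half of the Taylor bound---immediately delivers the lower half of the theorem without any additional assumption. The reverse estimate, required for the upper half, fails unless conditioning on $g_i$ does not strictly reduce the entropy of $g_i\oplus t_i$; equivalently, the joint distribution of $(g_i,t_i)$ must have the BSC form. This is automatic once the output marginals are uniform, a property ensured by the natural shared-randomness symmetrization mentioned above and compatible with the hypotheses \eqref{assumption}. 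Once this symmetrization is rigorously justified, the rest of the argument reduces to the Taylor inequality and the binomial identity above.
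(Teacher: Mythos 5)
Your proposal is correct and follows essentially the same route as the paper: the lower bound comes from Lemma \ref{lem1} together with $1-h\bigl(\tfrac{1+y}{2}\bigr)\geq \tfrac{y^2}{2\ln 2}$, the upper bound from the closed form $I(g_i:t_i|b'=i)=1-h\bigl(\tfrac{1+E_1^{n-k}E_2^{k}}{2}\bigr)$ bounded by $(E_1^2)^{n-k}(E_2^2)^k$, and the binomial theorem collapses the sum in both cases. The one substantive difference is that you make explicit what the paper's equation \eqref{form} silently assumes, namely that $(g_i,t_i)$ has the binary-symmetric-channel form (error bit independent of the tosses, uniform marginals), and you supply the shared-randomness symmetrization needed to justify it; this is a genuine improvement in rigor rather than a different method, and your Taylor-series derivation of the two-sided entropy bound is a clean way to obtain both inequalities the paper invokes.
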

\begin{proof}
Using the lemma we have:
\begin{align}
\begin{aligned}
I(n) &\geq  \sum_{i=0}^{2^n-1} \left[1 - h(P(g_i \oplus t_i = 0|b'=i)\right]\\
& =\sum_{k=0}^{n} \binom{n}{k}\left[1- h(p_{n-k,k})\right]\\
&=\frac{1}{2\ln 2 }\sum_{k=0}^{n} \binom{n}{k}(E_1^2)^{n-k}(E_2^2)^k\\
&=\frac{1}{2\ln 2}(E_1^2+E_2^2)^n
\end{aligned}
\end{align}where we used $h(\frac{1}{2}(1+y))\leq 1-\frac{y^2}{2 \ln 2}$.
We also have:
\begin{equation}
\label{form}
\begin{aligned}
I(g_i:t_i|b'=i)&=\frac{1+E_1^{n-k}E_2^k}{2}\log_2 (1+E_1^{n-k}E_2^k) \\
&+ \frac{1-E_1^{n-k}E_2^k}{2}\log_2 (1-E_1^{n-k}E_2^k)\\
&\leq (E_1^2)^{n-k}(E_2^2)^k,
\end{aligned}
\end{equation}therefore
\begin{equation}
I(n) \leq \sum_{k=0}^n \binom{n}{k} (E_1^2)^{n-k}(E_2^2)^k = (E_1^2+E_2^2)^n.
\end{equation}
\end{proof}
Any causally separable process verifies:
\begin{equation}
\label{caus-ineq}
I(n)\leq 1, \quad \forall n,
\end{equation}and 1 is the only nonzero bound on $I(n)$. To see this, consider a fixed causal structure and a given value $b' = i$. Then all $g_k \oplus t_k, k \neq i$, are equal to 0 with probability $\frac{1}{2}$, therefore $I(g_i:t_i|b'=i)\leq 1$ and $I(g_k:t_k|b'=k)=0$ for $k\neq i$, leading to $I(n)\leq 1$. The mutual information expression $I(X:Y|Z)$ is convex in $p(y|x,z)$, where $x,y$ and $z$ are values that the random variables $X,Y$ and $Z$ can respectively take, therefore no mixture of strategies with fixed causal structures can increase the value of $I(n)$. Consequently, inequality \eqref{caus-ineq} is valid for all causally separable processes, i.e. we have found a class of causal games for which causally separable processes perform with bounded efficiency.

This bounded efficiency can now be taken as a constraint on the correlations between Alice's and Bob's laboratory. It alone suffices to derive the limit on quantum correlations with indefinite causal order: a limit on protocol efficiency for any number of runs is equivalent to the bound 1 on $E_1^2+E_2^2$, or equivalently to the bound $\frac{1}{\sqrt{2}}$ on $E$ where $E=E_1=E_2$ if all probabilities \eqref{assumption} are equal. Therefore, we derived the quantum bound on correlations with indefinite causal order by relaxing the constraints a fixed causal structure imposes on the signalling possibilities into an entropic form.

This result is somewhat analogous to the principle of information causality where, given a set of ``classical" resources (shared non-signalling correlations and one-way signalling) and a class of games (increasing size of Alice's data set), one can derive the quantum bound on correlations by keeping the same entropic figure of merit quantifiying the performance of the parties in winning such games for ``classical" and ``quantum" resources. Note that this similarity is only intuitive and not at all rigorous, because, in the context of no-signalling games, one can show that the principle of information causality is distinct from the `no-supersignalling' principle which encodes the idea that protocol efficiency must not increase \cite{wakakuwa_chain_2012}. Moreover, it should be emphasized that the derivation of the Tsirelson bound using information causality is based on natural properties of mutual information, while we derived the bound on the protocol efficiency---as measured by an expression using mutual information between inputs and outputs conditionned by the control bit---\emph{only for causally ordered correlations}, and related this bound---which is the only possible finite bound---to the quantum bound on the violation of the causal inequality. The main obstacle to a direct transposition of the proof of \cite{pawlowski_information_2009} to our game is the dependence between the guesses expressions `$g_i$' (and similarly for tosses expressions `$t_i$').

\subsection{The Hirschfeld-Gebelein-Rényi maximal correlation}

In the previous sections, we focused on mutual information as a measure of dependence to formulate entropic characterizations of the constraints imposed by (a mixture of) fixed causal structures on the signalling possibilities. We have shown the formal connection between two such constraints and the quantum bound on correlations with indefinite causal order.

Shifting the focus from mutual information to another measure of dependence, one can easily check that conditions \eqref{hypo1} and \eqref{hypo2} (or alternatively the quantum bound) are equivalent to imposing:
\begin{equation}
\label{causal-ineq}
\rho^*(Y:Z)^2+\rho^*(Y:Z')^2\leq 1,
\end{equation}where we kept the notation from the corresponding proof and defined
\begin{equation}
 Z'=x_1\oplus y_2\oplus a_2|[b'_1=0,b'_2]=1.
\end{equation}
 More generally, the quantum bound is equivalent to the following constraint:
\begin{equation}
\label{causal-ineq2}
\rho^*(x|b'=0:b|b'=0)^2+\rho^*(y|b'=1:a|b'=1)^2\leq 1,
\end{equation}while causally separable processes are characterized by:
\begin{equation}
\label{causal-ineq3}
\rho^*(x|b'=0:b|b'=0)+\rho^*(y|b'=1:a|b'=1)\leq 1.
\end{equation}
Since the Hirschfeld-Gebelein-Rényi (HGR) maximal correlation is also a measure of dependence, equation \eqref{causal-ineq3} has the same clear informational interpretation in terms of allowed signalling directions between parties within (a mixture of) fixed causal structures as equation \eqref{one-con}. The square of the HGR maximal correlation of Bernoulli variables, which appears in \eqref{causal-ineq2}, also has an information-theoretic interpretation: it quantifies the initial efficiency of communication between parties \cite{Erkip98theefficiency}. Indeed, taking $Y=x|[b'=0]$ and $Z=b|[b'=0]$ we obtain:
\begin{equation}
\label{connect-HGR-MI}
\rho^*(Y:Z)^2=\Delta^\prime(0),
\end{equation}where $\Delta^\prime$ is the derivative of
\begin{equation}
\Delta(R)= \sup\limits_{\substack{X \rightarrow Y \rightarrow Z\\I(X:Y)\leq R}} I(X:Z).
\end{equation}
Thus condition \eqref{causal-ineq2} means that the dependence between parties can exceed one bit as long as total initial efficiency of communication is does not exceed one bit.

In summary, equality \eqref{connect-HGR-MI} connects the HGR maximal correlation and the increase in mutual information. It is based on inequality \eqref{mrs} and is central to an information-theoretic interpretation of condition \eqref{causal-ineq2}. To prove \eqref{cond1} or \eqref{mrs}, one uses the standard properties of symmetry, non-negativity, chain rule and data processing of mutual information. Therefore, the bound on quantum correlations with indefinite causal order is equivalent to imposing these standard properties on mutual information between inputs and outputs of parties with an additional consistency condition for classical systems, so that mutual information between independent systems equal 0, along with one of conditions \eqref{caus-ineq} or \eqref{causal-ineq2}.

\section{Beyond causal quantum computation}

Indefinite causal order was also considered in the context of quantum circuits theory, in which information is described as a quantum state that evolves in time under a sequence of quantum gates \cite{chiribella_theoretical_2009}. Since a computation can be understood abstractly as a transformation of an input into an output---which need not be quantum states---one can go beyond processing of quantum states \cite{chiribella_theoretical_2009,chiribella_quantum_2013}. This type of computations on black boxes is known as higher-order quantum computation, the simplest example being quantum supermaps, viz. deterministic transformations with inputs and outputs corresponding to quantum operations \cite{chiribella_normal_2010}.

The importance of higher-order computation was established because it accounts for situations which cannot be simulated using quantum states. Indeed, a physically allowed task called `quantum switch', where a pair of input black boxes $A$ and $B$ are connected in two different orders $B \rightarrow A$ vs. $A\rightarrow B$ conditionally on the value of an input bit, cannot be realized with causally ordered circuits unless causality is violated. Such a task can be realized in the laboratory using quantum circuits where the geometry of the connections can be entangled with the state of a control qubit, and it can outperform the standard causally ordered quantum computers in specific tasks, such as discriminating between two non-signalling channels \cite{chiribella_perfect_2012}.

We now consider two examples to better grasp the new computational features of the quantum switch. The first example is a simplified model of the discrimination between two non-signalling channels task \cite{brukner_quantum_2014}. One must distinguish whether a pair of boxes---which represent two unitaries $\hat A$ and $\hat B$---commute or anticommute, i.e. whether $\hat A \hat B =\pm \hat B \hat A$. Solving this task using only causally ordered circuits requires at least one of the unitaries to be applied twice, while a simple algorithm exploiting superpositions of causal circuits can solve the problem with only one use of each of the two boxes: one coherently applies the two unitaries on the initial state $|\psi\rang$ of the computer in two possible orders, depending on the state of a control qubit. If the control qubit is prepared in the superposition $\frac{1}{\sqrt{2}}(|0\rang +|1\rang)$, the output of the algorithm is:
\begin{equation}
\frac{1}{\sqrt{2}}(\hat A\hat B|\psi\rang|0\rang)+\hat B\hat A|\psi\rang|1\rang)=\frac{1}{\sqrt{2}}\hat A\hat B |\psi\rang(|0\rang\pm|1\rang),
\end{equation}
where the phase of the control qubit state is $+1$ if the two unitaries commute and $–1$ if they anticommute. Measuring the control qubit in the basis $\frac{1}{\sqrt{2}}(|0\rang \pm |1\rang)$ finally solves the task. Thus, the number of queries to the oracle is reduced from 2 to 1 with respect to a causal computation (See Fig. \ref{non-causal-circuit}).
\begin{figure}
\begin{center}
\includegraphics[scale=0.27]{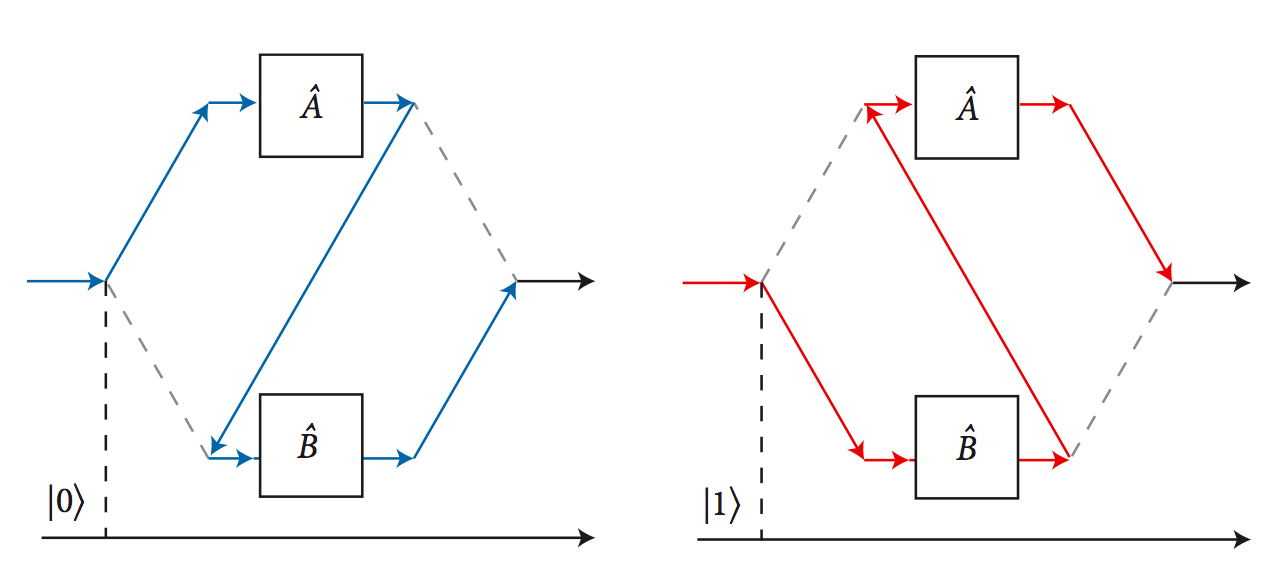}
\end{center}
\caption{The ordering of physical boxes $\hat A$ and $\hat B$ depends on the quantum state of a control bit. Until the state of the control bit is projected, the circuit is in a superposition of causal orders $A\preceq B$ (left) and $B \preceq A$ (right) (From \cite{chiribella_quantum_2013,brukner_quantum_2014}).}
\label{non-causal-circuit}
\end{figure}

The second task is a related computational problem of size $n$ \cite{araujo_computational_2014}. Let $\{U_k\}_{k=1}^{n}$ be a set of unitary matrices of dimension $d\geq n!$ and define
\begin{equation}
 \Pi_x=U_{\sigma_x(n)}\cdots U_{\sigma_x(1)},
\end{equation}
 where $\{\sigma_x\}_{x\in[\![1;n! ]\!]}$ is a set of permutations.
We say that the set of unitaries $\{U_k\}_{k=1}^{n}$ verifies property $P_x$ iff
$\Pi_x=e^{i\frac{2\pi}{n!}xy}\Pi_1$. The computational problem is defined as follows: given a set $\{U_k\}_{k=1}^{n}$ of unitary matrices of dimension $d\geq n!$, decide which of the properties $P_x$ is satisfied given the promise that at least one of these $n!$ properties is satisfied. Superpositions of quantum circuits solve this problem by using $O(n)$ black-box queries, whereas the best-known quantum algorithm with fixed order between the gates requires $O(n^2)$ queries. Therefore, even though the reduction is only polynomial, extending the quantum circuit model can provide a computational advantage.

In general, one can show that any superposition of quantum circuits can be simulated (with, at most, polynomial overhead) by a standard causal quantum circuit \cite{brukner_quantum_2014}. This implies that the quantum switch cannot be used to violate the causal inequality of Section \ref{section-3-2-2}.

\bigskip

In conclusion, we analyzed how processes and networks where the geometry of the wires between the gates are entangled with the state of a control qubit predict the existence of correlations with no fixed causal order. These frameworks can be thought of as toy models for the quantum gravity situation where a massive object is put in a spatial superposition, producing a gravitational field (and hence a metric) in a superposition of states. Thus understanding the nature of the correlations these frameworks reveal is crucial to topics in quantum foundations, quantum information and quantum gravity. As a first step in this program, we placed a subset of the quantum correlations arising from causally non-separable processes in the larger landscape of CPO-boxes (including correlations that maximally violate the causal inequality) which allowed a better grasp of the connections between the local ordering of events and the role of the control bit in the emergence of an indefinite global order. We also formulated various constraints on mutual information and the HGR maximal correlation that allow to retrieve the classical and quantum bounds on correlations with indefinite causal order.

\chapter*{Conclusions and Outlook}
\addcontentsline{toc}{chapter}{Conclusions and Outlook}
\markboth{Conclusions and Outlook}{}

The topics addressed in this dissertation involve the notions of quantum correlations, causal structures and information.

In Chapter \ref{first-chapter}, we reviewed some basic concepts of quantum theory. We first clarified the meaning of the word ``nonlocality" and its relation to entanglement by using the LOCC and LOSR paradigms, and then analyzed the deep implications the notion of nonlocality has on our understanding of ``reality". The PR-box framework placed quantum correlations between spacelike separated systems in the more general landscape of non-signalling correlations, thus clarifying the fact that nonlocal quantum correlations respect causality. We also reviewed the partial reconstruction of bipartite quantum correlations based on information causality. Whether this principle can account for the whole set of bipartite quantum correlations is still an open question. At any rate, the impossibility to describe all quantum correlations using bipartite principles calls for a new genuinely multipartite approach. Nonetheless, this and similar partial reconstructions revealed that many features that were thought to be purely quantum are in fact quite generic if one selects a suitable generalized probabilistic framework. We also reviewed the notion of entanglement in the AQFT framework, an approach that highlighted how much more entrenched entanglement is in QFT, namely \emph{precisely} when relativistic constraints are fully integrated into quantum theory.

In Chapter \ref{second-chapter}, we analyzed in what sense the standard causal structure imposed by relativity affects entanglement detection and quantification. We reviewed some recent results on the observer-dependent character of entanglement and some difficulties that arise when describing physical systems at the most fundamental level using QFT. We argued that the usual identification between regions of space and quantum observables produces divergences in the calculation of entanglement entropy. We detailed how one can build a Fock space for the field system where finite-energy states are localized and have a finite entanglement entropy by identifying regions of space through the NW position operators instead of the covariant fields. We then showed that under coarse graining, the localization and entanglement properties of the NW operators for finite-energy states effectively hold, irrespective of the choice of local observables at the fundamental level. This result suggests that different localization schemes should be adopted for regimes of different energy, and one can imagine, by arbitrarily increasing the resolution, to access regimes where the vacuum entanglement becomes more and more accessible while the divergent entropy associated with it is confined in the fine-grained degrees of freedom.

The above coarse-graining procedure encodes the idea of a loss of information at the effective ``everyday" level. This idea can be generalized to that of an  ``observer effective horizon", and the work by Jacobson we reviewed relating the ``equilibrium thermodynamics" of entanglement entropy at event horizons and Einstein's equation is an example of application of this ``observer effective horizon" approach, which highlights the fact that it might be fruitful for deeper topics involving quantum gravity \cite{lashkari_gravitational_2014}. A possible extension of our work is to formulate the same problem in the AQFT framework we presented in Chapter \ref{first-chapter}. Indeed, there we showed how by taking ``slightly larger" local algebras of observables, one can get rid of most entanglement effects entailed by the type III property. Thus, one could hope by using this algebraic approach for a more rigorous and general account of how coarse-graining confines the divergence of entanglement entropy to unobservable degrees of freedom.

In Chapter \ref{third-chapter}, we reviewed an operational framework that unifies the descriptions of spacelike and timelike separated correlations, a necessary step if we are to explore quantum correlations beyond definite causal order. This framework was based on the notion of process which generalizes that of quantum state to general situations where a number of local operations are performed with no prior assumption about how such operations are embedded in a global spacetime. The local operations correspond to the most general quantum operations that can be performed in a localized spacetime volume, viz. (trace non-increasing) completely positive maps. We reviewed how some processes can give rise to correlations that violate a causal inequality verified by all causally ordered classical and quantum correlations. These results aside, the general properties of processes are still largely unexplored, and in the same fashion that von Neumann entropy generalizes the Shannon entropy to quantum states, we still have to look for sensible notions that would generalize the notions of entropy, channel capacity, purification---the latter would first need a notion of measure of causal non-separability---etc. to processes.

Following the PR-box approach to non-signalling correlations, we introduced a generalized probabilistic framework based on a property we called ``compatibility with predefined causal order" (CPO) with the aim to highlight the importance of the control bit for the emergence of an indefinite global order, at least when certain constraints are placed on local operations as done implicitly in the original proof of the violation of the causal inequality. Work in progress seems to indicate that the violation of causal inequalities is possible without using a control bit \cite{private} if more general local operations are allowed. Therefore, the CPO condition we introduced might not be the truly fundamental analogue of the no-signalling principle. However, all attempts to this day to find a physically clear principle shared by causally separable and non-separables processes and using only measurement results $x,y$ and inputs $a,b$ with no restrictions on local operations have failed.

We also analyzed how various information-theoretic constraints can characterize the quantum bound on correlations with indefinite causal order. Unlike the information causality derivation for non-signalling correlations, we only \emph{derived} the bound on the protocol efficiency---as measured by an expression using mutual information between inputs and outputs conditionned by the control bit---for causally ordered correlations, and related this bound---which is the only possible nonzero bound---to the quantum bound on the violation of the causal inequality. Thus, a possible extension of our work would be to look for a general derivation of the bound on the protocol efficiency using only ``natural" properties of mutual information, or some other measure of dependence. This more general approach failed in the alternative game we introduced because the guesses expressions `$g_i$' were not independent (and similarly for tosses expressions `$t_i$'). This is not a fundamental limitation for finding an intuitive principle that can bound quantum correlations with indefinite causal order because, keeping with the logic of our initial analogy, one can show that information causality holds in the context of non-signalling correlations even when the bits in Alice's set are not independent \cite{pawlowski_hyperbits:_2012}. However, the derivation heavily relies on computations using the quantum formalism and not on simple manipulations of information-theoretic quantities.

Other open questions in this topic concern the experimental implementation of a violation of the causal inequality. We mentionned that the quantum switch cannot violate it, but one can show that protocols exist that use the quantum switch and correspond to causally non-separable processes \cite{private}. Since a physical implementation of quantum switches can be achieved by using an interferometric setup that implements a quantum control of the order between gates, at least a subset of causally non-separable processes can be implemented in the near future and offer new advantages for quantum computation. Consequently, it is reasonable to hope that future work will provide intuitions on possible implementations of causally non-separable processes that violate the causal inequality, a result that would have very deep implications on our understanding of causal structures.

\begin{appendices}
\renewcommand\chaptername{Appendix}

\chapter{C{$^*$}-algebras and von Neumann algebras}
\label{algebra-theory}

In this appendix, we review the connections between C$^*$-algebras and von Neumann algebras with topology and probability theory respectively. We adopt the structuralist point of view in order to emphasize the natural character of such connections.

All algebras are assumed to be unitary. We denote by $\sigma(A)$ the spectrum (which can be the empty set) of an element $A$ of an algebra $\mathfrak{A}$.

\section{Generalities}

\begin{Def}
A Banach algebra $(\mathfrak{A},+,\circ,||.||)$ over a field $\mathbb{K}$ is a normed associative $\mathbb{K}$-algebra such that the underlying normed vector space is a Banach space, i.e. a complete normed vector space.
\end{Def}

If $\mathfrak{A}$ and $\mathfrak{B}$ are two Banach algebras, we call $\sigma(\mathfrak{A},\mathfrak{B})$-topology the weakest topology on $\mathfrak{A}$ such that all the elements of $\mathfrak{B}$ are continuous.

\begin{Def}
\label{topologies}
Let $(\mathfrak{A},||.||)$ be a Banach space and $\mathfrak{A}^*$ its topological dual.
\begin{itemize}
\item[(i)] The norm topology is the natural topology generated by the open sets defined with norm $||.||$.
\item[(ii)] The weak Banach topology on $\mathfrak{A}$ is the $\sigma(\mathfrak{A}, \mathfrak{A}^*)$-topology.
\item[(iii)] If $\mathfrak{A}$ admits a predual $\mathfrak{A}_{*}$, the ultraweak topology, or weak $*$-topology on $\mathfrak{A}$ is the $\sigma(\mathfrak{A},\mathfrak{A}_{*})$-topology.
\end{itemize}
\end{Def}

\section{{C$^*$}-algebras}

We provide in this section some definitions and key results of the theory of C$^*$-algebras.

\subsection{General definitions}

\begin{Def}
A \emph{concrete} C$^*$-algebra is a sub-$*$-algebra of $\mathfrak{B}(\calH)$, where $\calH$ is a Hilbert space,  closed under the norm topology.
\end{Def}

\begin{Def}
An \emph{abstract} C$^*$-algebra $\mathfrak{A}$ is a complex algebra endowed with a norm $||.||$ and an antilinear involution $*:\mathfrak{A}\rightarrow \mathfrak{A}$ such that:
\begin{itemize}
\item[(i)] $(\mathfrak{A},||.||)$ is a Banach space.
\item[(ii)] $\forall A,B\in\mathfrak{A}, ||AB||\leq ||A||\cdot ||B||$.
\item[(iii)]  $\forall A \in \mathfrak{A},||A^*A||=||A||^2$ (C$^*$-condition).
\end{itemize}
The previous conditions imply that $||A||=||A^*||, \quad \forall A \in \mathfrak{A}$.
\end{Def}

\begin{Def}
A linear form $\omega:\mathfrak{A}\rightarrow\mathbb{C}$ on $\mathfrak{A}$ is called positive iff $\omega(A^*A)\geq 0, \forall A \in \mathfrak{A}$, normalized iff $\omega(\id)=1$. A positive normalized linear form is called a state, and the set of states is denoted by $\mathcal{S}(\mathfrak{A})$.
\end{Def}

\subsection{Commutative C{$^*$}-algebras}

\begin{Def}
A multiplicative linear form $m$ on a commutative Banach algebra $\mathfrak{A}$, or character on $\mathfrak{A}$, is a homomorphism from $\mathfrak{A}$ to $\mathbb{C}$, i.e. a map $\mathfrak{A}\rightarrow\mathbb{C}$ preserving algebraic relations:
\begin{equation}
m(AB)=m(A)m(B), \hspace{0.2cm} m(A+B)=m(A)+m(B), \hspace{0.1cm} \quad \forall A,B \in \mathfrak{A}.
\end{equation}
We denote by $\Sigma(\mathfrak{A})$ the set of characters of $\mathfrak{A}$. Nonzero elements of $\Sigma(\mathfrak{A})$ are normalized.
\end{Def}

\begin{Pro}
Let $\mathfrak{A}$ be a commutative Banach algebra. For all $A\in\mathfrak{A}$, $\lambda \in \sigma(A)$ iff there exists a character $m\in\Sigma(A)$ such that $m(A)=\lambda$, the correspondance being one-to-one. This justifies the denomination Gelfand spectrum for $\Sigma(\mathfrak{A})$.
\end{Pro}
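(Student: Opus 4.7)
The plan is to prove the two implications of the equivalence separately, relying on the standard correspondence between characters of a commutative Banach algebra and its maximal ideals. Throughout I work in $\mathfrak{A}$ unital, as assumed in the appendix.

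For the easy direction ($\Leftarrow$), first I would observe that any character $m\in\Sigma(\mathfrak{A})$ automatically satisfies $m(\id)=1$ (as recalled in the definition) and is multiplicative. Suppose $m(A)=\lambda$. If $A-\lambda\id$ were invertible in $\mathfrak{A}$, with inverse $B$, then $1 = m(\id) = m((A-\lambda\id)B) = m(A-\lambda\id)\,m(B) = 0\cdot m(B) = 0$, a contradiction. Hence $A-\lambda\id$ is not invertible and $\lambda\in\sigma(A)$.

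The substantive direction ($\Rightarrow$) is where the key ingredients of commutative Banach algebra theory enter. Given $\lambda\in\sigma(A)$, the element $A-\lambda\id$ is non-invertible, so the principal ideal $I=(A-\lambda\id)\mathfrak{A}$ is proper (otherwise $\id\in I$ would produce an inverse). First I would invoke the fact that in any unital Banach algebra the set of invertible elements is open (a standard Neumann series argument around $\id$), which implies that the closure of a proper ideal is again proper and that every maximal ideal is closed. By Zorn's lemma, $I$ is contained in a maximal ideal $M$, which is closed. The quotient $\mathfrak{A}/M$ is then a unital commutative Banach algebra and, because $M$ is maximal, it is also a field. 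At this point I would apply the Gelfand–Mazur theorem: any Banach algebra which is a field is isometrically isomorphic to $\mathbb{C}$. Composing the canonical projection $\mathfrak{A}\to\mathfrak{A}/M$ with this isomorphism produces a character $m:\mathfrak{A}\to\mathbb{C}$ whose kernel is $M$. Since $A-\lambda\id\in I\subseteq M = \ker m$, we obtain $m(A)=\lambda$, as required. For the bijective correspondence claim, I would note that the construction furnishes, for each $\lambda\in\sigma(A)$, a maximal ideal $M$ of $\mathfrak{A}$ such that $\lambda$ is the unique scalar with $A-\lambda\id\in M$; conversely, every maximal ideal $M$ is the kernel of a unique character, and the image $m(A)$ necessarily lies in $\sigma(A)$ by the argument of the previous paragraph. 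Thus $\sigma(A) = \{m(A)\ :\ m\in\Sigma(\mathfrak{A})\}$, and this realises $\Sigma(\mathfrak{A})$ as a ``spectrum'' of the algebra, justifying the terminology.

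The main obstacle is the invocation of the Gelfand–Mazur theorem, which is the non-trivial analytic input — it rests on the fact that for every $a$ in a unital Banach algebra the resolvent function $\lambda\mapsto (a-\lambda\id)^{-1}$ is holomorphic on its domain and bounded at infinity, so Liouville forces a nonempty spectrum, and in a field this forces $a\in\mathbb{C}\,\id$. Everything else (openness of invertibles, existence of maximal ideals via Zorn, closedness of maximal ideals) is routine Banach algebra machinery that I would cite rather than re-derive. The quoted ``one-to-one'' statement should be read as the bijection between characters and maximal ideals together with the set-theoretic equality $\sigma(A)=\{m(A) : m\in\Sigma(\mathfrak{A})\}$, and not as injectivity of the evaluation map $m\mapsto m(A)$ on $\Sigma(\mathfrak{A})$, which is false in general.
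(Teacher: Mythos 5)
Your argument is correct and is the standard Gelfand-theory proof: the easy direction via multiplicativity and $m(\id)=1$, and the substantive direction via the chain non-invertible $\Rightarrow$ proper principal ideal $\Rightarrow$ maximal (closed) ideal $\Rightarrow$ Banach field $\Rightarrow$ Gelfand--Mazur $\Rightarrow$ character killing $A-\lambda\id$. The paper states this proposition without proof, as recalled background in its appendix on operator algebras, so there is no in-text argument to compare against; your write-up supplies exactly the missing standard derivation. Two minor remarks: the claim that $\mathfrak{A}/M$ is \emph{isometrically} isomorphic to $\mathbb{C}$ is slightly stronger than what Gelfand--Mazur gives (and than what you need --- an algebra isomorphism suffices), and your reading of ``one-to-one'' as the identity $\sigma(A)=\{m(A):m\in\Sigma(\mathfrak{A})\}$ together with the character--maximal-ideal bijection, rather than injectivity of $m\mapsto m(A)$, is the right way to make the loosely phrased statement true.
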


\begin{Thm}[Commutative Gelfand-Naimark]
A unitary commutative abstract C$^*$-algebra $\mathfrak{A}$ is isometrically isomorphic to the concrete C$^*$-algebra of continuous functions on the Gelfand spectrum $\Sigma(\mathfrak{A})$ of $\mathfrak{A}$ which, endowed with the $\sigma(\mathfrak{A}^*,\mathfrak{A})$-topology, is a compact Hausdorff topological space.
\end{Thm}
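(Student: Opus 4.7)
The plan is to construct an explicit map, the Gelfand transform $\Gamma: \mathfrak{A} \to C(\Sigma(\mathfrak{A}))$ defined by $\Gamma(A)(m) := m(A)$, and show that it is the desired isometric $*$-isomorphism. I proceed in four steps.

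First, I would endow $\Sigma(\mathfrak{A})$ with the weak*-topology inherited from $\mathfrak{A}^*$ and verify that it is a compact Hausdorff space. Any character $m$ satisfies $|m(A)| \le \|A\|$ (since $m(A) \in \sigma(A)$ by the preceding proposition and the spectral radius is bounded by the norm), so $\Sigma(\mathfrak{A})$ sits inside the closed unit ball of $\mathfrak{A}^*$, which is weak*-compact by Banach-Alaoglu. One then checks that $\Sigma(\mathfrak{A}) \cup \{0\}$ is weak*-closed in this ball: multiplicativity $m(AB) = m(A)m(B)$ and the normalization $m(\id)=1$ or $0$ are preserved under pointwise limits. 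Removing the zero functional (which is isolated thanks to $m(\id)=1$ for nonzero characters) leaves $\Sigma(\mathfrak{A})$ compact. The Hausdorff property is automatic since weak*-topology separates linear functionals.

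Second, I would verify that $\Gamma$ is a unital $*$-homomorphism into $C(\Sigma(\mathfrak{A}))$. Continuity of each $\Gamma(A)$ holds by the very definition of the weak*-topology, and linearity and multiplicativity are inherited from the characters. The only nontrivial algebraic point is preservation of the involution, $\Gamma(A^*) = \overline{\Gamma(A)}$. For this I would first show that self-adjoint elements have real spectrum (an argument using the exponential series $e^{itA}$ and the C$^*$-condition to bound $\|e^{itA}\| = 1$ for $t\in\mathbb{R}$ when $A = A^*$), and then decompose any $A$ into real and imaginary parts $A = B + iC$ with $B,C$ self-adjoint.

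Third, I would prove that $\Gamma$ is isometric. By the proposition relating spectrum and characters, $\|\Gamma(A)\|_\infty = \sup_{m \in \Sigma(\mathfrak{A})}|m(A)| = r(A)$, the spectral radius. The key C$^*$-technical step, which I view as the main obstacle, is establishing $r(A) = \|A\|$ for all $A$ in a commutative C$^*$-algebra. For self-adjoint $A$, iterating the C$^*$-condition yields $\|A^{2^n}\| = \|A\|^{2^n}$, hence by the Gelfand spectral radius formula $r(A) = \lim_n \|A^n\|^{1/n} = \|A\|$. For a general $A$, commutativity gives $A^*A$ self-adjoint, and combining $\|A\|^2 = \|A^*A\| = r(A^*A) = \sup_m |m(A^*A)| = \sup_m |m(A)|^2 = r(A)^2$ yields the isometry.

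Finally, surjectivity follows from Stone-Weierstrass: the image $\Gamma(\mathfrak{A})$ is a $*$-subalgebra of $C(\Sigma(\mathfrak{A}))$ containing constants (since $\Gamma(\id)=1$), separating points of $\Sigma(\mathfrak{A})$ (by definition, distinct characters differ on some $A$, and then $\Gamma(A)$ separates them), and closed in the sup-norm (because $\Gamma$ is an isometry from a Banach space). Stone-Weierstrass then forces $\Gamma(\mathfrak{A}) = C(\Sigma(\mathfrak{A}))$, completing the proof. The hardest conceptual hurdle is really the identification of the spectral radius with the norm via the C$^*$-condition; everything else is a careful assembly of Banach-Alaoglu, the previous proposition on characters, and Stone-Weierstrass.
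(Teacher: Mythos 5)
Your proof is correct and complete; note that the dissertation itself states this theorem in its review appendix \emph{without} proof, so there is no in-paper argument to compare against. What you give is the standard textbook route via the Gelfand transform $\Gamma(A)(m)=m(A)$: Banach--Alaoglu plus closedness of the character space (using $m(\id)=1$) for compactness, the exponential argument for reality of the spectrum of self-adjoint elements (hence $*$-preservation), the C$^*$-identity iterated on $\|A^{2^n}\|$ together with Gelfand's spectral radius formula for the isometry, and Stone--Weierstrass plus completeness of the isometric image for surjectivity. All four steps are sound, and you correctly identify the spectral-radius-equals-norm step as the place where the C$^*$-condition (as opposed to a mere commutative Banach algebra structure) is genuinely used.
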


We consider the following categories and functors:
\begin{description}
\item[$\Calg$] is the dual category to the category whose objects are unitary commutative C$^*$-algebras and morphisms are unit preserving continuous $*$-homomorphisms.
\item[$\Top$]\hspace{0.05cm}  is the category of compact Hausdorff topological spaces and continuous maps
\item[$\euC$] \hspace{0.3cm} is the functor sending a compact Hausdorff topological space onto the algebra of continuous functions defined on it and transforming a morphism $f:X\rightarrow Y$ into the morphism $\{\mathfrak{C}(Y) \ni g \mapsto g \circ f \in \mathfrak{C}(X)\}$.
\item[$\euS$]\hspace{0.3cm} is the functor sending a commutative C$^*$-algebra $\mathfrak{A}$ onto the set of characters $\Sigma(\mathfrak{A})$ equipped with the  $\sigma(\mathfrak{A}^*,\mathfrak{A})$-topology, and a $*$-homomophism $\pi$ onto $\{\Sigma(\mathfrak{A}) \ni m \mapsto m\circ \pi\}$.
\end{description}

\begin{Thm}[Gelfand's duality]
\begin{equation}
\Calg\mathrel{\substack{\euC\\\rightleftarrows\\ \euS}}\Top
\end{equation}
is an equivalence of categories.
\end{Thm}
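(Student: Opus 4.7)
The plan is to establish the equivalence by constructing natural isomorphisms $\eta:\mathrm{id}_{\Calg}\Rightarrow \euC\circ\euS$ and $\epsilon:\euS\circ\euC\Rightarrow \mathrm{id}_{\Top}$, and then noting that the triangle identities follow automatically from how the two components are defined pointwise. The commutative Gelfand-Naimark theorem already stated in the paper does most of the work on the algebraic side, so the real content of the argument lies on the topological side.

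First, for each commutative unital C$^*$-algebra $\mathfrak{A}$, I would take $\eta_\mathfrak{A}:\mathfrak{A}\to\mathfrak{C}(\Sigma(\mathfrak{A}))$ to be the Gelfand transform $A\mapsto\widehat{A}$ with $\widehat{A}(m)=m(A)$. The commutative Gelfand-Naimark theorem asserts precisely that $\eta_\mathfrak{A}$ is an isometric $*$-isomorphism, giving the component-wise isomorphism on objects. Naturality in $\mathfrak{A}$ amounts to checking that for every unit-preserving $*$-homomorphism $\pi:\mathfrak{A}\to\mathfrak{B}$, one has $\widehat{\pi(A)}=\widehat{A}\circ\euS(\pi)$, which is the tautology $m(\pi(A))=(m\circ\pi)(A)$ for all characters $m$ of $\mathfrak{B}$.

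Second, for each compact Hausdorff space $X$, I would define $\epsilon_X:X\to\Sigma(\mathfrak{C}(X))$ by $x\mapsto\mathrm{ev}_x$ where $\mathrm{ev}_x(f)=f(x)$. Three things need to be verified. Injectivity follows from Urysohn's lemma: distinct points are separated by a continuous function. Continuity is immediate because $\Sigma(\mathfrak{C}(X))$ carries the $\sigma(\mathfrak{C}(X)^*,\mathfrak{C}(X))$-topology, so $\epsilon_X^{-1}\{m:|m(f)-\alpha|<\varepsilon\}=f^{-1}(B_\varepsilon(\alpha))$ is open. Since both spaces are compact Hausdorff, a continuous injection is automatically a homeomorphism onto its image, so it only remains to prove surjectivity. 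Naturality in $X$ reduces to the identity $\mathrm{ev}_{f(x)}(g)=(g\circ f)(x)=\mathrm{ev}_x(\euC(f)(g))$.

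The main obstacle is surjectivity of $\epsilon_X$, i.e.\ the statement that every character of $\mathfrak{C}(X)$ is an evaluation functional. I would argue as follows: given $m\in\Sigma(\mathfrak{C}(X))$, its kernel $I$ is a maximal ideal. Suppose, for contradiction, that no $x\in X$ satisfies $f(x)=0$ for all $f\in I$. Then for every $x$ there exists $f_x\in I$ with $f_x(x)\neq 0$, and by continuity $|f_x|^2>0$ on an open neighbourhood $U_x$ of $x$. Compactness of $X$ yields a finite subcover $U_{x_1},\dots,U_{x_n}$, and $g=\sum_{i=1}^n f_{x_i}^*f_{x_i}$ lies in $I$ while being strictly positive on $X$, hence invertible in $\mathfrak{C}(X)$, contradicting $g\in I\neq\mathfrak{C}(X)$. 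Therefore there exists $x_0\in X$ with $I\subseteq\ker(\mathrm{ev}_{x_0})$, and maximality forces $I=\ker(\mathrm{ev}_{x_0})$, whence $m=\mathrm{ev}_{x_0}$. Once both $\eta$ and $\epsilon$ are shown to be natural isomorphisms, the equivalence of categories follows immediately, and the two Gelfand-type constructions are revealed as mutually inverse (up to natural isomorphism) passages between the topological and the algebraic descriptions of commutative quantum theory.
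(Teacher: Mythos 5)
The paper states this theorem without proof --- it is quoted as a standard result in the review appendix, immediately after the commutative Gelfand--Naimark theorem --- so there is no in-text argument to compare yours against. Your proof is the standard one and is essentially correct: the Gelfand transform supplies the natural isomorphism on the algebra side, and the evaluation map $x\mapsto\mathrm{ev}_x$ supplies it on the space side, with the compactness argument for surjectivity (a finite sum $\sum_i f_{x_i}^*f_{x_i}$ in the kernel that is strictly positive, hence invertible) being exactly the right way to show every character is an evaluation. Two small points worth making explicit: the step from $I=\ker(\mathrm{ev}_{x_0})$ to $m=\mathrm{ev}_{x_0}$ uses that a unital character is determined by its kernel, since $f-m(f)\id\in\ker(m)$ forces $f(x_0)=m(f)$; and the ``continuous bijection between compact Hausdorff spaces is a homeomorphism'' step relies on $\Sigma(\mathfrak{C}(X))$ being compact Hausdorff in the weak*-topology, which the paper's statement of commutative Gelfand--Naimark already records. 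Finally, since $\Calg$ is defined as the \emph{opposite} of the category of commutative unital C$^*$-algebras, the naturality squares you write down should be read with the arrows formally reversed on the algebraic side; the computations $m(\pi(A))=(m\circ\pi)(A)$ and $\mathrm{ev}_{f(x)}(g)=\mathrm{ev}_x(g\circ f)$ are unaffected by this bookkeeping.
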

This correspondance between algebraic and topological concepts, and the possibility of extending it to non-commutative C$^*$-algebras justifies the denomination \emph{non-commutative topology} associated to the study of general C$^*$-algebras.

\subsection{Gelfand-Naimark-Segal (GNS) construction}

\begin{Def}
A $*$-homomorphism between two unitary $*$-algebras $\mathfrak{A}$ and $\mathfrak{B}$ is a map $\pi:\mathfrak{A}\rightarrow\mathfrak{B}$ preserving algebraic relations and involution:
\begin{itemize}
\item[(i)] $\pi(\lambda A+\mu B)=\lambda \pi(A) + \mu \pi(B), \quad \forall A,B \in \mathfrak{A}, \forall \lambda,\mu \in \mathbb{C}.$
\item[(ii)] $\pi(A^*)=(\pi(A))^*,\quad \forall A\in\mathfrak{A}.$
\item[(iii)] $\pi(AB)=\pi(A)\pi(B), \quad \pi(\id_\mathfrak{A})=\id_\mathfrak{B}, \quad \forall A\in\mathfrak{A}, \forall B\in\mathfrak{B}.$
\end{itemize}If $\pi$ is bijective, it is called $*$-isomorphism, and if additionally  $\mathfrak{A}=\mathfrak{B}$ it is called $*$-automorphism.
\end{Def}

\begin{Def}
A representation $\pi$ of a C$^*$-algebra $\mathfrak{A}$ into a Hilbert space $\mathcal{H}$ is a $*$-homomorphism from $\mathfrak{A}$ into the C$^*$-algebra $\mathfrak{B}(\mathcal{H})$ of bounded linear operators on $\calH$. A representation is said to be:
\begin{itemize}
\item[(i)] Faithful iff $\ker(\pi)=\{0\}.$
\item[(ii)] Irreducible iff $\{0\}$ et $\mathcal{H}$ are the only invariant closed subspaces of $\pi(\mathfrak{A}).$
\end{itemize}
\end{Def}

\begin{Def}
Let $\pi:\mathfrak{A}\rightarrow\mathcal{H}$ be a representation. A vector $\Psi \in \calH$ is called cyclic iff $\pi(\mathfrak{A})\Psi$ is dense in $\calH$.
\end{Def}

\begin{Thm}[Gelfand-Naimark-Segal (GNS) construction]
\label{NC-GN}
Let $\mathfrak{A}$ be a unitary C$^*$-algebra and $\omega \in \mathcal{S}(\mathfrak{A})$ a state. Then there exists a Hilbert space $\mathcal{H_\omega}$ and a representation $\pi_\omega:\mathfrak{A}\rightarrow\mathcal{B(H_\omega)}$ such that:
\begin{itemize}
\item[(i)] $\mathcal{H_\omega}$ contains a cyclic vector $\Psi_{\omega}.$
\item[(ii)] $\omega(A)=\langle \Psi_{\omega},\pi_\omega(A)\Psi_{\omega}\rangle,\quad \forall A \in \mathfrak{A}.$
\item[(iii)] Any representation $\pi$ in a Hilbert space $\mathcal{H_\pi}$ with cyclic vector $\Psi$ such that:
\begin{equation}
\label{scal-prod-GNS}
\omega(A)=\langle\Psi,\pi(A)\Psi\rangle, \quad \forall A \in \mathfrak{A},
\end{equation}
is unitarily equivalent to representation $\pi_\omega$, i.e. there exists an isometry $U:\mathcal{H_\pi}\rightarrow\mathcal{H_\omega}$ such that:
\begin{equation}
U\Psi=\Psi_\omega,
\end{equation}
\begin{equation}
U\pi(A)U^{-1}=\pi_\omega(A),\quad  \forall A \in \mathfrak{A}.
\end{equation}
\end{itemize}
This is called the GNS representation of $\mathfrak{A}$ defined by state $\omega$.
\end{Thm}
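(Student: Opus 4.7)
The plan is to construct $(\calH_\omega,\pi_\omega,\Psi_\omega)$ canonically from the state $\omega$ by turning $\mathfrak{A}$ itself into a pre-Hilbert space via $\omega$, and then to obtain uniqueness up to unitary equivalence from the cyclicity condition. First I would introduce on $\mathfrak{A}$ the sesquilinear form $\langle A,B\rangle_\omega := \omega(A^*B)$. Positivity of $\omega$ makes this form positive semi-definite, hence it satisfies a Cauchy-Schwarz inequality $|\omega(A^*B)|^2 \leq \omega(A^*A)\,\omega(B^*B)$. From this, the set $\mathcal{N}_\omega := \{A\in\mathfrak{A} : \omega(A^*A)=0\}$ is a closed subspace, and Cauchy-Schwarz applied to $\omega((CA)^*(CA)) = \omega(A^*C^*CA)$ (together with $\|C^*C\|\,\id - C^*C \geq 0$ in $\mathfrak{A}$ and positivity of $\omega$ on positive elements) shows that $\mathcal{N}_\omega$ is a left ideal. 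The quotient $\mathfrak{A}/\mathcal{N}_\omega$ then carries a genuine inner product, and I would define $\calH_\omega$ as its completion. I would write $[A]$ for the class of $A$.

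Next I would define the representation by left multiplication: $\pi_\omega(A)[B] := [AB]$. Because $\mathcal{N}_\omega$ is a left ideal this is well-defined on the quotient, and linearity and the homomorphism property $\pi_\omega(AA') = \pi_\omega(A)\pi_\omega(A')$ are immediate. The crucial analytic step is the norm bound
\begin{equation}
\|\pi_\omega(A)[B]\|^2 = \omega(B^*A^*AB) \leq \|A\|^2\,\omega(B^*B) = \|A\|^2\,\|[B]\|^2,
\end{equation}
which follows from positivity of $\omega$ applied to the positive element $B^*(\|A\|^2\id - A^*A)B$ (this uses the C$^*$-identity to assert that $\|A\|^2\id - A^*A \geq 0$). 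This bound lets $\pi_\omega(A)$ extend uniquely to a bounded operator on $\calH_\omega$, and the involution is respected because $\langle [B],\pi_\omega(A^*)[C]\rangle = \omega(B^*A^*C) = \langle \pi_\omega(A)[B],[C]\rangle$. Setting $\Psi_\omega := [\id]$ gives cyclicity, since $\pi_\omega(\mathfrak{A})\Psi_\omega = \{[A] : A\in\mathfrak{A}\}$ is dense in $\calH_\omega$ by construction, and property (ii) holds by the computation $\langle \Psi_\omega,\pi_\omega(A)\Psi_\omega\rangle = \omega(\id^*\, A\,\id) = \omega(A)$.

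For uniqueness (iii), given another cyclic representation $(\pi,\calH_\pi,\Psi)$ satisfying $\omega(A) = \langle\Psi,\pi(A)\Psi\rangle$, the natural candidate intertwiner is defined on the dense subspace $\pi(\mathfrak{A})\Psi$ by $U\pi(A)\Psi := \pi_\omega(A)\Psi_\omega$. The point is that $U$ preserves inner products:
\begin{equation}
\langle \pi(A)\Psi,\pi(B)\Psi\rangle = \langle\Psi,\pi(A^*B)\Psi\rangle = \omega(A^*B) = \langle\pi_\omega(A)\Psi_\omega,\pi_\omega(B)\Psi_\omega\rangle,
\end{equation}
which simultaneously shows that $U$ is well-defined on equivalence classes (vectors $\pi(A)\Psi$ and $\pi(A')\Psi$ coincide iff $\omega((A-A')^*(A-A')) = 0$) and that it is isometric. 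Since both $\pi(\mathfrak{A})\Psi$ and $\pi_\omega(\mathfrak{A})\Psi_\omega$ are dense, $U$ extends to a unitary $U:\calH_\pi\to\calH_\omega$ with $U\Psi = \Psi_\omega$ and $U\pi(A)U^{-1} = \pi_\omega(A)$ by direct computation.

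The main obstacle I expect is the boundedness step $\|\pi_\omega(A)[B]\| \leq \|A\|\,\|[B]\|$: it is the only place where the full C$^*$-structure (and not merely positivity of the form) is used, via the spectral/positivity assertion that $\|A\|^2\id - A^*A$ is a positive element of the abstract C$^*$-algebra. Everything else (well-definedness, the ideal property of $\mathcal{N}_\omega$, cyclicity, and the uniqueness argument) is essentially bookkeeping with the Cauchy-Schwarz inequality for $\langle\cdot,\cdot\rangle_\omega$.
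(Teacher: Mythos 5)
Your construction is the standard GNS argument and it is correct in all essentials: the semi-inner product $\omega(A^*B)$, the left-ideal property of the null space via $\omega(A^*C^*CA)\leq\|C^*C\|\,\omega(A^*A)$, the boundedness of left multiplication from $\|A\|^2\id-A^*A\geq 0$, cyclicity of $[\id]$, and the densely-defined isometric intertwiner for uniqueness. The paper itself states this theorem without proof, as standard background in its appendix on operator algebras, so there is no in-paper argument to compare against; your write-up supplies exactly the canonical proof one would expect, and you correctly identify the boundedness estimate as the one step that genuinely uses the C$^*$-identity rather than mere positivity. The only cosmetic remark is that closedness of $\mathcal{N}_\omega$ (which relies on automatic continuity of states) is not actually needed: well-definedness of the quotient inner product only requires $\mathcal{N}_\omega$ to be the null space of the form, which Cauchy--Schwarz already gives.
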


\begin{Def}
The universal representation $\pi_u$ of a C$^*$-algebra $\mathfrak{A}$ is defined as the direct sum of all GNS representations $\{\pi_{\omega}\}_{\omega \in \mathcal{S}(\mathfrak{A})}$. The corresponding Hilbert space is:
\begin{equation}
\mathcal{H}_u=\underset{\omega \in \mathcal{S}(\mathfrak{A})}{\oplus}\mathcal{H}_{\omega}.
\end{equation}
\end{Def}

Thus, the GNS construction shows that a representation of a C$^*$-algebra by operators acting on a Hilbert space is always possible, given that states exist\footnote{This can be shown by expressing polynomials on the spectrum of normal elements as evaluations of states on polynomials of normal elements. The spectral theorem implies that the spectrum of normal elements is not empty, hence states exist.}. We start to see a sketch of the usual formalism of quantum theory, however representations so far are not faithful. The following theorem states that such faithful representations exist.

\begin{Thm}[Gelfand-Naimark]
\label{GNTNC}
Any C$^*$-algebra $\mathfrak{A}$ is isometrically isomorphic to a self-adjoint sub-$*$-algebra of $\mathfrak{B}(\cal{H})$, where $\calH$ is a Hilbert space, closed for the strong topology.
\end{Thm}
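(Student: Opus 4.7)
The plan is to prove this ``noncommutative Gelfand-Naimark theorem'' by exhibiting the universal representation $\pi_u=\bigoplus_{\omega\in\mathcal{S}(\mathfrak{A})}\pi_\omega$ and showing that it is an isometric $*$-isomorphism onto its image; since a C$^*$-algebra is complete and an isometric image of a complete space is closed, the image is automatically norm-closed in $\mathfrak{B}(\mathcal{H}_u)$.

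First, the construction of $\pi_u$ via Theorem~\ref{NC-GN} (GNS) automatically yields a $*$-homomorphism from $\mathfrak{A}$ into $\mathfrak{B}(\mathcal{H}_u)$, so the only nontrivial claims to verify are (i) injectivity of $\pi_u$ and (ii) the isometry property $\|\pi_u(A)\|=\|A\|$ for all $A\in\mathfrak{A}$. The injectivity will follow at once from the stronger assertion that the states of $\mathfrak{A}$ separate its points, i.e. that for each nonzero $A\in\mathfrak{A}$ there is a state $\omega$ with $\omega(A^*A)>0$, since by~\eqref{scal-prod-GNS} this forces $\pi_\omega(A)\Psi_\omega\neq 0$ and hence $\pi_u(A)\neq 0$.

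To produce such a separating state I would proceed as follows. Given a nonzero $A\in\mathfrak{A}$, the element $B:=A^*A$ is self-adjoint with $\|B\|=\|A\|^2>0$ by the C$^*$-condition. Consider the unitary commutative C$^*$-subalgebra $\mathfrak{A}_B\subseteq\mathfrak{A}$ generated by $B$ and the identity. The commutative Gelfand-Naimark theorem identifies $\mathfrak{A}_B$ with $\mathfrak{C}(\Sigma(\mathfrak{A}_B))$, and under this identification $B$ corresponds to a continuous function attaining the value $\|B\|$ at some character $m_0\in\Sigma(\mathfrak{A}_B)$. Evaluation at $m_0$ is a state on $\mathfrak{A}_B$ with $m_0(B)=\|B\|>0$, and by the Hahn-Banach theorem together with the standard characterization of states as norm-one positive functionals (equivalently, as functionals satisfying $\omega(\id)=\|\omega\|=1$), this state extends to a state $\omega\in\mathcal{S}(\mathfrak{A})$ with $\omega(A^*A)=\|A\|^2>0$. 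This step---promoting a character on a commutative subalgebra to a global state via Hahn-Banach while preserving positivity---is the technical heart of the argument and the main obstacle, since positivity is not an automatic consequence of norm-preserving extension and requires invoking the characterization of positive functionals as those of norm equal to their value at the identity.

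Finally, the isometry property is a general fact about injective $*$-homomorphisms of C$^*$-algebras: any $*$-homomorphism $\pi$ between C$^*$-algebras is norm-decreasing (via the spectral radius formula applied to the self-adjoint element $A^*A$, using that $\pi$ can only shrink the spectrum), and if $\pi$ is moreover injective then the reverse inequality also holds---if there existed $A$ with $\|\pi(A)\|<\|A\|$, then a nonzero continuous function on $\sigma(A^*A)$ vanishing on $\sigma(\pi(A^*A))$ would, via the continuous functional calculus, produce a nonzero element of $\ker\pi$, contradicting injectivity. Applying this to $\pi_u$, which we have just shown to be injective, yields $\|\pi_u(A)\|=\|A\|$. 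Completeness of $\mathfrak{A}$ together with isometry then gives that $\pi_u(\mathfrak{A})$ is closed in $\mathfrak{B}(\mathcal{H}_u)$ for the norm topology, completing the identification of $\mathfrak{A}$ with a concrete C$^*$-algebra.
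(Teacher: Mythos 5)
The paper states this theorem without proof (the appendix is a review of standard operator-algebra material), so there is no in-text argument to compare against; your proposal is the standard proof via the universal representation and it is correct. Two small remarks. First, when you locate a character $m_0$ on $\mathfrak{A}_B$ with $m_0(B)=\|B\|$ for $B=A^*A$, you are implicitly using $\sigma(A^*A)\subseteq[0,\infty)$ (otherwise the Gelfand transform of $B$ might attain $-\|B\|$ rather than $+\|B\|$); this is a genuine foundational fact about C$^*$-algebras, but it can be bypassed here, since whichever sign is attained, the Hahn--Banach extension $\omega$ is a state and hence satisfies $\omega(A^*A)\geq 0$ by definition, forcing $\omega(A^*A)=\|A\|^2>0$ --- so injectivity goes through either way. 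Second, the statement as printed asks for closure ``for the strong topology''; read as the strong \emph{operator} topology this would be false (the strong closure of $\pi_u(\mathfrak{A})$ in $\mathfrak{B}(\calH_u)$ is the enveloping von Neumann algebra, which is generally strictly larger), so the intended meaning is the norm topology, and your completeness-plus-isometry argument establishes exactly that. The identification of the positivity-preserving Hahn--Banach extension as the technical crux, and the spectral/functional-calculus argument that injective $*$-homomorphisms are isometric, are both correctly deployed.
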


It is also possible to build faithful representations via GNS from the so-called faithful states, but their existence is not assured:

\begin{Def}
A state $\omega$ on $\mathfrak{A}$ is said to be faithful iff $\omega(A^*A)>0,\forall A \in \mathfrak{A}, A \ne 0$.
\end{Def}

\section{Von Neumann algebras}

\subsection{General definitions}

\begin{Def}
A concrete von Neumann algebra is a unitary sub-$*$-algebra of $\mathfrak{B}(\calH)$, where $\calH$ is a Hilbert space, closed under the weak Banach topology.
\end{Def}

Since all topologies defined in \ref{topologies} are locally convex, they can also be defined using a family of seminorms if we are dealing with the algebra $\mathfrak{B}(\calH)$ of bounded operators on some Hilbert space $\mathcal{H}$. Therefore, one can define the ultraweak topology on any sub-$*$-algebra of $\mathfrak{B}(\calH)$ using a seminorm and with no reference to the existence of a predual. Concrete von Neumann algebras can then also be defined as unitary sub-$*$-algebras of $\mathfrak{B}(\calH)$ closed under the ultraweak topology. As we shall see, consistency with the former definition of ultraweak topology is preserved by showing that all concrete von Neumann algebras admit a predual, which motivates the following definition:

\begin{Def}
An abstract von Neumann algebra , or W$^*$-algebra, is a C$^*$-algebra that admits a predual.
\end{Def}

\begin{Def}
Let $\mathfrak{M}_1$ and $\mathfrak{M}_2$ be two W$^*$-algebras and $\Phi:\mathfrak{M}_1\rightarrow \mathfrak{M}_2$ a $*$-homomorphism. $\Phi$ is a W$^*$-homomorphism iff it is a continuous map between $\mathfrak{M}_1$ and $\mathfrak{M}_2$ equipped with their ultraweak topologies.
\end{Def}

\begin{Def}
Let $\mathfrak{A}$ be a C$^*$-algebra. The universal enveloping von Neumann algebra of $\mathfrak{A}$ is the closure under the weak Banach topology of the universal representation $\pi_u$ of $\mathfrak{A}$.
\end{Def}

\subsection{Commutative von Neumann algebras}

\begin{Def}
A compact Hausdorff topological space $X$ is called hyperstonian iff there exists a Banach algebra $\mathfrak{B}$ such that the algebra $\mathfrak{C}_\mathbb{C}(X)$ of continuous functions on $X$ is isometric to the dual $\mathfrak{B}^*$ of $\mathfrak{B}$.
\end{Def}

\begin{Def}
Let $(X,M)$ be a measurable space, i.e. a set $X$ equipped with a distinguished $\sigma$-algebra $M$ of subsets called measurables subsets of $X$. A subset $N \subset M$ is a $\sigma$-ideal iff:
\begin{itemize}
\item[(i)] $ \emptyset \in N.$
\item[(ii)] For all $A \in N$ et $B\in M$, $B \subset A \Rightarrow B\in N.$
\item[(iii)] $\{A_n\}_{n\in\mathbb{N}}  \subset N \Rightarrow \bigcup_{n\in\mathbb{N}} A_n \in N.$
\end{itemize}
Once $(X,M)$ is equipped with a measure $\mu$, we call $(X,M,\mu)$ a measure space.
\end{Def}

\begin{Def}Define:
\begin{itemize}
\item[(i)] A measured space is a triplet $(X,M,N)$ where $X$ is a set, $M$ is the $\sigma$-algebra of measurables subsets of $X$ and $N\subset L$ a $\sigma$-ideal of sets with measure zero.
\item[(ii)] A measured space is called localized iff the boolean algebra $M/N$ of equivalence classes of measurable sets is complete.
\end{itemize}
\end{Def}

This distinction between the concepts of measurable, measured and measure space is crucial to understanding the connexions between the theory of commutative von Neumann algebras and measure theory. As we shall see, commutative von Neumann algebras are connected to measured spaces rather than measure spaces. The choice of a state on a von Neumann algebra will correspond to the choice of a measure on the corresponding measured space, and the localization condition points to a connexion to integration theory rather than to general measure spaces. In the words of Segal \cite{segal_equivalences_1951}:

\begin{quote}
``The class of measure spaces with these properties (we call such spaces `localizable') constitutes in some ways a more natural generalization of the $\sigma$-finite measure spaces, than the class of arbitrary measure spaces. In particular, for a measure space to be localizable is equivalent to the validity for the space of the conclusion of the Radon-Nikodym theorem, or alternatively to the conclusion of the Riesz representation theorem for continuous linear functionals on the Banach space of integrable functions. Every measure space is metrically equivalent (by which we mean there is a measure-preserving isomorphism between the $\sigma$-finite measure rings, roughly speaking this means the spaces are equivalent as far as integration over them is concerned) to a localizable space, and this latter space is essentially unique."
\end{quote}

The $\sigma$-finite condition Segal is refering to is essential to build a rich integration theory. Indeed important concepts/results such as the notion of product measure or the Fubini theorems make crucial use of the notion of $\sigma$-finite measures.

We now consider the following categories:
\begin{description}
\item[$\vN$] is the dual category of the category of unitary commutative concrete von Neumann algebras and unit preserving W$^*$-homomorphisms.
\item[$\Hys$]\hspace{0.01cm} is the category of hyperstonian spaces and open continuous functions.
\item[$\Lms$] \hspace{0.005cm} is the category of localized measured spaces whose morphisms correspond to maps $(X,M,N)\mapsto (Y,P,Q)$ such that the preimage of every element of $P$ is a union of an element of $M$ and a subset of an element of $N$ and the preimage of every element of $Q$ is a subset of an element of $N$.
\end{description}

The following theorem is the restriction of Gelfand's duality to von Neumann algebras and their morphisms:
\begin{Thm}
Categories $\vN, \Hys$ and $\Lms$ are equivalent.
\end{Thm}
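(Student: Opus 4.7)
The plan is to establish the triple equivalence by decomposing it into two pairwise equivalences $\vN \simeq \Hys$ and $\Hys \simeq \Lms$, each obtained by restricting or enriching the Gelfand duality $\Calg \simeq \Top$ already established in the excerpt. The object side is controlled in both cases by the idea that a commutative von Neumann algebra, a hyperstonian space, and a localized measured space all encode the same complete Boolean algebra of projections/clopens/measurable classes together with a distinguished family of normal states or measures; the real work is matching the three morphism classes.

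First I would restrict Gelfand duality to commutative von Neumann algebras. At the object level, this amounts to Dixmier's characterization: a compact Hausdorff $X$ is hyperstonian iff $\mathfrak{C}(X)$ admits an isometric predual, equivalently iff $\mathfrak{C}(X)$ is (isomorphic as a $*$-algebra to) a W$^*$-algebra. One implication is immediate from the definition of hyperstonian; for the converse I would use that a predual makes the unit ball weak$^*$-compact, which via Kakutani-Krein style arguments forces $X$ to be Stonean (every bounded increasing net of continuous functions has a continuous supremum) and then forces enough normal positive functionals to separate points. At the morphism level, I would verify that a continuous map $f : Y \to X$ between hyperstonian spaces pulls back to an ultraweakly continuous (hence W$^*$-) homomorphism $g \mapsto g \circ f$ iff $f$ is open, the point being that openness is precisely what preserves suprema of bounded increasing nets of clopens, i.e.\ normality/ultraweak continuity of the dual map.

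Second I would establish $\Hys \simeq \Lms$ by exploiting the intrinsic normal measure structure on hyperstonian spaces. To $Y \in \Hys$ I would associate the triple $(Y, \mathfrak{B}(Y), \mathfrak{N}(Y))$ where $\mathfrak{B}(Y)$ is the Baire $\sigma$-algebra and $\mathfrak{N}(Y)$ the $\sigma$-ideal of meager Baire sets; the Stonean property ensures that $\mathfrak{B}(Y)/\mathfrak{N}(Y)$ is a complete Boolean algebra, giving localizability. Conversely, to $(X, M, N) \in \Lms$ I would assign the Gelfand spectrum of the commutative W$^*$-algebra $L^\infty(X, M, N)$; Segal's theorem then ensures that localizability is equivalent to $L^\infty$ being a von Neumann algebra, so its spectrum is hyperstonian by the first step. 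I would then check that the $\Lms$-morphism condition (preimages of measurable sets lying in $M$ up to an $N$-subset, and preimages of $N$-sets contained in $N$-sets) is precisely the condition that descends to a complete Boolean homomorphism $M'/N' \to M/N$, which on the topological side corresponds to an open continuous map by the morphism half of the first equivalence.

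The main obstacle, as these sketches already suggest, is the careful bookkeeping on morphisms rather than the object correspondences, since translating the preimage-modulo-null conditions of $\Lms$ into openness on $\Hys$ and into ultraweak continuity on $\vN$ requires threading the complete Boolean algebra of measurable classes through all three descriptions. To organize this cleanly I would introduce an auxiliary category $\mathfrak{CBool}_{norm}$ of complete Boolean algebras equipped with a separating family of normal measures, with complete Boolean homomorphisms as morphisms, and factor each of the two equivalences through it: $\vN \simeq \mathfrak{CBool}_{norm}$ via the lattice of projections equipped with normal states, $\Hys \simeq \mathfrak{CBool}_{norm}$ via the clopen algebra with normal measures, and $\Lms \simeq \mathfrak{CBool}_{norm}$ via $M/N$ with its induced measures. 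Once all three functors to this auxiliary category are shown to be equivalences of categories, the desired triple equivalence follows by composition, and the resulting isomorphism $\vN \simeq \Lms$ reproduces the standard identification of commutative von Neumann algebras with $L^\infty$-spaces of localized measured spaces.
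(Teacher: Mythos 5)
The paper itself gives no proof of this theorem: it appears in the review appendix as a cited classical result (Gelfand duality restricted to W$^*$-algebras, essentially Dixmier plus Segal), followed only by interpretive remarks. So there is nothing in the text to compare your argument against; I can only evaluate your outline on its own merits. On those terms the architecture is the standard and correct one: Dixmier's characterization (hyperstonian $\Leftrightarrow$ $\mathfrak{C}(X)$ admits a predual) settles the object level of $\vN\simeq\Hys$, Segal's localizability theorem --- which the paper itself quotes --- settles the object level of the passage to $\Lms$ via $L^\infty$, and routing all three morphism classes through a category of complete Boolean algebras with separating normal measures is exactly the right organizing device. Your key topological claim is also right: for continuous $f$ one always has $\overline{f^{-1}(A)}\subseteq f^{-1}(\overline{A})$, and openness of $f$ yields the reverse inclusion, which on a Stonean space is precisely preservation of suprema of clopens, i.e.\ normality of $g\mapsto g\circ f$.

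The one place where your sketch papers over a genuine difficulty is the morphism condition in $\Lms$. The paper's morphisms are point maps whose preimages respect $M$ and $N$; such a map induces a Boolean homomorphism $P/Q\to M/N$ that is manifestly $\sigma$-complete, but your factorization through $\mathfrak{CBool}_{norm}$ needs it to be \emph{complete} --- equivalently, to induce a normal (ultraweakly continuous) map of $L^\infty$-algebras on the $\vN$ side and an open map on the $\Hys$ side. For non-$\sigma$-finite localizable spaces, $\sigma$-completeness of a Boolean homomorphism does not imply completeness, so you must either derive completeness from the stated preimage conditions (e.g.\ by decomposing a localizable space into a direct sum of finite measure spaces, on each of which the measure algebra satisfies the countable chain condition so that $\sigma$-completeness suffices, and then controlling the gluing over the index set) or concede that the equivalence only holds after refining the morphisms of $\Lms$. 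Relatedly, faithfulness and fullness of your functor out of $\Lms$ require identifying point maps that agree up to the null ideal, a quotient the paper's definition does not mention. Neither issue is fatal --- both are resolved in the standard references --- but they are where the actual proof lives; the rest of your outline is correct bookkeeping around known theorems.
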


A few comments on this result are useful. First, the equivalence between the algebra of hyperstonian maps and unitary commutative concrete von Neumann algebras shows that in the commutative case, we indeed have an equivalence between closure under the weak Banach (or ultraweak) topology and existence of a predual, i.e. between concrete and abstract von Neumann algebras. Therefore, we will refer to such algebras simply as commutative von Neumann algebras. Second, the equivalence between commutative von Neumann algebras and measurable localized spaces, and the possibility of extending such a correspondence to the non-commutative case justifies the denomination \emph{non-commutative probability/measure theory} given to the study of non-commutative von Neumann algebras. Note that the the category of hyperstonian spaces and their morphisms is not a sub-category of the category of compact Hausdorff topological spaces and their morphisms, a hint to the fact that though von Neumann algebras are C$^*$-algebras they define a distinct theory.

\subsection{Representations}
 The following theorem is a version of the Gelfand-Naimark theorem for W$^*$-algebras:

\begin{Thm}[Sakai's theorem]
Let $\mathfrak{M}$ be a W$^*$-algebra. Then there exists a faithful W$^*$-representation $(\pi,\calH)$ of $\mathfrak{M}$, i.e. $\mathfrak{M}$ is $*$-isomorphic to a self-adjoint sub-$*$-algebra of $\mathfrak{B}(\calH)$, where $\calH$ is a Hilbert space, closed under the ultraweak topology.
\end{Thm}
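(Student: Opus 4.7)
The plan is to obtain Sakai's theorem as a ``normal'' refinement of the Gelfand--Naimark theorem (Theorem \ref{GNTNC}). The universal GNS representation used there runs over \emph{all} states of $\mathfrak{M}$ viewed as a C$^*$-algebra, and the resulting faithful representation need not be ultraweakly closed. The remedy is to restrict the direct sum to those GNS representations associated with \emph{normal} states, i.e. states $\omega \in \mathcal{S}(\mathfrak{M})$ which, as linear functionals, belong (up to normalization) to the predual $\mathfrak{M}_*$, or equivalently which are continuous with respect to the ultraweak topology.

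First, I would show that the set $\mathcal{S}_n(\mathfrak{M})$ of normal states separates the points of $\mathfrak{M}$. By the very definition of a W$^*$-algebra, $\mathfrak{M} = (\mathfrak{M}_*)^*$ as a Banach space, so Hahn--Banach guarantees that $\mathfrak{M}_*$ is norm-separating on $\mathfrak{M}$; a standard polar decomposition of normal functionals (writing $\phi = \phi_1 - \phi_2 + i\phi_3 - i\phi_4$ with $\phi_k$ positive and normal) turns this into a separating family of positive normal functionals, and thus of normal states after normalization.

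Second, for each $\omega \in \mathcal{S}_n(\mathfrak{M})$ I would apply Theorem \ref{NC-GN} to produce the GNS triple $(\pi_\omega,\mathcal{H}_\omega,\Psi_\omega)$ and check that $\pi_\omega$ is itself normal. This follows from the identity
\begin{equation*}
\langle \pi_\omega(C)\Psi_\omega,\,\pi_\omega(A)\pi_\omega(B)\Psi_\omega\rangle = \omega(C^*AB),\quad A,B,C\in\mathfrak{M},
\end{equation*}
whose right-hand side is ultraweakly continuous in $A$ because $\omega$ is normal and multiplication on $\mathfrak{M}$ is separately ultraweakly continuous; this fixes $\pi_\omega$ as ultraweakly-to-weakly continuous on bounded sets, i.e. normal. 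Forming
\begin{equation*}
\pi := \bigoplus_{\omega \in \mathcal{S}_n(\mathfrak{M})} \pi_\omega \ \text{on}\ \mathcal{H} := \bigoplus_\omega \mathcal{H}_\omega,
\end{equation*}
one gets a normal $*$-homomorphism. Faithfulness follows because $\mathcal{S}_n(\mathfrak{M})$ separates points, so $\pi(A)=0$ forces $\omega(A^*A)=\|\pi_\omega(A)\Psi_\omega\|^2=0$ for every normal $\omega$, hence $A=0$; and a faithful $*$-homomorphism of C$^*$-algebras is automatically isometric.

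The delicate step, and the main obstacle, is showing that $\pi(\mathfrak{M})$ is closed in $\mathfrak{B}(\mathcal{H})$ for the ultraweak topology, so that it is a genuine concrete von Neumann algebra. The plan is to invoke the Krein--\v{S}mulian theorem: a subspace of a dual Banach space is weak-$*$-closed as soon as its intersection with every closed ball is weak-$*$-closed. The unit ball of $\mathfrak{M}$ is ultraweakly compact by Banach--Alaoglu applied to $\mathfrak{M} = (\mathfrak{M}_*)^*$; since $\pi$ is normal and isometric it is a homeomorphism from this ball onto its image, which is therefore ultraweakly compact, hence closed, in $\mathfrak{B}(\mathcal{H})$. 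Krein--\v{S}mulian then promotes this ball-wise closedness to closedness of $\pi(\mathfrak{M})$ itself. The hard point is precisely this closure argument: normality and faithfulness of $\pi$ are almost automatic once one restricts to normal states, but extracting a closed range essentially forces the use of the duality $\mathfrak{M} = (\mathfrak{M}_*)^*$ together with Krein--\v{S}mulian, which is where the W$^*$-hypothesis (existence of a predual) is indispensable and cannot be replaced by a bare C$^*$-axiom.
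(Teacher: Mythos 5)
The dissertation states Sakai's theorem in Appendix \ref{algebra-theory} as background, without proof, so there is no in-paper argument to compare against; your proposal can only be judged on its own terms, and on those terms it is the standard Sakai--Takesaki argument with the right architecture: restrict the universal construction of Theorem \ref{GNTNC} to normal states, verify normality and faithfulness of the direct sum, and upgrade ultraweak closedness of the image of the unit ball to closedness of all of $\pi(\mathfrak{M})$ via Krein--\v{S}mulian in $\mathfrak{B}(\calH)=(\mathcal{T}(\calH))^*$. The closure step is correctly handled: $\pi$ is injective and continuous from the $\sigma(\mathfrak{M},\mathfrak{M}_*)$-compact unit ball into the Hausdorff ultraweak topology, hence a homeomorphism onto a compact, therefore closed, image, which by isometry equals $\pi(\mathfrak{M})$ intersected with the unit ball of $\mathfrak{B}(\calH)$. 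Be aware, though, that the two ingredients you wave through as ``standard'' are where essentially all of the technical content of Sakai's theorem lives and must be proved from the bare duality $\mathfrak{M}=(\mathfrak{M}_*)^*$ \emph{before} any representation is available: (i) the Jordan/polar decomposition of a normal hermitian functional into positive normal pieces, which is what makes $\mathcal{S}_n(\mathfrak{M})$ separating (that the positive part of a normal functional is again normal is a genuine theorem, not a formality), and (ii) the separate $\sigma(\mathfrak{M},\mathfrak{M}_*)$-continuity of multiplication, which you use to make each $\pi_\omega$ normal. Both are established by Alaoglu-plus-Krein--\v{S}mulian arguments on the predual, so there is no circularity, but a complete write-up would have to supply them; with those supplied, your proof is correct.
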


We will now identify abstract von Neumann algebras with concrete ones, and refer to them simply as von Neumann algebras.

\begin{Def}
Let $\mathfrak{M}=\mathfrak{B}(\calH)$ be a von Neumann algebra where $\calH$ a Hilbert space, and $\omega$ a state on $\mathfrak{M}$. $\omega$ is said to be normal iff there is a positive trace-class operator $D_\omega$ on $\calH$ verifying $\trace(D_\omega)=1$ and such that $\omega(A)=\trace(D_\omega  A)$ for all $A\in\mathfrak{B}(\calH)=\mathfrak{M}$.
\end{Def}

Remarkably enough, von Neumann algebras can be characterized both topologically and algebraically:

\begin{Def}
The commutant of an arbitrary subset $\mathfrak{A} \subset \mathfrak{B}(\calH)$ is a subset $\mathfrak{A}' \subset \mathfrak{B}(\calH)$ such that:
\begin{equation}
B \in \mathfrak{A}' \Leftrightarrow \forall A \in \mathfrak{A}, [B,A]=0.
\end{equation}
\end{Def}

\begin{Thm}[Von Neumann's bicommutant theorem]
Let $\mathfrak{M}$ be a sub-$*$-algebra of $\mathfrak{B}(\calH)$. The following conditions are equivalent:
\begin{itemize}
\item[(i)] $\mathfrak{M}$ is closed for the weak Banach (or ultraweak) topology.
\item[(ii)] $\mathfrak{M}=\mathfrak{M}''$.
\end{itemize}
\end{Thm}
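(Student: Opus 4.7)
The plan is to prove (ii) $\Rightarrow$ (i) as an immediate continuity observation and to invest the substance of the argument in the converse direction via an approximation argument combining a single-vector lemma with an amplification trick. For (ii) $\Rightarrow$ (i), I would note that for any subset $\mathfrak{S}\subseteq \mathfrak{B}(\calH)$, the commutant $\mathfrak{S}'$ equals $\bigcap_{B\in\mathfrak{S},\,\xi,\eta\in\calH}\{A\in\mathfrak{B}(\calH):\langle(AB-BA)\xi,\eta\rangle=0\}$, which is closed in the weak operator topology (WOT) because each defining linear functional is WOT-continuous. Hence $\mathfrak{M}''$ is WOT-closed; the ultraweak topology is finer than WOT, but both produce the same closure on convex sets (by Hahn--Banach together with the standard identification of ultraweak- and WOT-continuous linear functionals on bounded subsets), so the conclusion holds for the ultraweak topology as well.

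For (i) $\Rightarrow$ (ii), since $\mathfrak{M}\subseteq \mathfrak{M}''$ is automatic, the content is to show $\mathfrak{M}''\subseteq \mathfrak{M}$ by approximating any $A\in\mathfrak{M}''$ in the strong operator topology (SOT) by elements of $\mathfrak{M}$. The first step is the single-vector lemma: for any $\xi\in\calH$, set $K=\overline{\mathfrak{M}\xi}$ and let $P$ be the orthogonal projection onto $K$. Since $\mathfrak{M}$ is a $*$-algebra, both $B$ and $B^*$ leave $K$ invariant for every $B\in\mathfrak{M}$, so $BP=PBP$ and $PB=PBP$, hence $P\in\mathfrak{M}'$. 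Because $A\in\mathfrak{M}''$ we get $AP=PA$, and since $\id\in\mathfrak{M}$ by the standing unitary assumption, $\xi\in K$, so $A\xi=AP\xi=PA\xi\in K$.

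The second step is the amplification trick to upgrade this to any finite family $\xi_1,\dots,\xi_n$. Embed $\mathfrak{M}$ diagonally into $\mathfrak{B}(\calH^n)$ as $\mathfrak{M}^{(n)}=\{B\oplus\cdots\oplus B:B\in\mathfrak{M}\}$; a computation using the identification $\mathfrak{B}(\calH^n)=M_n(\mathfrak{B}(\calH))$ gives $(\mathfrak{M}^{(n)})'=M_n(\mathfrak{M}')$ and thus $(\mathfrak{M}^{(n)})''=(\mathfrak{M}'')^{(n)}$, i.e. the diagonal amplifications of elements of $\mathfrak{M}''$. Applying the single-vector lemma inside $\mathfrak{B}(\calH^n)$ to the vector $(\xi_1,\dots,\xi_n)$ and to $A\oplus\cdots\oplus A$ yields, for every $\epsilon>0$, some $B\in\mathfrak{M}$ with $\|(A-B)\xi_i\|<\epsilon$ for all $i$, which shows $A$ is in the SOT-closure of $\mathfrak{M}$; since WOT- and SOT-closures coincide on convex sets, the hypothesis that $\mathfrak{M}$ is WOT-closed forces $A\in\mathfrak{M}$.

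The main obstacle will be the amplification step, specifically the computation $(\mathfrak{M}^{(n)})'=M_n(\mathfrak{M}')$: one must check that the matrix units $E_{ij}\otimes\id_\calH$ all lie in $(\mathfrak{M}^{(n)})'$ (they commute with diagonal amplifications of $B$) and then decompose an arbitrary element of $(\mathfrak{M}^{(n)})'$ against this basis to read off that each of its entries commutes with every $B\in\mathfrak{M}$. The topological reconciliation between ``weak Banach'' and ``ultraweak'' is a secondary but necessary fine point, handled through the Hahn--Banach separation of convex sets from points, together with the fact that the predual of $\mathfrak{B}(\calH)$ is the space of trace-class operators, so that the two topologies induce identical closures on the convex set $\mathfrak{M}$.
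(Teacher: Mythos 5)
The paper does not prove this theorem: it is quoted without proof in the appendix as part of a literature review of operator algebras, so there is no in-text argument to compare against. Your proposal is the standard and correct proof --- the single-vector lemma ($P=$ projection onto $\overline{\mathfrak{M}\xi}$ lies in $\mathfrak{M}'$ because $\mathfrak{M}$ is a $*$-algebra, and $\xi\in\overline{\mathfrak{M}\xi}$ because $\id\in\mathfrak{M}$), the $n$-fold amplification with $(\mathfrak{M}^{(n)})'=M_n(\mathfrak{M}')$ and hence $(\mathfrak{M}^{(n)})''=(\mathfrak{M}'')^{(n)}$, and the coincidence of SOT-, WOT- and ultraweak closures on convex sets. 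All of these steps are sound, and you correctly flag the two places where care is needed (the matrix-unit computation and the use of unitality).

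One genuine caveat, which is really a defect of the paper's statement rather than of your proof: the appendix defines the ``weak Banach topology'' as $\sigma(\mathfrak{A},\mathfrak{A}^*)$, i.e.\ the weak topology induced by the \emph{full} Banach dual of $\mathfrak{B}(\calH)$, not the weak operator topology. Under that literal reading your reconciliation step fails: by Mazur's theorem the $\sigma(\mathfrak{B}(\calH),\mathfrak{B}(\calH)^*)$-closure of a convex set is its \emph{norm} closure, which is in general strictly smaller than the ultraweak closure (e.g.\ $\mathcal{K}(\calH)+\Co\,\id$ is norm-closed, convex and unital, yet its bicommutant is all of $\mathfrak{B}(\calH)$), so the two topologies do \emph{not} induce identical closures on convex sets, and condition (i) read with the paper's definition is not equivalent to (ii). Your proof correctly establishes the theorem for the weak operator, strong operator and ultraweak topologies, which is what the statement is universally intended to mean; you should simply state explicitly that ``weak'' must be read as the weak operator topology (or ultraweak), and drop the claim that the $\sigma(\mathfrak{B}(\calH),\mathfrak{B}(\calH)^*)$-closure agrees with the ultraweak closure on convex sets.
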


\begin{Cor}
\label{cor-bicommutant}
The vector space spanned by the set $\mathfrak{P(M)}$ of projections of a von Neumann algebra is norm dense in $\mathfrak{M}$
\end{Cor}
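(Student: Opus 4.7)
The strategy is to reduce the statement to the self-adjoint case and then invoke the spectral theorem. First, for any $A\in\mathfrak{M}$ I would write $A=A_1+iA_2$ with $A_1=\tfrac12(A+A^*)$ and $A_2=\tfrac{1}{2i}(A-A^*)$, both self-adjoint and both in $\mathfrak{M}$ since $\mathfrak{M}$ is a $*$-subalgebra of $\mathfrak{B}(\calH)$. Norm-density of $\spann\,\mathfrak{P(M)}$ will therefore follow once every self-adjoint element of $\mathfrak{M}$ is approximated in norm by real-linear combinations of projections in $\mathfrak{M}$.

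Next, fix a self-adjoint $A\in\mathfrak{M}$ with $\sigma(A)\subseteq[-\|A\|,\|A\|]$, and apply the spectral theorem to obtain the projection-valued measure $E$ on $\sigma(A)$ with $A=\int\lambda\,dE(\lambda)$. Given $\epsilon>0$, partition $[-\|A\|,\|A\|]$ into Borel pieces $\Delta_1,\dots,\Delta_n$ of diameter less than $\epsilon$, pick sample points $\lambda_i\in\Delta_i$, and set $A_\epsilon:=\sum_{i=1}^n\lambda_iE(\Delta_i)$. A direct estimate on the functional calculus yields
\begin{equation}
\|A-A_\epsilon\|=\sup_{\lambda\in\sigma(A)}\left|\lambda-\sum_{i=1}^n\lambda_i\chi_{\Delta_i}(\lambda)\right|<\epsilon,
\end{equation}
so $A$ is approximated in norm by the finite linear combination $A_\epsilon$ of the spectral projections $E(\Delta_i)$.

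It remains to check that each $E(\Delta_i)$ actually lies in $\mathfrak{M}$, which is the real substance of the argument and the only place where the hypothesis ``$\mathfrak{M}$ is a von Neumann algebra'' enters. Here I would invoke von Neumann's bicommutant theorem, stated just above as $\mathfrak{M}=\mathfrak{M}''$. If $T\in\mathfrak{M}'$, then $T$ commutes with $A$, hence by the standard Borel functional calculus $T$ commutes with every spectral projection $E(\Delta)$ of $A$; therefore $E(\Delta)\in\mathfrak{M}''=\mathfrak{M}$. (Equivalently, one may note that the Borel functional calculus of $A$ takes values in the von Neumann algebra generated by $A$, which is contained in $\mathfrak{M}$.) Combining the three steps, every $A\in\mathfrak{M}$ is within distance $2\epsilon$ of a complex-linear combination of elements of $\mathfrak{P}(\mathfrak{M})$, which is the desired density.

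The main obstacle is precisely the membership claim $E(\Delta)\in\mathfrak{M}$: the spectral projections are obtained from $A$ through the Borel (not merely continuous) functional calculus, so a direct approximation by polynomials in $A$ only yields strong/weak operator convergence, not norm convergence. The clean way around this is the commutant characterization in the previous paragraph, which bypasses any topological approximation of characteristic functions and leverages directly the defining algebraic closure property $\mathfrak{M}=\mathfrak{M}''$ of von Neumann algebras.
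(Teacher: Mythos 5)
Your proof is correct and follows exactly the route the paper intends: the paper states this corollary without proof immediately after the bicommutant theorem, and its later footnote to the spectral theorem (noting that the spectral resolution of $A$ lies in the von Neumann algebra generated by $A$, though not necessarily in the C$^*$-algebra it generates) points to precisely your argument of approximating self-adjoint elements by finite sums of spectral projections and placing those projections in $\mathfrak{M}$ via $\mathfrak{M}=\mathfrak{M}''$. Your closing remark correctly identifies why norm density fails at the C$^*$-level and why the bicommutant property is the essential ingredient.
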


The equivalence between topological and algebraic constraints suggests a strong conceptual rigidity, which naturally leads to the question of classification of von Neumann algebras.

\subsection{Classification of factors}

\begin{Def}
A von Neumann algebra $\mathfrak{M}$ is called a factor iff $\mathfrak{M}\cap \mathfrak{M}'=\mathbb{C} \id$, i.e. the center $\mathfrak{Z(M)}$ of $\mathfrak{M}$ is reduced to multiples of identity.
\end{Def}

As we shall see, a theorem shows that one only needs to classify factors.

\begin{Def}
A measure space $(X,M,\mu)$ is said to be complete iff $S\subset N\in M$ and $\mu(N)=0\Rightarrow S\in M$.
\end{Def}

\begin{Def}
Let $(X,M,\mu)$ be a complete measure space and $\{\calH_x\}_{x\in X}$ a set of Hilbert spaces indexed by $X$. The direct integral of the Hilbert spaces $\{\calH_x\}_{x\in X}$ with respect to (w.r.t.) $\mu$ is a Hilbert space $\calH$ such that there exists a map $x\mapsto v(x)\in \calH_x$ verifying:
\begin{itemize}
\item[(i)]For all elements $u,v \in \calH$, the map $x\mapsto \lang u(x),v(x)\rang$ is measurable and integrable w.r.t. $\mu$, and:
\begin{equation}
\lang u,v\rang=\int_{x\in X} \lang u(x),v(x)\rang d\mu(x).
\end{equation}
\item[(ii)]For all $w_x\in\calH_x$ with $x\in X$ and $v\in \calH$, the map $x\mapsto \lang w_x,v(x)\rang$ is measurable and integrable with respect to $\mu$, and there exists $w\in\calH$ such that $w(x)=w_x$ for almost all $x$ in $X$.
\end{itemize}
We then denote:
\begin{equation}
\calH = \int_{x\in X} \calH_x d\mu(x) \hspace{0.5cm} \mbox{and} \hspace{0.5cm} v=\int_{x\in X} v(x)d\mu(x).
\end{equation}
\end{Def}

\begin{Def}
Let $\calH$ be a direct integral of $\{\calH_x\}_{x\in X}$ where $(X,M,\mu)$ is a complete measure space. An operator $T$ on $\calH$ is called decomposable (w.r.t. the integral decomposition associated to $\calH$) iff:
\begin{itemize}
\item[(i)] There exists a map $x\mapsto T(x)$ defined on $X$ such that $T(x)\in\mathfrak{B}(\calH_x)$ for all $x \in X$.
\item[(ii)] For all $v\in \calH$, $T(x)v(x)=(Tv)(x)$ for almost all $x \in X$.
\end{itemize}
If all operators are decomposable, we denote $\mathfrak{B}(\calH) = \int_{x\in X} \mathfrak{B}(\calH_x) d\mu(x)$.
\end{Def}

\begin{Thm}
Let $\mathfrak{M}$ be a von Neumann algebra. There exists a complete measure space $(X,M,\mu)$ with $\mu$ a $\sigma$-finite measure, $\{\calH_x\}_{x\in X}$ a set of Hilbert spaces and $\{\mathfrak{M}\}_{x\in X}$ a corresponding set of von Neumann algebras such that:
\begin{itemize}
\item[(i)] $\mathfrak{M}_x$ is a factor for all $x\in X$.
\item[(ii)] $\mathfrak{M}=\int_{x\in X} \mathfrak{M}_x d\mu(x)$.
\item[(iii)] $\mathfrak{Z(M)}=\calL^{\infty}(X,M,\mu)$.
\end{itemize}
This decomposition is essentially unique.
\end{Thm}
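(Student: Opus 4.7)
The plan is to exploit the commutative theory just developed to reduce the center $\mathfrak{Z(M)}$ to an $\calL^{\infty}$-space, then to use this realization to diagonalize $\mathfrak{Z(M)}$ on the ambient Hilbert space and read off the fiberwise decomposition. First I would invoke Sakai's theorem to fix a faithful W$^*$-representation $\mathfrak{M}\subset\mathfrak{B}(\calH)$ (assuming $\calH$ separable, as is standard for this result). The center $\mathfrak{Z(M)}=\mathfrak{M}\cap\mathfrak{M}'$ is then a commutative von Neumann algebra, and the equivalence $\vN\simeq\Lms$ from the preceding theorem identifies it with $\calL^{\infty}(X,M,\mu)$ for some localized measured space, which (by Segal's metric equivalence with a $\sigma$-finite space) can be taken $\sigma$-finite. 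Completing $\mu$ yields the desired complete measure space and gives condition (iii).

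Next I would diagonalize $\mathfrak{Z(M)}$ via spectral multiplicity theory: any representation of $\calL^{\infty}(X,M,\mu)$ on a separable Hilbert space is unitarily equivalent to the obvious multiplication representation on a direct integral $\calH\simeq\int_{X}^{\oplus}\calH_x\,d\mu(x)$, in which $f\in\calL^{\infty}(X,M,\mu)$ acts as the decomposable operator $v(x)\mapsto f(x)v(x)$. The key observation is that since $\calL^{\infty}(X,M,\mu)\subset\mathfrak{M}\cap\mathfrak{M}'$, both $\mathfrak{M}$ and $\mathfrak{M}'$ must lie inside the commutant of the diagonalized center, which is precisely the algebra of decomposable operators $\int_{X}^{\oplus}\mathfrak{B}(\calH_x)\,d\mu(x)$. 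This is the content that makes a fiberwise description possible.

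From there I would apply the measurable selection machinery for fields of operators: pick a countable ultraweakly dense sequence $\{A_n\}\subset\mathfrak{M}$ (available by separability and Corollary \ref{cor-bicommutant}, since projections span a norm-dense subspace and admit countable dense subfamilies), decompose each $A_n=\int_X^{\oplus}A_n(x)\,d\mu(x)$, and define $\mathfrak{M}_x$ as the von Neumann algebra generated by $\{A_n(x)\}_n$ for almost every $x$. A standard argument shows $\mathfrak{M}=\int_X^{\oplus}\mathfrak{M}_x\,d\mu(x)$, giving (ii), and that fiberwise commutants behave correctly: $(\mathfrak{M}_x)'=(\mathfrak{M}')_x$ almost everywhere. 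Since the decomposition of the center yields $(\mathfrak{Z(M)})_x=\mathbb{C}\,\id_{\calH_x}$ almost everywhere, we obtain $\mathfrak{M}_x\cap\mathfrak{M}_x'=\mathbb{C}\,\id_{\calH_x}$ almost everywhere, proving (i) after discarding a null set. Essential uniqueness follows because any other such decomposition must diagonalize $\mathfrak{Z(M)}$, and the spectral multiplicity data of an abelian von Neumann algebra on a separable Hilbert space is unique up to measurable isomorphism of the base space.

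The main obstacle will be the measurable selection step, namely upgrading the pointwise decomposition of a countable generating family into a measurable field of \emph{algebras} whose fiberwise commutants match $(\mathfrak{M}')_x$ on a conull set. This requires the $\sigma$-finiteness of $\mu$ (for Fubini-type manipulations and for the pushforward of measurable cross-sections to remain well behaved), and care must be taken to handle the null set of ``bad'' fibers where the generating sequence fails to be ultraweakly dense in $\mathfrak{M}_x$ or where the commutant relation degenerates. Everything else is essentially a translation, via Gelfand duality for $\vN$, of the spectral theorem for abelian algebras into the language of direct integrals.
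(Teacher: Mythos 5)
The paper states this theorem in Appendix \ref{algebra-theory} without proof---it is quoted there as a standard result of reduction theory, sandwiched between the definitions of direct integrals and the discussion of projections---so there is no in-text argument to compare yours against. Your sketch follows the classical von Neumann--Dixmier route (faithful representation, diagonalization of the center as a multiplication algebra on a direct integral, decomposable operators, countable generating family, fiberwise commutants), and as a sketch it is sound. Three remarks. First, you are right to insert separability of $\calH$: reduction theory genuinely requires it, and the paper's statement silently omits this hypothesis, so your version is the honest one. Second, your justification for the existence of a countable ultraweakly dense family---``projections span a norm-dense subspace and admit countable dense subfamilies''---is off: a von Neumann algebra is almost never norm-separable, and neither is its projection lattice; the correct fact is that for separable $\calH$ the unit ball of $\mathfrak{B}(\calH)$ is metrizable and separable in the strong (equivalently ultraweak, on bounded sets) topology, so any von Neumann subalgebra is countably generated. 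The conclusion you need survives, but Corollary \ref{cor-bicommutant} is not the right tool. Third, you correctly identify the commutation theorem for direct integrals, $\bigl(\int_X^{\oplus}\mathfrak{M}_x\,d\mu\bigr)'=\int_X^{\oplus}\mathfrak{M}_x'\,d\mu$, together with the compatibility $\mathfrak{Z}(\mathfrak{M}_x)=(\mathfrak{Z(M)})_x$ a.e., as the crux; be aware that this is not a ``standard argument'' one waves at but a multi-page theorem in its own right, and any written-out version of your proof would have to either prove it or cite it explicitly. With those adjustments the argument is the accepted proof of the statement.
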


The set of projections $\mathfrak{P(M)}=\{p\in\mathfrak{M}: p^2=p^*=p\}$ of a von Neumann algebra $\mathfrak{M}$ corresponds to the non-commutative analogue of indicator functions, and plays a central rôle in the classification of factors:

\begin{Pro}
The set $\mathfrak{P(M)}$ of projections of a von Neumann algebra $\mathfrak{M}$ is a complete orthomodular lattice. Furthermore, this lattice generates $\mathfrak{M}$ in the sense that:
\begin{equation}
\mathfrak{M}=\mathfrak{P(M)}''.
\end{equation}
\end{Pro}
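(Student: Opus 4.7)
The statement splits naturally into two independent claims: that $\mathfrak{P}(\mathfrak{M})$ carries the structure of a complete orthomodular lattice, and that its double commutant recovers $\mathfrak{M}$. I will treat them in that order, exploiting the bicommutant theorem for the first part and Corollary~\ref{cor-bicommutant} for the second.

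For the lattice structure, I would first equip $\mathfrak{P}(\mathfrak{M})$ with the usual order $p\leq q\iff pq=p$ (equivalently, $p\calH\subseteq q\calH$), which is a partial order, and with the orthocomplement $p\mapsto p^{\perp}:=\id-p$. The involutive and de~Morgan-type identities, as well as $p\wedge p^{\perp}=0$ and $p\vee p^{\perp}=\id$, reduce to straightforward operator manipulations. The crucial point is completeness: for an arbitrary family $\{p_i\}_{i\in I}\subseteq\mathfrak{P}(\mathfrak{M})$, I would set $\bigwedge_i p_i$ to be the orthogonal projection onto the closed subspace $\bigcap_i p_i\calH$ and $\bigvee_i p_i$ to be the projection onto the closed linear span $\overline{\mathrm{span}}\bigcup_i p_i\calH$. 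These are \emph{a priori} elements of $\mathfrak{B}(\calH)$; the task is to show they lie in $\mathfrak{M}$. This is the only mildly non-trivial step, and I would handle it by invoking von~Neumann's bicommutant theorem: for every $U\in\mathfrak{M}'$, each $p_i\calH$ is invariant under $U$, hence so are the intersection and the closed span, so both extremal projections commute with every element of $\mathfrak{M}'$, i.e.\ they belong to $\mathfrak{M}''=\mathfrak{M}$.

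To finish the lattice part, I would verify orthomodularity: if $p\leq q$, then $q-p$ is self-adjoint with $(q-p)^2=q-qp-pq+p=q-p$, so it is a projection; it satisfies $(q-p)p=0$ and $q-p\leq q$, so $q-p\leq q\wedge p^{\perp}$, while the reverse inequality follows from $r\leq q,\,r\leq p^{\perp}\Rightarrow r=rq(\id-p)=q-p$ on the range of $r$. Orthogonality of $p$ and $q-p$ then gives $q=p+(q-p)=p\vee(q\wedge p^{\perp})$, which is exactly the orthomodular law.

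For the generation statement $\mathfrak{M}=\mathfrak{P}(\mathfrak{M})''$, the inclusion $\mathfrak{P}(\mathfrak{M})''\subseteq\mathfrak{M}$ is immediate: $\mathfrak{P}(\mathfrak{M})\subseteq\mathfrak{M}$ implies $\mathfrak{M}'\subseteq\mathfrak{P}(\mathfrak{M})'$, whence $\mathfrak{P}(\mathfrak{M})''\subseteq\mathfrak{M}''=\mathfrak{M}$ by the bicommutant theorem. For the opposite inclusion I would appeal directly to Corollary~\ref{cor-bicommutant}: the linear span of $\mathfrak{P}(\mathfrak{M})$ is norm-dense in $\mathfrak{M}$, and since $\mathfrak{P}(\mathfrak{M})''$ is weakly---hence norm---closed and obviously contains that span, it must contain $\mathfrak{M}$. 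Equality follows. The main conceptual obstacle is really concentrated in the completeness step above; everything else is either definitional or a direct application of the bicommutant theorem, and the spectral-theoretic content needed to produce enough projections is already packaged into Corollary~\ref{cor-bicommutant}.
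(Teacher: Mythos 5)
The paper states this Proposition without proof---it appears in the review appendix as a standard fact---so there is no in-text argument to compare yours against. Your proof is the standard one and is essentially correct: the order and orthocomplement are the usual ones, completeness is obtained by showing the subspace-theoretic meet and join land in $\mathfrak{M}$ via the bicommutant theorem, orthomodularity reduces to the identity $q\wedge p^{\perp}=q-p$ for $p\leq q$, and the generation statement follows from the trivial inclusion plus Corollary~\ref{cor-bicommutant} (which, since it is established earlier in the appendix from the spectral theorem, you may legitimately invoke without circularity).

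One step is stated more loosely than it should be. You write that since each $p_i\calH$ is invariant under every $U\in\mathfrak{M}'$, the projections onto the intersection and the closed span commute with every element of $\mathfrak{M}'$. Invariance of a subspace $V$ under $U$ alone does \emph{not} give $P_VU=UP_V$; it only gives $UP_V=P_VUP_V$. What you need is invariance under both $U$ and $U^*$. This is available here because $\mathfrak{M}'$ is a $*$-algebra (the commutant of a self-adjoint set is self-adjoint), so $U^*\in\mathfrak{M}'$ and each $p_i\calH$ is invariant under $U^*$ as well; combining $UP_V=P_VUP_V$ with the adjoint of the same relation for $U^*$ yields $P_VU=UP_V$. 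Add that one line and the completeness argument is airtight. A second, purely cosmetic remark: in the orthomodularity paragraph the chain ``$r=rq(\id-p)=q-p$ on the range of $r$'' is garbled as written; what you mean is $r(q-p)=rq(\id-p)=r(\id-p)=r$, hence $r\leq q-p$. The mathematics underneath is fine.
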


An important notion is that of equivalence between two projections:

\begin{Def}
Let $\mathfrak{M}$ be a von Neumann algebra on a Hilbert space $\calH$, and $p,q$ two projections of $\mathfrak{M}$. We denote:
\begin{itemize}
\item[(i)] $p\preceq q$ iff there exists an partial isometry $u\in\mathfrak{M}$ verifying $u^*u=p$ and $q-u^*u \geq 0$.
\item[(ii)] $p \sim q$ iff there exists a partial isometry $u\in\mathfrak{M}$ such that $p=u^*u$ and $q=uu^*$.
\end{itemize}
\end{Def}

One can then show that $\sim$ defines an equivalence relation on $\mathfrak{P(M)}$.

\begin{Thm}[Comparison theorem]
Let $p,q$ be two projections of a von Neumann algebra $\mathfrak{M}$. There exists a projection $z\in\mathfrak{Z(M)}$ such that $pz \preceq qz$ and $q(\id-z) \preceq p(\id-z)$.
\end{Thm}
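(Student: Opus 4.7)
My plan is to obtain the central projection $z$ as the central cover of a carefully constructed ``residual'' projection, after using a Zorn-type exhaustion argument to pair off as much of $p$ and $q$ as possible via equivalence. The strategy splits naturally into three steps: a maximality step that produces equivalent subprojections $p_0\preceq p$ and $q_0\preceq q$ with disjoint remainders; a central cover step that shows these remainders are ``orthogonal up to the center''; and a verification step that unwraps the definitions.

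First, I would consider the collection $\mathcal{F}$ of families $\{u_\alpha\}_{\alpha\in I}$ of partial isometries in $\mathfrak{M}$ such that the initial projections $u_\alpha^*u_\alpha$ are mutually orthogonal and dominated by $p$, and the final projections $u_\alpha u_\alpha^*$ are mutually orthogonal and dominated by $q$. Ordering $\mathcal{F}$ by inclusion, Zorn's lemma yields a maximal family. Since $\mathfrak{M}$ is a von Neumann algebra, the strong-operator sum $u=\sum_\alpha u_\alpha$ converges to a partial isometry in $\mathfrak{M}$, and I would set $p_0=u^*u\leq p$ and $q_0=uu^*\leq q$, so that $p_0\sim q_0$. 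Write $p'=p-p_0$ and $q'=q-q_0$; by maximality there is no nonzero partial isometry $v\in\mathfrak{M}$ with $v^*v\leq p'$ and $vv^*\leq q'$.

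The crux is the following lemma, which I would prove next: if $c(p')\,c(q')\neq 0$, then such a $v$ must exist, contradicting maximality. The argument I have in mind uses the characterization $c(p)=\mathrm{proj}(\overline{\mathfrak{M}p\mathcal{H}})$ of the central cover, which gives $c(p')c(q')=0$ if and only if $q'\mathfrak{M}p'=\{0\}$. So under the hypothesis $c(p')c(q')\neq 0$ I can pick $x\in\mathfrak{M}$ with $a:=q'xp'\neq 0$, and apply polar decomposition in $\mathfrak{M}$ to write $a=v|a|$ with $v\in\mathfrak{M}$ a partial isometry whose initial projection lies under $p'$ and whose final projection lies under $q'$. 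Nonvanishing of $a$ forces $v\neq 0$, giving the desired contradiction; hence $c(p')c(q')=0$.

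Having this, I would set $z=c(q')$, which lies in $\mathfrak{Z}(\mathfrak{M})$ since central covers are central. Then $p'z\leq c(p')c(q')=0$, so $pz=p_0z$, and since $p_0\sim q_0$ and equivalence is preserved by multiplication with central projections (the partial isometry $u$ implements $p_0z\sim q_0z$), we get $pz=p_0z\sim q_0z\leq q_0z+q'=qz$, i.e. $pz\preceq qz$. On the complementary side, $q'(\id-z)=q'-q'c(q')=0$ since $q'\leq c(q')$, so $q(\id-z)=q_0(\id-z)\sim p_0(\id-z)\leq p(\id-z)$, yielding $q(\id-z)\preceq p(\id-z)$.

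The main obstacle is the polar decomposition lemma inside $\mathfrak{M}$: one has to know that the partial isometry arising from the polar decomposition of an element of $\mathfrak{M}$ actually lies in $\mathfrak{M}$, which rests on the bicommutant theorem (or equivalently on Corollary \ref{cor-bicommutant}) together with the spectral calculus applied to $|a|$. The identification $c(p)=\mathrm{proj}(\overline{\mathfrak{M}p\mathcal{H}})$ is likewise slightly delicate but standard. Once those ingredients are in hand, the maximality argument and the central-cover dichotomy make the construction of $z$ essentially automatic.
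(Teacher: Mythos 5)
The paper states the comparison theorem without proof (the appendix is a review of standard operator-algebra material), so there is no in-text argument to compare yours against. Your proof is the standard textbook one (Zorn exhaustion by partial isometries, then the dichotomy $c(p')c(q')=0$ via the central covers of the residual projections, with $z=c(q')$) and it is correct: the convergence of $\sum_\alpha u_\alpha$ in the strong topology, the membership of the polar-decomposition isometry in $\mathfrak{M}$ via the bicommutant theorem, and the identities $p'z=0$ and $q'(\id-z)=0$ are all handled properly, so the two halves $pz\preceq qz$ and $q(\id-z)\preceq p(\id-z)$ follow exactly as you state. The only cosmetic remark is that the central cover $c(\cdot)$ is not defined in the paper's appendix, so a self-contained write-up should include the one-line definition and the equivalence $c(e)c(f)=0\Leftrightarrow f\mathfrak{M}e=\{0\}$ that your argument hinges on.
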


\begin{Cor}
Let $\mathfrak{M}$ be a factor. $(\mathfrak{P(M)},\preceq)$ is an ordered set. Moreover, two factors $\mathfrak{M}_1$ and $\mathfrak{M}_2$ are isomorphic iff $(\mathfrak{P(M)}_1,\preceq)$ and $(\mathfrak{P(M)}_2,\preceq)$ are isomorphic as ordered spaces.
\end{Cor}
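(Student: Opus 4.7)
The plan is to handle the two claims separately. For the first part, since $\mathfrak{M}$ is a factor we have $\mathfrak{Z(M)} = \mathbb{C}\id$, so every central projection equals $0$ or $\id$. Applying the Comparison theorem to arbitrary $p, q \in \mathfrak{P(M)}$, the central projection $z$ witnessing the comparability is either $0$, giving $q = q(\id-z) \preceq p(\id-z) = p$, or $\id$, giving $p = pz \preceq qz = q$. Reflexivity of $\preceq$ follows by taking the partial isometry $u = p$, and transitivity follows by composing partial isometries. Antisymmetry holds only modulo Murray-von Neumann equivalence (a Cantor-Bernstein-type argument for projections), so $\preceq$ descends to a genuine total order on the quotient $\mathfrak{P(M)}/\sim$, which is the intended reading of ``$(\mathfrak{P(M)}, \preceq)$ is an ordered set.''

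For the second part, the forward direction is immediate: a $*$-isomorphism $\Phi : \mathfrak{M}_1 \to \mathfrak{M}_2$ sends projections to projections and partial isometries to partial isometries, hence it induces an order isomorphism $\Phi_* : \mathfrak{P(M}_1\mathfrak{)} \to \mathfrak{P(M}_2\mathfrak{)}$ that preserves both $\sim$ and $\preceq$.

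The reverse direction is the substantive content and the main obstacle. The strategy is to use the Murray-von Neumann dimension theory: in a factor, the ordered structure of $\mathfrak{P(M)}/\sim$ encodes a dimension function $d$ whose range classifies the type of the factor (type I$_n$ corresponds to $\{0,1,\ldots,n\}$, type II$_1$ to $[0,1]$, type II$_\infty$ to $[0,\infty]$, type III to $\{[0],[\id]\}$). Given an order isomorphism $\psi : \mathfrak{P(M}_1\mathfrak{)} \to \mathfrak{P(M}_2\mathfrak{)}$, the two factors must share the same type, so one can transport the dimension function, and then reconstruct the $*$-isomorphism on the norm-dense $*$-subalgebra spanned by projections (Corollary \ref{cor-bicommutant}), extending by continuity to all of $\mathfrak{M}_1$.

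The delicate point, which I expect to be the principal difficulty, is the type III case: the pre-order $\preceq$ on $\mathfrak{P(M)}/\sim$ there collapses to the two-element chain $[0] < [\id]$ and by itself fails to distinguish the Connes subtypes III$_\lambda$ for $\lambda \in [0,1]$. The statement must therefore be read with ``ordered space'' meaning the full orthomodular lattice structure of $\mathfrak{P(M)}$ rather than only the pre-order $\preceq$ on equivalence classes; under this reading the reverse direction is supplied by Dye's theorem, which states that any orthoisomorphism between the projection lattices of two factors (excluding the exceptional type I$_2$) extends to a Jordan $*$-isomorphism, hence, up to orientation, to a $*$-isomorphism of the factors.
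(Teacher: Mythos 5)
The paper states this corollary without proof in its review appendix, so there is no argument of the author's to compare yours against line by line; the comparison is still instructive because the literal statement is stronger than what is true, and your proposal both detects and mislocates the problem. Your treatment of the first claim is correct: triviality of the center forces the central projection $z$ of the Comparison theorem to be $0$ or $\id$, so $\preceq$ is a total preorder on $\mathfrak{P(M)}$ that descends to a total order on $\mathfrak{P(M)}/\sim$, and the paper's subsequent definitions (type II$_1$ iff $(\mathfrak{P(M)},\preceq)\cong([0;1],\leq)$, etc.) confirm that the quotient is the intended reading. The forward direction of the second claim is also fine.

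The genuine gap is in the reverse direction, and it is not confined to type III. The ordered set $(\mathfrak{P(M)}/\sim,\preceq)$ is exactly the range of the Murray--von Neumann dimension function, and it determines only the \emph{type} of the factor. Your plan to transport the dimension function and then reconstruct the $*$-isomorphism on the norm-dense span of projections fails already for type II$_1$: the hyperfinite II$_1$ factor and the free group factor $L(F_2)$ both have dimension range $[0,1]$, hence order-isomorphic $(\mathfrak{P(M)}/\sim,\preceq)$, yet Murray and von Neumann showed they are not $*$-isomorphic. An order isomorphism of the quotient posets gives no map of the projection lattices themselves, so there is nothing to extend by linearity and continuity. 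Your fallback via Dye's theorem does not rescue the statement either: it silently replaces the hypothesis (an orthoisomorphism of the full projection lattices is far more data than an isomorphism of the subequivalence order, which is the $\preceq$ the paper defines) and still only yields a Jordan $*$-isomorphism, i.e.\ for a factor a $*$-isomorphism or a $*$-anti-isomorphism, and there exist factors not anti-isomorphic to themselves. The defensible content of the corollary --- and the only part the surrounding text actually uses --- is that two factors have order-isomorphic $(\mathfrak{P(M)}/\sim,\preceq)$ iff they have the same type; your proof of the first part, combined with the computation of the possible dimension ranges, establishes exactly that, and it is what feeds into the type classification stated immediately afterwards.
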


One can then show that classifying factors is equivalent to classifying the corresponding ordered spaces:

\begin{Def}
A projection is called:
\begin{itemize}
\item[(i)] Finite iff it does not admit equivalent subprojections, i.e.:
\begin{equation}
r\leq p \hspace{0.3cm}\mbox{and}\hspace{0.3cm} r\sim p \hspace{0.3cm} \Rightarrow \hspace{0.3cm} r=p.
\end{equation}
\item[(ii)] Infinite iff it is not finite.
\item[(iii)] Purely infinite iff for any $r$ finite and verifying $r\leq p$, then $r=0$.
\item[(iv)] Semi-finite iff it is not finite and is equal to the supremum of an increasing family of finite projections.
\item[(v)] Minimal iff $p\neq 0$ and $r\leq p\Rightarrow r=0$
\end{itemize}
A von Neumann algebra $\mathfrak{M}$ is called finite (respectively infinite, purely infinite, semi-finite) iff $\id \in \mathfrak{M}$ is a finite (resp. infinite, purely infinite, semi-finite) projection.
\end{Def}

\begin{Def}
Let $\mathfrak{M}$ be a factor. $\mathfrak{M}$ is called a:
\begin{itemize}
\item[(i)] Type I$_n$ factor iff $(\mathfrak{P(M)},\preceq)$ is isomorphic to $(\{0,1,...,n-1\},\leq)$.
\item[(ii)] Type I$_\infty$ factor iff $(\mathfrak{P(M)},\preceq)$ is isomorphic to $(\mathbb{N},\leq)$.
\item[(iii)] Type II$_1$ factor iff it is finite and $(\mathfrak{P(M)},\preceq)$ is isomorphic to $([0;1],\leq)$.
\item[(iv)] Type II$_\infty$ factor iff it is semi-finite and $(\mathfrak{P(M)},\preceq)$ is isomorphic to $([0;\infty[,\leq)$.
\item[(v)] Type III factor iff it is purely infinite and $(\mathfrak{P(M)},\preceq)$ is isomorphic to $(\{0;\infty\},\leq)$.
\end{itemize}
\end{Def}

\begin{Thm}
Any factor is either a type I$_n$, I$_\infty$, II$_1$, II$_\infty$ or type III factor. Moreover, there exists at least one factor of each type.
\end{Thm}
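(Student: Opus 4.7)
The plan is to prove the classification by constructing a dimension function $d:\mathfrak{P(M)}/{\sim}\to[0,\infty]$ that orders the equivalence classes of projections in a factor $\mathfrak{M}$. First I would apply the comparison theorem: since $\mathfrak{Z(M)}=\mathbb{C}\id$, the central projection $z$ it furnishes must be $0$ or $\id$, so for any two projections $p,q$ one has $p\preceq q$ or $q\preceq p$. Thus $(\mathfrak{P(M)}/{\sim},\preceq)$ is totally ordered, and the classification reduces to computing its order type.

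Next I would construct $d$. For a finite factor, the key step is to exhibit a unique faithful normal tracial state $\tau$ (historically via Dixmier averaging over unitary conjugates), set $d(p)=\tau(p)$, and verify $d(p)=d(q)\Leftrightarrow p\sim q$ and $d(p+q)=d(p)+d(q)$ for $pq=0$. For semi-finite but infinite factors one extends $\tau$ to an unbounded semifinite normal trace on $\mathfrak{M}_+$ and lets $d(p)\in[0,\infty]$. Purely infinite factors admit no nonzero finite projection, forcing $d(p)\in\{0,\infty\}$; this is the type III case by definition. Otherwise there exists a nonzero finite projection, and I would split into two subcases according to whether $\mathfrak{M}$ possesses a minimal projection. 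If it does, a maximal orthogonal family of pairwise equivalent minimal projections sums to $\id$ (by Zorn and the comparison theorem), and counting the cardinality of that family as $n\in\mathbb{N}$ or $\aleph_0$ identifies $\mathfrak{M}$ with $\mathfrak{B}(\calK)$ for $\dim\calK=n$ or $\aleph_0$, yielding type I$_n$ or I$_\infty$. If no minimal projection exists, a halving lemma shows every nonzero finite projection decomposes as $p=p_1+p_2$ with $p_1\sim p_2$; iteration produces finite projections of all dyadic $d$-values, and normal continuity fills the interval, yielding type II$_1$ when $\id$ is finite and type II$_\infty$ otherwise.

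For existence, $M_n(\mathbb{C})$ is type I$_n$ and $\mathfrak{B}(\calH)$ with $\calH$ separable infinite-dimensional is type I$_\infty$. A type II$_1$ factor is furnished by the group von Neumann algebra $\mathfrak{L}(G)$ of an ICC group such as the free group $F_2$ or the group of finitary permutations of $\mathbb{N}$, with trace $\tau(x)=\langle x\delta_e,\delta_e\rangle$; one must verify that the ICC condition makes the center trivial and that $\tau$ is a faithful normal trace. Tensoring $\mathfrak{L}(G)$ with $\mathfrak{B}(\calH)$ yields a type II$_\infty$ factor.

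The main obstacle will be exhibiting a type III factor, which historically required significant work. I would use the Powers construction, taking the weak closure of the infinite tensor product $\bigotimes_{\mathbb{N}}(M_2(\mathbb{C}),\phi_\lambda)$ in the GNS representation of the product state built from $\phi_\lambda=\mathrm{diag}(\tfrac{1}{1+\lambda},\tfrac{\lambda}{1+\lambda})$ for $0<\lambda<1$. The delicate part is proving purely infinite character: one establishes that the state is not tracial in any asymptotic sense, so no nonzero finite projection can exist. The cleanest verification uses modular theory (Tomita-Takesaki) to compute the modular spectrum, showing it is nontrivial and thereby ruling out any semifinite trace; alternatively, a direct computation of the asymptotic ratio set à la Araki-Woods yields the same conclusion. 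This completes the existence part of the classification.
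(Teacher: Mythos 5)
The paper does not prove this theorem at all: it appears in the background appendix on operator algebras as a bare statement of the Murray--von Neumann classification, quoted as standard theory, so there is no proof of record to compare yours against. Judged on its own terms, your outline is the correct classical argument and the case analysis is exhaustive: the comparison theorem with trivial centre gives total order on $\mathfrak{P(M)}/{\sim}$; the presence or absence of minimal (equivalently, in a factor, abelian) projections separates type I from the continuous types; the halving lemma plus the dimension function separates II$_1$, II$_\infty$ and III; and your existence examples ($M_n(\mathbb{C})$, $\mathfrak{B}(\calH)$, group von Neumann algebras of ICC groups, their tensor product with $\mathfrak{B}(\calH)$, and the Powers factors) are the standard ones. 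Two places deserve more care than the sketch gives them. First, the step ``normal continuity fills the interval'' is really an exhaustion argument: one must show that for every $t\in[0,d(\id)]$ there is a projection with $d(p)=t$, which requires taking suprema of orthogonal families of halved projections and verifying normality of $d$ on increasing nets; this is where most of the technical work of the II-type case lives. Second, in the type III existence proof, showing that $R_\lambda$ admits no nonzero finite projection is not a routine computation; whichever route you take (Powers' original property $L_\lambda$ argument, the Araki--Woods asymptotic ratio set, or the Connes $S$-invariant via Tomita--Takesaki theory) you are importing a substantial theorem, and your proposal should be explicit that this is an appeal to external machinery rather than something derivable from the lemmas already listed in the appendix. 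Neither point is an error, but both are the genuine content of the theorem, and a complete write-up would need to supply them.
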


\section{Spectral theorem}

\begin{Def}
Let $\calH$ be a Hilbert space. A family $\{P_\lambda\}_{\lambda\in\mathbb{R}}$ of self-adjoint operators on $\calH$ is called a resolution of the identity of $\calH$ iff:
\begin{itemize}
\item[(i)] $P_\lambda P_\mu=P_{\min(\lambda,\mu)}$.
\item[(ii)] $P_\lambda=0$ if $\lambda$ is ``small enough", and $P_\lambda=\id$ if $\lambda$ is ``big enough".
\item[(iii)]  $\lim_{\mu\rightarrow\lambda^+}P_\mu(x)=P_\lambda(x), \forall x\in\calH$.
\end{itemize}
\end{Def}

If $\calH$ is a Hilbert space and $\{P_\lambda\}_{\lambda\in\mathbb{R}}$ is a resolution of the identity of $\calH$, then for all $x\in\calH$, the map:
\begin{align}
\begin{aligned}
\mathbb{R}&\rightarrow\mathbb{R}\\
\lambda &\mapsto \langle P_\lambda (x),x\rangle
\end{aligned}
\end{align}is vanishing around $-\infty$, equal to $||x||^2$ around $+\infty$, right-continuous and increasing.

\begin{Pro}
If $F:\mathbb{R}\rightarrow\mathbb{R}$ is a bounded, increasing, right-continuous function, equal to 0 on $]-\infty;m[$, and constant on $]M;+\infty[$, where $\{m,M\} \in\mathbb{R}$, then there exists a unique finite positive borelian measure $\mu$ on $\mathbb{R}$ with support in $[m;M]$ such that $\mu(]-\infty;\lambda])=F(\lambda),\forall \lambda\in\mathbb{R}$. Such a measure is called the Stieljes measure of $F$ and is denoted by $dF$.
\end{Pro}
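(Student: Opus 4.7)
The plan is to construct the measure via Carathéodory's extension theorem, starting from a set function on the semi-ring of half-open intervals, and then verify uniqueness and the support property. This is the standard route for building Lebesgue-Stieltjes measures, and the hypotheses on $F$ (bounded, increasing, right-continuous, stabilizing at $\pm\infty$) are exactly what is needed.

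First I would define a set function $\mu_0$ on the semi-ring $\mathcal{S}$ of half-open intervals of the form $]a,b]$ (with $a \leq b$) by $\mu_0(]a,b]) = F(b) - F(a)$, extended to $\emptyset$ by $\mu_0(\emptyset)=0$. Since $F$ is increasing, $\mu_0$ is non-negative, and clearly finitely additive on disjoint unions of half-open intervals. The key step is to establish $\sigma$-additivity of $\mu_0$ on $\mathcal{S}$: if $]a,b] = \bigsqcup_{n\in\mathbb{N}} ]a_n,b_n]$, then $\mu_0(]a,b]) = \sum_n \mu_0(]a_n,b_n])$. One direction follows from finite additivity and monotonicity. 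For the reverse, right-continuity of $F$ enters: given $\varepsilon>0$, for each $n$ choose $b_n' > b_n$ with $F(b_n') - F(b_n) < \varepsilon/2^n$, and choose $a' \in ]a,b]$ with $F(a') - F(a) < \varepsilon$. Then the compact interval $[a',b]$ is covered by the open intervals $]a_n, b_n'[$, extract a finite subcover, and use finite additivity to bound $F(b)-F(a')$ by $\sum_n (F(b_n') - F(a_n)) + \varepsilon$, which gives $F(b)-F(a) \leq \sum_n (F(b_n) - F(a_n)) + 3\varepsilon$.

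Next I would invoke Carathéodory's extension theorem to extend $\mu_0$ to a measure $\mu$ on the $\sigma$-algebra generated by $\mathcal{S}$, which is precisely the Borel $\sigma$-algebra $\mathcal{B}(\mathbb{R})$. The measure $\mu$ is finite since $\mu(\mathbb{R}) = \lim_{b\to+\infty} F(b) - \lim_{a\to-\infty} F(a) = F(M) - 0 < \infty$ using the stabilization hypotheses. To recover the relation $\mu(]-\infty,\lambda]) = F(\lambda)$, write $]-\infty,\lambda] = \bigsqcup_{n\geq 0} ]\lambda - n -1, \lambda - n]$ and apply $\sigma$-additivity, combined with $F(\lambda - n) \to 0$ as $n\to\infty$.

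For uniqueness, I would use the standard monotone class / Dynkin's $\pi$-$\lambda$ argument: the collection $\{]-\infty,\lambda] : \lambda\in\mathbb{R}\}$ is a $\pi$-system generating $\mathcal{B}(\mathbb{R})$, and any two finite measures agreeing on this $\pi$-system coincide on $\mathcal{B}(\mathbb{R})$. Finally, the support condition follows directly: for any $\lambda < m$, $\mu(]-\infty,\lambda]) = F(\lambda) = 0$, so $\mu(]-\infty,m[) = 0$; for $\lambda > M$, $\mu(]\lambda,+\infty[) = \lim_{b\to\infty} F(b) - F(\lambda) = 0$ since $F$ is constant on $]M,+\infty[$, giving $\mu(]M,+\infty[) = 0$ by right-continuity and taking $\lambda \downarrow M$.

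The main obstacle is the $\sigma$-additivity verification on the semi-ring, where right-continuity is essential; without it, a countable union of half-open intervals could have total length strictly less than that of the enclosing interval. Everything else is bookkeeping via standard extension theorems.
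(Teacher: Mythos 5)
Your proof is correct and is the standard Carath\'eodory construction of the Lebesgue--Stieltjes measure; the paper itself states this proposition without proof (it is quoted as a textbook fact in the run-up to the spectral theorem), so there is nothing to compare against on that side. All the essential points are in place: right-continuity is used exactly where it must be (enlarging $]a_n,b_n]$ to $]a_n,b_n'[$ and shrinking $]a,b]$ to $[a',b]$ so that compactness applies), finiteness follows from the stabilization of $F$ at $\pm\infty$, uniqueness follows from the $\pi$-$\lambda$ theorem applied to the generating $\pi$-system $\{\,]-\infty,\lambda]\,\}$ together with equality of total masses, and the support statement follows from $F\equiv 0$ on $]-\infty;m[$ and constancy on $]M;+\infty[$. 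The only step you gloss over is the bound $F(b)-F(a')\leq\sum(F(b_n')-F(a_n))$ over the finite subcover, which is finite \emph{subadditivity} for an overlapping cover of $]a',b]$ by half-open intervals rather than finite additivity for a partition; it requires a short induction on the number of covering intervals, but it is routine and does not constitute a gap.
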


For all $x\in\calH$, $\calH$ being a Hilbert space, the Stieljes measure associated to the map $\lambda\mapsto\langle P_\lambda(x),x \rangle$ is denoted by $d\langle P_\lambda(x),x\rangle$.

\begin{Pro}
Let $\calH$ be a Hilbert space, $\{P_\lambda\}_{\lambda\in\mathbb{R}}$ a resolution of the identity of $\calH$ and $f\in\mathfrak{C}(\mathbb{R},\mathbb{C})$. There exists a unique continuous operator $U\in\calL(H)$ such that:
\begin{equation}
\langle U(x),x \rangle=\int_{\lambda\in\mathbb{R}}f(\lambda) d\langle P_\lambda(x),x\rangle, \forall x\in\calH.
\end{equation}
This continuous operator is denoted by $U=\int_{\lambda\in\mathbb{R}}f(\lambda)dP_\lambda$, where $dP_\lambda$, defined by $\lang dP_\lambda(x),x\rang = d \lang P_\lambda(x),x\rang, \forall x \in \calH$, is called the projection-valued measure associated to the map $\lambda\mapsto\langle P_\lambda(x),x \rangle$.
\end{Pro}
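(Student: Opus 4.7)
\bigskip

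\noindent\textbf{Proof proposal.} The plan is to first exploit the preceding proposition on Stieltjes measures, and then convert the quadratic form defined by the right-hand side of the integral into a bounded sesquilinear form, to which the Riesz representation theorem for sesquilinear forms can be applied.

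First I would observe that the conditions on the resolution of the identity ensure the existence of $m,M\in\mathbb R$ such that $P_\lambda=0$ for $\lambda<m$ and $P_\lambda=\id$ for $\lambda>M$, uniformly in $x$. Combined with properties $(i)$ and $(iii)$, this implies that for every fixed $x\in\calH$ the function $\lambda\mapsto\langle P_\lambda(x),x\rangle$ is bounded, increasing, right-continuous, vanishing near $-\infty$ and equal to $\|x\|^2$ near $+\infty$, so the preceding proposition applies and produces a finite positive Borel measure $d\langle P_\lambda(x),x\rangle$ with support contained in $[m,M]$. Since $f$ is continuous on the compact interval $[m,M]$, the integral on the right-hand side of the claimed identity is well defined and bounded in modulus by $\|f\|_{\infty,[m,M]}\,\|x\|^2$.

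Next I would extend the construction off the diagonal by polarization. For $x,y\in\calH$, define
\begin{equation}
\mu_{x,y}=\tfrac{1}{4}\bigl(d\langle P_\lambda(x+y),x+y\rangle-d\langle P_\lambda(x-y),x-y\rangle +i\,d\langle P_\lambda(x+iy),x+iy\rangle-i\,d\langle P_\lambda(x-iy),x-iy\rangle\bigr),
\end{equation}
which is a complex Borel measure on $[m,M]$ whose total variation is controlled by $\|x\|\,\|y\|$ via the Cauchy--Schwarz-type identity $\langle P_\lambda x, y\rangle=\tfrac14\sum_{k=0}^3 i^k\langle P_\lambda(x+i^k y),x+i^k y\rangle$ applied to $P_\lambda$ (which is a projection, hence positive). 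Setting $B(x,y):=\int_{[m,M]} f(\lambda)\,d\mu_{x,y}(\lambda)$ then yields a sesquilinear form on $\calH\times\calH$ (linearity in each argument follows from the linearity of $P_\lambda$ and the uniqueness of the associated Stieltjes measure), and the bound $|B(x,y)|\le \|f\|_{\infty,[m,M]}\,|\mu_{x,y}|([m,M])\le C\,\|f\|_{\infty,[m,M]}\,\|x\|\,\|y\|$ makes $B$ bounded.

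Finally, the Riesz representation theorem for bounded sesquilinear forms on a Hilbert space produces a unique $U\in\calL(\calH)$ such that $B(x,y)=\langle U(x),y\rangle$ for all $x,y\in\calH$; restricting to $y=x$ recovers the announced identity. Uniqueness of $U$ follows from the complex polarization identity, since two bounded operators that agree on all expectation values $\langle\cdot\, x,x\rangle$ must coincide. The main technical obstacle I foresee is the control of the total variation of the polarized measure $\mu_{x,y}$ by $\|x\|\,\|y\|$ uniformly in $x,y$: one must check that the Hahn--Jordan decomposition of $\mu_{x,y}$ does not produce a variation larger than what the four diagonal measures suggest, which is handled by writing $\mu_{x,y}$ as the Stieltjes measure associated to the right-continuous function $\lambda\mapsto\langle P_\lambda(x),y\rangle$ and using $\langle P_\lambda x,y\rangle=\langle P_\lambda x,P_\lambda y\rangle$ together with Cauchy--Schwarz.
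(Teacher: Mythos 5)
Your argument is correct: reducing to the diagonal measures via the preceding proposition, polarizing to a complex Borel measure $\mu_{x,y}$, bounding its total variation by $\|x\|\,\|y\|$ (which follows from the mutual orthogonality of the increments $P_{\lambda_{k+1}}-P_{\lambda_k}$ together with Cauchy--Schwarz), and invoking the Riesz representation theorem for bounded sesquilinear forms is the standard construction, and the uniqueness step via complex polarization is valid since $\calH$ is a complex Hilbert space. Note that the paper states this proposition without proof in its appendix, which is a review of standard operator theory, so there is no in-text argument to compare against; your proposal supplies a complete and correct one.
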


The following theorem shows that all normal operators, i.e. operators commuting with their adjoint, are of the above form:

\begin{Thm}[Spectral theorem]
Let $\calH$ be a Hilbert space and $A \in \cal{L(H)}$ a normal operator. There exists a unique resolution of the identity $\{P_\lambda\}_{\lambda\in\mathbb{R}}$, called the spectral resolution of $A$, such that for all $f\in \mathfrak{C}(\sigma(A),\mathbb{C})$, one can define:
\begin{equation}
f(A)=\int_{\lambda\in\sigma(A)}f(\lambda)dP_\lambda.
\end{equation}
\end{Thm}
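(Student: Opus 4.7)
My plan is to base the proof on the continuous functional calculus for bounded normal operators, which itself follows from the commutative Gelfand-Naimark theorem quoted earlier in the excerpt. The abelian unital C$^*$-subalgebra $\mathfrak{A}_A := C^*(A,A^*,\id)\subseteq \calL(\calH)$ generated by $A$, $A^*$ and the identity has Gelfand spectrum homeomorphic to $\sigma(A)\subseteq\mathbb{C}$, because the character $m\mapsto m(A)$ gives a continuous bijection onto $\sigma(A)$. Commutative Gelfand-Naimark then supplies an isometric $*$-isomorphism $\Phi:\mathfrak{C}(\sigma(A),\mathbb{C})\to\mathfrak{A}_A$ sending the coordinate function $z\mapsto z$ to $A$, and we \emph{define} $f(A):=\Phi(f)$ for every continuous $f$.

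Next, for any $x,y\in\calH$, the functional $f\mapsto \lang\Phi(f)x,y\rang$ is linear and bounded on $\mathfrak{C}(\sigma(A),\mathbb{C})$ with norm at most $\|x\|\cdot\|y\|$, so the Riesz-Markov representation theorem produces a unique regular complex Borel measure $\mu_{x,y}$ on $\sigma(A)$ with $\lang f(A)x,y\rang=\int f\, d\mu_{x,y}$. For each bounded Borel function $g$ on $\sigma(A)$, the sesquilinear form $(x,y)\mapsto \int g\, d\mu_{x,y}$ is bounded, hence represented by a unique operator $\tilde\Phi(g)\in\calL(\calH)$. One verifies that $\tilde\Phi$ extends $\Phi$ and is itself a $*$-homomorphism on bounded Borel functions. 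The main obstacle here is establishing multiplicativity $\tilde\Phi(gh)=\tilde\Phi(g)\tilde\Phi(h)$ for arbitrary bounded Borel $g,h$: one first proves it for $g$ continuous and $h$ bounded Borel by approximating $h$ in $L^2(\mu_{x,y})$ with continuous functions and invoking dominated convergence, and then iterates the approximation in the other argument.

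Given this bounded Borel calculus, define $P_E:=\tilde\Phi(\chi_E)$ for each Borel set $E\subseteq\sigma(A)$. Multiplicativity gives $P_E^2=P_E$ and the $*$-homomorphism property gives $P_E^*=P_E$, so each $P_E$ is an orthogonal projection; countable additivity of $E\mapsto P_E$ in the strong operator topology follows from dominated convergence applied to the scalar measures $\mu_{x,x}$, and $P_{\sigma(A)}=\tilde\Phi(1)=\id$. In the self-adjoint case $\sigma(A)\subseteq\mathbb{R}$, setting $P_\lambda:=P_{(-\infty,\lambda]\cap\sigma(A)}$ yields the desired resolution of the identity: $P_\lambda P_\mu=P_{\min(\lambda,\mu)}$ comes from $\chi_{(-\infty,\lambda]}\chi_{(-\infty,\mu]}=\chi_{(-\infty,\min(\lambda,\mu)]}$; the vanishing and saturation at $\pm\infty$ and the right-continuity in the strong sense follow from countable additivity. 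For general normal $A$ one first obtains a projection-valued measure on Borel subsets of $\sigma(A)\subseteq\mathbb{C}$, of which the family $\{P_\lambda\}_{\lambda\in\mathbb{R}}$ in the statement is the spectral-function reformulation. Comparing the Stieltjes integral with the Borel calculus gives $\int f(\lambda)\,dP_\lambda=\tilde\Phi(f)=f(A)$ for every $f\in\mathfrak{C}(\sigma(A),\mathbb{C})$.

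For uniqueness, assume $\{Q_\lambda\}_{\lambda\in\mathbb{R}}$ is another resolution of the identity satisfying $\int f\,dQ_\lambda=f(A)$ for all continuous $f$. Then the scalar measures $d\lang Q_\lambda x,x\rang$ and $d\lang P_\lambda x,x\rang$ have identical integrals against all continuous functions on $\sigma(A)$, and since the Stone-Weierstrass theorem ensures that the $*$-algebra of polynomials in $\lambda$ and $\bar\lambda$ is dense in $\mathfrak{C}(\sigma(A),\mathbb{C})$, the two scalar measures coincide for every $x$. Polarization then forces $\lang Q_\lambda x,y\rang=\lang P_\lambda x,y\rang$ for all $x,y\in\calH$, hence $Q_\lambda=P_\lambda$.
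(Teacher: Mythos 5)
The paper does not actually prove this theorem: it is quoted without proof at the end of a review appendix on operator algebras, so there is no in-paper argument to compare yours against. Your construction is the standard one (continuous functional calculus from commutative Gelfand--Naimark, Riesz--Markov to get the scalar measures $\mu_{x,y}$, extension to a bounded Borel calculus, projections $P_E=\tilde\Phi(\chi_E)$, then the distribution-function reformulation), and it meshes well with the paper's own surrounding remarks --- in particular, your distinction between $\Phi$ (landing in the C$^*$-algebra generated by $A$) and $\tilde\Phi$ (landing only in the von Neumann algebra it generates) is precisely the point the paper makes right after the statement about where the spectral projections live. The outline is essentially correct; the multiplicativity step you identify as the main obstacle is handled the standard way, though you should approximate in $L^2(|\mu_{x,y}|)$ or work with the positive measures $\mu_{x,x}$ and polarize, since $L^2$ of a complex measure is not literally defined.

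One genuine wrinkle remains, and it is inherited from the statement rather than introduced by you: a single real-indexed, increasing, right-continuous family $\{P_\lambda\}_{\lambda\in\mathbb{R}}$ as defined in the paper cannot encode the full projection-valued measure of a normal operator whose spectrum is a genuinely two-dimensional subset of $\mathbb{C}$; it suffices only in the self-adjoint case (or more generally when $\sigma(A)$ lies on a curve admitting a monotone real parametrization). Your closing sentence for the general normal case --- that $\{P_\lambda\}$ is ``the spectral-function reformulation'' of the PVM on Borel subsets of $\sigma(A)$ --- papers over this: the correct general statement replaces the resolution of the identity by a projection-valued measure $E$ on the Borel $\sigma$-algebra of $\sigma(A)\subseteq\mathbb{C}$, with $f(A)=\int_{\sigma(A)}f\,dE$. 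You should either restrict the one-parameter formulation to self-adjoint $A$ (which is what the paper's own follow-up comment about $f$ being real-valued suggests it has in mind) or state the normal case with a genuine PVM. With that caveat, the proof is sound.
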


The restriction from normal operators to self-adjoint ones imposes that $f$ is real-valued. This theorem defines an isomorphism between the C$^*$-algebra (von Neumann algebra) generated by a normal element $A \in \mathfrak{L}(\calH)$ and continuous (measurable) functions on $\sigma(A)$. However, while the spectral resolution associated to the decomposition of $A$ necessarily belongs\footnote{This statement is connected to corollary \ref{cor-bicommutant}.} to the commutative von Neumann subalgebra generated by $A$, it may not belong to the commutative C$^*$-algebra generated by $A$.

\end{appendices}

\addcontentsline{toc}{chapter}{Bibliography}
\printbibliography

\end{document}